\documentclass[aps,pra,twocolumn,nofootinbib,superscriptaddress,10pt]{revtex4-2}

\usepackage{graphicx,color,amsmath,amsfonts,enumerate,amsthm,amssymb,mathtools,enumitem,thmtools,hyperref,subfigure,mathdots,enumitem,centernot,bm,soul,bbm,stmaryrd}
\usepackage[capitalise, noabbrev]{cleveref}
\usepackage{multirow}
\usepackage[figuresright]{rotating}

\usepackage{tikz,pgfplots}
\pgfplotsset{compat=1.17}
\usetikzlibrary{quantikz}
\hypersetup{colorlinks=true,linkcolor=blue,citecolor=blue,urlcolor=blue}

\usepackage{algorithm,algpseudocode}
\usepackage{algorithmicx}
\usepackage{mathrsfs}
\usepackage{soul}

\usepackage{mathtools,mleftright}
\mleftright
\setlength{\delimitershortfall}{5pt}
\delimiterfactor=850

\let\originalleft\left
\let\originalright\right
\renewcommand{\left}{\mathopen{}\mathclose\bgroup\originalleft}
\renewcommand{\right}{\aftergroup\egroup\originalright}

\def\E{ {\mathbb E} } 

\def\>{\rangle}
\def\<{\langle}
\newcommand{\abs}[1]{\left| {#1} \right|} 
\newcommand{\ketbra}[2]{\ensuremath{\left|#1\right\rangle\!\left\langle#2\right|}}

\newcommand{\tr}[1]{\mathrm{Tr}\left( #1 \right)}

\newcommand{\norm}[1]{\left\|#1\right\|}

\DeclareMathOperator{\Var}{Var}

\DeclareMathOperator{\Tr}{Tr}
\DeclareMathOperator{\Cov}{Cov}

\newcommand{\mc}[1]{\mathcal{#1}}

\newcommand{\hphide}[1]{}

\newcommand{\bigo}[1]{\mathcal{O}\left(#1\right)}

\newcommand{\sym}{\Pi_{\operatorname{sym}}}
\newcommand{\symm}{\operatorname{S}}
\newcommand{\obs}{\mathrm{Obs}}

\theoremstyle{plain}
\newtheorem{thm}{Theorem}
\newtheorem{theorem}[thm]{Theorem}

\newtheorem{lemma}{Lemma}
\newtheorem{prop}{Proposition}
\newtheorem{cor}{Corollary}

\theoremstyle{definition} 
\newtheorem{defn}{Definition}


\newcommand{\bfk}{\mathbf{q}}
\newcommand{\US}{\mathrm{UShadow}}

\newcommand{\rmi}{\mathrm{i}}

\newcommand{\rme}{\operatorname{e}}

\newcommand{\caH}{\mathcal{H}}
\newcommand{\caU}{\mathcal{U}}
\newcommand{\caI}{\mathcal{I}}

\newcommand{\caD}{\mathcal{D}}
\newcommand{\caT}{\mathcal{T}}
\newcommand{\poly}{\mathrm{poly}}

\usepackage{graphicx}
\usepackage{bm}
\usepackage{amsmath}
\usepackage{lipsum}
\usepackage{balance}

\usetikzlibrary{positioning}
\usetikzlibrary{snakes}
\usetikzlibrary{decorations}
\definecolor{evergreen}{rgb}{0.27,0.62,0.20}
\newcommand{\xratio}{1.2}%

\tikzstyle{tensor_blue}  =[rectangle,draw=black,fill=blue!25,       thick,minimum size=0.6cm]
\tikzstyle{tensor_green} =[rectangle,draw=black,fill=green!20,      thick,minimum size=0.6cm]
\tikzstyle{tensor_purple}=[rectangle,draw=black,fill=blue!50!red!50,thick,minimum size=0.6cm]

\tikzstyle{Permute_2} = [rectangle, draw=black, thick, fill=red!25, minimum width=1.8cm,minimum height = 0.6cm]
\tikzstyle{Permute_3} = [rectangle, draw=black, thick, fill=red!25, minimum width=3.0cm,minimum height = 0.6cm]
\tikzstyle{Permute_4} = [rectangle, draw=black, thick, fill=red!25, minimum width=4.2cm,minimum height = 0.6cm]
\tikzstyle{Permute_5} = [rectangle, draw=black, thick, fill=red!25, minimum width=5.4cm,minimum height = 0.6cm]

\tikzstyle{UVU_6} = [rectangle, draw=black, thick, fill=blue!25, minimum width=2.1cm,minimum height = 0.6cm]

\tikzstyle{2party} = [rectangle, draw=black, thick, fill=green!20, minimum width=1.3cm,minimum height = 0.6cm]

\tikzstyle{terminal}  = [=, thick, minimum width=0.3cm, minimum height = 0.2cm]

\pdfstringdefDisableCommands{
	
}

\newcommand{\RN}[1]{%
	\textup{\uppercase\expandafter{\romannumeral#1}}%
}
\renewcommand{\thetable}{\Alph{section}\arabic{table}}

\definecolor{dullblue}{rgb}{.29,.47,.77}

\def\eqref#1{\textup{(\ref{#1})}}
\newcommand{\eref}[1]{Eq.~\textup{(\ref{#1})}}
\newcommand{\lref}[1]{Lemma~\ref{#1}}
\newcommand{\tref}[1]{Theorem~\ref{#1}}
\newcommand{\pref}[1]{Proposition~\ref{#1}}

\usepackage{orcidlink}

\begin{document}

\title{Nearly query-optimal classical shadow estimation of unitary channels}

\author{Zihao Li~\orcidlink{0000-0002-2766-7521}}
\email{zihaoli20@fudan.edu.cn}
\affiliation{State Key Laboratory of Surface Physics, Department of Physics, and Center for Field Theory and Particle Physics, Fudan University, Shanghai 200433, China}
\affiliation{Institute for Nanoelectronic Devices and Quantum Computing, Fudan University, Shanghai 200433, China}
\affiliation{Shanghai Research Center for Quantum Sciences, Shanghai 201315, China}

\author{Changhao Yi~\orcidlink{0009-0001-7944-2754}}
\affiliation{State Key Laboratory of Surface Physics, Department of Physics, and Center for Field Theory and Particle Physics, Fudan University, Shanghai 200433, China}
\affiliation{Institute for Nanoelectronic Devices and Quantum Computing, Fudan University, Shanghai 200433, China}
\affiliation{Shanghai Research Center for Quantum Sciences, Shanghai 201315, China}

\author{You Zhou~\orcidlink{0000-0003-0886-077X}}
\email{you\_zhou@fudan.edu.cn}
\affiliation{Key Laboratory for Information Science of Electromagnetic Waves (Ministry of Education), Fudan University, Shanghai 200433, China}
\affiliation{Hefei National Laboratory, Hefei 230088, China}

\author{Huangjun Zhu~\orcidlink{0000-0001-7257-0764}}
\email{zhuhuangjun@fudan.edu.cn}
\affiliation{State Key Laboratory of Surface Physics, Department of Physics, and Center for Field Theory and Particle Physics, Fudan University, Shanghai 200433, China}
\affiliation{Institute for Nanoelectronic Devices and Quantum Computing, Fudan University, Shanghai 200433, China}
\affiliation{Shanghai Research Center for Quantum Sciences, Shanghai 201315, China}
\affiliation{Hefei National Laboratory, Hefei 230088, China}

\begin{abstract}
Classical shadow estimation (CSE) is a powerful tool for learning the properties of quantum states and quantum processes. Here we consider the CSE task for quantum unitary channels. By querying an unknown unitary channel $\caU$ multiple times in quantum experiments, the goal is to learn a classical description from which one can accurately predict many different linear properties of the channel, i.e., the expectation values of arbitrary observables measured on the output of $\caU$ upon arbitrary input states. Based on collective measurements on multiple systems, we propose a query efficient protocol for this task, whose query complexity has a quadratic advantage over the previous best approach for this problem, and almost saturates the information-theoretic lower bound. To further enhance practicality, we also present a variant protocol using only single-copy measurements, 
which still offers much better query performance than previous protocols that do not use quantum memory,  
and can serve as a key subroutine for learning an arbitrary unknown Hamiltonian from dynamics.
In addition to linear properties of unitary channels, our protocol can also be applied to simultaneously predict many non-linear properties, such as out-of-time-ordered correlators.  
\end{abstract}

\date{\today}
\maketitle

\def \loopwidth {1em}
\def \colspacing {2em}
\def \rowspacing {1.5em}
\setlength\tabcolsep{20pt}
\newsavebox{\composition}
\newsavebox{\tensorproduct}
\newsavebox{\traceytrace}

\let\oldaddcontentsline\addcontentsline
\renewcommand{\addcontentsline}[3]{}
\section{Introduction}\label{sec:Introduction}

In quantum mechanics, the time dynamics of a closed $d$-dimensional system is governed by a unitary channel acting on quantum states. Learning complex quantum dynamics is a fundamental problem in physics and plays a crucial role in many applications in quantum information processing \cite{gebhart2023learning,PRXQuantum.2.010102,Kliesch2021Certification,PhysRevLett.97.170501,caro2023out}.
The traditional learning framework, known as process tomography, aims to give a full characterization of the unknown unitary channel $\caU$ with high precision in the diamond distance.
As shown by a recent work \cite{haah2023query}, at least (order) $d^2$ applications of the channel are necessary to achieve this task in any quantum experiment, even with access to quantum memory.

Actually, requiring the full classical description of the unknown unitary channel is excessive for many applications. Instead, it is usually sufficient to predict certain properties of $\caU$. This is similar to the case of quantum states, where the framework of \emph{shadow tomography} \cite{aaronson2018shadow,buadescu2021improved} and \emph{classical shadow estimation} (CSE) \cite{HKPshadow20,PhysRevLett.133.020602,Grier22,PhysRevResearch.5.023027,PRXQuantum.5.010352,PhysRevLett.131.240602,Zhou2023perform,arienzo2023closed,zhou2024hybrid,zhang2024minimal,liu2024auxiliary} has led to results on predicting useful properties of a quantum system by using much fewer samples than full tomography. 
CSE has a notable feature that the measurement protocol is independent of the specific properties to be predicted. In other words, the measurements should produce some classical data from which the properties of interest can be calculated later \cite{elben2023randomized,HKPshadow20}. 
This feature enables one to simultaneously predict multiple  different properties of the system 
without losing high efficiency.

Recently, the CSE formalism has been extended from quantum states to general 
quantum channels \cite{Kunjummen23,PhysRevResearch.6.013029}.  
Given query access to an unknown qudit quantum channel $\mathcal{C}$ that represents a physical process happening in a laboratory, these works considered the task of predicting a collection of $M$ linear properties of the form 
\begin{align}\label{eq:LinearProperties}
	\Tr\left(O_l\,\mathcal{C}(\rho_l)\right), \quad l=1,2,\dots,M,  
\end{align} 
for arbitrary quantum states $\rho_l$ and observables $O_l$. 
Inheriting the mindset of CSE, the learning protocols of Refs.~\cite{Kunjummen23,PhysRevResearch.6.013029} do not depend on the specific input states and output observables. Provided that the (squared) Frobenius norms of the observables are bounded by 
$B\geq\max_l\|O_l\|_2^2$, they require $\mathcal{O}(d^2 B \log M)$ 
applications of the channel $\mathcal{C}$ to achieve this learning task (see Sec.~\ref{sec:PriorWork} for details).  
Although many works are beginning to emerge to improve the query efficiency (i.e., reduce the number of applications) \cite{HCP23,caro2022learning,zhao2023learning,huang2024learning,HKPITB21,caro2022generalization,huang2022quantum}, they impose stringent restrictions on the channel or on the input states and output observables, for example, assuming that the channel can be generated in polynomial-time (see Sec.~\ref{sec:PriorWork}).

\begin{figure*}
\begin{center}
\includegraphics[width=17.0cm]{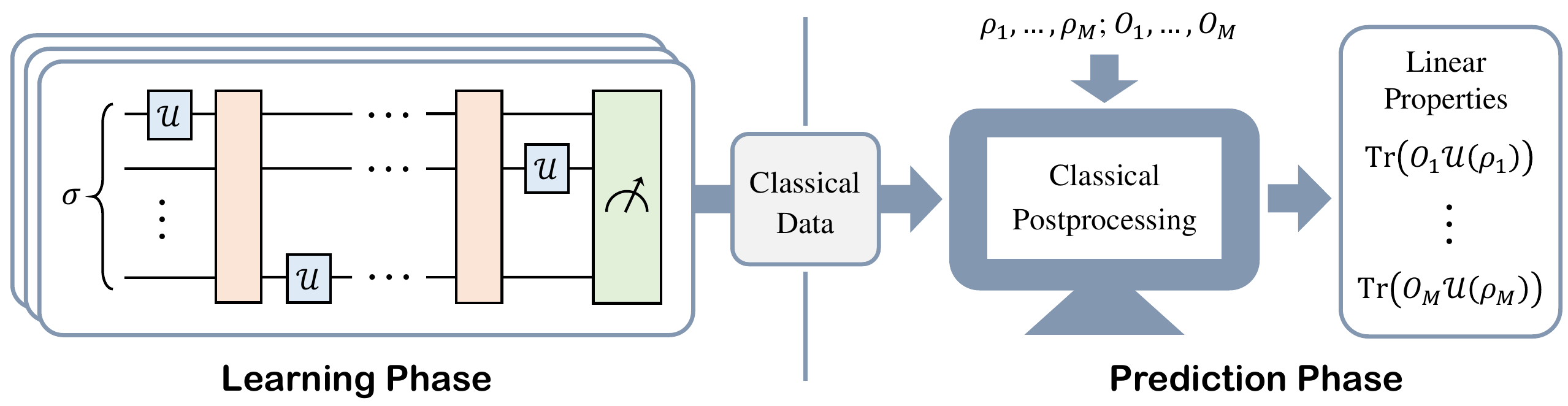}
\caption{\label{fig:CSEU}
Illustration of the CSEU task. The task contains two separate phases. 
In the learning phase, we apply the unknown unitary channel $\mathcal{U}$ multiple times in some quantum experiments  
and obtain some classical data. 
In the prediction phase, we are given a collection of quantum states $\rho_1,\dots, \rho_M$ and observables $O_1,\dots, O_M$. 
By performing classical postprocessing on the data collected from the learning phase,  
the goal is to accurately predict linear properties $\Tr\left( O_l \,\mathcal{U}(\rho_l)\right)$ for all $l=1,2,\dots, M$. 
}
\end{center}
\end{figure*}

\renewcommand{\thetable}{\arabic{table}}
\begin{table*}
\caption{\label{tab:compare}
Comparison of various protocols for the CSEU task (see Sec.~\ref{sec:PriorWork} for details). 
Here, $d$ is the system dimension; $B$ is an upper bound on the squared Frobenius norm of the target observables; 
$M$ is the number of linear properties to be predicted; and $\epsilon$ is the target additive error. 
The parameter $s$ in our protocol represents the number of systems collectively measured in the learning phase. 
Each protocol is classified based on
 whether the protocol requires quantum memory, and whether its query complexity achieves the optimal scaling behavior $\tilde{\mathcal{O}}(d)$  with respect to $d$. 
The performance of protocols in Refs.~\cite{HCP23,caro2022learning,zhao2023learning,huang2024learning,HKPITB21,caro2022generalization,huang2022quantum}
are not shown here because they have additional restrictions or assumptions on the input states, output observables, 
or the unitary channel to be learned. 
}
\begin{math} 
\begin{array}{c|ccc}
\hline\hline
\mbox{Protocol}
& \text{Quantum memory?}
& \text{Query complexity}
& \text{Is scaling optimal in $d$?}
\\[0.5ex]
\hline
\text{Ancilla-assisted CSEC \cite{Kunjummen23,PhysRevResearch.6.013029} (see Sec.~\ref{sec:CSEC})}
&\text{No}  &\bigo{ d^2 B \epsilon^{-2}\log M}      
&\text{No}  \\[0.5ex]
\hline
\text{Ancilla-free CSEC \cite{Kunjummen23,PhysRevResearch.6.013029} (see Sec.~\ref{sec:CSEC})}
&\text{No}  &\bigo{ d^3 B \epsilon^{-2}\log M}   
&\text{No}  \\[0.5ex]
\hline
\text{QPTU protocol in Ref.~\cite{surawy2022projected} (see Sec.~\ref{sec:ProcessTomg})}
&\text{No}  &\bigo{ d^4 \epsilon^{-2}}   
&\text{No}  \\[0.5ex]
\hline
\\[-2.8ex]
\text{Our protocol with $s=1$ (this work)} 
&\text{No}  &\bigo{ \big( d\epsilon^{-2} \!+ d^2\sqrt{B}\epsilon^{-1} \big) \log M}   
&\text{No}  
\\[1.1ex]
\hline
\text{Protocol based on Ref.~\cite{Grier22} (see Sec.~\ref{sec:Grier22})}
&\text{Yes}  &\bigo{ d^2 \epsilon^{-2}\log M}   
&\text{No}   \\[0.5ex]
\hline
\text{QPTU protocol in Ref.~\cite{haah2023query} (see Sec.~\ref{sec:ProcessTomg})}
&\text{Yes}  &\bigo{ d^2 \epsilon^{-1} }   
&\text{No}   \\[0.5ex]
\hline
\\[-2.8ex]
\text{Our protocol with $s=\Theta(d)$ (this work)}
&\text{Yes}  &\bigo{ d \,\big(\epsilon^{-2} \!+\! \sqrt{B}\epsilon^{-1} \big) \log M } 
&\text{Yes}  
\\[1.1ex]
\hline\hline
\end{array}	
\end{math}
\end{table*}

In this work, we reexamine the above significant learning problem under an alternative but natural and common assumption---the channel to be learned is unitary. 
We denote the learning task in this scenario as \emph{CSE of unitary channels} (CSEU), which is of fundamental and practical interest in quantum physics and quantum information processing. For instance, the quantum gates and circuits used in quantum computing are in general unitary. As shown in Fig.~\ref{fig:CSEU}, given access to an unknown unitary channel $\caU$, 
CSEU aims to accurately predict properties of the form specified in \eref{eq:LinearProperties}, with channel $\mathcal{C}$ replaced by $\caU$. Such properties play crucial roles in various research areas, including quantum machine learning \cite{biamonte2017quantum,PhysRevLett.122.040504,caro2023out,huang2022quantum} and variational quantum algorithms \cite{cerezo2021variational,kokail2019self,PhysRevResearch.6.013241}.

Based on collective measurements on multiple systems, we propose a protocol for CSEU that uses $\mathcal O(d\sqrt{B}\log M)$ applications of $\caU$. 
Compared with existing methods 
like \emph{quantum process tomography of unitary channels} (QPTU) \cite{haah2023query} or \emph{CSE of general channels} (CSEC)  \cite{Kunjummen23,PhysRevResearch.6.013029}, 
our protocol achieves a square-root reduction in 
the scaling behavior with the dimension $d$ (see Table~\ref{tab:compare}). 
In addition, by embedding the problem of learning an unknown Boolean function into the task of CSEU, 
we establish a complementary lower bound, indicating that 
our protocol has the asymptotically optimal query complexity with respect to the dimension (up to a $\log d$ factor). 
The lower bound is built in a worst-case sense, namely, the learner is required to make an accurate 
prediction for any input state and output observable. 
We further show that it can be relaxed when considering the average-case prediction over 
random input states, thereby 
highlighting the separation in query complexity 
between the worst- and average-case considerations for the CSEU task.

To address the limitation that collective measurements may be challenging to implement on current devices,  
we further propose a variant protocol for CSEU, which requires neither quantum memory nor ancillary systems. 
This variant is appealing to practical applications because it requires only the preparation of a computational-basis state, the application of a random Clifford unitary to a single system, and a measurement in the computational basis.
Compared with existing quantum memory-free learning protocols with bounded-ancilla, 
our protocol can  significantly enhance the efficiency (see Table~\ref{tab:compare}). 
As a key application, we combine this protocol with the polynomial interpolation technique \cite{stilck2024efficient,gu2024practical,caro2022learning} to learn an arbitrary Hamiltonian $H$ without any prior information on the structure of $H$.
Remarkably,  even without using quantum memory, our protocol can achieve
much better performance than previous approaches to this problem, 
in terms of both the number of queries to $H$ and total evolution time under $H$.

The rest of this paper is organized as follows. 
In Sec.~\ref{sec:preli}, we introduce the basic notation and necessary preliminaries for this work. 
In Sec.~\ref{sec:OurProtocol}, we present our protocol for CSEU and analyze its query complexity. 
In Sec.~\ref{sec:shadowLB}, we establish an information-theoretic lower bound for the query number required to complete the CSEU task.  
In Sec.~\ref{sec:AverageCase}, we study the performance of our protocol under the average-case scenario over random input states. 
In Sec.~\ref{sec:PriorWork}, we compare our CSEU protocol with a number of previous protocols. 
In Sec.~\ref{sec:Hamiltonian}, we apply our CSEU protocol to learn an arbitrary Hamiltonian.
In Sec.~\ref{sec:mainOTOC}, we generalize our protocol to predict non-linear properties of unitary channels and discuss an application in estimating out-of-time-ordered correlators (OTOCs). 
In Sec.~\ref{sec:Conclusion}  we summarize our results and discuss potential future directions.
To streamline the presentation, most technical proofs are relegated to the Appendices.

\section{Preliminaries}\label{sec:preli}
In this section, we define the key notation that will be frequently used throughout the paper, 
and review different models of learning quantum channels.

\subsection{Notation and definitions} 
Consider an $n$-qubit Hilbert space $\mathcal{H}$ with dimension $d=2^n$. 
We denote by 
$\mathcal{D}(\caH)$ the set of all density operators on $\mathcal{H}$, 
$\mathcal{L}(\mathcal{H})$ the set of linear operators on $\mathcal{H}$, and
$\mathcal{L}_{\rm H}(\mathcal{H})$ the set of Hermitian operators on $\mathcal{H}$. 
Let $I$ be the identity operator on $\caH$, 
and $|\Phi\>:=\sum_{i=1}^d|ii\>\in\mathcal{H}^{\otimes2}$ be an unnormalized maximally entangled state, 
where $\{|i\>\}_i$ represents the computational basis on $\caH$.
We denote by $U\in\mathcal{L}(\mathcal{H})$ an unknown unitary operator that we aim to learn, 
and by $\caU(\cdot)=U(\cdot)U^\dag$ the corresponding unitary channel on $\mathcal{L}(\mathcal{H})$.
For any quantum channel $\mathcal{C}:\mathcal{L}(\mathcal{H})\rightarrow\mathcal{L}(\mathcal{H})$, let 
\begin{align}
	\Upsilon_\mathcal{C}:=(\mathcal{C}\otimes \caI)\left( |\Phi\>\<\Phi|\right) 
\end{align} 
be the corresponding Choi operator, where $\mathcal{I}$ is the identity channel.  
For any $A\in\mathcal{L}(\mathcal{H})$, we denote by $\|A\|$ its operator norm and  
$\|A\|_2$ its Frobenius norm.
We write $\psi:=|\psi\>\<\psi|$ for any pure state $|\psi\>\in\mathcal{H}$. 
For $B\in[1,d\,]$, we define the set of observables  
\begin{align}\label{eq:obsB}
\obs(B) := \left\{O \in \mathcal{L}_{\rm H}(\mathcal{H}) \,\big| \Tr(O^2)\leq B, \|O\|\leq 1 \right\}.  
\end{align}

For positive integer $t$, we denote by $\sym^{(t)}$ the projector onto the symmetric subspace of $\caH^{\otimes t}$, 
and $\kappa_t=\binom{t+d-1}{t}$ the 
dimension of the symmetric subspace. 
An ensemble $\mathcal E$ of pure states in $\caH$ is said to form a state $t$-design if and only if
its $t$th moment agrees with that of the Haar distribution, i.e., 
\begin{align}\label{eq:statetdesign}
\underset{|\psi\rangle \sim \mathcal E}{\E} \left[ (|\psi\rangle\langle\psi|)^{\otimes t} \right] 
=
\underset{|\psi\rangle \sim \mu_{\rm H}}{\E} \left[ (|\psi\rangle\langle\psi|)^{\otimes t} \right]
\equiv \frac{\Pi_{\mathrm{sym}}^{(t)}}{\kappa_t },   
\end{align}
where the second expectation is over the Haar measure $\mu_{\rm H}$ on quantum states. 
For example, for any $|\phi\>\in\caH$, the ensemble $\{V|\phi\>\}_{V\in{\rm Cl}(n)}$ is a state 3-design if 
the unitary operator $V$ is sampled uniformly from the $n$-qubit Clifford group ${\rm Cl}(n)$ \cite{PhysRevA.96.062336,Webb16}. 
A more detailed introduction to state designs is given in Appendix~\ref{sec:Design}.

We will frequently use the big-$\mathcal{O}$, big-$\Omega$, and big-$\Theta$ notation for function asymptotics. 
In particular, for two functions $f, g: [0, \infty) \rightarrow[0, \infty)$, 
$f=\mathcal{O}(g)$ if there exists a $c>0$ such that $f(x) \leq c g(x)$ as $x \rightarrow \infty$,  
$f=\Omega(g)$ if $g=\mathcal{O}(f)$, 
and $f=\Theta(g)$ if both $f=\mathcal{O}(g)$ and $f=\Omega(g)$. 
We will also use $\tilde{\mathcal{O}}$ and $\tilde\Omega$ to hide factors that are poly-logarithmic in the leading-order term.

\subsection{Learning quantum channels or Hamiltonians with and without quantum memory} 

Following the classifications in Refs.~\cite{chen2022exponential,huang2022quantum,caro2022learning}, 
here we describe several different models of learning an unknown quantum channel $\mathcal{C}$ or Hamiltonian $H$, depending on 
the type of access to the quantum system and the amount of additional resources utilized.

The first type of learning protocols work  \emph{without quantum memory}
\footnote{Some references alternatively call Hamiltonian learning protocols without (with) quantum memory as 
incoherent (coherent) protocols \cite{bluhm2024hamiltonian}, or protocols without (with) quantum control \cite{Dutkiewicz2024advantageofquantum}.}. 
That is, the learner can use the channel $\mathcal{C}$ (or the time evolution under the Hamiltonian $H$) only once in each step.
To be more explicit, when learning $\mathcal{C}$, in the $k$th round of experiments, the learner can first prepare an arbitrary quantum state $\sigma_k$, then evolve it under $\mathcal{C}$ to yield the output state $(\mathcal{I}\otimes \mathcal{C})(\sigma_k)$, 
and finally perform an arbitrary quantum measurement on the output state, which is described by a \emph{positive operator-valued measure} (POVM); here, $\mathcal{I}$ is the identity channel acting on the ancillary system. 
When learning the Hamiltonian $H$, the channel $\mathcal{C}$ used in the above procedure is replaced by  $\caU_{t_k}(\cdot)=\rme^{-\rmi Ht_k}(\cdot) \rme^{\rmi Ht_k}$ for describing the time evolution under $H$, with $t_k>0$ being an arbitrary chosen time.
We call a protocol \emph{ancilla-assisted} if it uses an ancillary system, and \emph{ancilla-free} otherwise. 
If an ancillary system is allowed but its dimension is bounded by a constant, then the protocol is \emph{ancilla-bounded}.

The second type of protocols work \emph{with quantum memory},  
that is, they have the capability to store quantum data during experiments.
When learning an unknown channel $\mathcal{C}$, in each round of experiments, the learner is allowed to access $\mathcal{C}$ multiple times before measuring, 
possibly interleaving them with the application of other quantum channels.
The resulting output state before the measurement has the form 
\begin{align}\label{eq:outputC}
(\mathcal{C} \otimes \mathcal{I}) \circ \mathcal{C}_{L-1} \circ \cdots \circ
\mathcal{C}_2 \circ (\mathcal{C} \otimes \mathcal{I}) \circ
\mathcal{C}_1 \circ (\mathcal{C} \otimes \mathcal{I}) \left(\sigma\right),
\end{align}
where $\sigma$ is an arbitrary initial state, and  $\mathcal{C}_1,\mathcal{C}_2,\dots,\mathcal{C}_{L-1}$ are arbitrary quantum channels.  
Finally, the learner can perform an arbitrary POVM on the output state.  
When learning an Hamiltonian $H$, the channels $\mathcal{C}$ in the above procedure are replaced with time-evolutions under $H$, and the output state in \eref{eq:outputC} is replaced by
\begin{align}\label{eq:outputH}
(\caU_{t_L} \!\otimes \mathcal{I}) \circ \mathcal{C}_{L-1} \circ \cdots 
\circ (\caU_{t_2} \!\otimes \mathcal{I}) \circ
\mathcal{C}_1 \circ (\caU_{t_1} \!\otimes \mathcal{I}) \left(\sigma\right),\!
\end{align}
where $t_1,t_2,\dots,t_L>0$ are arbitrary chosen times.

\section{Protocol for CSEU}\label{sec:OurProtocol}
In this section, we present our protocol for CSEU and analyze its query complexity. 
The protocol is separated into the learning phase, in which the unknown unitary channel $\caU$ 
is applied several times in quantum experiments, 
and the prediction phase, in which we predict desired linear properties of $\caU$ by using 
the classical data obtained from the learning phase.

\subsection{Learning phase}\label{sec:LearnPhase}
In the learning phase, we use a measurement 
applied on quantum states in $\mathcal{D}(\caH^{\otimes s})$. 
This measurement, denoted by $\mathcal{M}_{s}$, is called the \emph{symmetric collective measurement}. 
Suppose $\{|\phi_j\>\in\caH\}_{j=1}^K$ forms a state $(s+2)$-design. 
Then $\mathcal{M}_{s}$ can be constructed as the following POVM:  
\begin{align}\label{eq:measurement}
\mathcal{M}_{s}= 
\left\lbrace A_{\phi_1},A_{\phi_2},\dots, A_{\phi_K}, \openone_s-\sym^{(s)} \right\rbrace ,
\end{align} 
where $A_{\phi_j}:= (\kappa_s/K) \, \phi_j^{\otimes s}$, 
and $\openone_s$ is the identity operator on $\caH^{\otimes s}$.
Thanks to \eref{eq:statetdesign}, the sum of the operators $A_{\phi_j}$ equals the projector $\sym^{(s)}$  onto the symmetric subspace. 
This measurement has been studied before and is known to be optimal for many learning tasks  \cite{PhysRevLett.74.1259,hayashi1998asymptotic,bruss1999optimal,zhu2018universally,Grier22,harrow2013church,anshu2022distributed}. 
In general, the implementation of $\mathcal{M}_{s}$ on large systems with current technology remains challenging.
Nevertheless, for systems containing only a few qubits, such measurements with a small $s$ have been successfully demonstrated in superconducting, trapped-ion, and photonic quantum systems \cite{hou2018deterministic,conlon2023approaching,zhou2023experimental,PhysRevLett.125.210401}.

We denote by $\mathcal{A}_{\text {learn}}$ our quantum algorithm working in the learning phase. 
To extract meaningful information about $\caU$, as shown in Fig.~\ref{fig:LearnPhase}~(a), 
$\mathcal{A}_{\text {learn}}$ runs the following learning procedure $m$ times independently:   
\begin{itemize}
\item[1.] Randomly choose a pure state $|\hat{\psi}\>\in\caH$ from a 4-design state ensemble $\{|\psi_i\>\}_{i=1}^L$; then prepare $s$ copies of the state $\hat{\psi}$.  
\item[2.] Apply the unitary channel $\caU$ on each copy of $\hat{\psi}$. 
\item[3.] Measure the rotated state $\caU(\hat{\psi})^{\otimes s}$ with the symmetric collective measurement $\mathcal{M}_s$. If the measurement outcome $A_{\hat\phi}$ is obtained ($\hat\phi\in\{\phi_1,\dots, \phi_K\}$), we then record the classical description of the corresponding pure state $\hat{\phi}$.
\end{itemize}
It is worth noting that we will never see the measurement outcome $(\openone_s-\sym^{(s)})$ in Step~3 of the above procedure, because $\{A_{\phi_1},\dots, A_{\phi_K}\}$ forms a POVM on the symmetric subspace of $\caH^{\otimes s}$, 
and the state $\caU(\hat{\psi})^{\otimes s}$ to be measured is supported in this subspace.

\begin{figure}
\begin{center}
\includegraphics[width=8.5cm]{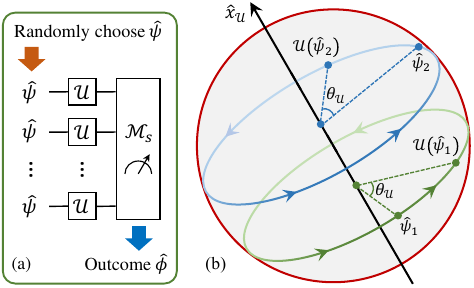}
\caption{\label{fig:LearnPhase}
Learning phase of our protocol for CSEU: procedure and intuition. 
(a) Schematic view of our learning algorithm $\mathcal{A}_{\text {learn}}$. 
In each round, we first prepare $s$ copies of a pure state $\hat{\psi}$ chosen from a 4-design ensemble, 
and rotate them with the unitary channel $\caU$, 
then measure the resulting states using the symmetric collective measurement $\mathcal{M}_s$. 
(b) Geometric intuition of our protocol illustrated on the single-qubit Bloch sphere. 
}
\end{center}
\end{figure}

Before providing a detailed explanation of $\mathcal{A}_{\text {learn}}$, we offer some additional remarks. 
First, if we choose parameter $s\geq 2$ in the learning procedure, 
then our protocol relies on collective measurements and requires quantum memory.
In contrast, when choosing $s=1$, our protocol relies solely on single-copy measurements and does not need quantum memory. 
In this special case, we use at most the 3-design property of the measurement $\mathcal{M}_{s=1}$.  
Since any orbit of the multiqubit Clifford group forms a state 3-design \cite{PhysRevA.96.062336,Webb16}, 
the \emph{random global Clifford measurement} (RGCM) used in the original CSE protocol \cite{HKPshadow20} can be applied in place of 
$\mathcal{M}_{s=1}$ in our learning procedure, and the analysis will be identical. 
RGCM only requires implementation of Clifford unitaries and measurements in the computational basis, 
and is thus appealing to practical applications.  
Second, recall that the initial state $\hat{\psi}$ is chosen from a 4-design state ensemble in the learning procedure. 
In fact, this 4-design property is utilized only in the case of $s\geq2$ throughout the paper. 
When $s=1$, the 3-design property is sufficient for proving our results. 
Therefore, in the $s=1$ case, we can prepare the initial state $\hat{\psi}$ by simply applying a random Clifford unitary 
to the computational-basis state $|0^n\>\<0^n|$.

After each round of the learning experiments,  
the description of the prepared state $\hat{\psi}$ and the outcome $\hat{\phi}$ are stored in a classical memory. 
It is instructive to look at the expectation of their tensor product, which is clarified in the following lemma; see Appendix~\ref{sec:Proof12Moment} for a proof. 

\begin{lemma}
\label{lem:first_moment}
The expectation of $\hat{\phi} \otimes \hat{\psi}$ (over both random state preparation and measurement outcomes) is
\begin{align}
	\E[\hat{\phi} \otimes \hat{\psi}] &= \frac{(d+1+s)(I\otimes I)+s (U\otimes U^{\dag})T_{(1,2)}}{d(d+1)(d+s)}, 
\end{align}
where $T_{(1,2)}$ is the swap operator. 
\end{lemma}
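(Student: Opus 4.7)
My plan is to factor the expectation into two stages: first evaluate $\E[\hat{\phi}\mid \hat{\psi}]$ using the design property of the measurement POVM $\mathcal{M}_s$, then average over $\hat{\psi}$ using the low-order moments of the initial 4-design ensemble.

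For the conditional expectation, I would apply the Born rule to $\mathcal{M}_s$: the outcome $A_{\phi_j}$ occurs with probability $(\kappa_s/K)\Tr(\phi_j^{\otimes s}\caU(\hat{\psi})^{\otimes s})$, hence
$$
\E[\hat{\phi}\mid \hat{\psi}] \;=\; \frac{\kappa_s}{K}\,\Tr_{2,\dots,s+1}\!\left[\Bigl(\sum_{j=1}^K \phi_j^{\otimes(s+1)}\Bigr)\bigl(I\otimes M^{\otimes s}\bigr)\right],
$$
where $M:=U\hat{\psi}U^{\dag}$ and the partial trace is over the last $s$ systems. The $(s+1)$-design property---implied by the stated $(s+2)$-design assumption on $\{|\phi_j\>\}$---lets me replace $\frac{1}{K}\sum_j \phi_j^{\otimes(s+1)}$ by $\sym^{(s+1)}/\kappa_{s+1}$. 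To simplify the resulting $\Tr_{2,\dots,s+1}[\sym^{(s+1)}(I\otimes M^{\otimes s})]$, I would use the symmetrizer decomposition
$$
\sym^{(s+1)} \;=\; \frac{1}{s+1}\sum_{i=1}^{s+1} T_{(1,i)}\bigl(I_1 \otimes \sym^{(s)}_{\{2,\dots,s+1\}}\bigr),
$$
obtained by writing each $\pi\in S_{s+1}$ uniquely as the transposition $(1,\pi(1))$ composed with a permutation fixing the first index. Since $M^{\otimes s}$ is already symmetric in $\{2,\dots,s+1\}$ and $\Tr(M)=1$, the $i=1$ term contributes $I$ and each $i\geq 2$ term contributes $M$ via the swap identity $\Tr_{2}[T_{(1,2)}(A\otimes B)]=BA$ applied with $A=I$, $B=M$. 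Using $\kappa_s/\kappa_{s+1}=(s+1)/(s+d)$, this collapses to
$$
\E[\hat{\phi}\mid \hat{\psi}] \;=\; \frac{I + s\,U\hat{\psi}U^{\dag}}{s+d}.
$$

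Finally, I would take the outer average over $\hat{\psi}$. The 1- and 2-design properties of $\{|\psi_i\>\}$ give $\E[\hat{\psi}]=I/d$ and $\E[\hat{\psi}\otimes\hat{\psi}]=(I\otimes I + T_{(1,2)})/[d(d+1)]$. Combining these with the swap identity $(U\otimes I)T_{(1,2)}(U^{\dag}\otimes I)=(U\otimes U^{\dag})T_{(1,2)}$ and placing everything over the common denominator $d(d+1)(d+s)$ reproduces exactly the claimed expression. The only non-routine step is the symmetric-subspace partial-trace manipulation; once the ``split by $\pi(1)$'' identity for $\sym^{(s+1)}$ is in hand, the remainder is swap algebra and standard Haar moments, so I do not foresee any serious obstacle.
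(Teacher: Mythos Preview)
Your proposal is correct and follows essentially the same two-stage structure as the paper: first establish $\E[\hat{\phi}\mid\hat{\psi}]=(I+s\,U\hat{\psi}U^{\dag})/(d+s)$ via the $(s{+}1)$-design property, then average over $\hat{\psi}$ using its $1$- and $2$-design moments. The only difference is that the paper imports that conditional-expectation identity as a black-box lemma from Ref.~\cite{Grier22}, whereas you derive it from scratch via the coset decomposition $\sym^{(s+1)}=\frac{1}{s+1}\sum_i T_{(1,i)}(I_1\otimes\sym^{(s)})$; both routes land on the same formula and the remaining swap algebra is identical.
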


Therefore, after obtaining the description of $\hat{\psi}$ and $\hat{\phi}$ in a single run, 
we can apply a transformation to $\hat{\psi}\otimes\hat{\phi}$ to generate a classical snapshot $\hat X$, whose expectation contains all information of the unknown unitary channel $\caU$. Concretely, let 
\begin{align}\label{eq:defhatX}
\!\hat X := \frac{d(d+1)(d+s) (\hat{\phi}\otimes\hat{\psi}^{\top}) - (d+1+s)(I\otimes I)}{s}, \!
\end{align}
where $(\cdot)^{\top}$ denotes the transpose operation (with respect to the computational basis).
According to \lref{lem:first_moment}, $\hat X$ is an unbiased estimator 
for the Choi operator $\Upsilon_\caU$ of $\caU$: 
\begin{align}\label{eq:EhatX}
\E \big[ \hat X \big] = \big[ (U\otimes U^{\dag})T_{(1,2)}\big]^{\top_2}
	=\Upsilon_\caU, 
\end{align} 
where $\E$ denotes the average over the randomness of state preparation and 
quantum measurement in the learning phase, $(\cdot)^{\top_2}$ is the transpose operation on the second subsystem,      
and the second equality can easily be derived via the tensor-network diagrams introduced in Appendix~\ref{sec:Tensor}.

From the quantum experiments we obtain a collection of $m$ independent classical snapshots of $\caU$:
\begin{align}
	\US(\caU,m)=\big\lbrace \hat X_1,\dots,\hat X_m \big\rbrace.  
\end{align}
This collection, as the output of the learning phase, is called the \emph{classical shadow data}. 
As we shall see in the following sections, 
many properties of $\caU$ can be extracted efficiently  from $\US(\caU,m)$.

The reason why $\mathcal{A}_{\text {learn}}$ learns all information about $\caU$ efficiently can be intuitively explained by a toy model illustrated in Fig.~\ref{fig:LearnPhase}~(b), where an unknown single-qubit unitary channel $\caU$ can rotate the Bloch sphere around an axis $\hat{x}_\caU$ by an angle $\theta_\caU$. 
To recover the information about $\caU$, one needs to determine both the value of $\theta_\caU$ and the direction of $\hat{x}_\caU$. 
Mathematically, this goal can be achieved by randomly choosing a few points 
on the sphere (two points are enough in the usual case), and observe the positions to which they are rotated.
Physically, these operations correspond to first randomly selecting a few pure states $\hat{\psi}_1,\dots,\hat{\psi}_m$ on $\caH$ (Step 1 of  $\mathcal{A}_{\text {learn}}$), and then learning information about the rotated states $\caU(\hat{\psi}_i)$. 
We expect that the latter goal can be efficiently achieved by performing $\mathcal{M}_s$ on 
multiple copies of $\caU(\hat{\psi}_i)$ (Step 3 of $\mathcal{A}_{\text {learn}}$), 
because this measurement is optimal for pure states for many estimation problems \cite{hayashi1998asymptotic,bruss1999optimal,zhu2018universally,Grier22,harrow2013church,anshu2022distributed}.

\subsection{Prediction phase}\label{sec:PredPhase}
In the prediction phase, we are given classical descriptions of 
$M$ quantum states $\rho_1,\dots, \rho_M\in\caD(\caH)$ 
and $M$ observables $O_1,\dots, O_M\in\obs(B)$, where $\obs(B)$ is defined in \eref{eq:obsB}.
The goal is to accurately predict linear properties $\Tr\left( O_l\,\mathcal{U}(\rho_l)\right)$ for all $l=1,2,\dots,M$ using the 
classical shadow data $\US(\caU,m)=\lbrace\hat X_1,\dots,\hat X_m\rbrace$. 

Recall that each snapshot $\hat X_i$ exactly reproduces the Choi operator $\Upsilon_\caU$ of 
the unitary channel $\caU$ in expectation. As a consequence, we have 
\begin{align}
\E \left[ \Tr\left( (O\otimes\rho^{\top}) \hat{X}\right) \right] 
&= \Tr\left[ (O\otimes\rho^{\top})\, \Upsilon_\caU \right]
= \Tr\left( O\,\mathcal{U}(\rho)\right)  
\end{align} 
for any state $\rho$ and observable $O$,  
where the second equality holds by  channel-state duality \cite{choi1975completely}.
Hence, a natural way to predict $\Tr\left( O\,\mathcal{U}(\rho)\right)$ is 
to calculate $\Tr[(O\otimes\rho^{\top}) \hat{X}_i]$ for many independent 
snapshots $\hat{X}_i$ and then take their average.  
However, as shown in Appendix~\ref{sec:Trivialmean}, this method requires $\mathcal{O}(d^2 B/\epsilon^2)$ 
applications of $\caU$ to predict a single $\Tr\left( O\,\mathcal{U}(\rho)\right)$ within additive error $\epsilon$, even when collective measurements on multiple systems are employed. 
If only single-copy measurements are accessible, 
then the query complexity further increases, reaching up to $\mathcal{O}(d^3 B/\epsilon^2)$. 
Such high query costs are unsatisfactory.

\begin{figure}
\begin{algorithm}[H]
{\small
\hspace{-45pt}\textbf{Input:}  $\US(\caU,m)$, $\rho_1,\dots, \rho_M$, $O_1,\dots, O_M$ \\
\hspace{-103pt} \textbf{Output:} $\hat{E}(O_1,\rho_1), \dots, \hat{E}(O_1,\rho_M)$

\begin{algorithmic}[1]
\caption{{\small Algorithm for $\mathcal{A}_{\text {pred}}$ in the prediction phase} \ \ }
\label{alg:prediction}
\State{Split the shadow data $\US(\caU,m)$ into $R$ batches of equal size, 
where each batch contains $q=m/R$ snapshots (we may assume that $m$ is a multiple of $R$ for simplicity), and set 
\begin{align}\label{eq:definehatY(r)}
\quad
\hat{Y}_{(r)}:= \frac{1}{q(q-1)d}\sum_{i,j=(r-1)q+1 | i\ne j}^{rq} \hat{X}_i \hat{X}_j ,
\quad r=1,\dots,R. 
\end{align} }
\For{$l=1,2,\dots,M$,}
\State{Construct $R$ independent sample mean estimators: 
\begin{align}\label{eq:definetildeZ}
\quad
\hat{Z}_{(r)}(O_l,\rho_l):=\Tr\left[\left( O_l\otimes\rho_l^{\top}\right)\hat{Y}_{(r)} \right], 
\quad r=1,\dots,R . 
\end{align} }
\State{Compute the median of means: 
\begin{align}\label{eq:definehatE}
\quad
\hat{E}(O_l,\rho_l):=
\text{median}\left\lbrace \hat{Z}_{(1)}(O_l,\rho_l),\dots, \hat{Z}_{(R)}(O_l,\rho_l) \right\rbrace . \!
\end{align}}
\EndFor 
\State{\textbf{return} $\hat{E}(O_l,\rho_l)$ as the estimate for $\Tr\left( O_l\,\mathcal{U}(\rho_l)\right)$.} 
\end{algorithmic}
}
\end{algorithm}
\end{figure}

To improve the efficiency, we need to construct an estimator for the unitary channel in a different way.
Inspired by the CSE protocol for pure states in Ref.~\cite{Grier22}, we note that 
the product of two independent snapshots is still proportional to the Choi operator in expectation, 
\begin{align}\label{eq:ExpXiXj}
	\E \big[ \hat{X}_i \hat{X}_j \big]= \E \big[ \hat{X}_i \big]\E \big[ \hat{X}_j\big] =\Upsilon_\caU ^2=d\,\Upsilon_\caU,
	\quad \forall\, i\ne j. 
\end{align} 
In light of this, one can predict $\Tr\left( O\,\mathcal{U}(\rho)\right)$ by 
calculating the quadratic estimators 
\begin{align}
	\tilde\Lambda_{i,j}:=d^{-1}\Tr[(O\otimes\rho^{\top}) \hat{X}_i\hat{X}_j]
\end{align}  
for all index pairs $\{(i,j)|\,i\ne j\}$ and taking their average.
Since the number of different $\tilde\Lambda_{i,j}$ is much larger than that of $\Tr[(O\otimes\rho^{\top}) \hat{X}_i]$, 
this method is expected to enhance the estimation precision and thereby reduce the query cost compared with the former method. 
Indeed, as we shall see in Sec.~\ref{sec:performance}, it can  achieve nearly a square-root reduction in 
query complexity.

To be concrete, the procedure of our classical algorithm $\mathcal{A}_{\text {pred}}$ working in the prediction phase 
is summarized in Algorithm~\ref{alg:prediction}. 
Thanks to \eref{eq:ExpXiXj}, the operator $\hat{Y}_{(r)}$ defined in \eref{eq:definehatY(r)} equals 
the Choi operator $\Upsilon_\caU$ in expectation.   
This fact implies that $\hat{Z}_{(r)}(O_l,\rho_l)$ constructed in \eref{eq:definetildeZ} 
is an unbiased estimator for the desired linear property $\Tr\left( O_l\,\mathcal{U}(\rho_l)\right)$. 
In Step~4 of Algorithm~\ref{alg:prediction}, we use the \emph{median-of-means} method \cite{HKPshadow20} to  
suppress the failure probability of accurate prediction.

\subsection{Rigorous performance guarantees}\label{sec:performance}
Now we investigate the query complexity of our protocol for CSEU. That is, in order to achieve 
\begin{align}
\Pr\left\{ \big| \hat{E}(O_l,\rho_l) -\Tr\left( O_l\,\mathcal{U}(\rho_l)\right) \big|\!< \epsilon\ \, \forall\,  1\leq l\leq M \right\} \geq 1-\delta 
\end{align} 
for some $0<\delta,\epsilon<1$, how many applications of $\mathcal{U}$ are required by our protocol?  
To answer this question, we first analyze the statistical fluctuation of each sample mean estimator $\hat{Z}_{(r)}(O_l,\rho_l)$.  Using the definition of $\hat{Y}_{(r)}$ in \eref{eq:definehatY(r)}, 
for any quantum state $\rho$ and observable $O$ on $\caH$,
we can expand the variance  as 
\begin{align}\label{eq:VarY1}
&\Var\left[\hat{Z}_{(r)}(O,\rho)\right] 
=
\frac{1}{q^2(q-1)^2 d^2} \sum_{i \neq j} \sum_{k \neq \ell} \operatorname{Cov}\left( \Lambda_{i,j},\Lambda_{k,\ell}  \right), 
\end{align} 
where  
$q$ is the number of snapshots in each batch (see Step 1 of Algorithm~\ref{alg:prediction}), 
$\operatorname{Cov}(\cdot,\cdot)$ is the covariance between two variables, and 
\begin{align}
	\Lambda_{i,j}:=d\tilde\Lambda_{i,j}=\Tr\!\big[ \left( O\otimes \rho^{\top}\right) \hat{X}_i \hat{X}_j\big]. 
\end{align} 
The sum in \eref{eq:VarY1} consists of five different cases:  
when all indices $i,j,k,\ell$ are distinct, the covariance is 0 by independence; 
the other four cases are summarized in the following lemma, which is proved in Appendix~\ref{sec:covariances}. 
For simplicity, we denote by $\wp=\Tr(\rho^2)$ the purity of $\rho$. 
\begin{lemma} 
\label{lem:all_the_covariances}
The covariance term
$\operatorname{Cov}\left( \Lambda_{i,j},\Lambda_{k,\ell}  \right)$ 
has the following upper bound for each combination $(i,j,k,\ell)$ with $i \neq j$ and $k \neq \ell$: 
\begin{enumerate}
\item One index matches $(|\{ i, j \} \cap \{ k, \ell \}| = 1)$
\begin{itemize}
\item Match in different positions $(i = \ell$ or $j = k)$: 
\begin{align}
\quad \bigo{d^2 \min\{1, B\wp \}} .
\end{align}  

\item Match in the same position $(i = k$ or $j = \ell)$: 
\begin{align}
\quad \bigo{\frac{d^3\wp}{s} + d^2 \min\{1, B\wp \}},
\end{align} 
where $s$ is the number of systems collectively measured in the learning phase. 
\end{itemize}

\item Both indices match $(|\{ i, j \} \cap \{ k, \ell \}| = 2)$

\begin{itemize}
\item Order swapped $(i = \ell$ and $j = k)$:
\begin{align}
\qquad\quad 
\mathcal O &\bigg(
\left( \frac{d^3}{s^4}
+ \frac{d^2}{s^3} 
+ \frac{d\sqrt{d\wp}}{s^2} 
+ \frac{\sqrt{d\wp}}{s} 
+ \wp \right) d^2B\bigg).
\end{align}

\item Same order $(i = k$ and $j = \ell)$: 
\begin{align}
\qquad\quad 
\bigo{ \left( \frac{d^4}{s^4} +\frac{d^3}{s^3} +\frac{d^2}{s^2} +\frac{d}{s} + 1 \right) d^2B \wp }.
\end{align}
\end{itemize}

\end{enumerate}
\end{lemma}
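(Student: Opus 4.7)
The plan is to reduce every covariance to a contraction of one universal object, $\mathcal S_2 := \E[\hat X \otimes \hat X]$, and then compute $\mathcal S_2$ once and for all. Because distinct snapshots are i.i.d.\ across rounds, any index appearing only once collapses under $\E[\hat X]=\Upsilon_\caU$, and the algebraic identity $\Tr[ABC]\Tr[A'B'C'] = \Tr_{(\caH \otimes \caH)^{\otimes 2}}[(A\otimes A')(B\otimes B')(C\otimes C')]$ rewrites each surviving product of traces as a single trace on $(\caH \otimes \caH)^{\otimes 2}$ against $M\otimes M$ with $M := O\otimes\rho^\top$. The four cases in the lemma correspond to four distinct contractions of this kind: \emph{(i)} same-position single match gives $\Tr[(M\otimes M)\,\mathcal S_2\,(\Upsilon_\caU \otimes \Upsilon_\caU)]$; \emph{(ii)} different-position single match gives the analogous trace but with the $\Upsilon_\caU$'s sandwiched between two factors of $\hat X_i$, using $\hat X_i\Upsilon_\caU \otimes \Upsilon_\caU \hat X_i = (\hat X_i \otimes I)(\Upsilon_\caU \otimes \Upsilon_\caU)(I\otimes \hat X_i)$; \emph{(iii)} same-order double match gives $\Tr[(M\otimes M)\mathcal S_2^2]$ via $\Tr[A]^2 = \Tr[A\otimes A]$; and \emph{(iv)} swapped double match gives a third contraction of $\mathcal S_2$ with itself obtained from $\hat X_i \hat X_j \otimes \hat X_j \hat X_i = (\hat X_i \otimes I)(\hat X_j \otimes \hat X_j)(I\otimes \hat X_i)$. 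In every case the leading piece equals $\Tr(O\,\caU(\rho))^2$ and cancels $\E[\Lambda_{i,j}]\,\E[\Lambda_{k,\ell}]$, so only the fluctuation of $\mathcal S_2$ around $\Upsilon_\caU \otimes \Upsilon_\caU$ survives.

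To evaluate $\mathcal S_2$, I would separate the two layers of randomness in \eref{eq:defhatX}. Conditioned on $\hat\psi$, the POVM $\mathcal M_s$ generates $\hat\phi$ with probability $(\kappa_s/K)|\langle\phi|U|\psi\rangle|^{2s}$, so the $(s+2)$-design property \eref{eq:statetdesign} yields a clean closed form for $\E[\hat\phi^{\otimes 2}\mid\hat\psi]$ as a partial trace of $\Pi_{\mathrm{sym}}^{(s+2)}/\kappa_{s+2}$ against $(U\hat\psi U^\dagger)^{\otimes s}\otimes I^{\otimes 2}$. Substituting into $\hat X\otimes\hat X$ and reshuffling the four subsystems so that the two copies of $\hat\phi$ and the two copies of $\hat\psi^\top$ are grouped, one arrives at an outer expectation of the form $\hat\psi^{\otimes 2} \otimes (U\hat\psi U^\dagger)^{\otimes 2}$; this reduces under the 4-design identity to a short linear combination of permutations in $S_4$ on $\caH^{\otimes 4}$, with coefficients rational in $d$ and $s$. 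The subtractive term $-((d+1+s)/s)(I\otimes I)$ in $\hat X$ must be carried through, as it cancels the bulk of the leading contribution and leaves exactly the fluctuation structure we need. The $1/s^k$ hierarchy in the stated bounds is inherited from the Weingarten-type coefficients $1/\kappa_{s+2}$ together with the two $1/s$ factors in $\hat X$.

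Finally, I would substitute $\mathcal S_2$ into each of the four contraction patterns above and bound the resulting traces with elementary operator inequalities: $\|O\|_\infty \leq 1$, $\Tr(O^2)\leq B$, $\Tr(\rho)=1$, $\Tr(\rho^2)=\wp$, and $\|U\|_\infty=1$. Each permutation class contributes a cyclic trace of a product of $O$, $\rho$, $U$, $U^\dagger$, which is bounded either by $\|O\|_\infty^2 = 1$ or by $\Tr(O^2)\Tr(\rho^2)=B\wp$ via Cauchy--Schwarz in the Hilbert--Schmidt inner product; taking the sharper of the two reproduces the $\min\{1,B\wp\}$ factor that appears in the one-match cases. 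The hardest step is the explicit reduction of $\mathcal S_2$: one must track cancellations among the up-to-$24$ permutations of $S_4$, each weighted by a different rational function of $d$ and $s$, and correctly distinguish which subsystems carry $U$ versus $U^\dagger$ through the transposes in $\hat\psi^\top$. Once $\mathcal S_2$ is written as a short sum of identity, SWAP, and cyclic permutation operators with explicit coefficients, the case-by-case contract-and-bound step is systematic and yields exactly the stated hierarchy.
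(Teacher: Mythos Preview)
Your plan is the paper's: compute $\mathcal S_2=\E[\hat X\otimes\hat X]$ once as an explicit sum of seven operators (the paper's $\tilde\Delta_1,\ldots,\tilde\Delta_7$, obtained exactly via the two-layer $(s{+}2)$-design and 4-design reductions you describe), then for each of the four cases contract $M\otimes M$ against $\mathcal S_2$ with $\Upsilon_\caU$ inserted for any unmatched index, and bound every resulting cyclic trace with the operator inequalities you list. One bookkeeping caution: because $\Lambda_{k,\ell}^*=\Lambda_{\ell,k}$, the conjugation in $\operatorname{Cov}$ swaps the order of the second factor, and this transposes your case assignments---in fact it is the \emph{order-swapped} double match that yields $\Tr[(M\otimes M)\,\mathcal S_2^2]$ and the \emph{same-position} single match that sandwiches $\mathcal S_2$ between $\Upsilon_\caU\otimes I$ and $I\otimes\Upsilon_\caU$, not the other way around; this will resolve itself once you write the traces out, but recheck your labels before executing.
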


For all combinations of $i,j,k,\ell$ in the $r$th batch with $i \neq j$ and $k \neq \ell$, there are $\mathcal O (q^4)$ combinations where all four indices  are distinct;
$\mathcal O (q^3)$ combinations where exactly one index matches; and $\mathcal O (q^2)$ combinations where both indices match.
These facts, together with \eref{eq:VarY1} and \lref{lem:all_the_covariances}, immediately imply the following proposition, 
which clarifies the variance of $\hat{Z}_{(r)}(O,\rho)$.   
\begin{prop}\label{prop:VarMain}
Given $O\in\obs(B)$ and $\rho\in\caD(\caH)$,  
the variance of $\hat{Z}_{(r)}(O,\rho)$ is upper bounded by
\begin{align}
\mathcal O  \bigg[\, \frac{1}{q} \left( \frac{d \wp}{s} + \min\left\lbrace 1, B\wp\right\rbrace  \right) 
  + \frac{1}{q^2 } \left( \frac{d^4}{s^4} + 1 \right) B \wp \bigg].  
\end{align} 
\end{prop}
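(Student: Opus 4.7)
The plan is to start from the covariance decomposition in Eq.~\eqref{eq:VarY1} and combine it with the case-by-case bounds of Lemma~\ref{lem:all_the_covariances}. The argument is essentially combinatorial: partition the index quadruples $(i,j,k,\ell)$ with $i\neq j$ and $k\neq \ell$ into five disjoint classes, count how many quadruples lie in each class, multiply each class by its covariance bound, and divide the total by $q^2(q-1)^2 d^2=\Theta(q^4 d^2)$.

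First I would enumerate the five classes as: (a) all four indices distinct, contributing $\Theta(q^4)$ terms but zero covariance by independence of the snapshots; (b) one coincidence in different positions ($i=\ell$ or $j=k$), yielding $\Theta(q^3)$ terms; (c) one coincidence in the same position ($i=k$ or $j=\ell$), also $\Theta(q^3)$ terms; (d) the swapped double coincidence $i=\ell$ and $j=k$, $\Theta(q^2)$ terms; and (e) the same-order double coincidence $i=k$ and $j=\ell$, $\Theta(q^2)$ terms. Multiplying each count by the matching covariance bound of Lemma~\ref{lem:all_the_covariances} and then dividing by $\Theta(q^4 d^2)$ produces three kinds of contributions: classes (b) and (c) together give a $1/q$ term of order $\mathcal{O}\bigl(d\wp/(sq)+\min\{1,B\wp\}/q\bigr)$, which already matches the first summand of the proposition; class (d) gives $B/q^2$ times a polynomial in $d/s$ and $\sqrt{d\wp}$; and class (e) gives $B\wp/q^2$ times $\sum_{k=0}^{4}(d/s)^k$.

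The main obstacle, I expect, is collapsing the class (d) polynomial into the clean form $(d^4/s^4+1)\wp$ that appears in the stated bound, since its raw form contains the mixed terms $d^3/s^4$, $d^2/s^3$, $d\sqrt{d\wp}/s^2$, $\sqrt{d\wp}/s$, and $\wp$. To handle this I would invoke the universal lower bound $\wp\geq 1/d$ on the purity of any density operator and split into the regimes $s\leq d$ and $s>d$; in each regime a short exponent comparison shows that each of the five mixed terms is dominated either by $d^4\wp/s^4$ or by $\wp$, completing the absorption. Class (e) is easier: $\sum_{k=0}^{4}(d/s)^k\leq 5\max\{(d/s)^4,1\}\leq 5(d^4/s^4+1)$ gives the second summand of the target bound immediately.

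Adding the four nonzero contributions then finishes the proof. Overall the argument is largely a bookkeeping exercise once Lemma~\ref{lem:all_the_covariances} is available; the only subtle step is the purity-based juggling that reduces the class (d) expression to $(d^4/s^4+1)B\wp$.
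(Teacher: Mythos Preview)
Your proposal is correct and follows exactly the approach the paper uses: partition the quadruples $(i,j,k,\ell)$ into the five coincidence classes, count them ($\Theta(q^4)$, $\Theta(q^3)$, $\Theta(q^3)$, $\Theta(q^2)$, $\Theta(q^2)$), weight each by the corresponding bound from Lemma~\ref{lem:all_the_covariances}, and divide by $q^2(q-1)^2d^2$. Your treatment is in fact more explicit than the paper's, which simply states that the counting together with Lemma~\ref{lem:all_the_covariances} ``immediately imply'' the proposition; your use of the purity bound $\wp\geq 1/d$ and the $s\lessgtr d$ split to absorb the class~(d) polynomial into $(d^4/s^4+1)\wp$ is the right way to make that step rigorous.
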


Having obtained the variance, we can apply Chebyshev's inequality to bound the failure probability of each sample mean estimator $\hat{Z}_{(r)}(O,\rho)$: 
\begin{align}\label{eq:ChebyY}
\operatorname{Pr}\left\{ \left| \hat{Z}_{(r)}(O,\rho) - \Tr\left( O\,\mathcal{U}(\rho)\right) \right| \geq \epsilon\right\} 
\leq 
\frac{\Var\!\big[\hat{Z}_{(r)}(O,\rho)\big]}{\epsilon^2}  . 
\end{align} 
The query cost of our protocol can then be derived 
from this bound.    

\begin{figure}
\begin{center}
\includegraphics[width=7.2cm]{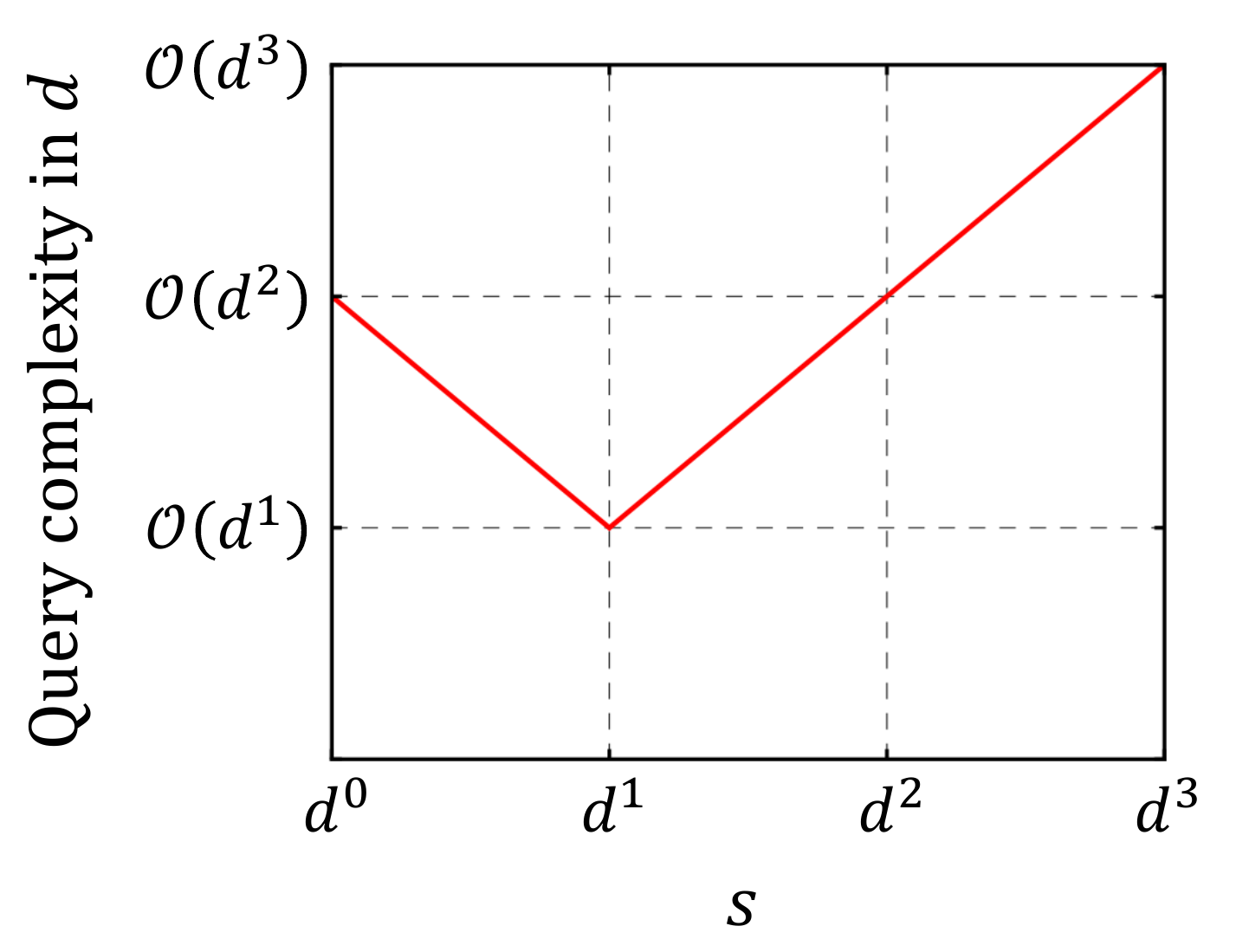}
\caption{\label{fig:QueryCompl}
The query complexity of our protocol [see \eref{eq:generalUB}] with respect to the dimension $d$. Here $s$ denotes the number of systems collectively measured in our learning phase. The optimal complexity $\bigo{d}$ is attained when $s=\Theta(d)$. 
}
\end{center}
\end{figure}

\begin{theorem}\label{thm:jointUpperBound}
Suppose $s\geq 1$ is the number of systems collectively measured in our learning phase, $\rho_1,\dots, \rho_M\in \caD(\caH)$, and  $O_1,\dots, O_M\in \obs(B)$. 
To estimate all $M$ properties $\Tr\left( O_l \,\mathcal{U}(\rho_l)\right)$
within additive error $\epsilon$ and failure probability at most $\delta$,
the number of queries required by our protocol for CSEU is upper bounded by 
\begin{align} \label{eq:generalUB}
	\bigo{ \left( \frac{\max\{d,s\}}{\epsilon^2} + \frac{\max\{d^2,s^2\}\sqrt{B}}{s\epsilon} \right) \log\left( \frac{M}{\delta}\right)}. 
\end{align} 
\end{theorem}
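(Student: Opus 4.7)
The plan is to combine Proposition \ref{prop:VarMain} with Chebyshev's inequality and a standard median-of-means boosting argument, then translate the resulting snapshot count into a query count by accounting for the $s$ copies consumed per round of the learning phase.

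First, I would fix an arbitrary pair $(O_\ell,\rho_\ell)$ and bound the deviation probability of a single sample-mean estimator $\hat Z_{(r)}(O_\ell,\rho_\ell)$. Since the state purity satisfies $\wp=\Tr(\rho_\ell^2)\le 1$, the variance bound from Proposition~\ref{prop:VarMain} simplifies to
\begin{align}
\Var\!\big[\hat Z_{(r)}(O_\ell,\rho_\ell)\big]
\;=\;\mathcal O\!\left(\frac{d+s}{sq} + \frac{\max\{d^4,s^4\}B}{s^4 q^2}\right).
\end{align}
Chebyshev's inequality (equation~\eqref{eq:ChebyY}) then says that $|\hat Z_{(r)}-\Tr(O_\ell\mathcal U(\rho_\ell))|<\epsilon$ fails with probability at most $\Var[\hat Z_{(r)}]/\epsilon^2$. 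I would therefore pick the batch size
\begin{align}
q \;=\; \Theta\!\left(\frac{\max\{d,s\}}{s\,\epsilon^2} + \frac{\max\{d^2,s^2\}\sqrt B}{s^2\,\epsilon}\right),
\end{align}
which is exactly large enough so that the two terms in $\Var[\hat Z_{(r)}]/\epsilon^2$ are each at most $1/6$, giving a constant failure probability (say $1/3$) for the individual estimator.

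Next I would apply the standard median-of-means amplification. Because $\hat Z_{(1)},\dots,\hat Z_{(R)}$ are independent (they use disjoint batches of snapshots), a Chernoff/Hoeffding bound on the sign-indicator variables shows that $\hat E(O_\ell,\rho_\ell)$ in equation~\eqref{eq:definehatE} deviates from $\Tr(O_\ell\mathcal U(\rho_\ell))$ by more than $\epsilon$ with probability at most $\exp(-\Omega(R))$. Choosing $R=\Theta(\log(M/\delta))$ makes this at most $\delta/M$, and a union bound over $\ell=1,\dots,M$ gives the desired simultaneous guarantee with failure probability at most $\delta$.

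Finally, I would translate snapshots into queries. The total number of snapshots used is $m=qR$, and each round of the learning phase consumes exactly $s$ applications of $\caU$ (Step~2 of $\mathcal A_{\text{learn}}$). Therefore the total query count is
\begin{align}
s\cdot m \;=\; s\cdot q\cdot R \;=\; \mathcal O\!\left(\left(\frac{\max\{d,s\}}{\epsilon^2} + \frac{\max\{d^2,s^2\}\sqrt B}{s\,\epsilon}\right)\log\!\frac{M}{\delta}\right),
\end{align}
which matches equation~\eqref{eq:generalUB}. There is no conceptually hard step here once Proposition~\ref{prop:VarMain} is in hand; the only thing to be careful about is simplifying the variance bound using $\wp\le 1$ and balancing the two distinct scalings (the $1/q$ and $1/q^2$ contributions) correctly, so that the final $q$ is the sum — rather than the product — of the two regimes' requirements and each term contributes its own factor in the final complexity.
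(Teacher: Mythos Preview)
Your proposal is correct and follows essentially the same route as the paper's proof in Appendix~\ref{sec:ProofThm1}: simplify the variance bound of Proposition~\ref{prop:VarMain} using $\wp\le 1$, choose $q$ so that Chebyshev gives a constant per-batch failure probability, amplify via median-of-means with $R=\Theta(\log(M/\delta))$ batches, apply a union bound over the $M$ targets, and multiply by $s$ to convert snapshots to queries. The only cosmetic differences are that the paper first treats $M=1$ with $R=\mathcal O(\log\delta^{-1})$ and then invokes the union bound to replace $\delta$ by $\delta/M$, and it fixes the per-batch failure threshold at $1/4$ rather than $1/3$; neither affects the argument.
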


\tref{thm:jointUpperBound} is a simplified version of \tref{thm:UpperBoundFormal} in Appendix~\ref{sec:ProofThm1}. 
Note that the query complexity is closely related to the value of $s$; see Fig.~\ref{fig:QueryCompl} for an illustration. 
In particular, when $s=1$, i.e., single-copy measurements are used, the query number reduces to 
\begin{align} \label{eq:singleUB}
\bigo{ \left( \frac{d}{\epsilon^2} + \frac{d^2\sqrt{B}}{\epsilon} \right) \log\left(\frac{M}{\delta}\right)}.  
\end{align} 
As $s$ increases from 1, the query number decreases and reaches the following minimum value when $s=\Theta(d)$,     
\begin{align} \label{eq:jointUB}
	\bigo{ d \,\left(  \frac{1}{\epsilon^2} + \frac{\sqrt{B}}{\epsilon} \right) \log\left( \frac{M}{\delta}\right) }. 
\end{align} 
When $s$ exceeds $\Theta(d)$, however, the query number starts to increase. 
Hence, if we are allowed to perform collective measurements on multiple systems, the best choice of $s$ is $s=\Theta(d)$. Equations~\eqref{eq:singleUB} and \eqref{eq:jointUB}
show that our protocol significantly outperforms previous approaches for the CSEU task (see Sec.~\ref{sec:PriorWork} for details). This exceptional query performance constitutes a key advantage of our protocol. 

It is worth noting that while the symmetric collective measurement $\mathcal{M}_{s}$ with $s=\Theta(d)$ can achieve maximal reduction in query complexity, it demands both a state $\Theta(d)$-design and joint operations on $\Theta(d)$ systems. These requirements make its experimental implementation extremely challenging with current technology. 
By contrast, choosing $s=1$ only requires performing Clifford unitaries and measurements in the computational basis, which are significantly easier to realize.  
In practice, one may adjust the parameter $s$ to strike a balance between query cost and measurement feasibility.

\section{Lower bound for the query complexity}\label{sec:shadowLB}
Although our protocol has a much higher query efficiency than previous approaches for the CSEU problem, 
it still requires order $d$ applications of the unknown unitary channel $\caU$. 
One may further ask if a better protocol exists or
if there is a fundamental limitation on the query cost.
In this section, we show that indeed a $\tilde\Omega(d)$ query complexity lower bound exists for the CSEU task,   
which stems from information-theoretic reasons. 

The CSEU task is formally defined as follows; we refer to Fig.~\ref{fig:CSEU} for a visualization. 

\begin{defn}[CSEU task]\label{defn:UShadowTask}
The CSEU task involves 
developing a protocol $\mathcal{B}$ that contains two separate 
algorithms $\mathcal{B}_{\text {learn}}$ and $\mathcal{B}_{\text {pred}}$. 
The ``learning'' algorithm $\mathcal{B}_{\text {learn}}$ is given  access to an unknown $n$-qubit unitary channel $\mathcal{U}$; 
after applying it multiple times in some quantum experiments, $\mathcal{B}_{\text {learn}}$ should output some classical data. 
The ``prediction'' algorithm $\mathcal{B}_{\text {pred}}$ takes this classical data, 
a collection of states $\rho_1,\dots, \rho_M$ in $\caD(\caH)$ and 
observables $O_1,\dots, O_M$ in $\obs(B)$ as inputs, and should output $M$ estimates 
$E(O_l,\rho_l)$.  
The goal of the task is to reach 
\begin{align}\label{eq:goalCSEU}
\Pr\left\{ \left| E(O_l,\rho_l)-\Tr\left( O_l\,\mathcal{U}(\rho_l)\right) \right| 
          \!< \epsilon\ \, \forall\,  1\leq l\leq M \right\}\! \geq 1- \delta , 
\end{align} 
where $0<\delta,\epsilon<1$ are accuracy parameters. 
\end{defn}

\begin{theorem}\label{thm:jointLowerBound}
Suppose  $1\leq B\leq d$, $0<\delta,\epsilon\leq1/3$, and $M\geq 1$. 
Then any protocol has to query the channel $\mathcal{U}$ 
at least $\tilde\Omega\left(d\right)$ times to complete the CSEU task. 
\end{theorem}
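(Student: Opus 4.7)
The plan is to establish the lower bound via a reduction from learning an unknown Boolean function to the CSEU task. For each $f:\{0,1\}^n\to\{0,1\}$, let $U_f=\sum_{x\in\{0,1\}^n}(-1)^{f(x)}|x\>\<x|$ be the phase oracle and $\caU_f$ its induced channel. Each invocation of $\caU_f$ inside the learning phase of a CSEU protocol---even in the with-memory model of \cref{sec:preli}, which allows ancillas and interleaved quantum channels---is implemented by exactly one standard query to $U_f$. Hence any query-efficient CSEU protocol immediately yields a query-efficient algorithm for learning $f$ from its phase oracle.

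I would then design the prediction-phase test instances so that the outputs expose $f$ bit-by-bit. For each $x\in\{0,1\}^n\setminus\{0^n\}$, set $|\phi_x\>=(|0^n\>+|x\>)/\sqrt{2}$, $\rho_x=|\phi_x\>\<\phi_x|$, and $O_x=(|0^n\>\<x|+|x\>\<0^n|)/\sqrt{2}$. A direct calculation gives
\begin{align}
\Tr\bigl(O_x\,\caU_f(\rho_x)\bigr)=\frac{(-1)^{f(0^n)\oplus f(x)}}{\sqrt{2}},
\end{align}
while $\|O_x\|_\infty=1/\sqrt{2}\leq 1$ and $\Tr(O_x^2)=1\leq B$ show that $O_x\in\obs(B)$ for every $B\in[1,d]$. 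Since the target value is $\pm 1/\sqrt{2}$, any additive estimate of accuracy $\epsilon\leq 1/3$ reveals its sign and hence the parity $f(0^n)\oplus f(x)$. Feeding the $M=d-1$ pairs $\{(O_x,\rho_x)\}_{x\neq 0^n}$ to any CSEU protocol with failure $\delta\leq 1/3$ therefore simultaneously recovers all $d-1$ parities with probability $\geq 2/3$; restricting the hard family to the $2^{d-1}$ functions satisfying $f(0^n)=0$ removes the global-flip ambiguity and exactly identifies $f$.

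The final step is to invoke a standard quantum query lower bound for learning Boolean functions: identifying a uniformly random $f\in\{0,1\}^d$ from its phase (equivalently, bit) oracle with constant success probability requires $\tilde\Omega(d)$ quantum queries---via, e.g., van Dam's quantum oracle interrogation bound or Nayak-type random-access-code arguments. Combined with the reduction above, this forces any CSEU protocol to use $\tilde\Omega(d)$ queries of $\caU$, completing the proof. The main obstacle is stating the Boolean-function lower bound in a form flexible enough to accommodate all features of the CSEU framework---adaptive queries, arbitrary interleaved quantum channels, ancillas, and intermediate measurements---which requires an adversary-style or information-theoretic (Holevo/Nayak) argument rather than a bare polynomial-method bound. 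Once that lower bound is in hand, the remaining ingredients---the channel-observable arithmetic above and the check $O_x\in\obs(B)$---are routine.
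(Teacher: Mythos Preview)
Your reduction is essentially the paper's---both embed exact learning of a Boolean function into CSEU---with only cosmetic differences: you use the diagonal phase oracle $U_f=\sum_x(-1)^{f(x)}|x\>\<x|$ on $n$ qubits together with off-diagonal observables $(|0^n\>\<x|+|x\>\<0^n|)/\sqrt{2}$, whereas the paper uses the bit oracle $\mathrm{MQ}(c):|x,b\>\mapsto|x,b\oplus c(x)\>$ on $n{+}1$ qubits together with the rank-one projectors $|x,1\>\<x,1|$. Both choices satisfy the $\obs(B)$ constraints and yield the same $\tilde\Omega(d)$ conclusion once a suitable Boolean-learning lower bound is invoked.

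There is, however, one genuine gap. The theorem must hold for \emph{every} $M\geq 1$, in particular $M=1$; a CSEU protocol for $M=d-1$ trivially solves the $M=1$ task, but not conversely. Your argument directly feeds $M=d-1$ test pairs to the protocol, so as written it only lower-bounds protocols that already handle $d-1$ simultaneous predictions with failure $\leq 1/3$. The paper closes this gap explicitly in its ``Part~1'': starting from any $M=1$ protocol with $N$ queries, it repeats the learning phase $k=O(\log(M/\delta))=O(\log d)$ times and takes medians, producing an $M=2^n$ protocol with $N\cdot O(\log d)$ queries; \emph{then} the reduction applies, giving $N\cdot O(\log d)\geq\Omega(d)$ and hence $N\geq\Omega(d/\log d)=\tilde\Omega(d)$. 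This boosting step is also where the logarithmic slack in $\tilde\Omega(d)$ actually enters. Without it your proof is incomplete for small $M$; once you add it, the rest of your argument goes through exactly as written.
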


By this theorem, an exponential query complexity (with respect to the qubit number) for the CSEU task is inevitable, even if highly complex adaptive or collective operations are used.  
In addition, the scaling of the lower bound in dimension $d$ matches (up to a $\log d$ factor) the query complexity of our protocol constructed in Sec.~\ref{sec:OurProtocol} [see \eref{eq:jointUB}]. 
In other words, our protocol for CSEU is nearly query-optimal.

\subsection{Proof of \tref{thm:jointLowerBound}}
In this subsection, we prove \tref{thm:jointLowerBound} via a reduction from a problem of learning Boolean functions to the CSEU task. 
That is, any protocol for the CSEU task can be modified to complete another task defined below, 
whose query complexity lower bound is known in the literature. 
This existing lower bound will imply that the CSEU task must require a significant amount of queries.

\begin{defn}\label{defn:LearnBoolean}
	The \emph{exact learning task of Boolean functions} involves developing a quantum protocol $\mathcal{P}$ for learning 
	an unknown Boolean function $c:\{0,1\}^n \rightarrow\{0,1\}$. 
	Here $\mathcal{P}$ is given access to an oracle $\mathrm{MQ}(c)$, which is a quantum unitary satisfying  
	$\mathrm{MQ}(c)|x, b\>=|x, b \oplus c(x)\>$ for all $x \in\{0,1\}^n$ and $b \in\{0,1\}$. 
	The goal of the task is to achieve: 
	with probability at least $2/3$, $\mathcal{P}$ outputs a classical function $h$ such that $h(x)=c(x)$ for all $x \in\{0,1\}^n$.
\end{defn}

\begin{lemma}[\cite{Kothari14,arunachalam2017survey}]\label{lem:LearnBoolean}
Any learning protocol has to query the oracle $\mathrm{MQ}(c)$ at least $\Omega\left(2^n\right)$ times 
to achieve the exact learning task of Boolean functions. 
\end{lemma}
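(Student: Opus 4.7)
The plan is to reduce the lower bound to an information-theoretic inequality by placing a uniform prior on the unknown Boolean function $c$.

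First, I would pass to the averaged setting: let $c:\{0,1\}^n\to\{0,1\}$ be drawn uniformly at random from all $2^{2^n}$ Boolean functions. Any protocol that succeeds on every $c$ with probability at least $2/3$ succeeds with probability at least $2/3$ in expectation over this prior, so by Fano's inequality its classical output $\hat h$ must satisfy
\[
I(c;\hat h)\,\geq\,\tfrac{2}{3}\,H(c)-O(1)\,=\,\Omega(2^n),
\]
since $H(c)=2^n$.

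Second, by the quantum data-processing inequality, $I(c;\hat h)$ is upper-bounded by the Holevo information $\chi(\{\rho_c\}_c)$ of the ensemble of pre-measurement states $\rho_c$ produced by the protocol under oracle $\mathrm{MQ}(c)$. I would purify this ensemble by introducing an $N$-qubit ``function register'' (with $N=2^n$) holding the truth table of $c$, so that $\chi$ equals the entanglement entropy across the function-register versus algorithm-workspace cut. In this purified picture, each call of $\mathrm{MQ}(c)$ is a controlled gate that XORs a single function-register bit $c(x)$ (selected by the query register value $x$) onto the response register; exploiting this single-bit-XOR structure, one argues that $\chi$ grows by only a bounded amount per oracle call, hence $\chi=O(T)$ after $T$ queries. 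Combining with the first step yields $T=\Omega(2^n)$.

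The main obstacle is the per-query bound on the growth of $\chi$ when the query register is in a nontrivial superposition. A naive entanglement-capacity argument shows that a computational-basis query on input $x_0$ generates at most one ebit across the cut, but a uniform-superposition query over all $N$ inputs can generate up to $\Theta(n)$ ebits, which on its own gives only $T=\Omega(2^n/n)$. Closing this $n$-factor gap to obtain the tight $\Omega(2^n)$ bound is the technical heart of the argument, and the cleanest route is to replace the bare entanglement argument with the $\gamma$-parameter lower bound of Servedio and Gortler for quantum exact learning: the class of all Boolean functions has witness-cardinality parameter $\gamma=\Theta(2^n)$ (because uniquely identifying any target function requires observing its value at all $2^n$ inputs), and the corresponding adversary-style lower bound yields $T=\Omega(\gamma)=\Omega(2^n)$, matching the claim.
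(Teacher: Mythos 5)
First, note that the paper does not prove \lref{lem:LearnBoolean} at all: it is imported verbatim from Refs.~\cite{Kothari14,arunachalam2017survey}, so there is no in-paper argument to compare against; your proposal has to stand on its own. Your first two steps (uniform prior, Fano, Holevo/data-processing) are fine, and you correctly diagnose that the per-query entangling capacity of $\mathrm{MQ}(c)$ across the function-register cut is only $O(n)$, so this route caps out at $\Omega(2^n/n)$. The problem is your proposed repair. The Servedio--Gortler lower bound in terms of the combinatorial parameter $\hat\gamma^{C}$ is $\Omega\bigl(\sqrt{1/\hat\gamma^{C}}\bigr)$ for \emph{quantum} membership-query learners; the linear bound $\Omega(1/\hat\gamma^{C})$ is their \emph{classical} bound, and it provably cannot hold quantumly in general (for the class of point functions $C=\{e_x\}$ one has $1/\hat\gamma^{C}=\Theta(2^n)$ yet Grover search exactly learns the target with $O(2^{n/2})$ queries). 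For the class of all Boolean functions $1/\hat\gamma^{C}=\Theta(2^n)$, so this route yields only $\Omega(2^{n/2})$ --- strictly weaker than even your information-theoretic step. The claim ``$T=\Omega(\gamma)$'' in your last paragraph is therefore a genuine gap, not a technicality.

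The standard way to get the full $\Omega(2^n)$ is a reduction to parity: an exact learner outputs the entire truth table of $c$ with probability $\ge 2/3$, hence yields a bounded-error algorithm for $\mathrm{PARITY}_N(c)=\bigoplus_{x}c(x)$ with $N=2^n$, and the bounded-error quantum query complexity of $\mathrm{PARITY}_N$ is $\lceil N/2\rceil$ by the polynomial method (approximate degree $N$). Alternatively one can invoke Kothari's tight $\Theta(N)$ characterization of oracle identification when the candidate set is all of $\{0,1\}^N$, which is exactly what the paper cites. It is worth adding that for the paper's downstream use in \tref{thm:jointLowerBound} your weaker $\Omega(2^n/n)$ bound would actually suffice, since the reduction there already loses a factor of $n$ and the final statement is only $\tilde\Omega(d)$; but as a proof of \lref{lem:LearnBoolean} as stated, the proposal does not reach $\Omega(2^n)$.
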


Now we are ready to prove \tref{thm:jointLowerBound}. 
Suppose the unitary channel $\mathcal{U}$ is acting on $n+1$ qubits, and 
there is a protocol $\mathcal{B}=\left( \mathcal{B}_{\text {learn}}, \mathcal{B}_{\text {pred}}\right)$ that can use $N$ queries of $\mathcal{U}$ to successfully achieve the CSEU task, with parameters $B=M=1$ and $\delta,\epsilon=1/3$. 
In the following, we shall prove that necessarily $N\geq \Omega\left(2^n/n\right)$, which implies \tref{thm:jointLowerBound}. 
Note that here we assume that the qubit number is $n+1$ instead of $n$ for the convenience of discussion.
The proof below is divided into two parts.

\subsubsection*{Part 1: Predicting many expectation values}
In this part, we show that there exists a new protocol 
$\mathcal{B}^*=( \mathcal{B}^*_{\text{learn}},\mathcal{B}^*_{\text {pred}})$ that can use 
$N \mathcal{O} \big(\!\log(M/\delta) \big)$ queries of $\mathcal{U}$ to complete the CSEU task 
with parameters $B=1$, $\epsilon=1/3$, any $0<\delta<1$, and any integer $M\geq1$.

Based on $\mathcal{B}$, a new protocol $\mathcal{B}^*$ is constructed as follows using the median-of-means technique:  
in the learning phase, we run
the algorithm $\mathcal{B}_{\text {learn}}$ in total $k$ times independently 
to obtain $k$ pieces of classical data; 
in the prediction phase, we classically process these data using $\mathcal{B}_{\text {pred}}$ to obtain $k$ 
estimates $E_1(O_l,\rho_l),\dots,E_k(O_l,\rho_l)$ for each $1\leq l \leq M$, 
and we take the median of them as our final estimate for 
$\Tr\left( O_l\,\mathcal{U}(\rho_l)\right)$: 
\begin{align}
\tilde{E}(O_l,\rho_l):=\operatorname{median} \left\{E_1(O_l,\rho_l),\dots,E_k(O_l,\rho_l)\right\}. 
\end{align}
To estimate all $\Tr\left( O_l\,\mathcal{U}(\rho_l)\right)$ accurately with failure probability 
at most $\delta$, it suffices to choose $k=\mathcal{O} \big(\!\log(M/\delta) \big)$
by an analysis identical to that used in the proof of \tref{thm:jointUpperBound}.
Thus, the total number of queries required by $\mathcal{B}^*$ is $N\cdot k=N \mathcal{O} \big(\!\log(M/\delta) \big)$. 
In particular, when $\delta=1/3$ and $M=2^n$, the query number reads 
\begin{align}
N \mathcal{O} \big(\!\log(M/\delta) \big)= N \mathcal{O} (\log (3\times 2^n)) = N \mathcal{O}(n). 
\end{align}

\subsubsection*{Part 2: Reduction to the learning task of Boolean functions}
In this part, we show that the protocol $\mathcal{B}^*$ constructed in Part 1 can be used to complete the 
exact learning task of Boolean functions given in Definition~\ref{defn:LearnBoolean}. 
Consequently, the lower bound on the query complexity for this exact learning task can be turned into a lower bound for the CSEU task.

For $x\in\{0,1\}^n$, define quantum state $\rho_x=|x, 0\rangle\langle x, 0|$ and observable $O_x=|x, 1\rangle\langle x, 1|\in\obs(1)$. 
Then for any Boolean function $c:\{0,1\}^n \rightarrow\{0,1\}$, its corresponding $\mathrm{MQ}(c)$ oracle (see Definition~\ref{defn:LearnBoolean}) satisfies 
\begin{align}\label{eq:Bool=Tr}
	\operatorname{Tr}\left[O_x \mathrm{MQ}(c) \rho_x \mathrm{MQ}(c)^{\dagger}\right]
	= \big| \< x, 1| x, c(x)\> \big|^2 
	= c(x). 
\end{align}
Recall that by using $N \mathcal{O} (n)$ queries of the $\mathrm{MQ}(c)$ oracle, the protocol $\mathcal{B}^*$ can 
estimate all $\Tr\left[O_x \mathrm{MQ}(c) \rho_x \mathrm{MQ}(c)^{\dagger}\right]$ up to precision $\epsilon=1/3$, with failure probability at most $1/3$. 
Let $\tilde{E}_x$ be the estimate for $\Tr\left[O_x \mathrm{MQ}(c) \rho_x \mathrm{MQ}(c)^{\dagger}\right]$ and define the function 
\begin{align}
	h(x):=\left\{\begin{array}{lll}
		0 & \text { when } & \tilde{E}_x\leq 1/2,  \\
		1 & \text { when } & \tilde{E}_x>    1/2.
	\end{array}\right. 
\end{align}
According to Eq.~\eqref{eq:Bool=Tr}, with probability at least $2/3$, we have $h(x)=c(x)$ for all $x \in\{0,1\}^n$.
Therefore, $\mathcal{B}^*$ can successfully achieve the exact learning task of Boolean functions by using $N \mathcal{O} (n)$ queries.  
This query number cannot be smaller than the lower bound in \lref{lem:LearnBoolean};
that is, $N \mathcal{O} (n) \geq \Omega\left(2^n\right)$. 
It follows that $N \geq \Omega\left(2^n/n\right)$, which completes the proof of Theorem~\ref{thm:jointLowerBound}.

\section{CSEU in the average case}\label{sec:AverageCase}
We have so far considered CSEU in the worst-case scenario, where the learner is asked to make an accurate 
prediction for any input state $\rho$ and observable $O$. 
As shown in Sec.~\ref{sec:shadowLB}, at least $\tilde\Omega\left(d\right)$ queries of the 
unitary channel $\mathcal{U}$ are required to achieve this goal. 
In this section, we turn to consider CSEU in the average-case scenario, and show that 
the query complexity lower bound $\tilde\Omega\left(d\right)$ can be broken in many parameter regimes.

In the average-case scenario, we still consider that an arbitrary observable $O\in\obs(B)$ is input into the prediction phase. 
Using the classical data obtained from the learning phase, the goal is to find a real-valued function $E(O,\rho)$ 
that can predict $\Tr\left( O\,\mathcal{U}(\rho)\right)$ within a small mean absolute error
\begin{align}\label{eq:Xi(caT,O)}
\Xi(\caT,O):= \underset{\rho \sim \caT}{\rm{Ave}} \, \left|E(O,\rho)-\Tr\left( O\,\mathcal{U}(\rho)\right) \right| ,  
\end{align}
where the average is over quantum states $\rho$ that obey some specific distribution $\caT$ on $\caD(\caH)$.\footnote{In this section we use the notation ``$\underset{\rho \sim \caT}{\rm{Ave}}$'' instead of ``$\underset{\rho \sim \caT}{\E}$'' to avoid confusion with the meaning of the expectation $\E$ in Sec.~\ref{sec:OurProtocol}.}

The above task can be solved via our learning protocol developed in Sec.~\ref{sec:OurProtocol}:  
in the prediction phase, we set the number of batches $R=1$, and use $\hat{E}(O,\rho)$ in \eref{eq:definehatE} 
as the function for estimating $\Tr\left( O\,\mathcal{U}(\rho)\right)$. 
As clarified in the following lemma, the performance of our protocol in the average-case scenario is 
closely related to the average variance of  $\hat{E}(O,\rho)$, see Appendix~\ref{sec:AverageGoal} for a proof.  
\begin{lemma}\label{lem:AverageGoal}
Suppose $\caT$ is a state distribution on $\caD(\caH)$, $O\in\obs(B)$, accuracy parameters $0<\delta,\epsilon<1$, and 
\begin{align}\label{eq:AverageGoalcrit}
\underset{\rho \sim \caT}{\rm{Ave}} \, \Var \left[ \hat{E}(O,\rho) \right] 
\leq \frac{\delta\epsilon^2}{4},  
\end{align}
Then, with failure probability at most $\delta$, the mean absolute error $\Xi(\caT,O)< \epsilon$. 
\end{lemma}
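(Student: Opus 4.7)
My plan is to control the (random) mean absolute error $\Xi(\caT,O)$ through its second moment over the protocol randomness, and then invoke Markov's inequality. The starting point, already established in Section~\ref{sec:PredPhase}, is that for every fixed input state $\rho$ the estimator $\hat{E}(O,\rho)$ is unbiased, so that the protocol's second central moment coincides with $\Var_{\text{protocol}}[\hat{E}(O,\rho)]$. This identity will be what turns a hypothesis about average variance into one about average squared error.

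The next step is to view $\Xi(\caT,O)$ itself as a nonnegative random variable, measurable with respect to the learning-phase randomness. Applying Jensen's inequality to the inner $\rho$-average upgrades the mean absolute deviation into a mean squared deviation,
\[
\Xi(\caT,O)^{2} \;\leq\; \underset{\rho\sim\caT}{\E}\,\big(\hat{E}(O,\rho)-\Tr(O\,\caU(\rho))\big)^{2}.
\]
I would then take expectation over the protocol randomness on both sides, exchange the two expectations by Fubini, and use the unbiasedness identity to rewrite the right-hand side as $\underset{\rho\sim\caT}{\E}\,\Var_{\text{protocol}}[\hat{E}(O,\rho)]$, which is at most $\delta\epsilon^{2}/4$ by the hypothesis~\eqref{eq:AverageGoalcrit}.

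Finally, I would apply Markov's inequality to the nonnegative random variable $\Xi(\caT,O)^{2}$: since its expectation is at most $\delta\epsilon^{2}/4$,
\[
\Pr\!\left[\Xi(\caT,O) \geq \epsilon\right] \;=\; \Pr\!\left[\Xi(\caT,O)^{2} \geq \epsilon^{2}\right] \;\leq\; \frac{\delta}{4} \;\leq\; \delta,
\]
which gives the claimed failure probability bound (in fact with a factor-of-four slack). I do not anticipate a substantive obstacle here; the argument is a short chain of Jensen, Fubini, and Markov. The one point that needs to be handled carefully, and that isolates the ``average-case'' character of the statement, is to keep in mind that $\Xi(\caT,O)$ itself is a random variable inheriting its randomness from the learning phase, so one must bound its moments rather than treat it as deterministic.
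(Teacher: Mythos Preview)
Your argument is correct and in fact more streamlined than the paper's. The paper reaches the same conclusion via a two-step decomposition: it first bounds $\E_{\text{protocol}}[\Xi(\caT,O)]\leq\epsilon/2$ (by two applications of Jensen) and then separately bounds $\Pr\big[|\Xi-\E[\Xi]|\geq\epsilon/2\big]\leq\delta$ via Chebyshev, with both pieces ultimately reducing to the same inequality $\E_{\text{protocol}}[\Xi^{2}]\leq\E_{\rho}\Var_{\text{protocol}}[\hat{E}(O,\rho)]\leq\delta\epsilon^{2}/4$. You instead apply Markov directly to $\Xi^{2}$, which avoids the split and even yields the sharper bound $\Pr[\Xi\geq\epsilon]\leq\delta/4$. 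Substantively the two proofs hinge on the same identity (unbiasedness of $\hat{E}(O,\rho)$ making the protocol second moment equal the variance) and the same Jensen step on the inner $\rho$-average; your route is the more economical one.
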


When $R=1$, the function $\hat{E}(O,\rho)$ reduces to the mean estimate $\hat{Z}_{(r)}(O,\rho)$ in \eref{eq:definetildeZ}, whose 
variance has been clarified in \pref{prop:VarMain}. 
So we can use \pref{prop:VarMain} and \lref{lem:AverageGoal} to determine the number of queries needed by our protocol in the average-case scenario for any state distribution $\caT$. 

In the following, we assume that $\caT=\pi_{d,\lambda}$ is a distribution of random density operators in $\caD(\caH)$ 
constructed by taking partial trace of Haar-random pure states on a composite system \cite{collins2016random,nechita2007asymptotics,zyczkowski2011}.

\begin{defn} \label{defn:StateDist}
The \emph{$\lambda$-induced distribution $\pi_{d,\lambda}$ of density matrix}
is the distribution on $\mathcal{D}(\mathcal{H})$ induced by the uniform distribution of pure states in $\mathcal{H} \otimes \mathcal{H}_\lambda$, where the dimensions of $\mathcal{H}$ and $\mathcal{H}_\lambda$ are $d$ and $\lambda$, respectively. 
A state $\rho$ following the distribution $\pi_{d,\lambda}$ can be generated by $\rho=\operatorname{Tr}_\lambda(|\varphi\rangle\langle\varphi|)$, 
where $|\varphi\rangle\in\mathcal{H} \otimes \mathcal{H}_\lambda$ is a pure state drawn from the Haar measure. 
\end{defn}

The average purity of quantum states $\rho\sim\pi_{d,\lambda}$ reads $\frac{d+\lambda}{\lambda d +1}\approx\frac{1}{d}+\frac{1}{\lambda}$ \cite{lubkin1978entropy}. 
This fact together with \pref{prop:VarMain} immediately implies the following corollary.

\begin{cor}\label{cor:VarAverage}
Suppose $1\leq s\leq d$ is the number of systems collectively measured in the learning phase, 
$R=1$ is the number of batches we set in the prediction phase, dimension $\lambda\leq d$, and observable $O\in\obs(B)$. Then  
\begin{align}
&\underset{\rho \sim \pi_{d, \lambda}}{\rm{Ave}}  \Var \!\big[ \hat{E}(O,\rho) \big] 
\nonumber\\
&\quad 
=\mathcal O  \bigg[\, \frac{1}{m} \left( \frac{d}{s\lambda} + \min\left\lbrace 1, \frac{B}{\lambda} \right\rbrace  \right) 
+ \frac{d^4 B}{m^2 s^4 \lambda} \bigg], 
\end{align}
where $m$ is the number of snapshots obtained from the learning phase.   
In particular, when $s=1$ we have  
\begin{align}
\underset{\rho \sim \pi_{d, \lambda}}{\rm{Ave}}  \Var \!\big[ \hat{E}(O,\rho) \big]  
	&=\mathcal O  \left( \frac{d}{m \lambda} + \frac{d^4B}{m^2 \lambda}\right). 
\end{align}
\end{cor}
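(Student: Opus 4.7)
The plan is to derive Corollary~\ref{cor:VarAverage} as a direct consequence of Proposition~\ref{prop:VarMain} by averaging the pointwise variance bound over $\rho\sim\pi_{d,\lambda}$. First I would observe that setting $R=1$ collapses the median-of-means construction to a single batch, so $\hat{E}(O,\rho)=\hat{Z}_{(1)}(O,\rho)$ and the per-batch size is $q=m$. Substituting $q=m$ into Proposition~\ref{prop:VarMain} yields the pointwise bound
\begin{equation*}
\underset{\rm{protocol}}{\Var}\bigl[\hat{E}(O,\rho)\bigr]=\mathcal{O}\!\left[\frac{1}{m}\!\left(\frac{d\wp}{s}+\min\{1,B\wp\}\right)+\frac{1}{m^{2}}\!\left(\frac{d^{4}}{s^{4}}+1\right)\!B\wp\right],
\end{equation*}
valid for every fixed state $\rho$ with purity $\wp=\Tr(\rho^{2})$.

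Next I would take the expectation of both sides with respect to $\rho\sim\pi_{d,\lambda}$. Since the bound is linear in $\wp$ except for the $\min\{1,B\wp\}$ term, almost everything reduces to computing $\E[\wp]$. Here I would invoke the standard purity formula for the induced measure, namely $\E_{\rho\sim\pi_{d,\lambda}}[\wp]=(d+\lambda)/(\lambda d+1)$, already noted in the excerpt. Under the hypothesis $\lambda\le d$ this evaluates to $\E[\wp]=\mathcal{O}(1/\lambda)$, so each occurrence of $\wp$ contributes a $1/\lambda$ factor. To handle the nonlinear term $\min\{1,B\wp\}$, I would use the concavity of $x\mapsto\min\{1,x\}$ together with Jensen's inequality to obtain $\E[\min\{1,B\wp\}]\le\min\{1,B\E[\wp]\}=\mathcal{O}(\min\{1,B/\lambda\})$.

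Combining these estimates gives
\begin{equation*}
\underset{\rho\sim\pi_{d,\lambda}}{\E}\underset{\rm protocol}{\Var}\bigl[\hat{E}(O,\rho)\bigr]=\mathcal{O}\!\left[\frac{1}{m}\!\left(\frac{d}{s\lambda}+\min\!\left\{1,\tfrac{B}{\lambda}\right\}\right)+\frac{1}{m^{2}}\!\left(\frac{d^{4}}{s^{4}}+1\right)\!\frac{B}{\lambda}\right],
\end{equation*}
and the factor $(d^{4}/s^{4}+1)$ simplifies to $\mathcal{O}(d^{4}/s^{4})$ under the assumption $1\le s\le d$, which reproduces the first displayed bound in the corollary. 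For the specialisation $s=1$, I would finally note that $\lambda\le d$ implies $d/\lambda\ge 1\ge\min\{1,B/\lambda\}$, so the $\min$-term is absorbed into the $d/(m\lambda)$ term, and the second displayed bound follows.

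Overall this is a short calculation rather than a deep argument, so I do not anticipate a substantial obstacle. The only subtle point is the treatment of $\E[\min\{1,B\wp\}]$: one cannot simply replace $\wp$ by $\E[\wp]$ inside the minimum without justification, and it is important to flag that the concavity of $\min\{1,\cdot\}$ (equivalently, a standard Jensen step) is what legitimises the replacement and produces the tight $\min\{1,B/\lambda\}$ factor appearing in the statement.
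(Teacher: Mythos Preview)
Your proposal is correct and follows exactly the route the paper takes: the paper simply states that the average purity formula $\E_{\rho\sim\pi_{d,\lambda}}[\wp]=(d+\lambda)/(\lambda d+1)$ together with Proposition~\ref{prop:VarMain} ``immediately implies'' the corollary. Your write-up just makes the implicit steps explicit, and in particular your Jensen argument for the concave map $x\mapsto\min\{1,x\}$ is the right way to justify passing the expectation through the $\min$ term, a detail the paper leaves unspoken.
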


From Corollary~\ref{cor:VarAverage} and \lref{lem:AverageGoal}, we can easily 
determine the query number $m\cdot s$ of our protocol required to carry out the CSEU task in the average-case scenario with distribution $\pi_{d, \lambda}$.
First, if we are limited to single-copy measurements, i.e., $s=1$, then 
\begin{align}
m\cdot s= \bigo{ \frac{d}{\lambda \,\epsilon^2} + \frac{d^2}{\epsilon} \sqrt{\frac{B}{\lambda}} } 
\end{align}
queries are sufficient to ensure that the mean absolute error satisfies $\Xi(\pi_{d, \lambda},O)<\epsilon$ with high probability. 
This query complexity is lower than that of our protocol in the worst-case scenario [see~\eref{eq:singleUB}].

Second, if collective measurements on multiple systems are accessible, then the conditions 
in \lref{lem:AverageGoal} can be satisfied by choosing $s=\lfloor d\lambda^{-1/4} \rfloor$ and 
\begin{align}
m= 
\mathcal O  \left[\, \frac{1}{\epsilon^2} \left( \frac{1}{\lambda^{3/4}} + \min\left\lbrace 1, \frac{B}{\lambda} \right\rbrace  \right) 
+ \frac{\sqrt{B}}{\epsilon}  \,\right],  
\end{align}
in which case the query number reads 
\begin{align}
m\cdot s= 
\mathcal O  \left[\,  \frac{d}{\epsilon^2} \left( \frac{1}{\lambda} + \frac{\min\left\lbrace \lambda, B \right\rbrace}{\lambda^{5/4}}   \right) + \frac{d \sqrt{B}}{\lambda^{1/4} \epsilon}  \,\right].   
\end{align}
When $B$ and $\epsilon$ are independent of the system size, the query complexity reduces to $m\cdot s=\mathcal O(d\lambda^{-1/4})$.  
Notably, $\mathcal O(d^{3/4})$ queries of $\caU$ are sufficient when $\lambda=\Theta(d)$. 
Hence, the query lower bound in \tref{thm:jointLowerBound} for the worst-case scenario does not apply to the average-case scenario.  
Previously, such remarkable separation in query complexity between the worst- and average-case considerations 
has been demonstrated in certain channel learning tasks \cite{HKPITB21,zhao2023learning}.

\section{Comparison with previous works}\label{sec:PriorWork}
In this section, we survey existing works related to CSEU and compare them with our results. 
The performance of these protocols on the CSEU task are summarized in Table~\ref{tab:compare} in Sec.~\ref{sec:Introduction}.

\subsection{CSE of general quantum channels}\label{sec:CSEC}
By applying the concepts of classical shadow formalism to quantum states, two recent works \cite{Kunjummen23,PhysRevResearch.6.013029} considered CSE of general quantum channels (CSEC). 
They proposed two protocols for CSEC that do not utilize any quantum memory.  
The first protocol is ancilla-assisted: the learner first prepares the Choi state $\frac{1}{d}\Upsilon_{\mathcal{C}}$ of the unknown channel $\mathcal{C}$ by applying $\mathcal{C}$ to the first system of the maximally entangled state $|\Phi\>\<\Phi|/d$ on $\caH^{\otimes 2}$, and then performs CSE on this Choi state. 
For any state $\rho$ and observable $O$, the property $\Tr\left( O\,\mathcal{C}(\rho)\right)$ can be recovered by virtue of channel-state duality \cite{choi1975completely}:
\begin{align}\label{eq:CSDuality}
\Tr\left( O\,\mathcal{C}(\rho)\right)
= 
d \Tr\left[ \left( O\otimes\rho^{\top}\right)  \frac{\Upsilon_{\mathcal{C}}}{d} \right].
\end{align}
If the $M$ observables $O_l$ to be estimated are chosen from $\obs(B)$, this protocol requires $\mathcal{O}(d^2 B \epsilon^{-2}\log M)$ applications of $\mathcal{C}$
to estimate the properties in \eref{eq:LinearProperties} all within additive error $\epsilon$. 
However, this protocol requires an ancillary system with dimension $d$, which may limit its practical applicability.

The second protocol for CSEC is ancilla-free \cite{Kunjummen23,PhysRevResearch.6.013029}.  
In each round of experiments, one prepares a random pure state on $\caH$, 
evolves it under $\mathcal{C}$, and then performs a randomized measurement on the output state. 
This procedure is identical to that of our protocol based on single-copy measurements. 
Using the experimental results, one can construct an unbiased estimator for the Choi operator $\Upsilon_{\mathcal{C}}$, 
from which any desired linear property can be recovered. 
To predict $M$ properties in \eref{eq:LinearProperties} all 
within additive error $\epsilon$, this protocol requires 
\begin{align}
\mathcal{O} \left( \frac{d^2 \log M}{\epsilon^2} \max_{1\leq l\leq M} \,\left\|S(O_l)\right\|  \right) 
\end{align} 
applications of $\mathcal{C}$
(see Theorem 2 in Ref.~\cite{Kunjummen23}), where   
\begin{align}\label{eq:S(O)}
S(O):=4 [\operatorname{Tr}(O)]^2 I +2\operatorname{Tr}(O^2)I+4 \operatorname{Tr}(O) O+4 O^2 .
\end{align}
Notice that $\|S(O)\|$ can be very large for some specific observables in $\obs(B)$. 
In particular, due to the first term in \eref{eq:S(O)}, it attains its maximum scaling $\Theta(dB)$ when $O=\sqrt{d^{-1}B}\, I\in\obs(B)$. 
As a result, if all observables $O_l$ to be predicted are chosen from $\obs(B)$, this protocol has a $\mathcal{O}(d^3 B \epsilon^{-2}\log M)$ query complexity.

In contrast to the two protocols mentioned above, our protocol is designed for predicting properties of unitary channels rather than general channels. 
However, our protocol based on single-copy measurements not only works without quantum memory and ancilla, but also has much higher query efficiency than the above  protocols (see Table~\ref{tab:compare}).
To better visualize this advantage, let us consider the case in which $\{O_l\}$ is the collection of $4^n$ Pauli observables, 
and high-precision estimation with error $\epsilon=1/d$ is desired. 
In this case, our CSEU protocol that does not use quantum memory needs $\tilde{\mathcal{O}}(d^{3.5})$ queries, whereas the CSEC protocols mentioned above \cite{Kunjummen23,PhysRevResearch.6.013029} need $\tilde{\mathcal{O}}(d^{5})$ queries.
The intrinsic reason behind this advantage is that our protocol ‌‌employs the property of the snapshots $\hat{X}_i$ presented in \eref{eq:ExpXiXj} to construct the quadratic estimator $d^{-1}\!\Tr[(O\otimes\rho^{\top}) \hat{X}_i\hat{X}_j]$ (see Sec.~\ref{sec:PredPhase}). For a general quantum channel, \eref{eq:ExpXiXj} may fail and  such an unbiased quadratic estimator cannot be constructed.

\subsection{Learning protocol based on Ref.~\texorpdfstring{\cite{Grier22}}{[Grier22]}}\label{sec:Grier22}
A recent work \cite{Grier22} found that when collective measurements are used, 
the task of CSE for pure states can be solved with a lower sample complexity 
than the original shadow protocol in Ref.~\cite{HKPshadow20}. More specifically, suppose we  
are given multiple copies of an unknown pure state $\sigma$, and $Q_1,\dots,Q_M$ are arbitrary observables in $\obs(B)$.  
To predict $M$ expectation values $\Tr(Q_i\sigma)$ all within additive error $\epsilon$, 
Ref.~\cite{Grier22} showed that it suffices to use the following number of samples: 
\begin{align}\label{eq:ComplexityPure}
\bigo{ \bigg( \frac{\sqrt{B}}{\epsilon} + \frac{1}{\epsilon^2}\bigg) \log M}  .
\end{align}

Based on this result, we can naively construct the following protocol for CSEU:  
one first prepares the (pure) Choi state $\frac{1}{d}\Upsilon_{\caU}$ of the unitary channel $\caU$, then performs 
CSE on the Choi state using the protocol of Ref.~\cite{Grier22}. 
Suppose $\rho_1,\rho_2,\dots, \rho_M$ are quantum states in $\caD(\caH)$ and $O_1,O_2,\dots, O_M$ are observables in $\obs(B)$. 
According to \eref{eq:ComplexityPure}, 
to estimate $\Tr[ ( O_l\otimes\rho_l^{\top}) \frac{1}{d}\Upsilon_{\caU}]$ within error $0<\eta<1$ for all $l$, 
it suffices to query the channel 
\begin{align}
\bigo{ \bigg( \frac{\sqrt{B}}{\eta} + \frac{1}{\eta^2}\bigg) \log M}
\end{align}
times. In addition, by \eref{eq:CSDuality}, to estimate 
$\Tr\left( O_l\,\caU(\rho_l)\right)$ within $\epsilon$ precision, 
one needs to keep the additive error $\eta$ in estimating $\Tr[ ( O_l\otimes\rho_l^{\top}) \frac{1}{d}\Upsilon_{\caU}]$ smaller than $d^{-1}\epsilon$. 
As a result, the query complexity of this protocol reads  
\begin{align}
\bigo{ \left( \frac{\sqrt{B}}{d^{-1}\epsilon} + \frac{1}{[d^{-1}\epsilon]^2}\right) \log M }  
=\bigo{ \frac{d^2}{\epsilon^2} \log M},  
\end{align}
where we use the relation $B\leq d$. 
By contrast, the query complexity of our protocol based on collective measurements is 
$\mathcal O \big( d \,\big(\epsilon^{-2} \!+\! \sqrt{B}\epsilon^{-1} \big) \log M\big)$, which improves the scaling behavior in the dimension $d$ quadratically.

\subsection{Process tomography of unitary channels}\label{sec:ProcessTomg}
Another approach for completing the CSEU task is to first use quantum process tomography of unitary channels (QPTU)  \cite{PhysRevLett.93.080502,PhysRevLett.125.210501,haah2023query,surawy2022projected} 
to construct a classical description of the unknown channel $\mathcal{U}$, and then use this description 
to estimate $\Tr\left( O\,\mathcal{U}(\rho)\right)$. 
When quantum memory is accessible, Ref.~\cite{haah2023query} showed that $\Theta(d^2\epsilon^{-1})$ 
queries of $\mathcal{U}$ are necessary and sufficient to output a classical
description that is $\epsilon$-close to $\mathcal{U}$ in the diamond norm. 
Consequently, any linear property $\Tr\left(O\,\mathcal{U}(\rho)\right)$ 
can be estimated within additive error $\epsilon$  from $\mathcal{O}(d^2\epsilon^{-1})$ queries. 
In the tomography protocol of Ref.~\cite{haah2023query}, the unknown channel $\mathcal{U}$ is applied in series on one system of dimension $d$, so quantum memory is needed but ancillary systems are not. 
Compared with our protocol based on collective measurements, this protocol has a smaller space overhead, 
but a larger query complexity for completing the CSEU task when $d>\epsilon^{-1}$, which is the usual case.

When quantum memory is not accessible, the protocol in Ref.~\cite{surawy2022projected} 
has the highest efficiency among all QPTU protocols as far as we are aware. 
Using $\mathcal{O}(d^2\eta^{-2})$ queries of the unknown channel $\mathcal{U}$, 
this protocol can learn a classical description of the Choi state $\frac{1}{d}\Upsilon_{\caU}$ 
within $0<\eta<1$ trace distance (see Theorem~1 in Ref.~\cite{surawy2022projected}).
From this description, one can then estimate $\Tr[(O\otimes\rho^{\top}) \frac{1}{d}\Upsilon_{\caU}]$ 
within additive error $\eta$ for any state $\rho$ and observable $O$. 
According to \eref{eq:CSDuality}, this implies a $\mathcal{O}(d^4\epsilon^{-2})$ 
query complexity for estimating $\Tr\left( O\,\mathcal{U}(\rho)\right)$ within $\epsilon$ 
error---nearly a quadratic overhead compared with our CSEU protocol that does not use quantum memory.

\subsection{Other works}\label{sec:OtherWorks}
Recently, Ref.~\cite{HCP23} proposed an efficient protocol without quantum memory 
for learning to predict $\Tr\left( O\,\mathcal{C}(\rho)\right)$ for an arbitrary unknown channel $\mathcal{C}$, 
achieving a small average error over input states $\rho$ drawn from any locally flat distribution. 
Here, $O$ is restricted to a \emph{bounded-degree observable} that  can be expressed as a sum of local observables, 
each has support on a constant number of qubits. 
In contrast, our protocol for CSEU focuses on unitary channels, but 
can deal with the worst-case input states and general observables. 

Using quantum memory, Ref.~\cite{caro2022learning} also gave an efficient procedure 
for predicting linear properties $\Tr\left( O\,\mathcal{C}(\rho)\right)$ for an arbitrary unknown channel $\mathcal{C}$. 
However, here the input state $\rho$ is restricted to  Pauli-sparse quantum states, and the observable $O$ is restricted to  bounded Pauli-sparse observables.

Instead of general quantum channels, recent works \cite{zhao2023learning,huang2024learning,huang2022quantum,HKPITB21,caro2022generalization} 
 considered quantum channels or unitaries $\mathcal{C}$ that can be generated in polynomial-time, 
and showed that a polynomial amount of queries are sufficient to estimate $\Tr\left( O\,\mathcal{C}(\rho)\right)$ in this restricted class. 
By contrast, our protocol for CSEU applies to arbitrary unitary channels, 
in which case an exponential number of queries is inevitable according to Theorem~\ref{thm:jointLowerBound}.

\section{Application in Hamiltonian learning}\label{sec:Hamiltonian}

\subsection{Background and our result}
Learning an unknown Hamiltonian is a central task in many-body quantum physics.  
In an isolated quantum system, the Hamiltonian $H$ fully characterizes the system's 
time evolution through the dynamic operator $\rme^{-\rmi Ht}$, 
so all information about the future state of the system is determined by its initial state and $H$. 
Beyond its fundamental significance in understanding quantum dynamics, 
Hamiltonian learning has broad applications across various fields, 
including quantum metrology and sensing \cite{giovannetti2011advances,RevModPhys.89.035002}, 
and quantum device engineering \cite{shulman2014suppressing,innocenti2020supervised}.

Here we consider the problem of learning an unknown $n$-qubit Hamiltonian $H$ with respect to the \emph{normalized Frobenius norm} (NFN). 
More explicitly, given time evolution access to $H$,
our goal is to learn a classical description $\hat{H}\in \mathcal{L}_{\rm H}(\mathcal{H})$ of the Hamiltonian, 
such that 
\begin{equation}\label{eq:HamiltonianGoal}
\Pr\left\{\frac{1}{\sqrt{2^n}}\left\|\hat{H}-H\right\|_2\leq\epsilon\right\} \geq 1-\delta 
\end{equation}
for accuracy $0< \delta,\epsilon<1/3$. 
Here we do not make prior assumptions on the structure of $H$, 
but only have two natural constraints $\Tr(H) = 0$ and $\|H\|=\mathcal{O}(\poly(n))$. 
According to Theorem~1.4 of Ref.~\cite{bluhm2024hamiltonian}, even with access to quantum memory, 
any protocol has to make at least $\tilde\Omega(d^2)$ applications of $H$ to solve this problem.

We choose NFN as the error metric because it reflects how well we can predict the system on average based on the learned description $\hat{H}$ of the Hamiltonian \cite{ma2024learning}.
Suppose we are given a known initial state $\rho_0$, and we want to predict observation 
expectation values of the system at a given time $t$ by using $\hat{H}$. 
Starting from $\rho_0$, the true state $\rho(t)$ at time $t$ and the state $\hat\rho(t)$ we predict will be
\begin{equation}
\rho(t) = \rme^{-\rmi Ht}\rho_0 \rme^{\rmi Ht},
\qquad 
\hat\rho(t) = \rme^{-\rmi \hat{H} t}\rho_0 \rme^{\rmi \hat{H} t}.
\end{equation}
Accordingly, the true and predicted expectation values of a given observable $O$ will be 
\begin{equation}
\<O\>_{\rho(t)}=\Tr[O\rho(t)],
\qquad 
\<O\>_{\hat\rho(t)}=\Tr[O\hat\rho(t)], 
\end{equation}
respectively.
The absolute value of their difference represents our prediction error based on $\hat H$.
If the initial state $\rho_0=\ketbra{\psi_0}{\psi_0}$ is a random pure state drawn from the Haar distribution, then 
the mean squared error can be bounded as \cite{ma2024learning}
\begin{align}\label{eq:ExpectBound}
&\underset{|\psi_0\> \sim \mu_{\rm H}}{\E} \left[ \abs{\<O\>_{\rho(t)} - \<O\>_{\hat\rho(t)}}^2 \right]
\leq  2 \biggl(\!t\,\|O\|_2 \frac{\|\hat{H}-H\|_2}{\sqrt{2^n}} \biggr)^2. 
\end{align}
Therefore, 
a small NFN-error bound guarantees a high average accuracy in predicting 
the expectation values of observables under a random initial state \cite{ma2024learning,Zhao2022Random}.

In recent years, the task of learning an unknown Hamiltonian $H$ through its unitary dynamics 
has been extensively studied in the literature \cite{Dutkiewicz2024advantageofquantum,PhysRevLett.122.020504,Yu2023robustefficient,hu2025ansatz,zhao2024learning,gu2024practical,stilck2024efficient,ma2024learning,bluhm2024hamiltonian,innocenti2020supervised,caro2022learning,PhysRevLett.130.200403,haah2024learning,castaneda2023hamiltonian,bakshi2024structure1}.
Most existing protocols use repeated applications of the Hamiltonian's time evolution in each experiment, 
thereby demanding the resource of quantum memory. 
In addition, these protocols generally rely on strong assumptions about the structure of $H$---for instance, assuming that $H$ exhibits sparse interactions or contains Pauli terms with constant weights. 
However, in many practical scenarios,  the Hamiltonian may have an unknown structure  
and involve arbitrary interaction couplings, which restrict the applicability of these protocols.

Furthermore, it is worth mentioning that learning a Hamiltonian $H$ with respect to NFN is much harder than 
simply estimating its Pauli coefficients, although the latter is the focus of most previous Hamiltonian learning protocols.
More precisely, estimating all Pauli coefficients of $H$ within additive error $\epsilon$
only implies learning $H$ within error $d\epsilon$ in NFN (see Appendix~\ref{sec:appHamiltonian}), where $d=2^n$ is the dimension of the system. 
Due to this reason, when attempting to extend existing protocols to learn an arbitrary Hamiltonian in NFN, 
the number of required experimental runs and total evolution time under $H$ would become prohibitively large, as shown in Appendix~\ref{sec:appHamiltonian}.

Here, to demonstrate the power of our CSEU protocol, we show that it can be combined with the
polynomial interpolation technique to learn an arbitrary Hamiltonian $H$ in NFN. 
The learning protocol is non-adaptive, and does not need quantum memory or ancillary qubits. 
In each round of experiments, 
it requires only the preparation of a computational-basis state, the application of a random Clifford unitary to a single system, a measurement in the computational basis, and a real $\mathcal{O}(1/\|H\|)$-time evolution under $H$. 
The protocol's query complexity and evolution time complexity are summarized in the following theorem, 
which is a simplified version of \tref{thm:HamiltonianMore} in Appendix~\ref{sec:appHamiltonian}.

\begin{theorem}\label{thm:Hamiltonian}
Given time evolution access to an arbitrary $n$-qubit traceless Hamiltonian $H$ with $\|H\|=\mathcal{O}(\poly(n))$. 
There is an ancilla-free protocol without quantum memory that outputs a classical description $\hat{H}\in \mathcal{L}_{\rm H}(\mathcal{H})$ such that \eref{eq:HamiltonianGoal} holds. 
Both the number of queries to $H$ and the total evolution time required by the protocol scale as
\begin{align}\label{eq:s1Hnumber}
	\tilde{ \mathcal{O}} 
	\left( \bigg( \frac{d^2}{\epsilon^2} + \frac{d^3}{\epsilon} \bigg) \log(\delta^{-1})\right).                       
\end{align} 
\end{theorem}

To the best of our knowledge, this is the first protocol based on $\tilde{\mathcal{O}}(d^3)$ queries and evolution time for learning an arbitrary Hamiltonian with respect to NFN. 
Notably, even without using quantum memory, the query complexity is still close to the known theoretical lower bound $\tilde\Omega(d^2)$ \cite{bluhm2024hamiltonian}, demonstrating both high efficiency and practical applicability. 
In contrast, the most efficient protocol in the literature
requires $\tilde{\mathcal{O}}(d^4)$ Hamiltonian queries and total evolution time to achieve this task even if quantum memory is accessible (see Appendix~\ref{sec:appHamiltonian}).

\subsection{Schematic overview of our Hamiltonian learning protocol}\label{sec:HamiltProtocol}
In this section, we formalize  the basic principles and methodology of our Hamiltonian learning protocol. 
Additional technical details regarding the protocol can be found in Appendix~\ref{sec:appHamiltonian}.

Our goal is to learn an arbitrary unknown traceless  Hamiltonian $H$, 
which can be expanded in the form: 
\begin{align}
	H=\sum_{\bfk\in\{0,1,2,3\}^n} \mu(\bfk) \sigma_\bfk.  
\end{align}  
Here,  $\sigma_\bfk$ are Pauli operators, and $\mu(\bfk)\in \mathbb{R}$ are Pauli coefficients with $\mu(0^n)=0$. 
For a system with an initial state $\rho$ evolving under $H$, the expectation value of an operator $O$ at time $t$ can be written as \cite{stilck2024efficient,gu2024practical,caro2022learning}:  
\begin{align}\label{eq:timeexpand}
\<O\>_{\rho(t)}
=\Tr[O\,\caU_t(\rho)]
=\sum_{k=0}^\infty \frac{(\rmi \, t)^k}{k!} \Tr\left[ g_k(H,O)\rho\right],
\end{align}
where $\caU_t(\cdot):=\rme^{-\rmi Ht}(\cdot) \rme^{\rmi Ht}$ denotes the unitary channel describing the time evolution, and 
\begin{equation}
g_k(H,O) := \underbrace{[H,[H,[\ldots, [H}_{k\textrm{ times}},O]\ldots]]] ,
\ \  
g_{k=0}(H,O) :=O. 
\end{equation}
As shown in Refs.~\cite{stilck2024efficient,gu2024practical,caro2022learning}, 
for any Pauli operator $\sigma_\bfk\ne I$, there is a quantum state $\rho_{\bfk}$
and another Pauli operator $\sigma_{\bfk'}$ that satisfy $\Tr(\rmi[H,\sigma_{\bfk'}]\rho_{\bfk})=2\mu(\bfk)$. 
This fact together with \eref{eq:timeexpand} leads to the following key observation: 
For any non-identity Pauli operator $\sigma_{\bfk}$, if the initial state $\rho$ and  observable $O$ are properly chosen, 
then the first-order time derivative of $\<O\>_{\rho(t)}$ at $t=0$ tells us the Pauli coefficient $\mu(\bfk)$. 
In addition, this derivative can be extracted by performing polynomial interpolation on the function $t \mapsto \<O\>_{\rho(t)}$ \cite{stilck2024efficient,gu2024practical,caro2022learning}. 

Building on the above observations, our protocol integrates  the CSEU method developed in Sec.~\ref{sec:OurProtocol} and the derivative estimation procedure of Ref.~\cite{caro2022learning} to learn the unknown Hamiltonian. 
In the first step, for several different time points $t$, we treat $\caU_t$ as an unknown unitary channel, 
and use our CSEU protocol to accurately estimate the expectation value $\<O\>_{\rho(t)}$ corresponding to each $\sigma_\bfk\ne I$. 
Due to the high efficiency of our CSEU method, this step requires much fewer Hamiltonian queries 
(and thus less total evolution time) than existing approaches.
The second step of our protocol is classical postprocessing, in which we estimate each Pauli coefficient $\mu(\bfk)$ within additive error $\epsilon/d$ by using polynomial interpolation, following the same procedure as in Ref.~\cite{caro2022learning}. 
Let $\hat\mu(\bfk)$ be the estimate for $\mu(\bfk)$; 
then our protocol outputs $\hat{H}=\sum_{\bfk} \hat\mu(\bfk) \sigma_\bfk$ as the learned description of $H$, which satisfies 
\begin{align}
\frac{\|\hat{H}-H\|_2}{\sqrt{d}}
&= \sqrt{\frac{1}{d} \Tr\left[( \hat{H}-H)^2\right] }
\nonumber\\ 
&= \sqrt{\sum_{\bfk\in\{0,1,2,3\}^n} |\hat{\mu}(\bfk)-\mu(\bfk)|^{2} }
\nonumber\\ 
&\leq \sqrt{\sum_{\bfk\in\{0,1,2,3\}^n} \left( \frac{\epsilon}{d} \right)^{2} }
=\epsilon. 
\end{align}
In this way, the unknown Hamiltonian $H$ is learned within error $\epsilon$ with respect to NFN.

\section{Application in estimating OTOCs}\label{sec:mainOTOC}
So far we have considered predicting linear functions of the unknown 
unitary channel $\mathcal{U}$. Actually, the classical shadow data $\US(\caU,m)$ obtained in our learning phase 
may also be used to predict non-linear functions of $\mathcal{U}$. 
One particularly crucial non-linear function is the 
out-of-time-ordered correlators (OTOCs). 
As a tool to quantify quantum information scrambling, OTOC plays critical roles in both quantum many-body physics \cite{PRXQuantum.5.010201,chen2023speed} and quantum information \cite{PhysRevLett.124.200504,PhysRevX.8.021013}. 
Suppose $U(t)$ is a unitary time-evolution operator and $W$ is a local Hermitian operator. 
For chaotic dynamics, the Heisenberg operator $W_{U(t)}=U(t)^{\dag}WU(t)$ rapidly
becomes nonlocal and non-commutative with an initially non-overlapping local operator $V$. 
The degree of this non-commutativity can be quantified by the OTOC 
\begin{align}
C(U(t))=\Tr\left(\rho W_{U(t)} V^{\dagger} W_{U(t)} V\right), 
\end{align}
with $\rho$ being the initial quantum state. 
To simplify the notation, in the following we often omit ``$(t)$'' and use $U$ to denote the 
unitary time-evolution operator.

For given $W$ and $V$, below we consider the task of predicting OTOC by querying the unknown time-evolution operator $U$ several times. 
Using the tensor-network diagram (see Appendix~\ref{sec:Tensor}), the OTOC can be written as 
\begin{align}
&C(U)
= \ \
\begin{tikzpicture}[baseline={([yshift=-.5ex]current bounding box.center)},inner sep=-4mm]
	\node[tensor_green] (U3) at (1*\xratio, -1) {$U$};
	\node[tensor_green] (U4) at (1*\xratio, -2.5) {$U^{\dag}$};
	\node[tensor_green] (U1) at (3*\xratio, -1) {$U$};
	\node[tensor_green] (U2) at (3*\xratio, -2.5) {$U^{\dag}$};
	\node[tensor_blue] (W1) at (1*\xratio, -3.4) {$W$};
	\node[tensor_blue] (V1) at (2*\xratio, -3.4) {$V^{\dag}$};
	\node[tensor_blue] (W2) at (3*\xratio, -3.4) {$W$};
	\node[tensor_blue] (V2) at (4*\xratio, -3.4) {$V\rho$};
	\draw[>>-,draw=black] (1*\xratio, -0.3) -- (U3.north);
	\draw[>>-,draw=black] (3*\xratio, -0.3) -- (U1);
	\draw[-<<,draw=black] (U3.south) .. controls (1*\xratio, -1.7) and (2*\xratio, -2.1) ..  (2*\xratio, -0.3);
	\draw[-<<,draw=black] (U1.south) .. controls (3*\xratio, -1.7) and (4*\xratio, -2.1) ..  (4*\xratio, -0.3);
	\draw[-,draw=black]   (U2.north) .. controls (3*\xratio, -1.8) and (4*\xratio, -1.4) ..  (4*\xratio, -2.7);
	\draw[-,draw=black]   (U4.north) .. controls (1*\xratio, -1.8) and (2*\xratio, -1.4) ..  (2*\xratio, -2.7);
	\draw[-,draw=black]   (2*\xratio, -2.7) .. controls (2*\xratio, -2.9) and (1.6*\xratio, -2.7) ..  (1.6*\xratio, -3.4);
	\draw[-,draw=black]   (V1.south) .. controls (2*\xratio, -3.9) and (1.6*\xratio, -4.1) ..  (1.6*\xratio, -3.4);
	\draw[-,draw=black]   (V1.north) .. controls (2*\xratio, -2.9) and (2.4*\xratio, -2.6) ..  (2.4*\xratio, -3.4);
	\draw[-,draw=black]   (2.4*\xratio, -3.4) .. controls (2.4*\xratio, -4.1) and (2*\xratio, -3.7) .. (2*\xratio, -4.1) ;
	\draw[-,draw=black]   (4*\xratio, -2.7) .. controls (4*\xratio, -2.9) and (3.6*\xratio, -2.7) ..  (3.6*\xratio, -3.4);
	\draw[-,draw=black]   (V2.south) .. controls (4*\xratio, -3.9) and (3.6*\xratio, -4.1) ..  (3.6*\xratio, -3.4);
	\draw[-,draw=black]   (V2.north) .. controls (4*\xratio, -2.9) and (4.4*\xratio, -2.6) ..  (4.4*\xratio, -3.4);
	\draw[-,draw=black]   (4.4*\xratio, -3.4) .. controls (4.4*\xratio, -4.1) and (4*\xratio, -3.7) .. (4*\xratio, -4.1) ;
	\draw[-,draw=black] (W1) -- (U4);
	\draw[-,draw=black] (W2) -- (U2);
	\draw[-<<,draw=black] (W1.south) .. controls (1*\xratio, -4.9) and (3*\xratio, -4.3) ..  (3*\xratio, -5.5);
	\draw[-<<,draw=black] (W2.south) .. controls (3*\xratio, -4.9) and (1*\xratio, -4.3) ..  (1*\xratio, -5.5);
	\draw[-<<,draw=black] (2*\xratio, -4.1) -- (2*\xratio, -5.5);
	\draw[-<<,draw=black] (4*\xratio, -4.1) -- (4*\xratio, -5.5);
\end{tikzpicture}
\nonumber\\[1ex]
&= \Tr \left[ (\Upsilon_\caU\otimes\!\Upsilon_\caU) 
(W \! \otimes (V^{\dag})^{\top}\! \otimes W \! \otimes (V\rho)^{\top})  T_{(1,3)}\right] .  \!
\end{align}
In this diagram, the order in which operators are multiplied is from top to bottom, 
“\,$\begin{tikzpicture}[baseline={([yshift=-.5ex]current bounding box.center)},inner sep=-4mm]
	\draw[-<<,draw=black] (1*\xratio, -0.3)-- (1*\xratio, 0);
\end{tikzpicture}$\,”
and 
“\,$\begin{tikzpicture}[baseline={([yshift=-.5ex]current bounding box.center)},inner sep=-4mm]
	\draw[-<<,draw=black] (1*\xratio, 0.3)-- (1*\xratio, 0);
\end{tikzpicture}$\,”
means the periodic boundary condition for top and bottom legs, and $T_{(1,3)}$ is the swap operator on the first and third systems. 
Recall that each snapshot $\hat{X}$ in the classical shadow data $\US(\caU,m)$ equals the Choi operator $\Upsilon_\caU$ in expectation. So one can predict $C(U)$ by using the following estimator constructed from pairs of distinct snapshots, 
\begin{align}\label{eq:hatCU}
\hat{C}(U)
&:= 
\frac{1}{m(m-1)}\sum_{i\ne j} \Tr\!\Big[ \left( \hat{X}_i\otimes\hat{X}_j\right) 
\nonumber\\
&\qquad\quad  \times\left( W\otimes (V^{\dag})^{\top} \otimes W\otimes (V\rho)^{\top}\right)   T_{(1,3)}\Big] .  \!  
\end{align}
Here the summation is over all $m(m-1)$ quadratic terms with $i,j=1,\dots,m$ and $i\ne j$. 
Due to the independence between $\hat{X}_i$ and $\hat{X}_j$, this estimator correctly predicts the OTOC in expectation.

To study the query efficiency of this approach, 
following Refs.~\cite{PhysRevX.9.021061,PhysRevResearch.3.033155}, 
here we focus on the infinite temperature thermal state $\rho=I/d$, 
and assume that $W$ and $V$ are both unitary, traceless, and Hermitian operators. 
In this case, as proved in Appendix~\ref{sec:proofVarOTOC}, the variance of our estimator reads 
\begin{align}\label{eq:VarOTOC}
	\Var \left[ \hat{C}(U) \right] 
	&= \mathcal O \left[\, \frac{1}{m} \left( \frac{d^2}{s^2} + 1 \right)
	+ \frac{d^2}{m^2} \left( \frac{d^4}{s^4} + 1 \right) \right] .                         
\end{align} 
By combining this result with Chebyshev's inequality, we immediately arrive at the following proposition, 
which clarifies the query complexity for estimating OTOC. 
\begin{prop}\label{prop:complexityOTOC}
In order to estimate $C(U)$ within additive error $0<\epsilon<1$ and some high (constant) success probability, 
the number of queries used by our protocol is 
\begin{align}\label{otoc:single}
m\cdot s = \mathcal O \bigg( \frac{d^2}{\epsilon^2} + \frac{d^3}{\epsilon} \bigg) 
\end{align} 
when we use only single-copy measurements, i.e., $s=1$; and is
\begin{align}\label{otoc:collective}
m\cdot s = \mathcal O \bigg( \frac{d}{\epsilon^2} + \frac{d^2}{\epsilon} \bigg) 
\end{align}
when we use collective measurements on $d$ systems, i.e., $s=d$.
\end{prop}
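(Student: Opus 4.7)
The plan is to derive Proposition~\ref{prop:complexityOTOC} as a direct consequence of the variance bound in Eq.~\eqref{eq:VarOTOC} combined with Chebyshev's inequality, with the only real work being a routine parameter optimization in $m$ under the two choices of $s$. First I would verify unbiasedness: by the independence of $\hat{X}_i$ and $\hat{X}_j$ for $i\neq j$ together with Eq.~\eqref{eq:EhatX}, we have $\E[\hat{X}_i\otimes\hat{X}_j]=\Upsilon_\caU\otimes\Upsilon_\caU$, so each summand in Eq.~\eqref{eq:hatCU} reproduces the tensor-network expression for $C(U)$ derived just above it, giving $\E[\hat{C}(U)]=C(U)$. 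With unbiasedness in hand, Chebyshev's inequality gives
\[
\Pr\!\left[\,|\hat{C}(U)-C(U)|\geq\epsilon\,\right]\leq \frac{\Var[\hat{C}(U)]}{\epsilon^2},
\]
so for constant success probability it suffices to choose $m$ large enough that $\Var[\hat{C}(U)]=\mathcal{O}(\epsilon^2)$.

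Next I would specialize Eq.~\eqref{eq:VarOTOC} to the two parameter regimes in the statement. For $s=1$ the variance collapses to $\mathcal{O}(d^2/m + d^6/m^2)$; forcing each of the two terms below $\epsilon^2$ separately yields $m\geq d^2/\epsilon^2$ from the first term and $m\geq d^3/\epsilon$ from the second, so the total query cost $m\cdot s=m$ matches Eq.~\eqref{otoc:single}. For $s=d$ the dimensional prefactors $d^2/s^2$ and $d^4/s^4$ both reduce to $\mathcal{O}(1)$, giving $\Var[\hat{C}(U)]=\mathcal{O}(1/m + d^2/m^2)$; the same one-term-at-a-time optimization yields $m=\mathcal{O}(1/\epsilon^2 + d/\epsilon)$, and multiplying by $s=d$ queries per snapshot recovers Eq.~\eqref{otoc:collective}.

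Since the derivation is mechanical once Eq.~\eqref{eq:VarOTOC} is granted, there is no real obstacle at the level of the proposition itself --- the genuine content lies upstream in establishing the variance bound (Appendix~\ref{sec:proofVarOTOC}), which requires tracking the fourth-order moments of the classical shadow snapshots through the quadratic estimator in Eq.~\eqref{eq:hatCU}, analogous to but somewhat more involved than the covariance bookkeeping of \lref{lem:all_the_covariances}. The only micro-check I would perform within the present argument is to confirm that the two contributions in Eq.~\eqref{eq:VarOTOC} are additively combined, so that dominating each one separately by $\mathcal{O}(\epsilon^2)$ is both necessary and sufficient for the claimed scaling.
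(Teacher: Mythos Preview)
Your proposal is correct and matches the paper's approach exactly: the paper states that Proposition~\ref{prop:complexityOTOC} follows ``immediately'' from combining the variance bound~\eqref{eq:VarOTOC} with Chebyshev's inequality, and your specialization to $s=1$ and $s=d$ together with the term-by-term optimization in $m$ is precisely the intended derivation. The unbiasedness check you include is also consistent with the remark following Eq.~\eqref{eq:hatCU}.
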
	

Due to its shadow nature, our protocol can estimate many OTOCs for different initial operators $W$ and $V$ simultaneously by classical post-processing.  In contrast, previous methods \cite{PhysRevX.9.021061,PhysRevResearch.3.033155} need to encode the information on $W$ and $V$ into the initial state preparation, unitary evolution, and final measurements. As a result, the measurement outcomes can only  be used to estimate the chosen OTOC. In addition, our protocol based on collective measurements can significantly enhance the query efficiency compared with Ref.~\cite{PhysRevResearch.3.033155}, as shown in \eref{otoc:collective}.
Furthermore, it is  straightforward to generalize 
our protocol to predict higher-order OTOCs \cite{PhysRevResearch.3.033155,Leone2021quantumchaosis}.

\section{Conclusion}\label{sec:Conclusion}
In this work, we initiated a comprehensive investigation of the CSEU problem, 
which is of fundamental interest to both theoretical studies and practical applications.
By utilizing the power of collective measurements, we proposed a novel protocol for the CSEU task, 
whose query complexity is nearly asymptotically optimal for prediction in the worst case, and 
can be further reduced for prediction in the average case. 
To further enhance practicality, we also introduced an alternative ancilla-free protocol based on single-copy measurements, which achieves much higher query efficiency compared with previous protocols that do not use quantum memory. 
This protocol can serve as a key subroutine for learning an arbitrary unknown Hamiltonian with respect to NFN, significantly outperforming existing approaches for this task.
Furthermore, our protocols can be extended to predict non-linear properties of unitary channels, and have demonstrated high efficiency in estimating OTOC, a crucial quantity in quantum many-body physics. 
This work represents a significant advance in the study of learning quantum channels, 
and is expected to trigger a cascade of future research works on this subject.

There are some intriguing problems that merit further investigation. 
First, although our protocol based on collective measurements achieves a nearly optimal query complexity, it applies the unknown unitary channel $\caU$ in parallel, resulting in a large space overhead. It is natural to ask whether this overhead can be avoided by considering sequential applications of $\caU$ \cite{haah2023query}, which would make the protocol more practical. 
Second, our protocol applies quadratic estimators in the prediction phase to estimate linear properties (see Sec.~\ref{sec:PredPhase}). It would be interesting to study the potential benefits of higher-order estimators constructed from more snapshots, which may further reduce the query complexity with respect to the parameter $B$. 
The third question is whether our protocol based on single-copy measurements can achieve the optimal query complexity with respect to the system dimension $d$ over all protocols that do not rely on quantum memory. 
We leave these lines of research to future study.

\section*{Acknowledgments}
The authors thank Penghui Yao, Zhenhuan Liu, Datong Chen, Fuchuan Wei, Qi Ye, and Masahito Hayashi 
for inspiring  discussions and valuable comments. 
Z.L., C.Y., and H.Z. are supported by 
the Shanghai Science and Technology Innovation Action Plan (Grant No.~24LZ1400200), 
the Shanghai Municipal Science and Technology Major Project (Grant No.~2019SHZDZX01), 
the National Key Research and Development Program of China (Grant No.~2022YFA1404204), 
and the National Natural Science Foundation of China (Grant No.~92165109). 
H. Z. is also supported by the Innovation Program for Quantum Science and Technology (Grant No.~2024ZD0300101).
Y.Z. is supported by 
the Innovation Program for Quantum Science and Technology (Grant Nos.~2024ZD0301900 and 2021ZD0302000), 
the National Natural Science Foundation of China (Grant No.~12205048), 
the Shanghai Science and Technology Innovation Action Plan (Grant No.~24LZ1400200), 
and the start-up funding of Fudan University.

\let\addcontentsline\oldaddcontentsline


\begin{appendix}
	
\begin{widetext}
\section{Frequently used symbols}\label{sec:symbols}
In Table~\ref{tab:symbols}, we list the frequently used symbols throughout this paper for readers' reference.

\begin{table}[htbp]
\caption{\label{tab:symbols}
Frequently used symbols and their meanings. 
}		
\begin{math}
\begin{array}{c|c}
\hline\hline
\text{Symbol}  & \text{Meaning} \\[0.5ex]
\hline
\caU       & \text{the unknown unitary channel we aim to learn}            \\[0.5ex]
\Upsilon_\caU      & \text{the Choi operator of $\caU$}            \\[0.5ex]
\caH       & \text{the $n$-qubit Hilbert space that $\caU$ acts on}    \\[0.5ex]
d          & \text{$d=2^n$ is the dimension of $\caH$}    \\[0.5ex]
\caD(\caH) & \text{the set of all density operators on $\caH$}    \\[0.5ex]
\obs(B)    & \text{the set of observables on $\caH$ with bounded norms; see \eref{eq:obsB}}    \\[0.5ex]
\mathcal{A}_{\text {learn}} & \text{our algorithm working in the learning phase; see Sec.~\ref{sec:LearnPhase}}    \\[0.5ex]
\mathcal{A}_{\text {pred}}  & \text{our algorithm working in the prediction phase; see Sec.~\ref{sec:PredPhase}}    \\[0.5ex]
\hat X_i   & \text{a classical snapshot of $\caU$ obtained by $\mathcal{A}_{\text {learn}}$; see \eref{eq:defhatX}}    \\[0.5ex]
\mathcal{M}_{s}    & \text{the symmetric collective measurement  used in $\mathcal{A}_{\text {learn}}$; see \eref{eq:measurement}}    \\[0.5ex]
s          & \text{the number of copies jointly measured by $\mathcal{M}_{s}$ in the learning phase}   \\[0.5ex]
m          & \text{the number of snapshots obtained by $\mathcal{A}_{\text {learn}}$}         \\[0.5ex]
R          & \text{the number of batches used in $\mathcal{A}_{\text {pred}}$}                  \\[0.5ex]	
q          & \text{$q=m/R$ is the number of snapshots in each batch}                  \\[0.5ex]	
\hat{Z}_{(r)}    & \text{the mean estimator constructed from the $r$th batch; see \eref{eq:definetildeZ}}     \\[0.5ex]	
\hat{E}(O,\rho)   & \text{our protocol's final estimate for $\Tr\left( O\,\mathcal{U}(\rho)\right)$; see \eref{eq:definehatE}}     \\[0.5ex]	
\epsilon   & \text{the error that we can tolerate}                  \\[0.5ex]	
\delta     & \text{the failure probability that we can tolerate}                  \\[0.5ex]	
\Xi(\caT,O)\!      & \text{the mean absolute error for predicting $\Tr\left( O\,\mathcal{U}(\rho)\right)$; see \eref{eq:Xi(caT,O)}}       \\[0.5ex]
\pi_{d,\lambda}	   & \text{the $\lambda$-induced distribution of density matrices; see Definition~\ref{defn:StateDist}}    \\[0.5ex]
H            & \text{the unknown Hamiltonian that we aim to learn}    \\[0.5ex]
\sym^{(t)}   	   & \text{the projector onto the symmetric subspace of $\caH^{\otimes t}$}    \\[0.5ex]
\mathrm{S}_t   	   & \text{the permutation group of order $t$}    \\[0.5ex]
T_\pi   	       & \text{the permutation operator on $\mc{H}_d^{\otimes t}$ corresponding to $\pi\in\mathrm{S}_t$}    \\[0.5ex]
\hline\hline
\end{array}	
\end{math}
\end{table}

\end{widetext}

\section{Preliminaries}\label{sec:Prelimi}
For our proofs, it will be useful to have some familiarity with quantum state designs and tensor-network diagrams.  
A more detailed review can be found in Ref.~\cite{Mele23}.

\subsection{Quantum state designs}\label{sec:Design}
A \emph{Haar-random state} $|\psi\>$ in the Hilbert space $\mathcal{H}$ is a pure state drawn randomly according to the Haar measure $\mu_{\rm H}$ on quantum states. 
For integer $t\geq 1$, the expectation of $(|\psi\>\<\psi|)^{\otimes t}$ with respect to the Haar measure reads 
\begin{align}
\underset{|\psi\rangle \sim \mu_{\rm H}}{\E} \left[ (|\psi\rangle\langle\psi|)^{\otimes t} \right]
=\frac{1}{\kappa_t }\Pi_{\mathrm{sym}}^{(t)}, 
\quad
\kappa_t=\binom{t+d-1}{t}, 
\end{align}
where $\sym^{(t)}$ is the projector onto the symmetric subspace of $\caH^{\otimes t}$, 
and $\kappa_t$ is the dimension of the symmetric subspace.

An ensemble $\mathcal E$ of pure states in $\caH$ is said to be a state $t$-design if and only of
its $t$th moment agrees with that of the Haar distribution, i.e., 
\begin{align}
\underset{|\psi\rangle \sim \mathcal E}{\E} \left[ (|\psi\rangle\langle\psi|)^{\otimes t} \right] 
= \underset{|\psi\rangle \sim \mu_{\rm H}}{\E} \left[ (|\psi\rangle\langle\psi|)^{\otimes t} \right]
= \frac{1}{\kappa_t }\Pi_{\mathrm{sym}}^{(t)}.  
\end{align}
If $\mathcal E$ is a state $t$-design, then it is also a state $(t-1)$-design by definition.

Let $\mathrm{S}_t$ be the permutation group of order $t$. 
For any element $\pi\in\mathrm{S}_t$, we define its corresponding permutation operator 
to be the unitary operator $T_\pi$ on $\caH^{\otimes t}$ that satisfies 
\begin{align}
T_\pi \,\left( \left|\varphi_1\right\rangle \otimes \cdots \otimes\left|\varphi_t\right\rangle\right) 
=\left|\varphi_{\pi^{-1}(1)}\right\rangle \otimes \cdots \otimes\left|\varphi_{\pi^{-1}(t)}\right\rangle  
\end{align}
for $\left|\varphi_1\right\rangle, \left|\varphi_2\right\rangle, \ldots,\left|\varphi_t\right\rangle \in \caH$. 
For example, for $\pi=(1,2)\in\mathrm{S}_2$, the corresponding permutation operator $T_{(1,2)}$ is the swap operator. 
An important property of the projector $\sym^{(t)}$ is that it can be decomposed into a sum of permutation operators: 
\begin{align}\label{eq:PiSymDecompose}
\Pi_{\mathrm{sym}}^{(t)}=\frac{1}{t !} \sum_{\pi \in \mathrm{S}_t} T_\pi. 
\end{align}

\subsection{Tensor-network diagrams}\label{sec:Tensor}
Tensor networks provide a graphical method that simplifies the analysis of tensor operations. 
In tensor-network diagrams, a linear operator $A\in\mathcal{L}(\caH)$ on the space $\caH$ is represented as a box with open legs:   
\begin{align}
A=\ 
\begin{tikzpicture}[baseline={([yshift=-.5ex]current bounding box.center)},inner sep=-4mm]
\node[tensor_green] (A) at (1*\xratio, 0) {$A$};
\draw[-,draw=black] (A) -- (1*\xratio, -0.6);
\draw[-,draw=black] (A) -- (1*\xratio, 0.6);
\end{tikzpicture}\ , 
\end{align}
where the top leg stands for row index of $A$, and the bottom leg stands for column index of $A$. 
The trace of $A$ and the transpose of $A$ are represented as 
\begin{align}
\tr{A}=\ 
\begin{tikzpicture}[baseline={([yshift=-.5ex]current bounding box.center)},inner sep=-4mm]
	\node[tensor_green] (A) at (1*\xratio, 0) {$A$};
	\draw[-<<,draw=black] (A) -- (1*\xratio, -0.7);
	\draw[-<<,draw=black] (A) -- (1*\xratio, 0.7);
\end{tikzpicture}
\ =\ 
\begin{tikzpicture}[baseline={([yshift=-.5ex]current bounding box.center)},inner sep=-4mm]
	\node[tensor_green] (A) at (1*\xratio, 0) {$A$};
	\draw[-,draw=black] (A.north) .. controls (1*\xratio, 0.8) and (1.7*\xratio, 0.8) ..  (1.7*\xratio, 0);
	\draw[-,draw=black] (A.south) .. controls (1*\xratio,-0.8) and (1.7*\xratio,-0.8) ..  (1.7*\xratio,-0);
\end{tikzpicture}
\quad\text{and}\quad
A^{\top} =\ 
\begin{tikzpicture}[baseline={([yshift=-.5ex]current bounding box.center)},inner sep=-4mm]
	\node[tensor_green] (A) at (0*\xratio, 0) {$A$};
	\draw[-,draw=black] (A.north) .. controls (0*\xratio, 0.6) and (0.5*\xratio, 0.7) ..  (0.5*\xratio, 0.1);
	\draw[-,draw=black] (0.5*\xratio, 0.1) .. controls (0.5*\xratio, -0.7) and (0*\xratio, -0.1) ..  (0*\xratio, -1);
	\draw[-,draw=black] (A.south) .. controls (0*\xratio, -0.6) and (-0.5*\xratio, -0.7) ..  (-0.5*\xratio, -0.1);
	\draw[-,draw=black] (-0.5*\xratio, -0.1) .. controls (-0.5*\xratio, 0.7) and (0*\xratio, 0.1) ..  (0*\xratio, 1);
\end{tikzpicture}\ , 
\end{align}
respectively, where  
“\,$\begin{tikzpicture}[baseline={([yshift=-.5ex]current bounding box.center)},inner sep=-4mm]
	\draw[-<<,draw=black] (1*\xratio, -0.3)-- (1*\xratio, 0);
\end{tikzpicture}$\,”
and 
“\,$\begin{tikzpicture}[baseline={([yshift=-.5ex]current bounding box.center)},inner sep=-4mm]
	\draw[-<<,draw=black] (1*\xratio, 0.3)-- (1*\xratio, 0);
\end{tikzpicture}$\,”
means the periodic boundary condition for the top and bottom legs. 
In particular, the identity operator on $\caH$ and its trace are represented as: 
\begin{align}
I=\ 
\begin{tikzpicture}[baseline={([yshift=-.5ex]current bounding box.center)},inner sep=-4mm]
	\draw[-,draw=black] (1*\xratio, -0.6) -- (1*\xratio, 0.6);
\end{tikzpicture}
\ \quad\text{and}\quad
\Tr(I)=\  
\begin{tikzpicture}[baseline={([yshift=-.5ex]current bounding box.center)},inner sep=-4mm]
	\draw[-,draw=black] (1*\xratio, 0) .. controls (1*\xratio, 0.8) and (1.7*\xratio, 0.8) ..  (1.7*\xratio, 0);
	\draw[-,draw=black] (1*\xratio, 0) .. controls (1*\xratio,-0.8) and (1.7*\xratio,-0.8) ..  (1.7*\xratio,-0);
\end{tikzpicture}
\ =\dim(\caH). 
\end{align}

Given operators $A,B\in\mathcal{L}(\caH)$, their product $AB$ is represented as a connection (a contraction of the indices) between two boxes, and their 
tensor product $A\otimes B$ is represented as putting $A$ and $B$ together with no leg connection: 
\begin{align}
AB=\ 
\begin{tikzpicture}[baseline={([yshift=-.5ex]current bounding box.center)},inner sep=-4mm]
	\node[tensor_green] (A) at (1*\xratio, 0) {$A$};
	\node[tensor_green] (B) at (1*\xratio, -0.9) {$B$};
	\draw[-,draw=black] (A) -- (B);
	\draw[-,draw=black] (A) -- (1*\xratio, 0.6);
	\draw[-,draw=black] (B) -- (1*\xratio, -1.5);
\end{tikzpicture}\ ,
\qquad
A\otimes B =\ 
\begin{tikzpicture}[baseline={([yshift=-.5ex]current bounding box.center)},inner sep=-4mm]
\node[tensor_green] (A) at (1*\xratio, 0) {$A$};
\draw[-,draw=black] (A) -- (1*\xratio, -0.7);
\draw[-,draw=black] (A) -- (1*\xratio, 0.7);
\node[tensor_green] (B) at (1.7*\xratio, 0) {$B$};
\draw[-,draw=black] (B) -- (1.7*\xratio, -0.7);
\draw[-,draw=black] (B) -- (1.7*\xratio, 0.7);
\end{tikzpicture}\ .
\end{align}

An operator $A\in\mathcal{L}(\caH^{\otimes 2})$ is represented as a box that has two legs on the top and two on the bottom, and 
its partial transpose is represented as changing the directions of legs on one side: 
\begin{align}
A =\ 
\begin{tikzpicture}[baseline={([yshift=-.5ex]current bounding box.center)},inner sep=-4mm]
\node[2party] (rho) at (0*\xratio, 0) {$A$};
\draw[-,draw=black] (-0.3*\xratio, 0.3) -- (-0.3*\xratio, 0.7);
\draw[-,draw=black] ( 0.3*\xratio, 0.3) -- ( 0.3*\xratio, 0.7);
\draw[-,draw=black] (-0.3*\xratio,-0.3) -- (-0.3*\xratio,-0.7);
\draw[-,draw=black] ( 0.3*\xratio,-0.3) -- ( 0.3*\xratio,-0.7);
\end{tikzpicture}\ ,
\qquad
A^{\top_2} =\ 
\begin{tikzpicture}[baseline={([yshift=-.5ex]current bounding box.center)},inner sep=-4mm]
\node[2party] (rho) at (0*\xratio, 0) {$A$};
\draw[-,draw=black] (-0.3*\xratio, 0.3) -- (-0.3*\xratio, 1);
\draw[-,draw=black] (-0.3*\xratio,-0.3) -- (-0.3*\xratio,-1);
\draw[-,draw=black] (0.3*\xratio, 0.3) .. controls (0.3*\xratio, 0.5) and (0.7*\xratio, 0.6) ..  (0.7*\xratio, 0.1);
\draw[-,draw=black] (0.7*\xratio, 0.1) .. controls (0.7*\xratio, -0.7) and (0.3*\xratio, -0.2) ..  (0.3*\xratio, -1);
\draw[-,draw=black] (0.3*\xratio,-0.3) .. controls (0.3*\xratio, -0.6) and (0.8*\xratio, -0.7) ..  (0.8*\xratio, 0);
\draw[-,draw=black] (0.8*\xratio, 0) .. controls (0.8*\xratio, 1) and (0.3*\xratio, 0.2) ..  (0.3*\xratio, 1);
\end{tikzpicture}\ . 
\end{align}

In tensor-network diagrams, permutation operators are represented as changing the position of legs. 
For example, the swap operator is represented as exchanging two legs, 
and cyclic permutation operators are represented as sequentially moving each leg to its neighboring position:   
\begin{align}
T_{(1,2)}=\ 
\begin{tikzpicture}[baseline={([yshift=-.5ex]current bounding box.center)},inner sep=-4mm]
\draw[-,draw=black] (-0.35*\xratio, 0.6) .. controls (-0.35*\xratio, -0.1) and (0.35*\xratio, 0.1) ..  (0.35*\xratio, -0.6);
\draw[-,draw=black] (0.35*\xratio, 0.6) .. controls (0.35*\xratio, -0.1) and (-0.35*\xratio,  0.1) ..  (-0.35*\xratio, -0.6);
\end{tikzpicture}\ ,
\qquad
T_{(1,3,2)}=\ 
\begin{tikzpicture}[baseline={([yshift=-.5ex]current bounding box.center)},inner sep=-4mm]
\draw[-,draw=black] (0.0*\xratio, 0.6) .. controls (0.0*\xratio, -0.1) and (0.6*\xratio, 0.1) ..  (0.6*\xratio, -0.6);
\draw[-,draw=black] (0.6*\xratio, 0.6) .. controls (0.6*\xratio, -0.1) and (1.2*\xratio, 0.1) ..  (1.2*\xratio, -0.6);
\draw[-,draw=black] (1.2*\xratio, 0.6) .. controls (1.2*\xratio, -0.2) and (0.0*\xratio, 0.2) ..  (0.0*\xratio, -0.6);
\end{tikzpicture}\ .
\end{align}

A particularly important diagram is that of the unnormalized maximally entangled state:  
\begin{align}
|\Phi\>\<\Phi|=\ 
\begin{tikzpicture}[baseline={([yshift=-.5ex]current bounding box.center)},inner sep=-4mm]
\draw[-,draw=black] (-0.4*\xratio, 0.8) .. controls (-0.4*\xratio,-0.16) and (0.4*\xratio,-0.16) ..  (0.4*\xratio, 0.8);
\draw[-,draw=black] (-0.4*\xratio,-0.8) .. controls (-0.4*\xratio, 0.16) and (0.4*\xratio, 0.16) ..  (0.4*\xratio,-0.8);
\end{tikzpicture}\ , 
\end{align}
from which we can easily deduce that $\left( |\Phi\>\<\Phi|\right)^{\top_2}=T_{(1,2)}$.

\section{Proof of \lref{lem:first_moment}}\label{sec:Proof12Moment}
The expectation of $\hat{\phi} \otimes \hat{\psi}$ (over both random state preparation and measurement outcomes) is  
\begin{align}\label{eq:proof1moment1}
&\E\big[ \hat{\phi} \otimes \hat{\psi}\big]  
 = \sum_{i=1}^L \sum_{j=1}^K (\phi_j \otimes \psi_i) \Pr\left\{ \hat{\psi} = \psi_i,\hat{\phi} = \phi_j\right\} 
\nonumber\\
&= \sum_{i=1}^L \sum_{j=1}^K (\phi_j \otimes \psi_i) \frac{1}{L} \Tr\left[ A_{\phi_j} (U\psi_iU^{\dag})^{\otimes s}\right]  
\nonumber\\
&= \frac{1}{L} \sum_{i=1}^L \Bigg( \frac{\kappa_s}{K}  \sum_{j=1}^K \phi_j  \Tr\! \Big[ \phi_j^{\otimes s} (U\psi_iU^{\dag})^{\otimes s}\Big] \Bigg)  \otimes \psi_i. 
\end{align}
According to \lref{lem:GrierLemma13} below and the fact that $\{|\phi_j\>\}_{j=1}^K$ forms a state $(s+1)$-design (see Sec.~\ref{sec:LearnPhase}), the term in the parentheses in the last line can be rewritten as 
\begin{align}\label{eq:proof1moment2}
\frac{\kappa_s}{K} \sum_{j=1}^K \phi_j  \Tr\!\Big[ \phi_j^{\otimes s} (U\psi_iU^{\dag})^{\otimes s}\Big]
=
\frac{I+s U\psi_i U^{\dag}}{d+s}.  
\end{align}
So we have 
\begin{align}
&\E\big[ \hat{\phi} \otimes \hat{\psi}\big]  
 = \frac{1}{L} \sum_{i=1}^L  \, \left( \frac{I+s U\psi_i U^{\dag}}{d+s}\right)   \otimes \psi_i
\nonumber\\
&= \frac{I}{d+s} \otimes \left( \frac{1}{L} \sum_{i=1}^L \psi_i\right)  
\nonumber\\
&\qquad + \frac{s}{d+s}  (U\otimes I) \left( \frac{1}{L} \sum_{i=1}^L \psi_i\otimes\psi_i\right) (U^{\dag}\otimes I)
\nonumber\\
&\stackrel{(a)}{=} \frac{I}{d+s} \otimes \frac{I}{\kappa_1}  + \frac{s}{d+s}(U\otimes I) \frac{\Pi_{\mathrm{sym}}^{(2)}}{\kappa_2} (U^{\dag}\otimes I)
\nonumber\\
&\stackrel{(b)}{=} \frac{I\otimes I}{\kappa_1(d+s)} +\frac{s}{2\kappa_2(d+s)} \sum_{\pi \in \symm_{2}}(U\otimes I)T_{\pi}(U^{\dag}\otimes I) 
\nonumber\\
&= \frac{(d+1+s)(I\otimes I)+s (U\otimes I)T_{(1,2)}(U^{\dag}\otimes I)}{d(d+1)(d+s)} 
\nonumber\\
&\stackrel{(c)}{=} \frac{(d+1+s)(I\otimes I)+s (U\otimes U^{\dag})T_{(1,2)}}{d(d+1)(d+s)}, 
\end{align}
which confirms \lref{lem:first_moment} in the main text. 
Here $(a)$ follows from \eref{eq:statetdesign} in the main text and the fact that $\{|\psi_i\>\}_{i=1}^L$ is a state 2-design (see Sec.~\ref{sec:LearnPhase}), 
$(b)$ follows from \eref{eq:PiSymDecompose}, and $(c)$ holds because $T_{(1,2)}(U^{\dag}\otimes I)=(I\otimes U^{\dag})T_{(1,2)}$.

In this proof, we use at most the $(s+1)$-design propertity of the ensemble $\{|\phi_j\>\}_{j=1}^K$ and the 2-design property of the ensemble $\{|\psi_i\>\}_{i=1}^L$. 
The $(s+2)$-design propertity of $\{|\phi_j\>\}_{j=1}^K$ and the $4$-design propertity of $\{|\psi_i\>\}_{i=1}^L$ will be used later in the proof of \lref{lem:second_moment}.

The following auxiliary lemma follows directly from Lemma~13 in Ref.~\cite{Grier22}. 

\begin{lemma}
\label{lem:GrierLemma13}
Suppose integer $s\geq 1$,  pure state $|\varphi\>\in\caH$, and $\{|\phi_j\>\}_{j=1}^K$ forms a state $(s+1)$-design on $\caH$. Then
\begin{align}
\frac{\kappa_s}{K} \sum_j \phi_j \Tr \left(\phi_j^{\otimes s} \varphi^{\otimes s}\right) 
=\frac{I+s \varphi}{d+s} .
\end{align}
\end{lemma}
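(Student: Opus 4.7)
The plan is to reduce the claim to a partial-trace computation involving $\Pi_{\mathrm{sym}}^{(s+1)}$ and then fix the result by symmetry plus two scalar traces. The key observation is the identity
$$\phi_j \,\Tr\!\bigl(\phi_j^{\otimes s}\varphi^{\otimes s}\bigr)
= \Tr_{2,\dots,s+1}\!\Bigl[\phi_j^{\otimes(s+1)}\bigl(I\otimes\varphi^{\otimes s}\bigr)\Bigr],$$
which lets me pull the $\phi_j$ inside a partial trace acting on the full $(s+1)$-fold tensor power. Averaging over $j$ and invoking the state-$(s+1)$-design property of $\{|\phi_j\rangle\}_{j=1}^K$ converts $\frac{1}{K}\sum_j \phi_j^{\otimes(s+1)}$ into $\Pi_{\mathrm{sym}}^{(s+1)}/\kappa_{s+1}$, so the LHS equals
$$\frac{\kappa_s}{\kappa_{s+1}}\,Y,\qquad
Y:=\Tr_{2,\dots,s+1}\!\Bigl[\Pi_{\mathrm{sym}}^{(s+1)}\bigl(I\otimes\varphi^{\otimes s}\bigr)\Bigr].$$

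Next I would argue that $Y$ must lie in the two-dimensional subspace $\mathrm{span}\{I,\varphi\}$. This follows from covariance: for any unitary $V$ stabilizing $\varphi$, $V Y V^{\dagger}=Y$, because $(V\otimes V^{\otimes s})$ commutes with both $\Pi_{\mathrm{sym}}^{(s+1)}$ and $I\otimes\varphi^{\otimes s}$. Hence $Y=aI+b\varphi$ for some scalars $a,b$, which are determined by the two equations
$$\Tr(Y)=ad+b,\qquad \Tr(\varphi Y)=a+b.$$

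The main calculation is evaluating these two traces. Using the decomposition $\Pi_{\mathrm{sym}}^{(s+1)}=\frac{1}{(s+1)!}\sum_{\pi\in S_{s+1}}T_\pi$ together with the cycle formula $\Tr[T_\pi (A_1\otimes\cdots\otimes A_{s+1})]=\prod_{c}\Tr\!\bigl(\prod_{i\in c}A_i\bigr)$, each cycle of $\pi$ not containing position $1$ contributes $\Tr(\varphi^{|c|})=1$, whereas the cycle containing position $1$ contributes $\Tr(I)=d$ if $\pi(1)=1$ and $\Tr(\varphi)=1$ otherwise. Since $s!$ permutations fix $1$ and $s\cdot s!$ do not, one obtains $\Tr(Y)=(d+s)/(s+1)$. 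For the second trace, $\Tr(\varphi Y)=\Tr[\Pi_{\mathrm{sym}}^{(s+1)}\varphi^{\otimes(s+1)}]=1$ because $\varphi^{\otimes(s+1)}$ already lies in the symmetric subspace. Solving the linear system yields $a=1/(s+1)$ and $b=s/(s+1)$, so $Y=(I+s\varphi)/(s+1)$.

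Finally, I would assemble the answer using the binomial identity $(s+1)\kappa_{s+1}=(d+s)\kappa_s$, which follows directly from $\kappa_t=\binom{t+d-1}{t}$. This gives
$$\frac{\kappa_s}{K}\sum_{j=1}^K\phi_j\,\Tr\!\bigl(\phi_j^{\otimes s}\varphi^{\otimes s}\bigr)
=\frac{\kappa_s}{\kappa_{s+1}}\cdot\frac{I+s\varphi}{s+1}
=\frac{I+s\varphi}{d+s},$$
as desired. The only mildly delicate step is the cycle-structure computation in the third paragraph; the rest is bookkeeping, and the symmetry reduction to the two-parameter ansatz $aI+b\varphi$ avoids any explicit summation over $S_{s+1}$.
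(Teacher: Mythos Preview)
Your proof is correct. The paper itself does not prove this lemma but merely cites Lemma~13 of Ref.~\cite{Grier22}; your self-contained argument via the partial trace of $\Pi_{\mathrm{sym}}^{(s+1)}$ and the two-parameter symmetry reduction to $aI+b\varphi$ supplies what the paper omits.
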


\section{Auxiliary lemmas}\label{sec:UsefulLemma}
To establish our results in the main text, here we prepare several auxiliary lemmas. 

Recall that our algorithm $\mathcal{A}_{\text {learn}}$ working in the learning phase outputs a classical description of the prepared state $\hat{\psi}$ and the measurement outcome $\hat{\phi}$ in each round (see Sec.~\ref{sec:LearnPhase}). 
In the main text, we have clarified the expectation of their tensor product $\hat{\phi}\otimes\hat{\psi}$ in \lref{lem:first_moment}. 
Here, we calculate the expectation of $\hat{\phi}^{\otimes 2} \otimes \hat{\psi}^{\otimes 2}$, which is useful for proving \lref{lem:all_the_covariances} in the main text (see Appendix~\ref{sec:covariances}).

\begin{widetext}
\begin{lemma}
\label{lem:second_moment}
The expectation of $\hat{\phi}^{\otimes 2} \otimes \hat{\psi}^{\otimes 2}$ 
(over both random state preparation and measurement outcomes) is
\begin{align}
\E\left[ \hat{\phi}^{\otimes 2} \otimes \hat{\psi}^{\otimes 2}\right]  = \frac{2}{(d+s)(d+s+1)} \left( \frac{1}{\kappa_2} \Delta_1+\frac{s}{\kappa_3}\Delta_2+\frac{s}{\kappa_3}\Delta_3+\frac{s(s-1)}{2\kappa_4} \Delta_4 \right) , 
\end{align}
where
\begin{align}\label{eq:Tj}
\Delta_1&:=\sym^{(2)}\otimes \sym^{(2)}, \\
\Delta_2&:=(I\otimes U\otimes I \otimes I)\left[ I_1\otimes \left( \sym^{(3)}\right)_{2,3,4}\right](I\otimes U^{\dag}\otimes I \otimes I)\left( \sym^{(2)}\otimes I \otimes I\right) , \\
\Delta_3&:=(U\otimes I\otimes I \otimes I)\left[ I_2\otimes \left( \sym^{(3)}\right)_{1,3,4}\right] (U^{\dag}\otimes I\otimes I \otimes I)\left( \sym^{(2)}\otimes I \otimes I\right) , \\
\Delta_4&:=
(U\otimes U\otimes I \otimes I)\, \sym^{(4)}\,  (U^{\dag}\otimes U^{\dag}\otimes I \otimes I), 
\end{align} 
$I_i$ denotes the identity operator acting on the $i$th system, and 
$(\cdot)_{i,j,k}$ denotes the operator acting on the $i$th, $j$th, and $k$th systems.
\end{lemma}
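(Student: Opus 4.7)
The plan is to mirror the structure of the proof of Lemma~1 (given in Appendix~C), extended to a second-order moment. We begin by expanding the expectation using the probability rule
\begin{align*}
\E\!\left[\hat{\phi}^{\otimes 2}\otimes\hat{\psi}^{\otimes 2}\right] = \frac{1}{L}\sum_{i=1}^L\sum_{j=1}^K \phi_j^{\otimes 2}\otimes\psi_i^{\otimes 2}\cdot\frac{\kappa_s}{K}\Tr\!\bigl[\phi_j^{\otimes s}(U\psi_i U^\dagger)^{\otimes s}\bigr].
\end{align*}
Because $\{\phi_j\}$ is a state $(s+2)$-design, we can eliminate the $j$-sum via
\begin{align*}
\frac{\kappa_s}{K}\sum_j \phi_j^{\otimes(s+2)} = \frac{\kappa_s}{\kappa_{s+2}}\Pi_{\mathrm{sym}}^{(s+2)},
\end{align*}
which turns the inner sum into the partial trace of $\Pi_{\mathrm{sym}}^{(s+2)}\bigl(I^{\otimes 2}\otimes(U\psi_i U^\dagger)^{\otimes s}\bigr)$ over the last $s$ systems, tensored with $\psi_i^{\otimes 2}$ on the $\hat\psi$-register.

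The next step is to expand $\Pi_{\mathrm{sym}}^{(s+2)}=\frac{1}{(s+2)!}\sum_{\pi\in \mathrm{S}_{s+2}}T_\pi$ and classify each $\pi$ by $r:=\bigl|\{k\in\{1,2\}:\pi^{-1}(k)\in\{1,2\}\}\bigr|\in\{0,1,2\}$, which records how many of the two $\hat\phi$-legs remain uncoupled from the $\sigma$-block, where $\sigma:=U\psi_i U^\dagger$. Using the identity $T_\pi(A_1\otimes\cdots\otimes A_{s+2})=(A_{\pi^{-1}(1)}\otimes\cdots\otimes A_{\pi^{-1}(s+2)})T_\pi$ together with the partial-trace behavior of permutation operators for rank-one $\sigma$ (cycles entirely inside the traced region contribute factors of $\langle\sigma|\sigma\rangle=1$, while mixed cycles collapse and deposit one $\sigma$ at each of their endpoints in the untraced region), each $\pi$ reduces to an operator on positions $\{1,2\}$ which is $I^{\otimes 2}$ or $T_{(1,2)}$ when $r=2$; one of $I\otimes\sigma$, $T_{(1,2)}(\sigma\otimes I)$, $\sigma\otimes I$, or $T_{(1,2)}(I\otimes\sigma)$ when $r=1$; and $\sigma\otimes\sigma$ when $r=0$. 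A direct enumeration gives $2\cdot s!$ permutations with $r=2$, $s\cdot s!$ in each of four $r=1$ sub-classes (indexed by $\pi^{-1}(1)\in\{1,2\}$ and by whether $\pi^{-1}(1)$ or $\pi^{-1}(2)$ is the one lying in $\{3,\ldots,s+2\}$), and $s(s-1)\cdot s!$ with $r=0$; summing within each appropriate pair produces the symmetrizer $I+T_{(1,2)}=2\sym^{(2)}$ acting on positions $\{1,2\}$.

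The final step is to average over $\hat\psi$. Each class now carries a factor $(U\psi_i U^\dagger)^{\otimes k}\otimes\psi_i^{\otimes 2}$ with $k=2-r\in\{0,1,2\}$, so the number of $\psi_i$-copies never exceeds four and the state 4-design property of $\{\psi_i\}$ is exactly sufficient to evaluate
\begin{align*}
\frac{1}{L}\sum_i (U\psi_i U^\dagger)^{\otimes k}\otimes\psi_i^{\otimes 2} = (U^{\otimes k}\otimes I^{\otimes 2})\,\frac{\Pi_{\mathrm{sym}}^{(k+2)}}{\kappa_{k+2}}\,(U^{\dagger\otimes k}\otimes I^{\otimes 2}).
\end{align*}
Combining this with the $\sym^{(2)}$ on the $\hat\phi$-legs, one directly reads off the four operators in the lemma: $r=2$ yields $\Delta_1$; the two $r=1$ branches (with $\sigma$ attached to position~2 vs.\ position~1) yield $\Delta_2$ and $\Delta_3$ respectively; and $r=0$ yields $\Delta_4$, where the internal $\sym^{(2)}$ on positions $\{1,2\}$ is absorbed by the larger $\sym^{(4)}$. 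Multiplying the permutation counts by the overall prefactor $\kappa_s/[\kappa_{s+2}(s+2)!]=1/[(d+s)(d+s+1)\,s!]$ produces the factor $2/[(d+s)(d+s+1)]$ together with the weights $1/\kappa_2$, $s/\kappa_3$, $s/\kappa_3$, and $s(s-1)/(2\kappa_4)$ in the statement. The main obstacle is the bookkeeping in the reduction step: correctly tracking, for each mixed-cycle permutation, which $\hat\phi$-leg receives the $\sigma$-factor (this is what distinguishes $\Delta_2$ from $\Delta_3$), and verifying that pairs such as $\{I\otimes\sigma,\,T_{(1,2)}(\sigma\otimes I)\}$ combine into $2(I\otimes\sigma)\sym^{(2)}$ so that the $\sym^{(2)}\otimes I\otimes I$ factor in the definitions of $\Delta_2,\Delta_3$ emerges cleanly. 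Once this is done, the prefactors follow by the routine combinatorial count sketched above.
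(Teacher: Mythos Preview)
Your proposal is correct and follows the same two-stage structure as the paper: first evaluate the $\hat\phi$-sum using the $(s+2)$-design property, then evaluate the $\hat\psi$-sum using the $4$-design property. The difference is in how you carry out the first stage. The paper invokes Lemma~\ref{lem:GrierLemma14} (taken from Ref.~\cite{Grier22}) as a black box, which directly gives
\[
\frac{\kappa_s}{K}\sum_j \phi_j^{\otimes 2}\Tr\!\big(\phi_j^{\otimes s}\varphi^{\otimes s}\big)
=\frac{2}{(d+s)(d+s+1)}\Big[(I+s\varphi)^{\otimes 2}-\tfrac{s(s+1)}{2}\varphi^{\otimes 2}\Big]\sym^{(2)},
\]
and then simply expands the bracket into the four terms $I^{\otimes 2}$, $sI\otimes\varphi$, $s\varphi\otimes I$, $\tfrac{s(s-1)}{2}\varphi^{\otimes 2}$ before averaging over $\psi_i$. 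You instead re-derive this identity from scratch by expanding $\Pi_{\mathrm{sym}}^{(s+2)}$ as $\tfrac{1}{(s+2)!}\sum_\pi T_\pi$, classifying permutations by $r$, and reducing the partial trace cycle-by-cycle using the purity of $\sigma$. Your permutation counts ($2\cdot s!$, $4\cdot s\cdot s!$, $s(s-1)\cdot s!$) and prefactor $\kappa_s/[\kappa_{s+2}(s+2)!]=1/[(d+s)(d+s+1)s!]$ are all correct and reproduce exactly the coefficients in Lemma~\ref{lem:GrierLemma14}. So your route is a self-contained proof that effectively re-proves Lemma~\ref{lem:GrierLemma14} along the way, at the cost of heavier bookkeeping; the paper's version is shorter because it outsources that combinatorics to the cited result.
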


\begin{proof}[Proof of \lref{lem:second_moment}]
We have 
\begin{align}\label{eq:proof2moment1}
\E\left[ \hat{\phi}^{\otimes 2} \otimes \hat{\psi}^{\otimes 2}\right] 
&=
\sum_{i=1}^L \sum_{j=1}^K \left( \phi_j^{\otimes 2} \otimes \psi_i^{\otimes 2}\right)  \Pr\left\{ \hat{\psi} = \psi_i,\hat{\phi} = \phi_j\right\}
\nonumber\\ 
&= \sum_{i=1}^L \sum_{j=1}^K \left( \phi_j^{\otimes 2} \otimes \psi_i^{\otimes 2}\right) \frac{1}{L} \Tr\left[ A_{\phi_j} (U\psi_iU^{\dag})^{\otimes s}\right]  
\nonumber\\
&= \frac{1}{L} \sum_{i=1}^L \Bigg( \frac{\kappa_s}{K}  \sum_{j=1}^K \phi_j^{\otimes 2} \Tr\! \Big[ \phi_j^{\otimes s} (U\psi_iU^{\dag})^{\otimes s}\Big] \Bigg)  \otimes \psi_i^{\otimes 2} . 
\end{align}
According to \lref{lem:GrierLemma14} below and the fact that $\{|\phi_j\>\}_{j=1}^K$ forms a state $(s+2)$-design (see Sec.~\ref{sec:LearnPhase}), 
\begin{align}\label{eq:proof2moment2}
\frac{\kappa_s}{K}  \sum_{j=1}^K \phi_j^{\otimes 2} \Tr\! \Big[ \phi_j^{\otimes s} (U\psi_iU^{\dag})^{\otimes s}\Big] 
= 
\frac{2}{(d+s)(d+s+1)} \left[ \left( I + s U\psi_i U^{\dag}\right) ^{\otimes 2} - \frac{s(s+1)}{2} \left( U\psi_i U^{\dag}\right) ^{\otimes 2}\right] \sym^{(2)} . 
\end{align}
So we have 
\begin{align}\label{eq:proof2moment3}
&\E\left[ \hat{\phi}^{\otimes 2} \otimes \hat{\psi}^{\otimes 2}\right] 
=\frac{2}{(d+s)(d+s+1)} \cdot \frac{1}{L} \sum_{i=1}^L \left[ \left( I + s U\psi_i U^{\dag}\right) ^{\otimes 2} - \frac{s(s+1)}{2} \left( U\psi_i U^{\dag}\right)^{\otimes 2}\right] \sym^{(2)} \otimes \psi_i^{\otimes 2}
\nonumber\\
&= \frac{2}{(d+s)(d+s+1)} \cdot \frac{1}{L} \sum_{i=1}^L \left[ I^{\otimes 2} +s I\otimes \left( U\psi_i U^{\dag}\right) + s \left( U\psi_i U^{\dag}\right)\otimes I + \frac{s(s-1)}{2}\left( U\psi_i U^{\dag}\right)^{\otimes 2}\right] \sym^{(2)} \otimes \psi_i^{\otimes 2}.  
\end{align}
Since $\{|\psi_i\>\}_{i=1}^L$ is a state 4-design (see Sec.~\ref{sec:LearnPhase}), we can apply \eref{eq:statetdesign} in the main text to calculate the sum in \eref{eq:proof2moment3} term-by-term. 
The first term is calculated as 
\begin{align}
\frac{1}{L} \sum_{i=1}^L \left( I^{\otimes 2} \sym^{(2)}\right)  \otimes \psi_i^{\otimes 2} 
&= \frac{1}{\kappa_2}  \sym^{(2)}\otimes \sym^{(2)}
 = \frac{\Delta_1}{\kappa_2},  
\end{align}
where we have used the 2-design property of $\{|\psi_i\>\}_{i=1}^L$. 
The second term is calculated as 
\begin{align}
\frac{1}{L} \sum_{i=1}^L \left[ s I\otimes \left( U\psi_i U^{\dag}\right) \right]  \sym^{(2)} \otimes \psi_i^{\otimes 2}
&= \frac{s}{L} \sum_{i=1}^L (I\otimes U\otimes I \otimes I)\left( I\otimes \psi_i^{\otimes 3}\right) (I\otimes U^{\dag}\otimes I \otimes I)\left( \sym^{(2)}\otimes I \otimes I\right)  
\nonumber\\
&= s (I\otimes U\otimes I \otimes I)\bigg( I\otimes \frac{\sym^{(3)}}{\kappa_3}\bigg) (I\otimes U^{\dag}\otimes I \otimes I)\left( \sym^{(2)}\otimes I \otimes I\right)  
= \frac{s}{\kappa_3}\Delta_2,
\end{align}
where we have used the 3-design property of $\{|\psi_i\>\}_{i=1}^L$. 
By a similar reasoning, the third term is given by 
\begin{align}
\frac{1}{L} \sum_{i=1}^L \left[ s\left( U\psi_i U^{\dag}\right)\otimes I \right]  \sym^{(2)} \otimes \psi_i^{\otimes 2}
=\frac{s}{\kappa_3}\Delta_3,
\end{align}
The last term is calculated as 
\begin{align}\label{eq:proof2moment4}
&\frac{1}{L} \sum_{i=1}^L \left[ \frac{s(s-1)}{2} \left( U\psi_i U^{\dag}\right) ^{\otimes 2} \,\sym^{(2)}\right]  \otimes \psi_i^{\otimes 2} 
\stackrel{(a)}{=}
\frac{s(s-1)}{2} \cdot \frac{1}{L} \sum_{i=1}^L \left( U\psi_i U^{\dag}\right) ^{\otimes 2} \otimes \psi_i^{\otimes 2} 
\nonumber \\
& = \frac{s(s-1)}{2} \cdot \frac{1}{L} \sum_{i=1}^L (U\otimes U\otimes I \otimes I)\left( \psi_i^{\otimes 4}\right) (U^{\dag}\otimes U^{\dag}\otimes I \otimes I)
\nonumber \\
&  \stackrel{(b)}{=} \frac{s(s-1)}{2} (U\otimes U\otimes I \otimes I)\, \frac{\sym^{(4)}}{\kappa_4}\,  (U^{\dag}\otimes U^{\dag}\otimes I \otimes I)
= \frac{s(s-1)}{2\kappa_4} \Delta_4,  
\end{align}
where $(a)$ holds because $(U\psi_i U^{\dag})^{\otimes 2} \,\sym^{(2)}=(U\psi_i U^{\dag})^{\otimes 2}$, and $(b)$
holds because $\{|\psi_i\>\}_{i=1}^L$ is a 4-design. 
Equations~\eqref{eq:proof2moment3}--\eqref{eq:proof2moment4} together confirm \lref{lem:second_moment}. 
\end{proof}
\end{widetext}

Note that, in the above proof, the 4-design property of $\{|\psi_i\>\}_{i=1}^L$ is only utilized in the case of $s\geq2$. 
When $s=1$, the last term $\frac{s(s-1)}{2}\left( U\psi_i U^{\dag}\right)^{\otimes 2}$ in the bracket of \eref{eq:proof2moment3} vanishes and thus the 3-design property of $\{|\psi_i\>\}_{i=1}^L$ is sufficient to prove \lref{lem:second_moment}.

The following auxiliary lemma follows directly from Lemma~14 in Ref.~\cite{Grier22}. 

\begin{lemma}
\label{lem:GrierLemma14}
Suppose integer $s\geq 1$,  pure state $|\varphi\>\in\caH$, and $\{|\phi_j\>\}_{j=1}^K$ forms a state $(s+2)$-design on $\caH$. Then
\begin{align}
&\frac{\kappa_s}{K} \sum_j \phi_j^{\otimes 2} \Tr \left(\phi_j^{\otimes s} \varphi^{\otimes s}\right) 
\nonumber\\
&=
\frac{2}{(d+s)(d+s+1)} \left[ \left( I + s \varphi\right) ^{\otimes 2} - \frac{s(s+1)}{2} \varphi ^{\otimes 2} \right] \sym^{(2)}.
\end{align}
\end{lemma}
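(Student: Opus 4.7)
The plan is to reduce the weighted sum on the left-hand side to a single partial trace of $\sym^{(s+2)}$ against $\varphi^{\otimes s}$, and then evaluate that partial trace by expanding $\sym^{(s+2)}$ into permutation operators and tracking cycle structure. First, I use the identity $\phi_j^{\otimes 2}\Tr(\phi_j^{\otimes s}\varphi^{\otimes s}) = \Tr_{3,\ldots,s+2}[\phi_j^{\otimes(s+2)}(I_{1,2}\otimes\varphi^{\otimes s})]$ to rewrite the left-hand side as $\kappa_s\,\Tr_{3,\ldots,s+2}\bigl[\frac{1}{K}\sum_j\phi_j^{\otimes(s+2)}\,(I\otimes\varphi^{\otimes s})\bigr]$. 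The $(s+2)$-design property replaces $\frac{1}{K}\sum_j\phi_j^{\otimes(s+2)}$ by $\sym^{(s+2)}/\kappa_{s+2}$, and $\kappa_s/\kappa_{s+2}=(s+1)(s+2)/[(d+s)(d+s+1)]$, so it suffices to evaluate
\begin{equation*}
M:=\Tr_{3,\ldots,s+2}\bigl[\sym^{(s+2)}(I_{1,2}\otimes\varphi^{\otimes s})\bigr]
\end{equation*}
up to the overall prefactor $(s+1)(s+2)/[(d+s)(d+s+1)]$.

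Next, I expand $\sym^{(s+2)}=\frac{1}{(s+2)!}\sum_{\pi\in\symm_{s+2}}T_\pi$ and compute $\Tr_{3,\ldots,s+2}[T_\pi(I\otimes\varphi^{\otimes s})]$ by a tensor-network wire-tracing argument. Because $\varphi$ is rank one with $\Tr(\varphi^\ell)=1$, every cycle of $\pi$ contained entirely in the ordinary positions $\{3,\ldots,s+2\}$ contributes a scalar $1$; moreover, any identity wire (from position $1$ or $2$) that is routed through a chain of $\varphi$-insertions produces a single $\varphi$, since intermediate $\langle\varphi|\varphi\rangle$'s all evaluate to $1$. Classifying $\pi$ by how positions $1$ and $2$ sit in its cycle decomposition yields seven configurations, each producing a characteristic operator on systems $1,2$: (I)~both fixed, giving $I\otimes I$; (II)~the transposition $(1,2)$, giving $T_{(1,2)}$; (III)/(IV)~exactly one fixed, the other in a cycle of length $\geq 2$, giving $I\otimes\varphi$ or $\varphi\otimes I$; (V.a)/(V.b)~both in a common cycle of length $\geq 3$ with $1$ immediately adjacent to $2$ on exactly one arc, giving $(\varphi\otimes I)T_{(1,2)}$ or $(I\otimes\varphi)T_{(1,2)}$; (V.c)~both in a common cycle of length $\geq 4$ with ordinary elements on both arcs, giving $\varphi\otimes\varphi$; (VI)~both in disjoint non-fixed cycles, again giving $\varphi\otimes\varphi$. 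Elementary counting shows the multiplicities are $s!$, $s!$, $s\cdot s!$, $s\cdot s!$, $s\cdot s!$, $s\cdot s!$, $\tfrac{s(s-1)}{2}s!$, and $\tfrac{s(s-1)}{2}s!$, respectively.

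The main obstacle is correctly enumerating these configurations without double-counting, but once the counts are in hand the sum collapses to
\begin{equation*}
\sum_\pi\Tr_{3,\ldots,s+2}\bigl[T_\pi(I\otimes\varphi^{\otimes s})\bigr]
= s!\Bigl[(I+T_{(1,2)}) + s(I\otimes\varphi+\varphi\otimes I)(I+T_{(1,2)}) + s(s-1)\,\varphi\otimes\varphi\Bigr].
\end{equation*}
Dividing by $(s+2)!$ and combining with the prefactor $(s+1)(s+2)/[(d+s)(d+s+1)]$ leaves an overall factor of $2/[(d+s)(d+s+1)]$. Using $I+T_{(1,2)}=2\sym^{(2)}$ and $(\varphi\otimes\varphi)\sym^{(2)}=\varphi\otimes\varphi$, the bracket factors as $\bigl[I\otimes I + s(I\otimes\varphi+\varphi\otimes I) + \tfrac{s(s-1)}{2}\varphi\otimes\varphi\bigr]\sym^{(2)}$, and the algebraic identity $I\otimes I + s(I\otimes\varphi+\varphi\otimes I) + \tfrac{s(s-1)}{2}\varphi^{\otimes 2} = (I+s\varphi)^{\otimes 2} - \tfrac{s(s+1)}{2}\varphi^{\otimes 2}$ (which follows from $s^{2}-\tfrac{s(s+1)}{2}=\tfrac{s(s-1)}{2}$) reproduces exactly the right-hand side of the lemma.
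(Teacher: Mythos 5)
Your proposal is correct, but it is worth noting that the paper does not actually prove this statement at all: it is imported verbatim as Lemma~14 of Ref.~\cite{Grier22}, so your derivation is a genuinely self-contained alternative rather than a rederivation of the paper's argument. I checked the key steps. The reduction to $\Tr_{3,\ldots,s+2}\bigl[\sym^{(s+2)}(I_{1,2}\otimes\varphi^{\otimes s})\bigr]$ with prefactor $\kappa_s/\kappa_{s+2}=(s+1)(s+2)/[(d+s)(d+s+1)]$ is right, and your seven-case classification of $\pi\in\mathrm{S}_{s+2}$ by the cycles through positions $1$ and $2$ assigns the correct operators: a cycle $(1,2,i_1,\ldots,i_k)$ indeed traces down to $(\varphi\otimes I)T_{(1,2)}$ because $\varphi^k=\varphi$, and both the ``common cycle, non-adjacent'' and ``disjoint non-fixed cycles'' cases collapse to $\varphi\otimes\varphi$ via $(\varphi\otimes\varphi)T_{(1,2)}=\varphi\otimes\varphi$. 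Your multiplicities $s!,\,s!,\,s\cdot s!$ (four times), and $\tfrac{s(s-1)}{2}s!$ (twice) sum to $(s+2)!$, and they check out explicitly for $s=1$ and $s=2$; the final algebra $s^2-\tfrac{s(s+1)}{2}=\tfrac{s(s-1)}{2}$ then reproduces the stated right-hand side exactly. What your route buys is transparency and independence from the external reference (useful since the paper's conventions for $T_\pi$ and $\sym^{(t)}$ must match those of Ref.~\cite{Grier22} for the citation to be airtight); what the citation buys is brevity. If you wanted to shorten the enumeration, the four ``one index in a nontrivial cycle or adjacent pair'' cases can be merged by noting that each contributes $s\cdot s!$ copies of one of the four operators $(I\otimes\varphi),(\varphi\otimes I),(\varphi\otimes I)T_{(1,2)},(I\otimes\varphi)T_{(1,2)}$, whose sum factors as $(I\otimes\varphi+\varphi\otimes I)(I+T_{(1,2)})$ exactly as you wrote.
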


\begin{lemma}[H\"{o}lder's inequality]\label{lem:Holder}
Suppose $A,B$ are operators on $\caH$. Then 
\begin{align}
\left| \tr{AB} \right| \leq \|A\|_1 \|B\|, 
\quad 
\left| \tr{AB} \right| \leq \|A\|_2 \|B\|_2. 
\end{align}
\end{lemma}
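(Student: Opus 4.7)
The plan is to prove the two inequalities separately, since each rests on a different classical idea: the first on the singular value decomposition of $A$ combined with unitary invariance of the operator norm, and the second on the Cauchy--Schwarz inequality for the Hilbert--Schmidt inner product.

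For the first bound $|\tr{AB}| \leq \|A\|_1 \|B\|_\infty$, I would begin by writing the singular value decomposition $A = U \Sigma V^\dagger$, where $U, V$ are unitaries and $\Sigma = \mathrm{diag}(\sigma_1, \ldots, \sigma_d)$ with singular values $\sigma_i \geq 0$; by definition, $\|A\|_1 = \sum_i \sigma_i$. Using cyclicity of the trace, I would rewrite
\begin{align}
\tr{AB} = \tr{U \Sigma V^\dagger B} = \tr{\Sigma (V^\dagger B U)} = \sum_i \sigma_i \, \langle i | V^\dagger B U | i\rangle,
\end{align}
where $\{|i\rangle\}$ is the computational basis. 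The diagonal matrix elements satisfy $|\langle i | V^\dagger B U | i\rangle| \leq \|V^\dagger B U\|_\infty = \|B\|_\infty$ by unitary invariance of the operator norm. Summing and applying the triangle inequality then yields $|\tr{AB}| \leq \sum_i \sigma_i \|B\|_\infty = \|A\|_1 \|B\|_\infty$.

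For the second bound $|\tr{AB}| \leq \|A\|_2 \|B\|_2$, I would invoke the Hilbert--Schmidt inner product $\langle X, Y \rangle_{\rm HS} := \tr{X^\dagger Y}$, which endows $\mathcal{L}(\caH)$ with a genuine complex inner product structure and induces the Frobenius norm $\|X\|_2 = \sqrt{\langle X, X\rangle_{\rm HS}}$. Applying the Cauchy--Schwarz inequality to the vectors $A^\dagger$ and $B$ gives
\begin{align}
|\tr{AB}| = |\langle A^\dagger, B\rangle_{\rm HS}| \leq \|A^\dagger\|_2 \, \|B\|_2 = \|A\|_2 \, \|B\|_2,
\end{align}
where the final equality uses $\|A^\dagger\|_2 = \|A\|_2$ (since $\tr{(A^\dagger)^\dagger A^\dagger} = \tr{A A^\dagger} = \tr{A^\dagger A}$ by cyclicity).

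There is no substantive obstacle here: both inequalities are standard and the main task is organizational, namely to state the SVD and Cauchy--Schwarz steps cleanly and to invoke unitary invariance and cyclicity of the trace at the right moments. If anything, the minor subtlety is to remember to use $A^\dagger$ (rather than $A$) on the left slot of the Hilbert--Schmidt inner product so that the conjugation built into the inner product cancels and reproduces $\tr{AB}$.
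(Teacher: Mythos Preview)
Your proof is correct. The paper does not actually prove this lemma: it states H\"older's inequality as a standard named result and uses it without proof, so there is nothing to compare against. Your SVD argument for the $(1,\infty)$ bound and Cauchy--Schwarz argument for the $(2,2)$ bound are both valid and standard.
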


\begin{lemma}\label{lem:UsefulIneqs}
Suppose $U$ is a unitary operator on $\caH$, $\rho\in\caD(\caH)$, and $O\in\obs(B)$.
Let $O_U:=U^{\dag}OU$ and $\wp:=\Tr(\rho^2)$. Then 
\begin{align}
\left| \tr O \right| &\leq \sqrt{d B}. \label{eq:UsefulIneq1}
\\
\left|\Tr(O_U\rho)\right| &\leq \min\lbrace \sqrt{B\wp},1 \rbrace, \label{eq:UsefulIneq2}
\\
\left|\Tr(O_U^2\rho)\right| &\leq \min\lbrace \sqrt{B\wp},1 \rbrace,  \label{eq:UsefulIneq3}
\\
\left|\Tr(O_U\rho^2)\right| &\leq \wp, \label{eq:UsefulIneq4}
\\
\left|\Tr(O_U^2\rho^2)\right| &\leq \wp, \label{eq:UsefulIneq5}
\\
\big|\!\Tr(O_U\rho\, O_U\rho)\big| &\leq \wp.   \label{eq:UsefulIneq6}
\end{align}
\end{lemma}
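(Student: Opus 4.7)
\textbf{Proof proposal for \lref{lem:UsefulIneqs}.}
The plan is to treat each inequality separately, relying on two basic tools: Cauchy--Schwarz in the form $|\Tr(AB)|\le \|A\|_2\|B\|_2$ and H\"older's inequality $|\Tr(AB)|\le \|A\|_\infty\|B\|_1$ (stated as \lref{lem:Holder}). As a preprocessing step I would record that $O_U = U^{\dag}OU$ satisfies $\|O_U\|_\infty\le 1$ and $\Tr(O_U^2)=\Tr(O^2)\le B$, so $O_U\in\obs(B)$, and similarly $\Tr(O_U^{2k})\le B$ for every $k\ge 1$ because $\Tr(O_U^{2k})\le \|O_U\|_\infty^{2k-2}\Tr(O_U^2)\le B$. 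I would also use that $\|\rho^2\|_1=\Tr(\rho^2)=\wp$ and $\|\rho\|_2=\sqrt{\wp}$.

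For \eqref{eq:UsefulIneq1}, I would apply Cauchy--Schwarz to $\Tr(I\cdot O)$ to get $|\Tr O|\le \|I\|_2\,\|O\|_2\le\sqrt{d}\sqrt{B}$. For \eqref{eq:UsefulIneq2}, the bound $\sqrt{B\wp}$ follows from Cauchy--Schwarz $|\Tr(O_U\rho)|\le\|O_U\|_2\|\rho\|_2$, and the bound $1$ from H\"older $|\Tr(O_U\rho)|\le\|O_U\|_\infty\|\rho\|_1\le 1$; taking the minimum gives the claim. Inequality \eqref{eq:UsefulIneq3} is analogous: H\"older gives $|\Tr(O_U^2\rho)|\le \|O_U^2\|_\infty\|\rho\|_1\le 1$, and Cauchy--Schwarz gives $|\Tr(O_U^2\rho)|\le \|O_U^2\|_2\|\rho\|_2=\sqrt{\Tr(O_U^4)}\sqrt{\wp}\le\sqrt{B\wp}$ once I invoke the preprocessing remark $\Tr(O_U^4)\le B$.

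Inequalities \eqref{eq:UsefulIneq4} and \eqref{eq:UsefulIneq5} both follow directly from H\"older applied to $\rho^2$: $|\Tr(O_U\rho^2)|\le\|O_U\|_\infty\|\rho^2\|_1\le \wp$, and identically $|\Tr(O_U^2\rho^2)|\le\|O_U^2\|_\infty\|\rho^2\|_1\le \wp$. The only inequality that takes slightly more thought is \eqref{eq:UsefulIneq6}, because $O_U\rho$ is not Hermitian and naive Cauchy--Schwarz produces a weaker bound. My plan is to first apply H\"older to factor out $\|O_U\|_\infty$, writing
\begin{equation*}
\bigl|\Tr(O_U\rho\,O_U\rho)\bigr|\le \|O_U\|_\infty\,\|\rho O_U\rho\|_1,
\end{equation*}
and then bound the remaining trace norm by the multiplicative inequality $\|AB\|_1\le \|A\|_2\|B\|_2$ with $A=\rho$ and $B=O_U\rho$, giving $\|\rho O_U\rho\|_1\le \|\rho\|_2\|O_U\rho\|_2\le \|O_U\|_\infty\|\rho\|_2^2=\wp$. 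Combining yields the desired $|\Tr(O_U\rho O_U\rho)|\le\wp$. (An equivalent route is to write the quantity as $\Tr\bigl[(\rho^{1/2}O_U\rho^{1/2})^2\bigr]=\|\rho^{1/2}O_U\rho^{1/2}\|_2^2$ and bound it by $\|O_U\|_\infty^2\Tr(\rho^2)$.) The main obstacle, such as it is, lies in choosing the right factorization for \eqref{eq:UsefulIneq6}; everything else reduces to a careful bookkeeping of the $\|\cdot\|_\infty$, $\|\cdot\|_2$, and $\|\cdot\|_1$ bounds already available for $O_U$ and $\rho$.
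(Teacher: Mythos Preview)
Your proposal is correct and matches the paper's proof almost line for line: the same Cauchy--Schwarz/H\"older splits for \eqref{eq:UsefulIneq1}--\eqref{eq:UsefulIneq5}, with the observation that $O^2\in\obs(B)$ (equivalently, your $\Tr(O_U^{2k})\le B$) doing the work for \eqref{eq:UsefulIneq3} and \eqref{eq:UsefulIneq5}. The only cosmetic difference is \eqref{eq:UsefulIneq6}: the paper writes $|\Tr(O_U\rho\,O_U\rho)|\le\|O_U\rho\|_2^2=\Tr(O_U^2\rho^2)\le\wp$ in one Cauchy--Schwarz step followed by \eqref{eq:UsefulIneq5}, whereas you chain H\"older and the $\|\cdot\|_1\le\|\cdot\|_2\|\cdot\|_2$ inequality; both routes are equally short and valid.
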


\begin{proof}[Proof of Lemma~\ref{lem:UsefulIneqs}]
Equation \eqref{eq:UsefulIneq1} can be derived as 
\begin{align}
	\left| \tr O \right| &\leq \left\| I \right\|_2 \left\| O \right\|_2 =  \sqrt{d \Tr(O^2)} \leq \sqrt{d B}, 
\end{align}
where the first inequality follows from \lref{lem:Holder}, and the second inequality holds because $O\in\obs(B)$. 

Similarly, \eref{eq:UsefulIneq2} can be derived as 
\begin{align}
\left|\Tr(O_U\rho)\right| 
&\leq \left\| O_U \right\|_2 \| \rho\|_2 =\left\|O\right\|_2 \| \rho\|_2 
\nonumber\\
&= \sqrt{\Tr(O^2)\tr{\rho^2}} 
\leq \sqrt{B\wp},
\\
\left|\Tr(O_U\rho)\right| 
&\leq \left\| O_U \right\| \| \rho\|_1 =\left\|O\right\| \tr\rho \leq1, 
\end{align}
given that $\left\| O \right\|\leq 1$ when $O\in\obs(B)$. 

Equation \eqref{eq:UsefulIneq3} follows from \eref{eq:UsefulIneq2} and the fact that $O^2\in \obs(B)$ when $O\in \obs(B)$.

Equation \eqref{eq:UsefulIneq4} can be derived as 
\begin{align}
\left|\Tr(O_U\rho^2)\right| 
&\leq \left\| O_U \right\| \| \rho^2\|_1 =\left\|O\right\| \tr{\rho^2} \leq \wp. 
\end{align}

Equation \eqref{eq:UsefulIneq5} follows from \eref{eq:UsefulIneq4} and the fact that $O^2\in \obs(B)$ when $O\in \obs(B)$.

Equation \eqref{eq:UsefulIneq6} can be derived as 
\begin{align}
\left|\Tr(O_U\rho O_U\rho)\right| 
&\leq \left\| O_U\rho \right\|_2^2 = \Tr[(O_U\rho)^\dag (O_U\rho) ]
\nonumber\\
&= \Tr(O_U^2\rho^2) 
\leq \wp,
\end{align}
where the last inequality follows from Eq.~\eqref{eq:UsefulIneq5}. 
\end{proof}

\section{Predicting \texorpdfstring{$\Tr\left( O\,\mathcal{U}(\rho)\right)$}{} by averaging \texorpdfstring{$\Tr[(O\otimes\rho^{\top}) \hat{X}_i]$}{}}\label{sec:Trivialmean}
Recall that our protocol for CSEU 
outputs the classical shadow data $\US(\caU,m)=\lbrace\hat X_1,X_2,\dots,\hat X_m\rbrace$ in the learning phase,
which contains $m$ independent classical snapshots. 
Given any quantum state $\rho\in \caD(\caH)$ and observable $O\in\obs(B)$ in the prediction phase, 
as mentioned in Sec.~\ref{sec:PredPhase}  of the main text, 
a simple way to predict $\Tr\left( O\,\mathcal{U}(\rho)\right)$ is 
to calculate $\Tr[(O\otimes\rho^{\top}) \hat{X}_i]$ and take their average: 
\begin{align}\label{eq:tildeZ}
\tilde{Z}_m(O,\rho) := \frac{1}{m} \sum_{i=1}^m \Tr\left[ (O\otimes\rho^{\top}) \hat{X}_i\right] .   
\end{align} 
In this appendix we shall show that, with this method, 
the number of queries required to estimate $\Tr\left( O\,\mathcal{U}(\rho)\right)$ within additive error $\epsilon$ 
scales as $\mathcal O(d^2 B/\epsilon^2)$ even when collective measurements on multiple systems are used.  If only single-copy measurements are accessible, 
then the query complexity further increases, reaching up to $\mathcal{O}(d^3 B/\epsilon^2)$.

The fluctuations of $\tilde{Z}_m(O,\rho)$ around its expectation $\Tr\left( O\,\mathcal{U}(\rho)\right)$ are controlled by the 
variance. By \lref{lem:Varhat{Z}} below,  we have 
\begin{align}
\Var\left[ \tilde{Z}_m(O,\rho) \right] 
&= \frac{1}{m}\Var\left[ \Tr\left( (O\otimes\rho^{\top}) \hat{X}\right) \right] 
\nonumber\\
&\leq \frac{1}{m}  \,\Theta \left[ \left( 1+\frac{d^2}{s^2} \right) dB \right] . 
\end{align} 
where the bound can be saturated when $\rho$ is a pure state and $O=\sqrt{B/d}\,I\in\obs(B)$.

By Chebyshev's inequality, in order to suppress the additive error of $\tilde{Z}_m(O,\rho)$ to $\epsilon$ with 
some constant success probability $1-\delta$, it suffices to choose 
$m$ such that $\Var\!\big[ \tilde{Z}_m(O,\rho) \big]\leq \delta\epsilon^2$; that is, to choose 
\begin{align}
	m\geq C \left(  1+\frac{d^2}{s^2} \right)  dB \epsilon^{-2}
\end{align}
 for some constant $C>0$. 
So the total number of required queries reads 
\begin{align}\label{eq:smAverage}
	s\cdot m=C \left( 1+\frac{d^2}{s^2} \right) \frac{sdB}{\epsilon^2} . 
\end{align}
In the special case $s=1$, i.e., only single-copy measurements are used, we have $s\cdot m=\mathcal O(d^3B/\epsilon^2)$. 
This query complexity agrees with that of the protocol for CSEC developed in Ref.~\cite{Kunjummen23}, 
which is ancilla-free and does not use quantum memory (see Sec.~\ref{sec:PriorWork} in the main text). 
If collective measurements on multiple systems are accessible, the query number can be further reduced by increasing $s$. 
When $s=\Theta(d)$, it can be reduced to the greatest extent, in which case the query complexity reads $s\cdot m=\mathcal O(d^2 B/\epsilon^2)$.

In contrast to the above method, our method for estimating $\Tr\left( O\,\mathcal{U}(\rho)\right)$ in the main text is based on the quadratic estimator $d^{-1}\Tr[(O\otimes\rho^{\top}) \hat{X}_i\hat{X}_j]$ instead of $\Tr[(O\otimes\rho^{\top}) \hat{X}_i]$. 
As shown in Table~\ref{tab:DirectMean} below, 
in both single-copy and collective measurement scenarios, the quadratic estimator demonstrates strictly better query efficiency.

\begin{table*}[htbp]
\caption{\label{tab:DirectMean}
The query complexity of two different methods for estimating $\Tr\left( O\,\mathcal{U}(\rho)\right)$. 
}
\begin{math} 
\begin{array}{c|cc}
\hline\hline 
\mbox{Method}
& \text{Single-copy measurements with $s=1$\ }
& \text{\ Collective measurements with $s=\Theta(d)$}
\\[0.6ex]
\hline
\\[-2.8ex]
\text{Averaging $\Tr\!\big[(O\otimes\rho^{\top}) \hat{X}_i\big]$}
&\bigo{ d^3 B \epsilon^{-2} }   
&\bigo{ d^2 B \epsilon^{-2} } 
\\[1.1ex]
\hline
\\[-2.4ex]
\text{Averaging $d^{-1}\Tr\!\big[(O\otimes\rho^{\top}) \hat{X}_i\hat{X}_j\big]$}
&\bigo{ d\,\epsilon^{-2} + d^2\sqrt{B}\epsilon^{-1}} 
\ \text{[see \eref{eq:singleUB}]}
&\bigo{ d\,\big(\epsilon^{-2} + \sqrt{B}\epsilon^{-1}\big) } \ \text{[see \eref{eq:jointUB}]}
\\[1.5ex]
\hline\hline
\end{array}	
\end{math}
\end{table*}

\begin{lemma}\label{lem:Varhat{Z}}
Suppose $\rho\in \caD(\caH)$ and $O\in\obs(B)$. Then 
\begin{align}\label{eq:Varhat{Z}}
\Var\left[ \Tr\left( (O\otimes\rho^{\top}) \hat{X}\right) \right] 
\leq 
\Theta \left[ \left( 1+\frac{d^2}{s^2} \right) dB \right] ,  
\end{align} 
where the inequality can be saturated when $\rho$ is a pure state and $O=\sqrt{d^{-1}B}\,I$.
\end{lemma}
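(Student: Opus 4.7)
The plan is to reduce the variance calculation to the simpler random variable $W := \Tr(O\hat{\phi})\,\Tr(\rho\hat{\psi})$ and then evaluate its first and second moments using Lemmas~\ref{lem:first_moment} and \ref{lem:second_moment}. Using the definition of $\hat{X}$ in Eq.~\eqref{eq:defhatX} together with the identity $\Tr[(O\otimes\rho^{\top})(\hat{\phi}\otimes\hat{\psi}^{\top})] = \Tr(O\hat{\phi})\,\Tr(\hat{\psi}\rho)$, the first step will write
\begin{align}
\Tr\!\left[(O\otimes\rho^{\top})\hat{X}\right] = \frac{d(d+1)(d+s)}{s}\,W - \frac{d+1+s}{s}\Tr(O),
\end{align}
so that $\Var[\Tr((O\otimes\rho^{\top})\hat{X})] = [d^2(d+1)^2(d+s)^2/s^2]\,\Var(W)$. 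The deterministic $\Tr(O)$ shift drops out and the whole problem reduces to bounding $\Var(W)$.

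Next, I would compute $\E(W^2) = \Tr[(O^{\otimes 2}\otimes\rho^{\otimes 2})\,\E(\hat{\phi}^{\otimes 2}\otimes\hat{\psi}^{\otimes 2})]$ by invoking Lemma~\ref{lem:second_moment}, reducing the task to evaluating the four traces $T_i := \Tr[(O^{\otimes 2}\otimes\rho^{\otimes 2})\Delta_i]$ for $i=1,2,3,4$. The first one factorises as $T_1 = \tfrac{1}{4}[\Tr(O)^2+\Tr(O^2)][1+\wp] = \mathcal{O}(dB)$ by Eq.~\eqref{eq:UsefulIneq1}. For $T_2,T_3,T_4$ I would expand $\sym^{(t)}$ via Eq.~\eqref{eq:PiSymDecompose} into its $t!$ permutation operators; each resulting term becomes a product of cycle-traces of the form $\Tr(O_U^{a}\rho^{b})$, $\Tr(O_U^{a})\Tr(\rho^{b})$, or $\Tr(O_U\rho O_U\rho)$, where $O_U := U^{\dag}OU$, all controlled by the H\"older-type bounds collected in Lemma~\ref{lem:UsefulIneqs}. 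A permutation-by-permutation accounting then shows every contribution is at most $\mathcal{O}(dB)$.

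The mean-squared term $[\E(W)]^2$ is obtained immediately from Lemma~\ref{lem:first_moment}. Combining $\E(W^2)$ with $-[\E(W)]^2$ and tracking the powers of $s$ and $d$ should then give
\begin{align}
\Var(W) = \mathcal{O}\!\left(\frac{B}{d(d+s)^2}\right) + \mathcal{O}\!\left(\frac{s^2 B}{d^3(d+s)^2}\right),
\end{align}
and multiplying by the prefactor $d^2(d+1)^2(d+s)^2/s^2$ then yields $\mathcal{O}([1+d^2/s^2]\,dB)$, which matches the claimed upper bound in the lemma.

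For the saturation claim, I would set $\rho$ pure and $O=\sqrt{B/d}\,I$, which makes $\Tr(O\hat{\phi})=\sqrt{B/d}$ deterministic; the full variance then collapses to $[d(d+1)^2(d+s)^2 B/s^2]\,\Var[\Tr(\rho\hat{\psi})]$, and a direct 2-design calculation gives $\Var[\Tr(\rho\hat{\psi})] = (d-1)/[d^2(d+1)]$, reproducing the asserted $\Theta([1+d^2/s^2]\,dB)$ scaling. I expect the main difficulty to lie in bounding $T_4$: the $4! = 24$ permutations of $\sym^{(4)}$, combined with the partial conjugation by $U\otimes U\otimes I\otimes I$, produce many qualitatively different cycle structures, and the bookkeeping must simultaneously bound each term by $\mathcal{O}(dB)$ and exhibit the cancellations against the $s^2$-scale pieces of $[\E(W)]^2$ so that only the advertised residue survives.
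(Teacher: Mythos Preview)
Your approach is essentially the same as the paper's: both reduce to the second moment via Lemma~\ref{lem:second_moment}, bound each $\Tr[(O^{\otimes 2}\otimes\rho^{\otimes 2})\Delta_i]$ by $\mathcal{O}(dB)$ using Lemma~\ref{lem:UsefulIneqs}, and handle saturation by the same direct 2-design computation. The one simplification you are missing is that the paper never subtracts $[\E(W)]^2$ at all---it simply uses $\Var(W)\le \E(W^2)$, and the resulting bound already has the correct $\mathcal{O}\big((1+d^2/s^2)dB\big)$ scaling; so the ``cancellations against the $s^2$-scale pieces'' you anticipate in $T_4$ are not needed, and the bookkeeping is just the term-by-term $\mathcal{O}(dB)$ estimate.
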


\begin{widetext}
Note that when the observable $O$ is proportional to the identity, i.e., $O=cI$, the single-shot estimator 
\begin{align}
\Tr\left[ (O\otimes\rho^{\top}) \hat{X}\right]
&=
\Tr\left[ (cI\otimes\rho^{\top}) \frac{d(d+1)(d+s) (\hat{\phi}\otimes\hat{\psi}^{\top}) - (d+1+s)(I\otimes I)}{s} \right]
\nonumber\\
&= \frac{c\,d(d+1)(d+s)}{s} \Tr(\rho\hat{\psi}) - \frac{c\,d(d+1+s)}{s} 
\end{align}
is not a constant, and has fluctuation due to the randomness of $\hat{\psi}$. 
Hence, the variance of $\Tr\left[ (O\otimes\rho^{\top}) \hat{X}\right]$ does not vanish in this case.

\begin{proof}[Proof of \lref{lem:Varhat{Z}}]
Using the definition of $\hat{X}$ in \eref{eq:defhatX} of the main text, we can expand the variance  as 
\begin{align}
\Var\left[ \Tr\left( (O\otimes\rho^{\top}) \hat{X}\right) \right] 
&= \Var \left[ \frac{d(d+1)(d+s)}{s} \Tr\left[(O\otimes\rho^{\top})(\hat{\phi}\otimes\hat{\psi}^{\top}) \right] - \frac{d+1+s}{s}\Tr(O\otimes\rho^{\top}) \right] 
\nonumber\\
&= \left[ \frac{d(d+1)(d+s)}{s}\right]^2 \Var\left[  \Tr\left((O\otimes\rho)(\hat{\phi}\otimes\hat{\psi})\right)\right] 
\nonumber\\
&= \left[ \frac{d(d+1)(d+s)}{s}\right]^2 
\left\{ \E\left[ \left(\Tr[(O\otimes\rho)(\hat{\phi}\otimes\hat{\psi})]\right)^2 \right] - \left(\E\left[ \Tr((O\otimes\rho)(\hat{\phi}\otimes\hat{\psi})) \right]\right)^2 \right\}
\nonumber\\
&\leq \left[ \frac{d(d+1)(d+s)}{s}\right]^2 
\Tr\Big[ \left( O^{\otimes 2}\otimes\rho^{\otimes 2}\right) \, \E\left[ \hat{\phi}^{\otimes 2} \otimes \hat{\psi}^{\otimes 2}\right]\Big].
\label{eq:naiveVariance1}
\end{align} 
According to \lref{lem:second_moment}, the trace term in the last line of Eq.~\eqref{eq:naiveVariance1} can be expanded as 
\begin{align}\label{eq:naiveVariance11}
\Tr\!\Big[ \left( O^{\otimes 2}\otimes\rho^{\otimes 2}\right) \, \E\left[ \hat{\phi}^{\otimes 2} \otimes \hat{\psi}^{\otimes 2}\right]\Big]
=
\frac{2}{(d+s)(d+s+1)}
\Tr\left[  \left( O^{\otimes 2}\otimes\rho^{\otimes 2}\right) \left( \frac{\Delta_1}{\kappa_2} +\frac{s\Delta_2}{\kappa_3}+\frac{s\Delta_3}{\kappa_3}+\frac{s(s-1)\Delta_4 }{2\kappa_4} \right) \right] .
\end{align} 

Next, we calculate $\Tr\left[\left( O^{\otimes 2}\otimes\rho^{\otimes 2}\right) \Delta_j \right]$ for $j=1,2,3,4$, respectively. 
First, 
\begin{align}
	\Tr\left[  \left( O^{\otimes 2}\otimes\rho^{\otimes 2}\right)\Delta_1\right] 
	&=
	\Tr\left( O^{\otimes 2}\sym^{(2)}\right) \Tr\left( \rho^{\otimes 2}\sym^{(2)}\right)
	=
	\frac{1}{4}\Tr\left[ O^{\otimes 2}\left(I+T_{(12)} \right) \right]\Tr\left[ \rho^{\otimes 2}\left(I+T_{(12)} \right) \right]
	\nonumber\\
	&=
	\frac{1}{4}\left[ \Tr(O^2)+(\tr{O})^2\right] \left[ \Tr(\rho^2)+1\right]
	= \bigo{Bd},
\end{align} 
where the last equality follows from \lref{lem:UsefulIneqs} in Appendix~\ref{sec:UsefulLemma}.
Second, $\Tr\left[\left( O^{\otimes 2}\otimes\rho^{\otimes 2}\right) \Delta_2 \right]$ can be calculated via tensor-network diagrams:  
\begin{align}
\!\!\!\!\!\!\!\!\!\!
\Tr\left[  \left( O^{\otimes 2}\otimes\rho^{\otimes 2}\right) \Delta_2 \right] 
&=\Tr\left[  \left( O^{\otimes 2}\otimes\rho^{\otimes 2}\right)(I\otimes U\otimes I \otimes I)
             \left( I\otimes  \sym^{(3)}\right)(I\otimes U^{\dag}\otimes I \otimes I)
             \left( \sym^{(2)}\otimes I \otimes I\right)\right] 
\nonumber\\[0.5ex]
&=\  \begin{tikzpicture}[baseline={([yshift=-.5ex]current bounding box.center)},inner sep=-4mm]
\node[tensor_blue] (O1) at (1*\xratio, 0) {$O$};
\node[tensor_blue] (O2) at (2*\xratio, 0) {$O$};
\node[tensor_blue] (rho1) at (3*\xratio, 0) {$\rho$};
\node[tensor_blue] (rho2) at (4*\xratio, 0) {$\rho$};
\node[tensor_green] (U1) at (2*\xratio, -1) {$U$};
\node[Permute_3] (permute_a) at (3*\xratio, -2) {$\sym^{(3)}$};
\node[tensor_green] (U2) at (2*\xratio, -3) {$U^{\dag}$};
\node[Permute_2] (permute_b) at (1.5*\xratio, -4) {$\sym^{(2)}$};
\draw[-,draw=black] (O1) -- (1*\xratio, -3.7);
\draw[-,draw=black] (O2) -- (U1);
\draw[-,draw=black] (U1) -- (2*\xratio, -1.7);
\draw[-,draw=black] (2*\xratio, -2.3) -- (U2);
\draw[-,draw=black] (U2) -- (2*\xratio, -3.7);
\draw[-,draw=black] (rho1) -- (3*\xratio, -1.7);
\draw[-,draw=black] (rho2) -- (4*\xratio, -1.7);
\draw[-<<,draw=black] (O1) -- (1*\xratio, +0.7);
\draw[-<<,draw=black] (O2) -- (2*\xratio, +0.7);
\draw[-<<,draw=black] (rho1) -- (3*\xratio, +0.7);
\draw[-<<,draw=black] (rho2) -- (4*\xratio, +0.7);
\draw[-<<,draw=black] (1*\xratio, -4.3) -- (1*\xratio, -4.7);
\draw[-<<,draw=black] (2*\xratio, -4.3) -- (2*\xratio, -4.7);
\draw[-<<,draw=black] (3*\xratio, -2.3) -- (3*\xratio, -4.7);
\draw[-<<,draw=black] (4*\xratio, -2.3) -- (4*\xratio, -4.7);
\end{tikzpicture}
\ = \frac{1}{3! \cdot 2!} \sum_{\pi_1\in\mathrm{S}_3} \sum_{\pi_2\in\mathrm{S}_2} 
\begin{tikzpicture}[baseline={([yshift=-.5ex]current bounding box.center)},inner sep=-4mm]
\node[tensor_blue] (O1) at (1*\xratio, 0) {$O$};
\node[tensor_blue] (O2) at (2*\xratio, 0) {$O$};
\node[tensor_blue] (rho1) at (3*\xratio, 0) {$\rho$};
\node[tensor_blue] (rho2) at (4*\xratio, 0) {$\rho$};
\node[tensor_green] (U1) at (2*\xratio, -1) {$U$};
\node[Permute_3] (permute_a) at (3*\xratio, -2) {$T_{\pi_1}$};
\node[tensor_green] (U2) at (2*\xratio, -3) {$U^{\dag}$};
\node[Permute_2] (permute_b) at (1.5*\xratio, -4) {$T_{\pi_2}$};
\draw[-,draw=black] (O1) -- (1*\xratio, -3.7);
\draw[-,draw=black] (O2) -- (U1);
\draw[-,draw=black] (U1) -- (2*\xratio, -1.7);
\draw[-,draw=black] (2*\xratio, -2.3) -- (U2);
\draw[-,draw=black] (U2) -- (2*\xratio, -3.7);
\draw[-,draw=black] (rho1) -- (3*\xratio, -1.7);
\draw[-,draw=black] (rho2) -- (4*\xratio, -1.7);
\draw[-<<,draw=black] (O1) -- (1*\xratio, +0.7);
\draw[-<<,draw=black] (O2) -- (2*\xratio, +0.7);
\draw[-<<,draw=black] (rho1) -- (3*\xratio, +0.7);
\draw[-<<,draw=black] (rho2) -- (4*\xratio, +0.7);
\draw[-<<,draw=black] (1*\xratio, -4.3) -- (1*\xratio, -4.7);
\draw[-<<,draw=black] (2*\xratio, -4.3) -- (2*\xratio, -4.7);
\draw[-<<,draw=black] (3*\xratio, -2.3) -- (3*\xratio, -4.7);
\draw[-<<,draw=black] (4*\xratio, -2.3) -- (4*\xratio, -4.7);
\end{tikzpicture}
\nonumber\\[1ex]
&= \frac{1}{12} \tr O \left[\tr O  + \tr O \Tr(\rho^2) + 2 \Tr(OU\rho U^{\dag}) + 2\Tr(OU\rho^2 U^{\dag})\right] \nonumber\\
&\qquad +\frac{1}{12} \left[ \Tr(O^2) + \Tr(O^2) \Tr(\rho^2) + 2 \Tr(O^2 U\rho U^{\dag}) + 2\Tr(O^2 U\rho^2 U^{\dag}) \right] 
\nonumber\\
&= \bigo{Bd} ,  
\end{align}
where we use \lref{lem:UsefulIneqs} to derive the last equality. 
Third, from the symmetry between $\Delta_2$ and $\Delta_3$ we have 
\begin{align}
	\Tr\left[\left( O^{\otimes 2}\otimes\rho^{\otimes 2}\right) \Delta_3 \right]
	= \Tr\left[\left( O^{\otimes 2}\otimes\rho^{\otimes 2}\right) \Delta_2 \right]= \bigo{Bd}.
\end{align}
Forth, 
\begin{align}
	\Tr\left[  \left( O^{\otimes 2}\otimes\rho^{\otimes 2}\right)\Delta_4\right] 
	&=
	\Tr\left[(U^{\dag}\otimes U^{\dag}\otimes I\otimes I)
	\left(O^{\otimes 2}\otimes\rho^{\otimes2}\right)(U\otimes U\otimes I\otimes I)\, \sym^{(4)}\right] 
	\nonumber\\
	&=
	\frac{1}{4!} \sum_{\pi\in\mathrm{S}_4} \Tr\left[  \left( O_U^{\otimes 2}\otimes\rho^{\otimes 2}\right) T_\pi \right] 
	\nonumber\\
	&=
	\frac{1}{24}\big[ (\Tr(O_U))^2 + \Tr(O_U^2) + 4\Tr(O_U)\Tr(O_U\rho) + (\Tr(O_U))^2\Tr(\rho^2) + 4\Tr(O_U)\Tr(O_U\rho^2)
	\nonumber\\
	&\qquad\quad  + 4\Tr(O_U^2\rho) + \Tr(O_U^2)\Tr(\rho^2) + 2(\Tr(O_U\rho))^2 + 4\Tr(O_U^2\rho^2) + 2\Tr(O_U\rho O_U\rho)\big]
	\nonumber\\
	&= \bigo{Bd} ,  
\end{align}
where $O_U=U^{\dag}OU$, and the last equality follows from \lref{lem:UsefulIneqs} again.

Therefore, Eq.~\eqref{eq:naiveVariance11} can be rewritten as 
\begin{align}\label{eq:naiveVariance4}
\Tr\!\Big[ \left( O^{\otimes 2}\otimes\rho^{\otimes 2}\right) \, \E\left[ \hat{\phi}^{\otimes 2} \otimes \hat{\psi}^{\otimes 2}\right]\Big]
= 
\frac{2}{(d+s)(d+s+1)}\left(\frac{1}{\kappa_2} +\frac{2s}{\kappa_3} + \frac{s(s-1)}{2\kappa_4} \right) \mathcal{O}(dB). 
\end{align} 
By plugging Eq.~\eqref{eq:naiveVariance4} into Eq.~\eqref{eq:naiveVariance1}, we have 
\begin{align}
\Var\left[ \Tr\left( (O\otimes\rho^{\top}) \hat{X}\right) \right] 
\leq 
\left[ \frac{d(d+1)(d+s)}{s}\right]^2  \frac{\mathcal{O}(dB)}{(d+s)(d+s+1)} 
\left(\frac{1}{\kappa_2} +\frac{2s}{\kappa_3} + \frac{s(s-1)}{2\kappa_4} \right) 
=\mathcal O \left[ \left( 1+\frac{d^2}{s^2} \right) dB \right], 
\end{align}
which confirms the inequality \eqref{eq:Varhat{Z}}.

At last, we show that the inequality \eqref{eq:Varhat{Z}} can be saturated when $\rho$ is a pure state and $O=\sqrt{B/d}\,I\in\obs(B)$. 
In this case,
\begin{align}
	\Var\left[ \Tr\left( (O\otimes\rho^{\top}) \hat{X}\right) \right] 
	&=  
	\Var\left[ \frac{d(d+1)(d+s)}{s} \Tr\left[ (O\otimes\rho^{\top})(\hat{\phi}\otimes\hat{\psi}^{\top})  \right]  
	-\frac{d+1+s}{s} \Tr(O\otimes\rho^{\top}) \right]
	\nonumber\\
	&=  
	\Var\left[\frac{d(d+1)(d+s)}{s} \sqrt{\frac{B}{d}} \Tr(\rho\hat{\psi}) \right]
	\nonumber\\
	&=  
	\frac{B}{d}  \left[\frac{d(d+1)(d+s)}{s} \right]^2  \Big[ \Tr\left((\rho\otimes\rho) \E[ \hat{\psi}\otimes\hat{\psi} ] \right) - \left(\Tr(\rho\E[\hat{\psi}])\right)^2 \Big]
	\nonumber\\
	&\!\stackrel{(a)}{=}  
	\frac{B}{d}  \left[\frac{d(d+1)(d+s)}{s} \right]^2  \bigg[ \Tr\bigg( (\rho\otimes\rho) \frac{\sym^{(2)}}{\kappa_2} \bigg)  
	- \big( \Tr\left( \rho I/\kappa_1  \right) \big)^2 \bigg]
	\nonumber\\
	&=  
	\frac{B}{d}  \left[\frac{d(d+1)(d+s)}{s} \right]^2  \bigg[ \frac{2}{d(d+1)}  - \frac{1}{d^2}\bigg]
	\nonumber\\
	&=  
	\Theta \left[ \left( 1+\frac{d^2}{s^2} \right) dB \right], 
\end{align}
where $(a)$ follows from \eref{eq:statetdesign} in the main text and the fact that 
the state $\hat{\psi}$ is chosen randomly from a 4-design state  ensemble.  
This completes the proof of \lref{lem:Varhat{Z}}. 
\end{proof}

\section{Proof of \lref{lem:all_the_covariances}}\label{sec:covariances}
In preparation for proving \lref{lem:all_the_covariances}, 
it will be convenient to first calculate the second moment of the snapshot $\hat{X}$, which is defined in \eref{eq:defhatX} of the main text.    

\begin{lemma}\label{lem:ExpXTotimesXT}
The expectation of $\hat{X} \otimes \hat{X}$ (over both random state preparation and measurement outcomes) is
\begin{align}
\E \big[ \hat{X} \otimes \hat{X} \big]
=\frac{2 d^2(d+1)^2(d+s)}{s^2(d+s+1)}\left( \frac{1}{\kappa_2} \tilde \Delta_1+\frac{s}{\kappa_3}\tilde \Delta_2
+\frac{s}{\kappa_3}\tilde  \Delta_3+\frac{s(s-1)}{2\kappa_4} \tilde \Delta_4 \right) 
-\frac{d+1+s}{s} \left( \tilde \Delta_5 + \tilde \Delta_6\right) 
-\frac{(d+1+s)^2}{s^2} \tilde \Delta_7, 
\end{align} 
where
\begin{align}
\tilde \Delta_1&:=\left( \sym^{(2)}\right)_{1,3} \otimes \left( \sym^{(2)}\right)_{2,4} , \\
\tilde \Delta_2&:=\left( (I\otimes I\otimes U \otimes I)\left[ I_1\otimes \left( \sym^{(3)}\right)_{2,3,4}\right](I\otimes I \otimes U^{\dag}\otimes I)
          \left[ \left( \sym^{(2)}\right)_{1,3} \otimes I_2 \otimes I_4\right] \right)^{\top_{2,4}} , \\
\tilde \Delta_3&:=\left( (U\otimes I\otimes I \otimes I)\left[ I_3\otimes \left( \sym^{(3)}\right)_{1,2,4}\right] (U^{\dag}\otimes I\otimes I \otimes I)
          \left[ \left( \sym^{(2)}\right)_{1,3} \otimes I_2 \otimes I_4\right] \right)^{\top_{2,4}} , \\
\tilde \Delta_4&:=\left( (U\otimes I\otimes U \otimes I)\, \sym^{(4)}\, (U^{\dag}\otimes I\otimes U^{\dag}\otimes I)
          \right)^{\top_{2,4}} , \\
\tilde \Delta_5&:= \Upsilon_\caU\otimes I\otimes I, \qquad
\tilde \Delta_6 := I\otimes I\otimes \Upsilon_\caU, \qquad
\tilde \Delta_7 := I\otimes I \otimes I\otimes I,
\end{align}
and $\top_{2,4}$ denotes the partial transpose on the second and forth subsystems. 
\end{lemma}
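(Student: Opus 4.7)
The plan is to expand $\hat{X}\otimes\hat{X}$ by bilinearity from the definition $\hat{X}=\frac{d(d+1)(d+s)}{s}(\hat{\phi}\otimes\hat{\psi}^{\top})-\frac{d+1+s}{s}(I\otimes I)$ in Eq.~\eqref{eq:defhatX}, producing four terms: one purely quadratic in $(\hat{\phi}\otimes\hat{\psi}^{\top})$, two ``cross'' terms linear in it, and a constant $I^{\otimes 4}$ term. After taking expectations over the randomness of the learning phase, the quadratic term will generate the $\tilde{\Delta}_1,\ldots,\tilde{\Delta}_4$ contributions via \lref{lem:second_moment}, while the cross and constant terms will collapse into the $\tilde{\Delta}_5$, $\tilde{\Delta}_6$, $\tilde{\Delta}_7$ contributions via \lref{lem:first_moment}.

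First I would dispose of the cross and constant pieces. Partially transposing \lref{lem:first_moment} on the second subsystem and using Eq.~\eqref{eq:EhatX}, one gets $\E[\hat{\phi}\otimes\hat{\psi}^{\top}]=\frac{(d+1+s)(I\otimes I)+s\Upsilon_{\caU}}{d(d+1)(d+s)}$. Substituting this into the two cross terms produces a contribution $-\frac{d+1+s}{s}(\Upsilon_{\caU}\otimes I\otimes I+I\otimes I\otimes\Upsilon_{\caU})$ together with $-\frac{2(d+1+s)^2}{s^2}I^{\otimes 4}$; combining the latter with the original constant $+\frac{(d+1+s)^2}{s^2}I^{\otimes 4}$ yields exactly $-\frac{(d+1+s)^2}{s^2}\tilde{\Delta}_7-\frac{d+1+s}{s}(\tilde{\Delta}_5+\tilde{\Delta}_6)$, matching the last piece of the claimed formula.

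The heart of the argument is the quadratic term $\bigl[\tfrac{d(d+1)(d+s)}{s}\bigr]^2\E[(\hat{\phi}\otimes\hat{\psi}^{\top})^{\otimes 2}]$. Here the four slots carry the pattern $(\hat{\phi},\hat{\psi}^{\top},\hat{\phi},\hat{\psi}^{\top})$, whereas \lref{lem:second_moment} groups the moments as $(\hat{\phi},\hat{\phi},\hat{\psi},\hat{\psi})$. The key identity is $\E[(\hat{\phi}\otimes\hat{\psi}^{\top})^{\otimes 2}]=\bigl(T_{(2,3)}\,\E[\hat{\phi}^{\otimes 2}\otimes\hat{\psi}^{\otimes 2}]\,T_{(2,3)}\bigr)^{\top_{2,4}}$, since the transposition $(2,3)$ reshuffles the slots into the required interleaved pattern, and the partial transpose on slots $2,4$ installs the $\top$ on each $\hat{\psi}$. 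Applying this rearrangement term-by-term to the decomposition of \lref{lem:second_moment} shows that $\bigl(T_{(2,3)}\Delta_j T_{(2,3)}\bigr)^{\top_{2,4}}=\tilde{\Delta}_j$ for $j=1,2,3,4$; for $j=1$ the partial transpose can be dropped because it acts as a full transpose on the factor $(\sym^{(2)})_{2,4}$ (which is invariant) and leaves $(\sym^{(2)})_{1,3}$ untouched, which is why the statement writes $\tilde{\Delta}_1$ without an explicit $\top_{2,4}$. Multiplying by the prefactor $\bigl[\tfrac{d(d+1)(d+s)}{s}\bigr]^2\cdot\tfrac{2}{(d+s)(d+s+1)}=\tfrac{2d^2(d+1)^2(d+s)}{s^2(d+s+1)}$ delivers the claimed main coefficient.

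The main obstacle is purely bookkeeping: carefully tracking which subsystems are being permuted or partially transposed, and verifying that the $U$'s appearing inside $\Delta_2,\Delta_3,\Delta_4$ migrate correctly under conjugation by $T_{(2,3)}$ to the slots indicated by $\tilde{\Delta}_2,\tilde{\Delta}_3,\tilde{\Delta}_4$ (for instance, a $U$ acting on slot $2$ in $\Delta_2$ must end up on slot $3$ in $\tilde{\Delta}_2$, consistent with the factor ``$I\otimes I\otimes U\otimes I$'' in its definition). No ingredients beyond \lref{lem:first_moment} and \lref{lem:second_moment} are needed; once the permutation/partial-transpose identities are checked for each $\Delta_j$, the proof reduces to mechanical simplification and collection of prefactors.
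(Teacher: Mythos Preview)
Your proposal is correct and follows essentially the same approach as the paper's proof: expand $\hat{X}\otimes\hat{X}$ bilinearly, apply \lref{lem:first_moment} to the cross terms and \lref{lem:second_moment} to the quadratic term, and simplify. The paper's writeup is terser---it simply writes $\E[\hat{\phi}\otimes\hat{\psi}\otimes\hat{\phi}\otimes\hat{\psi}]^{\top_{2,4}}$ and cites the two lemmas without spelling out the $T_{(2,3)}$ conjugation---but your explicit identification of $(T_{(2,3)}\Delta_j T_{(2,3)})^{\top_{2,4}}=\tilde{\Delta}_j$ is exactly the bookkeeping implicit in the paper's subscript notation for the $\tilde{\Delta}_j$'s.
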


\begin{proof}[Proof of \lref{lem:ExpXTotimesXT}]
We have 
\begin{align}
\E \big[ \hat{X} \otimes \hat{X} \big] 
&= \E \left[ \left( \frac{d(d+1)(d+s)(\hat{\phi}\otimes\hat{\psi}^{\top})-(d+1+s)(I\otimes I)}{s} \right) ^{\otimes2}\right] \nonumber\\
&= \frac{d^2(d+1)^2(d+s)^2}{s^2} \E\left[ \hat{\phi}\otimes\hat{\psi}\otimes\hat{\phi}\otimes\hat{\psi}\right]^{\top_{2,4}} 
-\frac{d(d+1)(d+s)(d+1+s)}{s^2} \E\left[ \hat{\phi}\otimes\hat{\psi}\otimes I\otimes I\right]^{\top_2}  \nonumber\\
&\qquad -\frac{d(d+1)(d+s)(d+1+s)}{s^2} \E\left[ I\otimes I\otimes\hat{\phi}\otimes\hat{\psi} \right]^{\top_4}
+\frac{(d+1+s)^2}{s^2} \left( I\otimes I\otimes I\otimes I\right) \nonumber\\
&\stackrel{(a)}{=} \frac{2 d^2(d+1)^2(d+s)}{s^2(d+s+1)}\left( \frac{1}{\kappa_2} \tilde \Delta_1+\frac{s}{\kappa_3}\tilde \Delta_2
+\frac{s}{\kappa_3}\tilde  \Delta_3+\frac{s(s-1)}{2\kappa_4} \tilde \Delta_4 \right)
+\frac{(d+1+s)^2}{s^2} \left( I\otimes I\otimes I\otimes I\right) \nonumber\\
&\qquad -\frac{d(d+1)(d+s)(d+1+s)}{s^2} \left[ \frac{(d+1+s)(I\otimes I)+s \Upsilon_\caU}{d(d+1)(d+s)} \otimes I\otimes I \right] \nonumber\\
&\qquad -\frac{d(d+1)(d+s)(d+1+s)}{s^2} \left[ I\otimes I \otimes \frac{(d+1+s)(I\otimes I)+s \Upsilon_\caU}{d(d+1)(d+s)} \right] \nonumber\\
&= \frac{2 d^2(d+1)^2(d+s)}{s^2(d+s+1)}\left( \frac{1}{\kappa_2} \tilde \Delta_1+\frac{s}{\kappa_3}\tilde \Delta_2
+\frac{s}{\kappa_3}\tilde  \Delta_3+\frac{s(s-1)}{2\kappa_4} \tilde \Delta_4 \right) \nonumber\\
&\qquad -\frac{d+1+s}{s} \left( I\otimes I\otimes \Upsilon_\caU +  \Upsilon_\caU\otimes I\otimes I \right) 
-\frac{(d+1+s)^2}{s^2} \left( I\otimes I\otimes I\otimes I\right), 
\end{align} 
which confirms \lref{lem:ExpXTotimesXT}. 
Here the equality $(a)$ follows from \lref{lem:first_moment} in the main text, 
\lref{lem:second_moment} in Appendix~\ref{sec:UsefulLemma}, and the 
relation $\left[ (U\otimes U^{\dag})T_{(1,2)}\right]^{\top_2}= \Upsilon_\caU$.
\end{proof}

Now we are ready to bound the covariance term 
$\operatorname{Cov}\left( \Lambda_{i,j},\Lambda_{k,\ell}  \right)$ 
for each combination of the four indices $(i,j,k,\ell)$ with $i \neq j$ and $k \neq \ell$. 
In the rest of this appendix, Appendix~\ref{sec:CalCov1} deals with the case in which exactly one index matches in different 
positions $(i = \ell$ or $j = k)$; 
Appendix~\ref{sec:CalCov2} deals with the case in which exactly one index matches in the same position $(i = k$ or $j = \ell)$; 
Appendix~\ref{sec:CalCov3} deals with the case in which both indices match with order swapped $(i = \ell$ and $j = k)$; 
and Appendix~\ref{sec:CalCov4} deals with the case in which both indices match with the same order $(i = k$ and $j = \ell)$. 
Lemmas~\ref{lem:CalCov1}--\ref{lem:CalCov4} in the following subsections together will confirm \lref{lem:all_the_covariances} in the main text.

\subsection{Exactly one index matches in different positions}\label{sec:CalCov1}
\begin{lemma}\label{lem:CalCov1}
For all distinct $i,j,k$, we have 	
$\operatorname{Cov}\left( \Lambda_{i,j},\Lambda_{j,k}  \right) = \bigo{d^2 \min\{1, B\wp\}}$, where $\wp=\Tr(\rho^2)$. 
\end{lemma}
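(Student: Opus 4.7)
The plan is to collapse the covariance to a single--snapshot trace and then substitute Lemma~\ref{lem:ExpXTotimesXT}. Write $M:=O\otimes\rho^{\top}$. Since $\hat{X}_{i},\hat{X}_{j},\hat{X}_{k}$ are independent with $\E[\hat{X}_{i}]=\E[\hat{X}_{k}]=\Upsilon_{\caU}$, conditioning on $\hat{X}_{j}$ gives $\E[\Lambda_{i,j}\mid\hat{X}_{j}]=\Tr[M\Upsilon_{\caU}\hat{X}_{j}]$ and $\E[\Lambda_{j,k}\mid\hat{X}_{j}]=\Tr[\Upsilon_{\caU}M\hat{X}_{j}]$, while $\Lambda_{i,j}$ and $\Lambda_{j,k}$ are conditionally independent given $\hat{X}_{j}$. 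The law of total covariance together with $\Upsilon_{\caU}^{2}=d\,\Upsilon_{\caU}$ then yields
\begin{align}
\operatorname{Cov}(\Lambda_{i,j},\Lambda_{j,k})
&=\Tr\!\Bigl[(M\Upsilon_{\caU}\otimes\Upsilon_{\caU}M)\bigl(\E[\hat{X}\otimes\hat{X}]-\Upsilon_{\caU}\otimes\Upsilon_{\caU}\bigr)\Bigr],
\end{align}
where the subtraction absorbs $\E[\Lambda_{i,j}]\E[\Lambda_{j,k}]=d^{2}\bigl(\Tr[M\Upsilon_{\caU}]\bigr)^{2}$.

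The next step is to substitute the decomposition of $\E[\hat{X}\otimes\hat{X}]$ from Lemma~\ref{lem:ExpXTotimesXT} and evaluate the seven resulting tensor--network terms. The decisive simplification is that every such term is sandwiched on systems $(1,2)$ by the rank--one operator $M\Upsilon_{\caU}=M|u\rangle\langle u|$, with $|u\rangle=(U\otimes I)|\Phi\rangle$ and $\|u\|^{2}=d$, and on systems $(3,4)$ by $\Upsilon_{\caU}M$. These two rank--one projections collapse each diagram to a scalar product of elementary traces involving $O$, $\rho$, and $U^{\dagger}OU$, each of which is controlled by Lemma~\ref{lem:UsefulIneqs}. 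I will expand every $\sym^{(t)}$ appearing inside $\tilde{\Delta}_{2},\tilde{\Delta}_{3},\tilde{\Delta}_{4}$ as a sum of permutation operators via~\eqref{eq:PiSymDecompose} and enumerate the resulting contractions along the lines of the proof of Lemma~\ref{lem:Varhat{Z}}.

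The main obstacle will be the cancellation between the subtracted $\Upsilon_{\caU}\otimes\Upsilon_{\caU}$ and the leading pieces of $\tilde{\Delta}_{1}$ and $\tilde{\Delta}_{4}$: once multiplied by the two $\Upsilon_{\caU}$ sandwiches (which each contribute a factor of $d$), these terms are individually of order $\mathcal{O}(d^{4})$, yet their combined residual with the subtraction must collapse to only $\mathcal{O}(d^{2}\min\{1,B\wp\})$. I expect to handle this by isolating the identity permutation inside each $\sym^{(t)}$---whose contribution telescopes against $\Upsilon_{\caU}\otimes\Upsilon_{\caU}$ via $\Upsilon_{\caU}^{2}=d\,\Upsilon_{\caU}$---and bounding the non--identity permutations directly. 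For the surviving terms, applying the two branches of the bound $|\Tr(O_{U}\rho)|\leq\min\{\sqrt{B\wp},1\}$ from Lemma~\ref{lem:UsefulIneqs} separately in the regimes $B\wp\leq 1$ and $B\wp\geq 1$ will yield the claimed bound $\mathcal{O}(d^{2}\min\{1,B\wp\})$.
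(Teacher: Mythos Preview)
Your reduction to $\Tr\bigl[(M\Upsilon_{\caU}\otimes\Upsilon_{\caU}M)(\E[\hat X\otimes\hat X]-\Upsilon_{\caU}^{\otimes 2})\bigr]$ and the plan to substitute the seven--term decomposition of Lemma~\ref{lem:ExpXTotimesXT} is exactly how the paper proceeds. Where you diverge is in the assessment of the ``main obstacle'': you expect the $\tilde\Delta_1$ and $\tilde\Delta_4$ traces, once sandwiched by the two $\Upsilon_{\caU}$'s, to be individually $\mathcal O(d^4)$ and therefore to require a delicate telescoping against the subtracted $\Upsilon_{\caU}^{\otimes 2}$. This is a misconception. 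The rank--one factor $\Upsilon_{\caU}=|u\rangle\langle u|$ does not inject factors of $d$ via $\langle u|u\rangle=d$; instead it collapses each permutation diagram into products of scalars such as $(\Tr[O_U\rho])^2$ and $\Tr[O_U\rho\,O_U\rho]$, all of which are already $\mathcal O(\min\{1,B\wp\})$ by Lemma~\ref{lem:UsefulIneqs}. A direct tensor--network evaluation (as in the paper's display for $\tilde\Delta_2$) gives
\[
\Tr\bigl[(M\otimes M)(\Upsilon_{\caU}\otimes\Upsilon_{\caU})\,\tilde\Delta_a\bigr]
=\bigo{\min\{1,B\wp\}},\ \bigo{d\min\{1,B\wp\}},\ \bigo{d\min\{1,B\wp\}},\ \bigo{d^2\min\{1,B\wp\}}
\]
for $a=1,2,3,4$ respectively, with no cancellation needed.

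The paper's route is therefore simpler than your plan: it observes that the $\tilde\Delta_5,\tilde\Delta_6,\tilde\Delta_7$ traces equal $d[\Tr(O\,\caU(\rho))]^2$, $d[\Tr(O\,\caU(\rho))]^2$, $[\Tr(O\,\caU(\rho))]^2$ and carry negative coefficients in Lemma~\ref{lem:ExpXTotimesXT}, so together with the subtracted $-d^2[\Tr(O\,\caU(\rho))]^2$ they contribute non--positively and can simply be dropped. Your proposed telescoping of the identity permutation against $\Upsilon_{\caU}^{\otimes 2}$ would not work as stated anyway: the identity permutation in $\tilde\Delta_4$ yields $\frac{1}{24}I^{\otimes 4}$, not a multiple of $\Upsilon_{\caU}^{\otimes 2}$, and no single permutation carries the right coefficient to cancel the subtraction. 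If you carry out the tensor contractions you will discover that the anticipated obstacle evaporates, and you can finish exactly as the paper does.
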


\begin{proof}[Proof of Lemma \ref{lem:CalCov1}]
First, we have 
\begin{align}\label{eq:Cov1Aa}
\operatorname{Cov}\left( \Lambda_{i,j},\Lambda_{j,k}  \right)
&= \E \left[ \Tr\left( \left( O\otimes \rho^{\top}\right) \hat{X}_i \hat{X}_j \right)
\Tr\left( \left( O\otimes \rho^{\top}\right) \hat{X}_j \hat{X}_k \right)^* \right] 
- \E \left[ \Tr\left( \left( O\otimes \rho^{\top}\right) \hat{X}_i \hat{X}_j \right) \right]
  \E \left[ \Tr\left( \left( O\otimes \rho^{\top}\right) \hat{X}_j \hat{X}_k \right)^* \right]
\nonumber\\
&\!\stackrel{(a)}{=} 
\E \left[ \Tr\left( \left( O\otimes \rho^{\top}\right) \hat{X}_i \hat{X}_j \right)
           \Tr\left( \left( O\otimes \rho^{\top}\right) \hat{X}_k \hat{X}_j \right) \right] 
   - \Tr\left[ \left( O\otimes \rho^{\top}\right) d\,\Upsilon_\caU \right] 
     \Tr\left[ \left( O\otimes \rho^{\top}\right) d\,\Upsilon_\caU \right]^*
\nonumber\\
&= \E \left[ \Tr \left( \left( O\otimes \rho^{\top}\otimes O\otimes \rho^{\top}\right) 
                 \left( \hat{X}_i \hat{X}_j\otimes \hat{X}_k \hat{X}_j \right) \right) \right] 
       - d^2 \left[ \Tr\left( O\,\mathcal{U}(\rho)\right) \right]^2, 
\end{align} 
where $(a)$ holds because $O\otimes\rho^{\top}$, $\hat{X}_j$, $\hat{X}_k$ are Hermitian, 
and 
\begin{align}
\E[\hat{X}_i \hat{X}_j]=\E[\hat{X}_j \hat{X}_k]=\E[\hat{X}]^2= \Upsilon_\caU^2=d\Upsilon_\caU.
\end{align} 
The second moment term in the last line of \eref{eq:Cov1Aa} can be further calculated as
\begin{align}
\E \left[ \Tr \left( \left( O\otimes \rho^{\top}\otimes O\otimes \rho^{\top}\right) 
\left( \hat{X}_i \hat{X}_j\otimes \hat{X}_k \hat{X}_j \right) \right) \right] 
&= \Tr \left( \left( O\otimes \rho^{\top}\otimes O\otimes \rho^{\top}\right) 
\left( \E \big[ \hat{X}_i\big]  \otimes \E \big[\hat{X}_k \big] \right) \E \big[ \hat{X}_j \otimes \hat{X}_j \big] \right) 
\nonumber\\
&= \Tr \left( \left( O\otimes \rho^{\top}\otimes O\otimes \rho^{\top}\right) 
\left( \Upsilon_\caU \otimes \Upsilon_\caU \right) \, \E \big[ \hat{X} \otimes \hat{X} \big] \right). 
\label{eq:Cov1A}
\end{align}

The expression of $\E \big[ \hat{X} \otimes \hat{X} \big]$ in \lref{lem:ExpXTotimesXT} has 7 terms. 
Plugging them into Eq.~\eqref{eq:Cov1A}, respectively, we get
\begin{align}
&\Tr \left[ \left( O\otimes \rho^{\top}\otimes O\otimes \rho^{\top}\right) \left( \Upsilon_\caU \otimes \Upsilon_\caU \right) \tilde \Delta_5 \right] 
= d \left[ \Tr\left( O\,\mathcal{U}(\rho)\right) \right]^2 \geq 0, 
\label{eq:Cov1bound1B}
\\
&\Tr \left[ \left( O\otimes \rho^{\top}\otimes O\otimes \rho^{\top}\right) \left( \Upsilon_\caU \otimes \Upsilon_\caU \right) \tilde \Delta_6 \right] 
= d \left[ \Tr\left( O\,\mathcal{U}(\rho)\right) \right]^2 \geq 0,
\label{eq:Cov1bound2B} 
\\
&\Tr \left[ \left( O\otimes \rho^{\top}\otimes O\otimes \rho^{\top}\right) \left( \Upsilon_\caU \otimes \Upsilon_\caU \right) \tilde \Delta_7 \right] 
= \left[ \Tr\left( O\,\mathcal{U}(\rho)\right) \right]^2 \geq 0 ;
\label{eq:Cov1bound3B}
\end{align}
and 
\begin{align}
\Tr \left[ \left( O\otimes \rho^{\top}\otimes O\otimes \rho^{\top}\right) \left( \Upsilon_\caU \otimes \Upsilon_\caU \right) \tilde \Delta_1 \right] 
&= \bigo{\min\{1, B\wp\}}, 
\label{eq:Cov1bound1}
\\
\Tr \left[ \left( O\otimes \rho^{\top}\otimes O\otimes \rho^{\top}\right) \left( \Upsilon_\caU \otimes \Upsilon_\caU \right) \tilde \Delta_2 \right] 
&= \bigo{ d \min\{1, B\wp\}}, 	
\label{eq:Cov1bound2}
\\
\Tr \left[ \left( O\otimes \rho^{\top}\otimes O\otimes \rho^{\top}\right) \left( \Upsilon_\caU \otimes \Upsilon_\caU \right) \tilde \Delta_3 \right] 
&= \bigo{ d \min\{1, B\wp\}},
\label{eq:Cov1bound3} 	
\\
\Tr \left[ \left( O\otimes \rho^{\top}\otimes O\otimes \rho^{\top}\right) \left( \Upsilon_\caU \otimes \Upsilon_\caU \right) \tilde \Delta_4 \right] 
&= \bigo{ d^2 \min\{1, B\wp\}}.  	
\label{eq:Cov1bound4}
\end{align} 
Here Eqs.~\eqref{eq:Cov1bound1}--\eqref{eq:Cov1bound4} can be derived by using tensor-network diagrams.  
For example, 
\begin{align}
	&\Tr \left[ \left( O\otimes \rho^{\top}\otimes O\otimes \rho^{\top}\right) \left( \Upsilon_\caU \otimes \Upsilon_\caU \right) \tilde \Delta_2 \right]
	\nonumber\\[0.5ex]
	&=\  \begin{tikzpicture}[baseline={([yshift=-.5ex]current bounding box.center)},inner sep=-4mm]
		\node[tensor_blue] (O1) at (1*\xratio, 0) {$O$};
		\node[tensor_blue] (rho1) at (2*\xratio, 0) {$\rho^{\top}$};
		\node[tensor_blue] (O2) at (3*\xratio, 0) {$O$};
		\node[tensor_blue] (rho2) at (4*\xratio, 0) {$\rho^{\top}$};
		\node[tensor_green] (U1) at (1*\xratio, -1) {$U$};
		\node[tensor_green] (U2) at (3*\xratio, -1) {$U$};
		\node[tensor_green] (U3) at (1*\xratio, -2.5) {$U^{\dag}$};
		\node[tensor_green] (U4) at (3*\xratio, -2.5) {$U^{\dag}$};
		\node[tensor_green] (U5) at (3*\xratio, -3.7) {$U$};
		\node[Permute_3] (permute_a) at (3*\xratio, -4.5) {$\sym^{(3)}$};
		\node[tensor_green] (U6) at (3*\xratio, -5.5) {$U^{\dag}$};
		\node[Permute_3] (permute_b) at (2*\xratio, -6.5) {$\sym^{(2)}$};
		\draw[-,draw=black] (O1) -- (U1);
		\draw[-,draw=black] (U1.south) .. controls (1*\xratio, -1.7) and (2*\xratio, -2.1) ..  (rho1.south);
		\draw[-,draw=black] (O2) -- (U2);
		\draw[-,draw=black] (U2.south) .. controls (3*\xratio, -1.7) and (4*\xratio, -2.1) ..  (rho2.south);
		\draw[-,draw=black] (U3.north) .. controls (1*\xratio, -1.8) and (2*\xratio, -1.4) ..  (2*\xratio, -2.8);
		\draw[-,draw=black] (2*\xratio, -2.8) .. controls (2*\xratio, -3.4) and (1.3*\xratio, -3.4) ..  (1.3*\xratio, -4.6);
		\draw[-,draw=black] (1.3*\xratio, -4.6) .. controls (1.3*\xratio, -5.3) and (2*\xratio, -5.3) ..  (2*\xratio, -4.8);
		\draw[-,draw=black] (2*\xratio, -4.2) .. controls (2*\xratio, -3.8) and (1.5*\xratio, -3.8) ..  (1.5*\xratio, -4.4);
		\draw[-,draw=black] (1.5*\xratio, -4.4) .. controls (1.5*\xratio, -5) and (2*\xratio, -5.1) ..  (2*\xratio, -5.5);
		\draw[-<<,draw=black] (2*\xratio, -5.5) -- (2*\xratio, -5.9);
		\draw[-,draw=black] (U4.north) .. controls (3*\xratio, -1.8) and (4*\xratio, -1.4) ..  (4*\xratio, -2.8);
		\draw[-,draw=black] (4*\xratio, -2.8) .. controls (4*\xratio, -3.4) and (4.6*\xratio, -3.4) ..  (4.6*\xratio, -4.6);
		\draw[-,draw=black] (4.6*\xratio, -4.6) .. controls (4.6*\xratio, -5.3) and (4*\xratio, -5.3) ..  (4*\xratio, -4.8);
		\draw[-,draw=black] (4*\xratio, -4.2) .. controls (4*\xratio, -3.8) and (4.5*\xratio, -3.8) ..  (4.5*\xratio, -4.4);
		\draw[-,draw=black] (4.5*\xratio, -4.4) .. controls (4.5*\xratio, -5) and (4*\xratio, -5.1) ..  (4*\xratio, -5.7);
		\draw[-<<,draw=black] (4*\xratio, -5.7) -- (4*\xratio, -7.3);
		\draw[-,draw=black] (U4) -- (U5);
		\draw[-,draw=black] (U5) -- (3*\xratio, -4.2);
		\draw[-,draw=black] (U3) -- (1*\xratio, -6.2);
		\draw[-,draw=black] (3*\xratio, -4.8) -- (U6);
		\draw[-,draw=black] (U6) -- (3*\xratio, -6.2);
		\draw[-<<,draw=black] (O1) -- (1*\xratio, +0.7);
		\draw[-<<,draw=black] (rho1) -- (2*\xratio, +0.7);
		\draw[-<<,draw=black] (O2) -- (3*\xratio, +0.7);
		\draw[-<<,draw=black] (rho2) -- (4*\xratio, +0.7);
		\draw[-<<,draw=black] (1*\xratio, -6.8) -- (1*\xratio, -7.3);
		\draw[-<<,draw=black] (3*\xratio, -6.8) -- (3*\xratio, -7.3);
		\draw[-,densely dotted,thick,draw=red] (0.6*\xratio, -3.0) -- (0.6*\xratio, -0.5);
		\draw[-,densely dotted,thick,draw=red] (0.6*\xratio, -0.5) -- (4.2*\xratio, -0.5);
		\draw[-,densely dotted,thick,draw=red] (4.2*\xratio, -0.5) -- (4.2*\xratio, -3.0);
		\draw[-,densely dotted,thick,draw=red] (4.2*\xratio, -3.0) -- (0.6*\xratio, -3.0);
		\node[red] (Omega) at (4.90*\xratio, -1.6) {$\Upsilon_\caU\otimes\Upsilon_\caU$};
		\draw[-,densely dotted,thick,draw=red] (0.6*\xratio, -3.21) -- (0.6*\xratio, -7.0);
		\draw[-,densely dotted,thick,draw=red] (0.6*\xratio, -7.0) -- (4.7*\xratio, -7.0);
		\draw[-,densely dotted,thick,draw=red] (4.7*\xratio, -7.0) -- (4.7*\xratio, -3.21);
		\draw[-,densely dotted,thick,draw=red] (4.7*\xratio, -3.21) -- (0.6*\xratio, -3.21);
		\node[red] (T) at (4.98*\xratio, -5.2) {$\tilde \Delta_2$};
	\end{tikzpicture}
	\ = \frac{1}{3! \cdot 2!} \sum_{\pi_1\in\mathrm{S}_3} \sum_{\pi_2\in\mathrm{S}_2} \ 
	\begin{tikzpicture}[baseline={([yshift=-.5ex]current bounding box.center)},inner sep=-4mm]
		\node[tensor_blue] (O1) at (2*\xratio, -0.5) {$O$};
		\node[tensor_blue] (rho1) at (2*\xratio, -2.5) {$\rho$};
		\node[tensor_blue] (O2) at (4*\xratio, -0.5) {$O$};
		\node[tensor_blue] (rho2) at (4*\xratio, -2.5) {$\rho$};
		\node[tensor_green] (U1) at (2*\xratio, -1.5) {$U$};
		\node[tensor_green] (U2) at (4*\xratio, -1.5) {$U$};
		\node[tensor_green] (U3) at (2*\xratio, -5) {$U^{\dag}$};
		\node[Permute_3] (permute_a) at (3*\xratio, -4) {$T_{\pi_1}$};
		\node[tensor_green] (U6) at (3*\xratio, -5) {$U^{\dag}$};
		\node[Permute_2] (permute_b) at (2.5*\xratio, -6) {$T_{\pi_2}$};
		\draw[-,draw=black] (O1) -- (U1);
		\draw[-,draw=black] (U1.south) --  (rho1.north);
		\draw[-,draw=black] (O2) -- (U2);
		\draw[-,draw=black] (U2.south) -- (rho2.north);
		\draw[-,draw=black] (U3.north) -- (2*\xratio, -4.3);
		\draw[-,draw=black] (3*\xratio, -3.7) .. controls (3*\xratio, -3.0) and (4.5*\xratio, -3.0) ..  (4.5*\xratio, -3.9);
		\draw[-,draw=black] (4*\xratio, -4.3) .. controls (4*\xratio, -4.8) and (4.5*\xratio, -4.8) ..  (4.5*\xratio, -3.9);
		\draw[-,draw=black] (rho1.south) -- (2*\xratio, -3.7);
		\draw[-,draw=black] (rho2.south) -- (4*\xratio, -3.7);
		\draw[-,draw=black] (U3) -- (2*\xratio, -5.7);
		\draw[-,draw=black] (3*\xratio, -4.3) -- (U6);
		\draw[-,draw=black] (U6) -- (3*\xratio, -5.7);
		\draw[-<<,draw=black] (O1) -- (2*\xratio, +0.2);
		\draw[-<<,draw=black] (O2) -- (4*\xratio, +0.2);
		\draw[-<<,draw=black] (2*\xratio, -6.3) -- (2*\xratio, -6.8);
		\draw[-<<,draw=black] (3*\xratio, -6.3) -- (3*\xratio, -6.8);
	\end{tikzpicture}
	\nonumber\\[1ex]
	&= \frac{1}{12} \left[  4(\Tr(OU\rho U^{\dag}))^2 + 2d(\Tr(OU\rho U^{\dag}))^2 + 4\Tr(OU\rho U^{\dag}OU\rho U^{\dag}) + 2d\Tr(OU\rho U^{\dag}OU\rho U^{\dag}) \right] 
	\nonumber\\ 
	&=\bigo{ d \min\{1, B\wp\}},
\end{align}
where the last equality follows from \lref{lem:UsefulIneqs}.

Therefore, by Eqs.~\eqref{eq:Cov1Aa} and \eqref{eq:Cov1A}, 
\begin{align}
\operatorname{Cov}\left( \Lambda_{i,j},\Lambda_{j,k}  \right)
&\leq \Tr \left( \left( O\otimes \rho^{\top}\otimes O\otimes \rho^{\top}\right) 
\left( \Upsilon_\caU \otimes \Upsilon_\caU \right) \, \E \big[ \hat{X} \otimes \hat{X} \big] \right)
\nonumber\\
&\stackrel{(a)}{\leq}
\Tr \left[ \left( O\otimes \rho^{\top}\otimes O\otimes \rho^{\top}\right) 
\left( \Upsilon_\caU \otimes \Upsilon_\caU \right) \,  
\frac{2 d^2(d+1)^2(d+s)}{s^2(d+s+1)}\left( \frac{1}{\kappa_2} \tilde \Delta_1+\frac{s}{\kappa_3}\tilde \Delta_2
+\frac{s}{\kappa_3}\tilde  \Delta_3+\frac{s(s-1)}{2\kappa_4} \tilde \Delta_4 \right) \right] 
\nonumber\\
& \stackrel{(b)}{=} \mathcal O \left[ \frac{d^4}{s^2} 
\left( \frac{ \min\{1, B\wp\}}{\kappa_2}  +\frac{s d\min\{1, B\wp\}}{\kappa_3} 
+\frac{s(s-1) d^2\min\{1, B\wp\}}{2\kappa_4}  \right)  \right] 
\nonumber\\
&= \mathcal O \left( d^2 \min\{1, B\wp\} \right),  
\end{align} 
where $(a)$ follows from \lref{lem:ExpXTotimesXT} and Eqs.~\eqref{eq:Cov1bound1B}--\eqref{eq:Cov1bound3B}, and 
$(b)$ follows from Eqs.~\eqref{eq:Cov1bound1}--\eqref{eq:Cov1bound4}. 
\end{proof}

\subsection{Exactly one index matches in the same position}\label{sec:CalCov2}

\begin{lemma}\label{lem:CalCov2}
For all distinct $i,j,k$, we have 	
$\operatorname{Cov}\left( \Lambda_{i,j},\Lambda_{k,j}  \right)
= \mathcal O \left( \frac{d^3}{s} \wp + d^2 \min\{1, B\wp\} \right)$,  
where $\wp=\Tr(\rho^2)$.	
\end{lemma}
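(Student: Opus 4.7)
The plan is to mirror the template of Lemma~\ref{lem:CalCov1}'s proof, with the understanding that the shared index $j$ now sits in the same position---the right---of both $\Lambda_{i,j}$ and $\Lambda_{k,j}$. First I would use the Hermiticity identity $\Lambda_{k,j}^{*}=\Lambda_{j,k}$ (obtained from $A:=O\otimes\rho^{\top},\hat X_{k},\hat X_{j}$ all being Hermitian together with $\Tr(MNP)^{*}=\Tr(PNM)$) and the reality of $\E[\Lambda_{i,j}]=d\Tr(O\caU(\rho))$ to write
\begin{align}
\operatorname{Cov}\left(\Lambda_{i,j},\Lambda_{k,j}\right)=\E[\Lambda_{i,j}\Lambda_{j,k}]-d^{2}[\Tr(O\caU(\rho))]^{2}.
\end{align}
Then, using independence of $\hat X_{i},\hat X_{j},\hat X_{k}$ with $\E[\hat X]=\Upsilon_\caU$ and separating the bipartite $(12)(34)$ structure, I would obtain
\begin{align}
\E[\Lambda_{i,j}\Lambda_{j,k}]=\Tr\bigl[(A\Upsilon_\caU\otimes\Upsilon_\caU A)\,\E[\hat X\otimes\hat X]\bigr]
\end{align}
after a cyclic rearrangement of the identity factors sitting on the complementary bipartition. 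The crucial structural change from Lemma~\ref{lem:CalCov1} is that the second tensor factor reads $\Upsilon_\caU A$ rather than $A\Upsilon_\caU$; this broken $12\leftrightarrow 34$ symmetry is what will generate the new $d^{3}\wp/s$ term.

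Next I would plug in the seven-term expansion of $\E[\hat X\otimes\hat X]$ from Lemma~\ref{lem:ExpXTotimesXT}. Exploiting $\Upsilon_\caU^{2}=d\Upsilon_\caU$ and channel--state duality, the three terms with negative coefficients evaluate to $d[\Tr(O\caU(\rho))]^{2}$, $d[\Tr(O\caU(\rho))]^{2}$, and $[\Tr(O\caU(\rho))]^{2}$ respectively, all non-negative; combined with the subtracted $-d^{2}[\Tr(O\caU(\rho))]^{2}$ they form a non-positive aggregate that I can drop, just as in Lemma~\ref{lem:CalCov1}. The remaining work is to bound the four traces $\Tr[(A\Upsilon_\caU\otimes\Upsilon_\caU A)\tilde\Delta_{j}]$ for $j=1,\dots,4$ using tensor-network diagrams, the decomposition $\sym^{(t)}=\tfrac{1}{t!}\sum_{\pi\in\mathrm{S}_{t}}T_{\pi}$, and Lemma~\ref{lem:UsefulIneqs}. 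I expect $\tilde\Delta_{1}$ and $\tilde\Delta_{4}$ to keep their Lemma~\ref{lem:CalCov1} orders $\mathcal O(\min\{1,B\wp\})$ and $\mathcal O(d^{2}\min\{1,B\wp\})$. For $\tilde\Delta_{2}$ and $\tilde\Delta_{3}$, the asymmetric $\Upsilon_\caU A$ on the second factor re-routes the $\rho^{\top}$ leg in one of the two diagrams so that a direct $\Tr(\rho^{2})$ contraction appears alongside a free identity loop of dimension $d$, giving a trace of order $\mathcal O(d^{2}\wp)$; the other remains $\mathcal O(d\min\{1,B\wp\})$. Multiplying by the prefactor $\tfrac{2d^{2}(d+1)^{2}(d+s)}{s^{2}(d+s+1)}$ together with $\{1/\kappa_{2},s/\kappa_{3},s/\kappa_{3},s(s-1)/(2\kappa_{4})\}$ and summing then delivers exactly $\mathcal O\bigl(\tfrac{d^{3}\wp}{s}+d^{2}\min\{1,B\wp\}\bigr)$, as claimed.

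The main obstacle will be this last diagrammatic step: one has to track the partial transpose $\top_{2,4}$, the asymmetric placement of $U$, and the $\sym^{(3)}$ against the asymmetric operator $A\Upsilon_\caU\otimes\Upsilon_\caU A$ simultaneously, in order to identify precisely which of the $3!\times 2!=12$ permutations in $\mathrm{S}_{3}\times\mathrm{S}_{2}$ yields the closed $\Tr(\rho^{2})$ loop together with the free $d$ factor responsible for the enhanced $\mathcal O(d^{2}\wp)$ contribution---and to verify that in Lemma~\ref{lem:CalCov1}'s symmetric setting the same permutation instead contracts $\rho$ against $O$, producing only $\mathcal O(d\min\{1,B\wp\})$ via \eref{eq:UsefulIneq2}.
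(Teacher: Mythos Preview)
Your approach mirrors the paper's proof exactly: rewrite the covariance via $\Lambda_{k,j}^{*}=\Lambda_{j,k}$, factor out the independent $\hat X_i,\hat X_k$ to obtain $\Tr\bigl[(A\Upsilon_\caU\otimes\Upsilon_\caU A)\,\E[\hat X\otimes\hat X]\bigr]$, drop the three non-negative $\tilde\Delta_{5,6,7}$ contributions, and bound $\tilde\Delta_{1,\dots,4}$ diagrammatically.

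One small correction to your expectations: the $\tilde\Delta_1$ trace does \emph{not} keep its Lemma~\ref{lem:CalCov1} size $\mathcal O(\min\{1,B\wp\})$. The full-swap permutation inside $(\sym^{(2)})_{1,3}\otimes(\sym^{(2)})_{2,4}$ now produces $\Tr[A\Upsilon_\caU\cdot\Upsilon_\caU A]=d\,\Tr[A^{2}\Upsilon_\caU]=d\,\Tr(O^{2}U\rho^{2}U^{\dag})=\mathcal O(d\wp)$, precisely because the two $\Upsilon_\caU$'s now meet and $\Upsilon_\caU^{2}=d\Upsilon_\caU$; in Lemma~\ref{lem:CalCov1} the same permutation instead gave $\Tr[(A\Upsilon_\caU)^{2}]=[\Tr(O\caU(\rho))]^{2}$. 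The paper accordingly records $\mathcal O(d\wp)$ for this term. This is harmless for your conclusion: after the prefactor $\sim d^{2}/s^{2}$ it contributes $\mathcal O(d^{3}\wp/s^{2})$, which is dominated by the $\tilde\Delta_{3}$ contribution $\mathcal O(d^{3}\wp/s)$ that you correctly anticipated, so the final bound is unaffected.
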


\begin{proof}[Proof of Lemma \ref{lem:CalCov2}]
First, we have 
\begin{align}
\operatorname{Cov}\left( \Lambda_{i,j},\Lambda_{k,j}  \right)
&= \E \left[ \Tr\left( \left( O\otimes \rho^{\top}\right) \hat{X}_i \hat{X}_j \right)
\Tr\left( \left( O\otimes \rho^{\top}\right) \hat{X}_k \hat{X}_j \right)^* \right] 
- \E \left[ \Tr\left( \left( O\otimes \rho^{\top}\right) \hat{X}_i \hat{X}_j \right) \right]
\E \left[ \Tr\left( \left( O\otimes \rho^{\top}\right) \hat{X}_k \hat{X}_j \right)^* \right]
\nonumber\\
&\!\stackrel{(a)}{=} 
\E \left[ \Tr\left( \left( O\otimes \rho^{\top}\right) \hat{X}_i \hat{X}_j \right)
\Tr\left( \left( O\otimes \rho^{\top}\right) \hat{X}_j \hat{X}_k \right) \right] 
- \Tr\left[ \left( O\otimes \rho^{\top}\right) d\,\Upsilon_\caU \right] 
\Tr\left[ \left( O\otimes \rho^{\top}\right) d\,\Upsilon_\caU \right]^*
\nonumber\\
&= \E \left[ \Tr \left( \left( O\otimes \rho^{\top}\otimes O\otimes \rho^{\top}\right) 
\left(\hat{X}_i\hat{X}_j \otimes \hat{X}_j\hat{X}_k\right) \right)\right] 
- d^2 \left[ \Tr\left( O\,\mathcal{U}(\rho)\right) \right]^2, 
\end{align} 
where $(a)$ holds because $O\otimes\rho^{\top}$, $\hat{X}_j$, $\hat{X}_k$ are Hermitian, 
and $\E[\hat{X}_i \hat{X}_j]=\E[\hat{X}_j \hat{X}_k]=\E[\hat{X}]^2= \Upsilon_\caU^2=d\Upsilon_\caU$. 
The second moment term in the last line can be further calculated as 
\begin{align}
&\E \left[ \Tr \left( \left( O\otimes \rho^{\top}\otimes O\otimes \rho^{\top}\right) 
\left(\hat{X}_i\hat{X}_j \otimes \hat{X}_j\hat{X}_k\right) \right)\right]
\nonumber\\
&= \Tr \left( \left( O\otimes \rho^{\top}\otimes O\otimes \rho^{\top}\right) 
\left( \E \big[ \hat{X}_i\big]  \otimes I\otimes I \right) 
\E \big[ \hat{X}_j \otimes \hat{X}_j \big] 
\left( I\otimes I \otimes \E \big[ \hat{X}_k\big] \right)  \right) 
\nonumber\\
&= \Tr \left( \left( O\otimes \rho^{\top}\otimes O\otimes \rho^{\top}\right) 
\left( \Upsilon_\caU \otimes I\otimes I \right) 
\, \E \big[ \hat{X} \otimes \hat{X} \big] 
\left( I\otimes I \otimes \Upsilon_\caU \right)  \right).\label{eq:Cov1B}
\end{align}

The expression of $\E \big[ \hat{X} \otimes \hat{X} \big]$ in Lemma~\ref{lem:ExpXTotimesXT} has 7 terms. 
Plugging them into Eq.~\eqref{eq:Cov1B}, respectively, we get
\begin{align}
&\Tr \left[ \left( O\otimes \rho^{\top}\otimes O\otimes \rho^{\top}\right) \left( \Upsilon_\caU\otimes I\otimes I \right) \tilde \Delta_5  \left( I\otimes I \otimes \Upsilon_\caU \right)\right]
= d \left[ \Tr\left( O\,\mathcal{U}(\rho)\right) \right]^2 \geq 0, 
\label{eq:Cov2bound1B}
\\
&\Tr \left[ \left( O\otimes \rho^{\top}\otimes O\otimes \rho^{\top}\right) \left( \Upsilon_\caU\otimes I\otimes I \right) \tilde \Delta_6  \left( I\otimes I \otimes \Upsilon_\caU \right)\right]
= d \left[ \Tr\left( O\,\mathcal{U}(\rho)\right) \right]^2 \geq 0, 
\\
&\Tr \left[ \left( O\otimes \rho^{\top}\otimes O\otimes \rho^{\top}\right) \left( \Upsilon_\caU\otimes I\otimes I \right) \tilde \Delta_7 \left( I\otimes I \otimes \Upsilon_\caU \right)\right]
= \left[ \Tr\left( O\,\mathcal{U}(\rho)\right) \right]^2 \geq 0, 
\label{eq:Cov2bound3B}
\end{align} 
and   
\begin{align}
\Tr \left[ \left( O\otimes \rho^{\top}\otimes O\otimes \rho^{\top}\right) \left( \Upsilon_\caU\otimes I\otimes I \right) \tilde \Delta_1 \left( I\otimes I \otimes \Upsilon_\caU \right)\right] &= \bigo{d \wp} ,
\label{eq:Cov2bound1}
\\
\Tr \left[ \left( O\otimes \rho^{\top}\otimes O\otimes \rho^{\top}\right) \left( \Upsilon_\caU\otimes I\otimes I \right) \tilde \Delta_2 \left( I\otimes I \otimes \Upsilon_\caU \right)\right] &= \bigo{ d \min\{1, B\wp\}}, 
\label{eq:Cov2bound2}
\\
\Tr \left[ \left( O\otimes \rho^{\top}\otimes O\otimes \rho^{\top}\right) \left( \Upsilon_\caU\otimes I\otimes I \right) \tilde \Delta_3 \left( I\otimes I \otimes \Upsilon_\caU \right)\right] &= \bigo{d^2 \wp} ,
\label{eq:Cov2bound3}
\\
\Tr \left[ \left( O\otimes \rho^{\top}\otimes O\otimes \rho^{\top}\right) \left( \Upsilon_\caU\otimes I\otimes I \right) \tilde \Delta_4 \left( I\otimes I \otimes \Upsilon_\caU \right)\right] &= \bigo{ d^2 \min\{1, B\wp\}}.
\label{eq:Cov2bound4}
\end{align} 
Here Eqs.~\eqref{eq:Cov2bound1}--\eqref{eq:Cov2bound4} can be derived by using tensor-network diagrams.  
For example, 
\begin{align}
&\!\!\!\!\!\!\!\!\!\!\!\!\!\!\!\!\!\!
\Tr \left[ \left( O\otimes \rho^{\top}\otimes O\otimes \rho^{\top}\right) \left( \Omega\otimes I\otimes I \right) \tilde \Delta_4 \left( I\otimes I \otimes \Omega \right)\right] 
\nonumber\\[0.5ex]
&=\  \begin{tikzpicture}[baseline={([yshift=-.5ex]current bounding box.center)},inner sep=-4mm]
\node[tensor_blue] (O1) at (1*\xratio, -0.3) {$O$};
\node[tensor_blue] (rho1) at (2*\xratio, -0.3) {$\rho^{\top}$};
\node[tensor_blue] (O2) at (3*\xratio, -0.3) {$O$};
\node[tensor_blue] (rho2) at (4*\xratio, -0.3) {$\rho^{\top}$};
\draw[-<<,draw=black] (O1) -- (1*\xratio, +0.4);
\draw[-<<,draw=black] (rho1) -- (2*\xratio, +0.4);
\draw[-<<,draw=black] (O2) -- (3*\xratio, +0.4);
\draw[-<<,draw=black] (rho2) -- (4*\xratio, +0.4);
\node[tensor_green] (U1) at (1*\xratio, -1.2) {$U$};
\node[tensor_green] (U2) at (1*\xratio, -2.5) {$U^{\dag}$};
\node[tensor_green] (U3) at (1*\xratio, -3.7) {$U$};
\node[tensor_green] (U4) at (3*\xratio, -3.7) {$U$};
\node[Permute_4] (permute_a) at (2.5*\xratio, -4.6) {$\sym^{(4)}$};
\node[tensor_green] (U5) at (1*\xratio, -5.5) {$U^{\dag}$};
\node[tensor_green] (U6) at (3*\xratio, -5.5) {$U^{\dag}$};
\node[tensor_green] (U7) at (3*\xratio, -6.7) {$U$};
\node[tensor_green] (U8) at (3*\xratio, -8) {$U^{\dag}$};
\draw[-,draw=black] (U1.south) .. controls (1*\xratio, -1.9) and (2*\xratio, -1.9) ..  (2*\xratio, -1.1);
\draw[-,draw=black] (2*\xratio, -1.1) -- (rho1.south);
\draw[-,draw=black] (U2.north) .. controls (1*\xratio, -1.9) and (2*\xratio, -1.4) ..  (2*\xratio, -2.8); 
\draw[-,draw=black] (O1.south) -- (U1.north);
\draw[-,draw=black] (U2.south) -- (U3.north);
\draw[-,draw=black] (U3.south) -- (1*\xratio, -4.3);
\draw[-,draw=black] (1*\xratio, -4.9) -- (U5.north);
\draw[-,draw=black] (O2.south) -- (U4.north);
\draw[-,draw=black] (U4.south) -- (3*\xratio, -4.3);
\draw[-,draw=black] (3*\xratio, -4.9) -- (U6.north);
\draw[-,draw=black] (U6.south) -- (U7.north); 
\draw[-,draw=black] (4*\xratio, -2.5) .. controls (4*\xratio, -3.4) and (4.6*\xratio, -3.3) ..  (4.6*\xratio, -4.7);
\draw[-,draw=black] (4.6*\xratio, -4.7) .. controls (4.6*\xratio, -5.3) and (4*\xratio, -5.3) ..  (4*\xratio, -4.9);
\draw[-,draw=black] (4*\xratio, -4.3) .. controls (4*\xratio, -3.9) and (4.5*\xratio, -3.8) ..  (4.5*\xratio, -4.6);
\draw[-,draw=black] (4.5*\xratio, -4.6) .. controls (4.5*\xratio, -5.2) and (4*\xratio, -5.1) ..  (4*\xratio, -6.2);
\draw[-,draw=black] (rho2.south) -- (4*\xratio, -2.5);
\draw[-,draw=black] (U7.south).. controls (3*\xratio, -7.4) and (4*\xratio, -7.4) ..  (4*\xratio, -6.7);
\draw[-,draw=black] (4*\xratio, -6.2) -- (4*\xratio, -6.7);
\draw[-,draw=black] (U8.north).. controls (3*\xratio, -7.3) and (4*\xratio, -7.3) ..  (4*\xratio, -8);
\draw[-<<,draw=black] (U8.south) -- (3*\xratio, -8.8);
\draw[-<<,draw=black] (4*\xratio, -8) -- (4*\xratio, -8.8);
\draw[-<<,draw=black] (U5.south) -- (1*\xratio, -8.8);
\draw[-<<,draw=black] (2*\xratio, -4.3) .. controls (2*\xratio, -3.1) and (2.4*\xratio, -3.1) ..  (2.4*\xratio, -4.0);
\draw[-|||,draw=black](2*\xratio, -2.8) .. controls (2*\xratio, -3.1) and (1.6*\xratio, -3.8) ..  (1.6*\xratio, -4.2);
\draw[-|||,draw=black](2*\xratio, -4.9) .. controls (2*\xratio, -5.6) and (1.6*\xratio, -5.6) ..  (1.6*\xratio, -5.0);
\draw[-,densely dotted,thick,draw=red] (0.6*\xratio, -6.0) -- (0.6*\xratio, -3.2);
\draw[-,densely dotted,thick,draw=red] (0.6*\xratio, -3.2) -- (4.7*\xratio, -3.2);
\draw[-,densely dotted,thick,draw=red] (4.7*\xratio, -3.2) -- (4.7*\xratio, -6.0);
\draw[-,densely dotted,thick,draw=red] (4.7*\xratio, -6.0) -- (0.6*\xratio, -6.0);
\node[red] (T) at (4.98*\xratio, -5.3) {$\tilde \Delta_4$};
\draw[-,densely dotted,thick,draw=red] (0.6*\xratio, -3.0) -- (0.6*\xratio, -0.75);
\draw[-,densely dotted,thick,draw=red] (0.6*\xratio, -0.75) -- (2.2*\xratio, -0.75);
\draw[-,densely dotted,thick,draw=red] (2.2*\xratio, -0.75) -- (2.2*\xratio, -3.0);
\draw[-,densely dotted,thick,draw=red] (2.2*\xratio, -3.0) -- (0.6*\xratio, -3.0);
\node[red] (Omega) at (2.52*\xratio, -1.8) {$\Upsilon_\caU$};
\draw[-,densely dotted,thick,draw=red] (2.6*\xratio, -8.5) -- (2.6*\xratio, -6.2);
\draw[-,densely dotted,thick,draw=red] (2.6*\xratio, -6.2) -- (4.2*\xratio, -6.2);
\draw[-,densely dotted,thick,draw=red] (4.2*\xratio, -6.2) -- (4.2*\xratio, -8.5);
\draw[-,densely dotted,thick,draw=red] (4.2*\xratio, -8.5) -- (2.6*\xratio, -8.5);
\node[red] (Omega) at (4.52*\xratio, -7.3) {$\Upsilon_\caU$};
\end{tikzpicture}
\ = \frac{1}{4!} \sum_{\pi\in\mathrm{S}_4} 
\begin{tikzpicture}[baseline={([yshift=-.5ex]current bounding box.center)},inner sep=-4mm]
\draw[-<<,draw=black] (O1) -- (1*\xratio, +0.8);
\node[tensor_blue] (O1) at (1*\xratio, -0) {$O$};
\node[tensor_green] (U1) at (1*\xratio, -0.9) {$U$};
\node[tensor_blue] (rho1) at (1*\xratio, -1.8) {$\rho$};
\node[Permute_4] (permute_a) at (2.5*\xratio, -3.6) {$T_{\pi}$};
\node[tensor_green] (U2) at (1*\xratio, -5.1) {$U^{\dag}$};
\draw[-<<,draw=black] (U2) -- (1*\xratio, -5.9);
\draw[-,draw=black] (O1) -- (U1);
\draw[-,draw=black] (U1) -- (rho1);
\draw[-,draw=black] (rho1.south) .. controls (1*\xratio, -2.7) and (2*\xratio, -2.7) .. (2*\xratio, -3.3);
\draw[-,draw=black] (1*\xratio, -3.9) -- (U2);
\draw[-,draw=black] (0.5*\xratio, -3.7) .. controls (0.5*\xratio, -4.6) and (2*\xratio, -4.6) .. (2*\xratio, -3.9);
\draw[-,draw=black] (0.5*\xratio, -3.7) .. controls (0.5*\xratio, -2.7) and (1*\xratio, -2.7) .. (1*\xratio, -3.3);
\node[tensor_blue] (rho2) at (3*\xratio, -0) {$\rho$};
\node[tensor_green] (U3) at (3*\xratio, -0.9) {$U^{\dag}$};
\node[tensor_blue] (O2) at (3*\xratio, -1.8) {$O$};
\node[tensor_green] (U4) at (3*\xratio, -2.7) {$U$};
\draw[-<<,draw=black]  (rho2) --(3*\xratio, 0.8);
\draw[-,draw=black] (rho2) -- (U3);
\draw[-,draw=black] (U3) -- (O2);
\draw[-,draw=black] (O2) -- (U4);
\draw[-,draw=black] (U4) -- (3*\xratio, -3.3);
\draw[-,draw=black] (4.5*\xratio, -3.7) .. controls (4.5*\xratio, -4.6) and (3*\xratio, -4.6) .. (3*\xratio, -3.9);
\draw[-,draw=black] (4.5*\xratio, -3.7) .. controls (4.5*\xratio, -2.7) and (4*\xratio, -2.7) .. (4*\xratio, -3.3);
\draw[-<<,draw=black] (4*\xratio, -3.9) .. controls (4*\xratio, -5.5) and (3*\xratio, -4.3) .. (3*\xratio, -5.9);
\end{tikzpicture}
\nonumber\\[1ex] 
&= \frac{1}{24} \left[  \left( d^2+5d+6\right)  (\Tr(OU\rho U^{\dag}))^2 + \left( d^2+5d+6\right) \Tr(O^2 U\rho^2 U^{\dag}) \right] 
\nonumber\\ 
&=\bigo{ d^2 \min\{1, B\wp\}},   
\end{align}
where the last inequality follows from \lref{lem:UsefulIneqs}.

Therefore, 
\begin{align}
\operatorname{Cov}\left( \Lambda_{i,j},\Lambda_{k,j}  \right)
&\leq \Tr \left( \left( O\otimes \rho^{\top}\otimes O\otimes \rho^{\top}\right) 
              \left( \Upsilon_\caU \otimes I\otimes I \right) 
                   \, \E \big[ \hat{X} \otimes \hat{X} \big] 
                         \left( I\otimes I \otimes \Upsilon_\caU \right)  \right)
\nonumber\\
&\stackrel{(a)}{\leq} 
\Tr \!\bigg[ \left( O\otimes \rho^{\top}\otimes O\otimes \rho^{\top}\right) 
\left( \Upsilon_\caU \otimes I\otimes I \right) \, 
\nonumber\\
&\qquad\qquad \times \frac{2 d^2(d+1)^2(d+s)}{s^2(d+s+1)} \left( \frac{1}{\kappa_2} \tilde \Delta_1+\frac{s}{\kappa_3}\tilde \Delta_2
+\frac{s}{\kappa_3}\tilde  \Delta_3+\frac{s(s-1)}{2\kappa_4} \tilde \Delta_4 \right) \left( I\otimes I \otimes \Upsilon_\caU \right) \bigg] 
\nonumber\\
& \stackrel{(b)}{=} \mathcal O \left[ \frac{d^4}{s^2} 
\left( \frac{d\wp}{\kappa_2}  +\frac{s d\min\{1, B\wp\}}{\kappa_3} +\frac{s d^2\wp }{\kappa_3} 
+\frac{s(s-1) d^2\min\{1, B\wp\}}{2\kappa_4}  \right)  \right] 
\nonumber\\
&= \mathcal O \left( \frac{d^3}{s}  \wp + d^2 \min\{1, B\wp\} \right), 
\end{align} 
where $(a)$ follows from \lref{lem:ExpXTotimesXT} and Eqs.~\eqref{eq:Cov2bound1B}--\eqref{eq:Cov2bound3B},
and $(b)$ follows from Eqs.~\eqref{eq:Cov2bound1}--\eqref{eq:Cov2bound4}. 
\end{proof}

\subsection{Both indices match with order swapped}\label{sec:CalCov3}
\begin{lemma}\label{lem:CalCov3}
For distinct $i,j$, we have 	
\begin{align}
\operatorname{Cov}\left( \Lambda_{i,j},\Lambda_{j,i}  \right)
&=\mathcal O  \left[ \left( 
\frac{d^3}{s^4}
+ \frac{d^2}{s^3} 
+ \frac{d\sqrt{d\wp}}{s^2} 
+ \frac{\sqrt{d\wp}}{s} 
+ \wp \right) d^2B \right]. 
\end{align} 	
\end{lemma}

\begin{proof}[Proof of \lref{lem:CalCov3}]
For distinct $i,j$, we have 
\begin{align}
\operatorname{Cov}\left( \Lambda_{i,j},\Lambda_{j,i}  \right)
&= \E \left[ \Tr\left( \left( O\otimes \rho^{\top}\right) \hat{X}_i \hat{X}_j \right)
\Tr\left( \left( O\otimes \rho^{\top}\right) \hat{X}_i \hat{X}_j \right) \right] 
- \E \left[ \Tr\left( \left( O\otimes \rho^{\top}\right) \hat{X}_i \hat{X}_j \right) \right]
\E \left[ \Tr\left( \left( O\otimes \rho^{\top}\right) \hat{X}_j \hat{X}_i \right)^* \right]
\nonumber\\
&= \E \left[ \Tr \left( \left( O\otimes \rho^{\top}\otimes O\otimes \rho^{\top}\right) 
\left(\hat{X}_i\hat{X}_j \otimes \hat{X}_i\hat{X}_j\right) \right)\right] 
- d^2 \left[ \Tr\left( O\,\mathcal{U}(\rho)\right) \right]^2, 
\nonumber\\
&\leq \Tr \left( \left( O\otimes \rho^{\top}\otimes O\otimes \rho^{\top}\right) 
\E \big[ \hat{X} \otimes \hat{X} \big] \E \big[ \hat{X} \otimes \hat{X} \big] \right).  
\label{eq:Cov3A}
\end{align} 

Define 
\begin{align}\label{eq:define_xi}
\xi(\mu,\nu):= 
\left| \Tr \left( \left( O\otimes \rho^{\top}\otimes O\otimes \rho^{\top}\right) \tilde \Delta_\mu \, \tilde \Delta_\nu  \right)\right| , 
\qquad \mu,\nu\in\{1,2,\dots,7\}. 
\end{align}
Then, according to \lref{lem:ExpXTotimesXT}, we can expand the last line of Eq.~\eqref{eq:Cov3A} as 
\begin{align} \label{eq:Cov3B}
&\Tr \left( \left( O\otimes \rho^{\top}\otimes O\otimes \rho^{\top}\right) 
\E \big[ \hat{X} \otimes \hat{X} \big] \E \big[ \hat{X} \otimes \hat{X} \big] \right)
\nonumber\\
&= \mathcal O \bigg(       \frac{d^4}{s^4}\xi(1,1) + \frac{d^3}{s^3}[\xi(1,2)+\xi(1,3)+\xi(2,1)+\xi(3,1)] 
                         + \frac{d}{s}[\xi(2,4)+\xi(3,4)+\xi(4,2)+\xi(4,3)]  
\nonumber\\&\qquad\quad  + \frac{d^2}{s^2}[\xi(2,2)+\xi(2,3)+\xi(1,4)+\xi(3,2)+\xi(3,3)+\xi(4,1)] + \xi(4,4)
                         + \frac{(d+s)^4}{s^4}\xi(7,7) 
\nonumber\\&\qquad\quad  + \frac{d(d+s)}{s^2}[\xi(3,5)+\xi(3,6)+\xi(2,5)+\xi(2,6)+\xi(5,2)+\xi(5,3)+\xi(6,2)+\xi(6,3)] 
\nonumber\\&\qquad\quad  + \frac{d^2(d+s)}{s^3}[\xi(1,5)+\xi(1,6)+\xi(5,1)+\xi(6,1)] 
                         + \frac{(d+s)}{s}[\xi(4,5)+\xi(4,6)+\xi(5,4)+\xi(6,4)]  
\nonumber\\&\qquad\quad  + \frac{(d+s)^3}{s^3}[\xi(5,7)+\xi(6,7)+\xi(7,5)+\xi(7,6)] 
                         + \frac{d(d+s)^2}{s^3}[\xi(2,7)+\xi(3,7)+\xi(7,2)+\xi(7,3)]
\nonumber\\&\qquad\quad  + \frac{(d+s)^2}{s^2}[\xi(5,5)+\xi(5,6)+\xi(6,5)+\xi(6,6)+\xi(4,7)+\xi(7,4)] 
                         + \frac{d^2(d+s)^2}{s^4}[\xi(1,7)+\xi(7,1)]
\bigg) .
\end{align}

To bound the covariance, we derive bounds for all $\xi(\mu,\nu)$ using tensor-network diagrams, as summarized in Table~\ref{tab:xiUB}. 
For example, 
\begin{align}\label{eq:xi13}
&\Tr \left[ \left( O\otimes \rho^{\top}\otimes O\otimes \rho^{\top}\right) \tilde \Delta_1 \, \tilde \Delta_3 \right]  
\nonumber\\[0.4ex]
&\qquad=\  \begin{tikzpicture}[baseline={([yshift=-.5ex]current bounding box.center)},inner sep=-4mm]
\node[tensor_blue] (O1) at (1*\xratio, -0.2) {$O$};
\node[tensor_blue] (rho1) at (2*\xratio, -0.2) {$\rho^{\top}$};
\node[tensor_blue] (O2) at (3*\xratio, -0.2) {$O$};
\node[tensor_blue] (rho2) at (4*\xratio, -0.2) {$\rho^{\top}$};
\draw[-<<,draw=black] (O1) -- (1*\xratio, +0.5);
\draw[-<<,draw=black] (rho1) -- (2*\xratio, +0.5);
\draw[-<<,draw=black] (O2) -- (3*\xratio, +0.5);
\draw[-<<,draw=black] (rho2) -- (4*\xratio, +0.5);
\node[Permute_3] (permute_a) at (2*\xratio, -1.3) {$\sym^{(2)}$};
\node[Permute_3] (permute_b) at (3*\xratio, -2.3) {$\sym^{(2)}$};
\node[tensor_green] (U1) at (1*\xratio, -3.6) {$U$};
\node[Permute_4] (permute_c) at (2.5*\xratio, -4.5) {$\sym^{(3)}$};
\node[tensor_green] (U2) at (1*\xratio, -5.4) {$U^{\dag}$};
\node[Permute_3] (permute_d) at (2*\xratio, -6.3) {$\sym^{(2)}$};
\draw[-<<,draw=black] (1*\xratio, -6.6) -- (1*\xratio, -7.1);
\draw[-<<,draw=black] (3*\xratio, -6.6) -- (3*\xratio, -7.1);
\draw[-,draw=black] (O1.south) -- (1*\xratio, -1);
\draw[-,draw=black] (1*\xratio, -1.6) -- (U1.north);
\draw[-,draw=black] (U1.south) -- (1*\xratio, -4.2);
\draw[-,draw=black] (1*\xratio, -4.8) -- (U2.north);
\draw[-,draw=black] (U2.south) -- (1*\xratio, -6.0);
\draw[-|||,draw=black] (rho1.south) -- (2*\xratio, -0.9);
\draw[-|||,draw=black] (2*\xratio, -2) -- (2*\xratio, -1.7);
\draw[-,draw=black] (O2.south) -- (3*\xratio, -1);
\draw[-|||,draw=black] (3*\xratio, -1.6) -- (3*\xratio, -1.9);
\draw[-,draw=black] (rho2.south) -- (4*\xratio, -2);
\draw[|||-|||,draw=black] (3*\xratio, -2.7) -- (3*\xratio, -4.1);
\draw[-|||,draw=black] (3*\xratio, -6.0) -- (3*\xratio, -4.9);
\draw[-,draw=black] (4*\xratio, -2.6) .. controls (4*\xratio, -3.3) and (4.5*\xratio, -3.2) ..  (4.5*\xratio, -4.6);
\draw[-,draw=black] (4.5*\xratio, -4.6) .. controls (4.5*\xratio, -5.2) and (4*\xratio, -5.2) ..  (4*\xratio, -4.8);
\draw[-,draw=black] (4*\xratio, -4.2) .. controls (4*\xratio, -3.8) and (4.4*\xratio, -3.7) ..  (4.4*\xratio, -4.5);
\draw[-<<,draw=black] (4.4*\xratio, -4.5) .. controls (4.4*\xratio, -5.1) and (4*\xratio, -5.0) ..  (4*\xratio, -7.1);
\draw[-<<,draw=black] (2*\xratio, -4.2) .. controls (2*\xratio, -3.1) and (2.4*\xratio, -3.1) ..  (2.4*\xratio, -4.0);
\draw[-|||,draw=black](2*\xratio, -2.6) .. controls (2*\xratio, -3.5) and (1.6*\xratio, -3.4) ..  (1.6*\xratio, -4.1);
\draw[-|||,draw=black](2*\xratio, -4.8) .. controls (2*\xratio, -5.4) and (1.6*\xratio, -5.4) ..  (1.6*\xratio, -4.9);
\draw[-,densely dotted,thick,draw=red] (0.6*\xratio, -6.8) -- (0.6*\xratio, -3.1);
\draw[-,densely dotted,thick,draw=red] (0.6*\xratio, -3.1) -- (4.63*\xratio, -3.1);
\draw[-,densely dotted,thick,draw=red] (4.63*\xratio, -3.1) -- (4.63*\xratio, -6.8);
\draw[-,densely dotted,thick,draw=red] (4.63*\xratio, -6.8) -- (0.6*\xratio, -6.8);
\node[red] (T) at (4.9*\xratio, -5) {$\tilde \Delta_3$};
\draw[-,densely dotted,thick,draw=red] (0.6*\xratio, -2.9) -- (0.6*\xratio, -0.7);
\draw[-,densely dotted,thick,draw=red] (0.6*\xratio, -0.7) -- (4.4*\xratio, -0.7);
\draw[-,densely dotted,thick,draw=red] (4.4*\xratio, -0.7) -- (4.4*\xratio, -2.9);
\draw[-,densely dotted,thick,draw=red] (4.4*\xratio, -2.9) -- (0.6*\xratio, -2.9);
\node[red] (T) at (4.67*\xratio, -1.8) {$\tilde \Delta_1$};
\end{tikzpicture}
\ =\frac{1}{2!\cdot 3!\cdot 2!} \sum_{\substack{\pi_1,\pi_2\in\mathrm{S}_2 \\ \pi_3\in\mathrm{S}_3}} 
\begin{tikzpicture}[baseline={([yshift=-.5ex]current bounding box.center)},inner sep=-4mm]
\node[tensor_blue] (O1) at (1*\xratio, -0.2) {$O$};
\node[tensor_blue] (O2) at (2*\xratio, -0.2) {$O$};
\draw[-<<,draw=black] (O1) -- (1*\xratio, +0.6);
\draw[-<<,draw=black] (O2) -- (2*\xratio, +0.6);
\node[Permute_2] (permute_a) at (1.5*\xratio, -1.2) {$T_{\pi_1}$};
\node[tensor_green] (U1) at (2*\xratio, -2.2) {$U$};
\node[tensor_blue] (rho1) at (3*\xratio, -2.2) {$\rho$};
\node[tensor_blue] (rho2) at (4*\xratio, -2.2) {$\rho$};
\node[Permute_3] (permute_b) at (3*\xratio, -3.2) {$T_{\pi_3}$};
\node[Permute_2] (permute_c) at (5.4*\xratio, -3.2) {$T_{\pi_2}$};
\node[tensor_green] (U2) at (2*\xratio, -4.2) {$U^\dag$};
\draw[-<<,draw=black] (1*\xratio, -1.5) -- (1*\xratio, -5);
\draw[-<<,draw=black] (U2.south) -- (2*\xratio, -5);
\draw[-,draw=black] (O1.south) -- (1*\xratio, -0.9);
\draw[-,draw=black] (O2.south) -- (2*\xratio, -0.9);
\draw[-,draw=black] (U1.north) -- (2*\xratio, -1.5);
\draw[-,draw=black] (U1.south) -- (2*\xratio, -2.9);
\draw[-,draw=black] (rho1.south) -- (3*\xratio, -2.9);
\draw[-,draw=black] (rho2.south) -- (4*\xratio, -2.9);
\draw[-,draw=black] (U2.north) -- (2*\xratio, -3.5);
\draw[-,draw=black] (3*\xratio, -3.5) .. controls (3*\xratio, -5.2) and (5.9*\xratio, -5.2) ..  (5.9*\xratio, -3.5);
\draw[-,draw=black] (4*\xratio, -3.5) .. controls (4*\xratio, -4.2) and (4.9*\xratio, -4.2) ..  (4.9*\xratio, -3.5);
\draw[-,draw=black] (3*\xratio, -1.9) .. controls (3*\xratio, -0.4) and (5.9*\xratio, +0.1) ..  (5.9*\xratio, -2.9);
\draw[-,draw=black] (4*\xratio, -1.9) .. controls (4*\xratio, -1.4) and (4.9*\xratio, -0.9) ..  (4.9*\xratio, -2.9);
\end{tikzpicture}
\nonumber\\[1ex]
&\qquad= \frac{1}{12} \,\Big[ (\Tr(O))^2 (\Tr(\rho))^2 + 2\Tr(O)\Tr(\rho)\Tr(OU\rho U^\dag) + (\Tr(O))^2\Tr(\rho^2) + 2\Tr(O)\Tr(O U\rho^2 U^\dag) \nonumber\\
&\qquad\qquad\quad + \Tr(O^2)(\Tr(\rho))^2 + 2\Tr(\rho)\Tr(O^2U\rho U^\dag) + \Tr(O^2)\Tr(\rho^2) + 2\Tr(O^2 U\rho^2 U^\dag) \Big]. 
\end{align}
This equation and \lref{lem:UsefulIneqs} together imply that 
\begin{align}
\xi(1,3)
=\left| \Tr \left[ \left( O\otimes \rho^{\top}\otimes O\otimes \rho^{\top}\right) \tilde \Delta_1 \, \tilde \Delta_3 \right]\right|
\leq \frac{dB + 2\sqrt{dB} + B + 2}{6} 
=\bigo{dB} . 
\end{align}
The other bounds in Table~\ref{tab:xiUB} can be derived in a similar way.

\begin{table}
	\caption{\label{tab:xiUB}
		Upper bounds for $\xi(\mu,\nu)$ derived by using tensor-network diagrams.
		Here $\wp=\Tr(\rho^2)$.}
	\begin{math} 
		\begin{array}{c|ccccccc}
			\hline\hline
			& \quad\ \  \nu=1 \quad\ \ 
			& \quad\ \  \nu=2 \quad\ \ 
			& \quad\ \  \nu=3 \quad\ \ 
			& \quad\ \  \nu=4 \quad\ \ 
			& \quad\ \  \nu=5 \quad\ \ 
			& \quad\ \  \nu=6 \quad\ \ 
			& \quad\ \  \nu=7 \quad\ \ 
			\\[0.5ex]
			\hline
			\mu=1
			&\bigo{dB}   &\bigo{dB}  &\bigo{dB}  &\bigo{dB}  &\mathcal O\big(\sqrt{d\wp}B\big)  &\mathcal O\big(\sqrt{d\wp}B\big) &\bigo{dB}   \\[0.5ex]
			\hline
			\mu=2
			&\bigo{dB}   &\mathcal O\big(d\sqrt{d\wp}B\big)  &\mathcal O\big(d\sqrt{d\wp}B\big) 
			&\mathcal O\big(d\sqrt{d\wp}B\big)  &\mathcal O\big(dB \sqrt{\wp}\big)  &\mathcal O\big(d\sqrt{d\wp}B\big)  &\bigo{dB}    \\[0.5ex]
			\hline
			\mu=3
			&\bigo{dB}   &\mathcal O\big(d\sqrt{d\wp}B\big)  &\mathcal O\big(d\sqrt{d\wp}B\big) 
			&\mathcal O\big(d\sqrt{d\wp}B\big) &\mathcal O\big(d\sqrt{d\wp}B\big) &\mathcal O\big(dB \sqrt{\wp}\big)  &\bigo{dB}    \\[0.5ex]
			\hline
			\mu=4
			&\bigo{dB}   &\mathcal O\big(d\sqrt{d\wp}B\big)  &\mathcal O\big(d\sqrt{d\wp}B\big) 
			&\mathcal O\big(d^2B\wp\big) &\mathcal O\big(d\sqrt{d\wp}B\big)  &\mathcal O\big(d\sqrt{d\wp}B\big)  &\bigo{dB}    \\[0.5ex]
			\hline
			\mu=5
			&\mathcal O\big(\sqrt{d\wp}B\big)  &\mathcal O\big(\sqrt{d\wp}B\big) &\mathcal O\big(d\sqrt{d\wp}B\big) 
			&\mathcal O\big(d\sqrt{d\wp}B\big) 
			&\mathcal O\big(d\sqrt{d\wp}B\big)  &\mathcal O\big(B\wp\big)  &\mathcal O\big(\sqrt{d\wp}B\big)    \\[0.5ex]
			\hline
			\mu=6
			&\mathcal O\big(\sqrt{d\wp}B\big)  &\mathcal O\big(d\sqrt{d\wp}B\big) &\mathcal O\big(\sqrt{d\wp}B\big)
			&\mathcal O\big(d\sqrt{d\wp}B\big) 
			&\mathcal O\big(B\wp\big)  &\mathcal O\big(d\sqrt{d\wp}B\big)  &\mathcal O\big(\sqrt{d\wp}B\big)    \\[0.5ex]
			\hline
			\mu=7
			&\bigo{dB}  &\bigo{dB}   &\bigo{dB} &\bigo{dB} &\mathcal O\big(\sqrt{d\wp}B\big)  &\mathcal O\big(\sqrt{d\wp}B\big)  &\bigo{dB}    \\[0.5ex]
			\hline\hline
		\end{array}	
	\end{math}
\end{table}

By combining Eqs.~\eqref{eq:Cov3A}, \eqref{eq:Cov3B}, and Table~\ref{tab:xiUB}, we have 
\begin{align} 
&\operatorname{Cov}\left(\Lambda_{i,j},\Lambda_{j,i}\right)
\nonumber\\
&\leq \mathcal O \bigg(       \frac{d^4}{s^4}dB + \frac{d^3}{s^3}dB + \frac{d}{s}dB\sqrt{d\wp}
                         + \frac{d^2}{s^2}dB\sqrt{d\wp} + d^2B\wp + \frac{(d+s)^4}{s^4}dB + \frac{d(d+s)}{s^2}dB\sqrt{d\wp}
                         + \frac{d^2(d+s)}{s^3}B\sqrt{d\wp}
\nonumber\\&\qquad\quad  + \frac{(d+s)}{s}dB\sqrt{d\wp}   
                         + \frac{(d+s)^3}{s^3}B\sqrt{d\wp} + \frac{d(d+s)^2}{s^3}dB 
                         + \frac{(d+s)^2}{s^2}dB\sqrt{d\wp} + \frac{d^2(d+s)^2}{s^4}dB
\bigg) 
\nonumber\\
&= \mathcal O  \left[ \left( 
\frac{d^3}{s^4}
+ \frac{d^2}{s^3} 
+ \frac{d\sqrt{d\wp}}{s^2} 
+ \frac{\sqrt{d\wp}}{s} 
+ \wp \right) d^2B \right] ,  
\end{align} 
which confirms \lref{lem:CalCov3}. 
\end{proof}

\subsection{Both indices match with the same order}\label{sec:CalCov4}
\begin{lemma}\label{lem:CalCov4}
For distinct $i,j$, we have 	
\begin{align}
&\operatorname{Cov}\left( \Lambda_{i,j},\Lambda_{i,j}  \right)
= \mathcal O  \left[  \left( \frac{d^4}{s^4} +\frac{d^3}{s^3} +\frac{d^2}{s^2} +\frac{d}{s} + 1 \right) d^2B \wp \right] 
\end{align} 	
\end{lemma}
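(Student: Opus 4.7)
The plan is to follow the template of the proofs of Lemmas~\ref{lem:CalCov1}--\ref{lem:CalCov3}: reduce the covariance to the trace of $O\otimes\rho^{\top}\otimes O\otimes\rho^{\top}$ against an expectation of two snapshots, expand that expectation via \lref{lem:ExpXTotimesXT}, and bound each resulting trace using tensor-network diagrams together with \lref{lem:UsefulIneqs}. First, since each snapshot $\hat X$ is Hermitian and $O\otimes\rho^{\top}$ is Hermitian, one has $\Lambda_{i,j}^{*}=\Lambda_{j,i}$, so the complex covariance reduces to
$\operatorname{Cov}(\Lambda_{i,j},\Lambda_{i,j})=\E[\Lambda_{i,j}\Lambda_{j,i}]-d^{2}[\Tr(O\,\caU(\rho))]^{2}\leq \Tr[(O\otimes\rho^{\top}\otimes O\otimes\rho^{\top})\,\E[\hat X_i\hat X_j\otimes\hat X_j\hat X_i]]$.

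Next I need to evaluate the expectation on the right-hand side. Writing $(\hat X_i\hat X_j)\otimes(\hat X_j\hat X_i)=(\hat X_i\otimes\hat X_j)(\hat X_j\otimes\hat X_i)$ and using the independence of $\hat X_i$ and $\hat X_j$, a direct index-by-index computation yields
$\E[\hat X_i\hat X_j\otimes\hat X_j\hat X_i]_{ae,bf}=\sum_{c,g} M_{ag,cf}\,M_{ce,bg}$,
where $M:=\E[\hat X\otimes\hat X]$. This ``twisted'' contraction $M\circ M$ is structurally distinct from the ordinary matrix product $M^{2}$ appearing in \lref{lem:CalCov3}; diagrammatically it corresponds to gluing two copies of $M$ along a pair of crossed internal edges on $\caH^{\otimes 4}$, so that the second subsystem of one copy is contracted with the second subsystem of the other, and likewise for the first subsystems.

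Substituting the seven-term decomposition of $M$ from \lref{lem:ExpXTotimesXT}, one then has to bound the resulting $49$ quantities $\eta(\mu,\nu):=\big|\Tr[(O\otimes\rho^{\top}\otimes O\otimes\rho^{\top})(\tilde\Delta_\mu\circ\tilde\Delta_\nu)]\big|$ via tensor-network diagrams: expand each $\sym^{(t)}$ as an average over permutation operators in $\symm_t$ and read off a product of elementary traces of $O$'s and $\rho$'s, then apply \lref{lem:UsefulIneqs}. The key structural observation is that the crossed identification of indices in $M\circ M$ forces both copies of $\rho^{\top}$ (and both copies of $O$) to be linked through the permutation skeleton of any nonvanishing summand, so each such summand produces a factor of the form $\Tr(\rho^{2})$, $\Tr(O_U\rho\,O_U\rho)$, or $\Tr(O_U^{2}\rho^{2})$, all bounded by $\wp$ via Eqs.~\eqref{eq:UsefulIneq4}--\eqref{eq:UsefulIneq6}. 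This is precisely what produces the uniform $d^{2}B\wp$ prefactor in the final bound, in contrast to \lref{lem:CalCov3} where several terms only admitted the weaker $\sqrt{B\wp}$-type estimate. Combining the $49$ bounds with the coefficients of \lref{lem:ExpXTotimesXT} and simplifying asymptotically then delivers the claimed result.

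The main obstacle will be a careful case analysis of the twisted contractions involving $\tilde\Delta_5$, $\tilde\Delta_6$, and $\tilde\Delta_7$, which contain identity factors or a single $\Upsilon_{\caU}$ that could \emph{a priori} decouple one of the $\rho^{\top}$ subsystems from the internal loop and thereby spoil the $\wp$-bound. Careful diagrammatic accounting is needed to verify that even in these degenerate cases, the crossed contraction in $M\circ M$ still routes both $\rho^{\top}$ factors through the same closed cycle, so that a $\wp$-carrying trace always appears. Once this structural fact is established for every nonvanishing $\eta(\mu,\nu)$, the remaining coefficient algebra is a direct adaptation of the final asymptotic calculation in the proof of \lref{lem:CalCov3}.
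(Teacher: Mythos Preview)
Your reduction is exactly the paper's: it also bounds $\operatorname{Cov}(\Lambda_{i,j},\Lambda_{i,j})$ by $\E[\Tr((O\otimes\rho^\top\otimes O\otimes\rho^\top)(\hat X_i\hat X_j\otimes\hat X_j\hat X_i))]$, then separates the two independent expectations by lifting to $\caH^{\otimes 6}$ via the identity
\[
\Tr\big[(O\otimes\rho^\top\otimes O\otimes\rho^\top)(\hat X_i\hat X_j\otimes\hat X_j\hat X_i)\big]
=\Tr\big[(O\otimes\rho^\top\otimes O\otimes\rho^\top\otimes I\otimes I)(I\otimes I\otimes\hat X_i\otimes\hat X_i)(\hat X_j\otimes\hat X_j\otimes I\otimes I)T_{(1,5)(2,6)}\big],
\]
which is just a bookkeeping variant of your ``twisted contraction'' $M\circ M$ on $\caH^{\otimes 4}$. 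So structurally you are following the paper.

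There is, however, a genuine gap in your ``key structural observation''. It is \emph{not} true that the crossed contraction forces both $\rho^\top$ factors into a common cycle in every nonvanishing term. The cleanest counterexample is $\eta(7,7)$: with $\tilde\Delta_7=I^{\otimes 4}$ on both sides, the twisted contraction degenerates and one finds $\eta(7,7)=|\Tr(O)|^2(\Tr\rho)^2\le dB$, with no $\wp$ whatsoever. More generally, the paper's own table of bounds $\zeta(\mu,\nu)$ shows that whenever $\mu$ or $\nu$ lies in $\{5,6,7\}$ the estimate is typically only $\mathcal O(dB)$ or $\mathcal O(d\sqrt{d\wp}B)$, not $\mathcal O(d^2B\wp)$. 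Your anticipated ``careful diagrammatic accounting'' would therefore fail to establish the claim you state.

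The repair is short but different from what you propose: one simply tabulates the $49$ bounds honestly (many without an explicit $\wp$), multiplies by the coefficients from \lref{lem:ExpXTotimesXT}, and then in the final asymptotic simplification uses the elementary fact $\wp=\Tr(\rho^2)\ge 1/d$, so that e.g.\ the worst $\wp$-free contribution $\tfrac{(d+s)^4}{s^4}\,dB$ is dominated by $\big(\tfrac{d^4}{s^4}+1\big)d^2B\wp$. In other words, the uniform $d^2B\wp$ in the statement is obtained not because every term carries $\wp$ diagrammatically, but because the $\wp$-free terms come with one fewer power of $d$ and are absorbed via $d\wp\ge 1$.
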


\begin{proof}[Proof of \lref{lem:CalCov4}]
For distinct $i,j$, we have 
\begin{align}
\operatorname{Cov}\left( \Lambda_{i,j},\Lambda_{i,j}  \right)
&= \E \left[ \Tr\left( \left( O\otimes \rho^{\top}\right) \hat{X}_i \hat{X}_j \right)
\Tr\left( \left( O\otimes \rho^{\top}\right) \hat{X}_j \hat{X}_i \right) \right] 
- d^2 \left[ \Tr\left( O\,\mathcal{U}(\rho)\right) \right]^2
\nonumber\\
&\leq \E \left[ \Tr \left( \left( O\otimes \rho^{\top}\otimes O\otimes \rho^{\top}\right) 
\left(\hat{X}_i\hat{X}_j \otimes \hat{X}_j\hat{X}_i\right) \right)\right] , 
\label{eq:Cov4A}
\end{align}
Note that the trace inside the last line of Eq.~\eqref{eq:Cov4A} can be further written as 
\begin{align}
&\Tr \left( \left( O\otimes \rho^{\top}\otimes O\otimes \rho^{\top}\right) 
\left(\hat{X}_i\hat{X}_j \otimes \hat{X}_j\hat{X}_i\right) \right) 
\nonumber\\
&=\Tr \left( 
\left( O\otimes \rho^{\top}\otimes O\otimes \rho^{\top} \otimes I \otimes I\right) 
\left( I\otimes I\otimes \hat{X}_i \otimes \hat{X}_i \right) 
\left( \hat{X}_j \otimes \hat{X}_j \otimes I\otimes I\right) 
T_{(1,5)(2,6)(3)(4)}
\right) .
\end{align}  
So the covariance satisfies
\begin{align}
&\operatorname{Cov}\left[ \Tr\left( \left( O\otimes \rho^{\top}\right) \hat{X}_i \hat{X}_j \right),
\Tr\left( \left( O\otimes \rho^{\top}\right) \hat{X}_i \hat{X}_j \right) \right] 
\nonumber\\
&\leq\Tr \left[
\left( O\otimes \rho^{\top}\otimes O\otimes \rho^{\top} \otimes I \otimes I\right) 
\left( I\otimes I\otimes \E \big[ \hat{X} \otimes \hat{X} \big] \right) 
\left( \E \big[ \hat{X} \otimes \hat{X} \big] \otimes I\otimes I\right)
T_{(1,5)(2,6)(3)(4)} 
\right]. \label{eq:Cov4B}
\end{align}

Define 
\begin{align}\label{eq:define_zeta}
\zeta(\mu,\nu):= 
\left| \Tr \left[
\left( O\otimes \rho^{\top}\otimes O\otimes \rho^{\top} \otimes I \otimes I\right) 
\left( I\otimes I\otimes  \tilde \Delta_\mu \right) 
\left(  \tilde \Delta_\nu \otimes I\otimes I\right)
T_{(1,5)(2,6)(3)(4)} 
\right]\right| 
\quad \mu,\nu\in\{1,2,\dots,7\}. 
\end{align}
According to \lref{lem:ExpXTotimesXT}, we can then expand the RHS of \eref{eq:Cov4B} as 
\begin{align}\label{eq:Cov4C}
&\text{RHS of \eref{eq:Cov4B}} 
\nonumber\\ 
&= \mathcal O \bigg(       \frac{d^4}{s^4}\zeta(1,1) + \frac{d^3}{s^3}[\zeta(1,2)+\zeta(1,3)+\zeta(2,1)+\zeta(3,1)] 
+ \frac{d}{s}[\zeta(2,4)+\zeta(3,4)+\zeta(4,2)+\zeta(4,3)]  
\nonumber\\&\qquad\quad  + \frac{d^2}{s^2}[\zeta(2,2)+\zeta(2,3)+\zeta(1,4)+\zeta(3,2)+\zeta(3,3)+\zeta(4,1)] + \zeta(4,4)
+ \frac{(d+s)^4}{s^4}\zeta(7,7) 
\nonumber\\&\qquad\quad  + \frac{d(d+s)}{s^2}[\zeta(3,5)+\zeta(3,6)+\zeta(2,5)+\zeta(2,6)+\zeta(5,2)+\zeta(5,3)+\zeta(6,2)+\zeta(6,3)] 
\nonumber\\&\qquad\quad  + \frac{d^2(d+s)}{s^3}[\zeta(1,5)+\zeta(1,6)+\zeta(5,1)+\zeta(6,1)] 
+ \frac{(d+s)}{s}[\zeta(4,5)+\zeta(4,6)+\zeta(5,4)+\zeta(6,4)]  
\nonumber\\&\qquad\quad  + \frac{(d+s)^3}{s^3}[\zeta(5,7)+\zeta(6,7)+\zeta(7,5)+\zeta(7,6)] 
+ \frac{d(d+s)^2}{s^3}[\zeta(2,7)+\zeta(3,7)+\zeta(7,2)+\zeta(7,3)]
\nonumber\\&\qquad\quad  + \frac{(d+s)^2}{s^2}[\zeta(5,5)+\zeta(5,6)+\zeta(6,5)+\zeta(6,6)+\zeta(4,7)+\zeta(7,4)] 
+ \frac{d^2(d+s)^2}{s^4}[\zeta(1,7)+\zeta(7,1)]
\bigg) . 
\end{align} 
To bound the covariance, we derive bounds for all $\zeta(\mu,\nu)$ using tensor-network diagrams, as summarized in Table~\ref{tab:zetaUB}. 
For example, 
\begin{align}
&\!\!\! 
\Tr \left[
\left( O\otimes \rho^{\top}\otimes O\otimes \rho^{\top} \otimes I \otimes I\right) 
\left( I\otimes I\otimes \tilde \Delta_4 \right) 
\left( \tilde \Delta_3 \otimes I\otimes I\right)
T_{(1,5)(2,6)(3)(4)} \right] 
\nonumber\\[0.5ex]
&=\, 
\begin{tikzpicture}[baseline={([yshift=-.5ex]current bounding box.center)},inner sep=-4mm]
\node[tensor_blue] (O1) at (1*\xratio, -0.2) {$O$};
\node[tensor_blue] (rho1) at (2*\xratio, -0.2) {$\rho^{\top}$};
\node[tensor_blue] (O2) at (3*\xratio, -0.2) {$O$};
\node[tensor_blue] (rho2) at (4*\xratio, -0.2) {$\rho^{\top}$};
\draw[-<<,draw=black] (O1) -- (1*\xratio, +0.5);
\draw[-<<,draw=black] (rho1) -- (2*\xratio, +0.5);
\draw[-<<,draw=black] (O2) -- (3*\xratio, +0.5);
\draw[-<<,draw=black] (rho2) -- (4*\xratio, +0.5);
\node[tensor_green] (U1) at (3*\xratio, -1.2) {$U$};
\node[tensor_green] (U2) at (5*\xratio, -1.2) {$U$};
\node[Permute_4] (permute_a) at (4.5*\xratio, -2.1) {$\sym^{(4)}$};
\node[tensor_green] (U3) at (3*\xratio, -3) {$U^{\dag}$};
\node[tensor_green] (U4) at (5*\xratio, -3) {$U^{\dag}$};
\node[tensor_green] (U5) at (1*\xratio, -4.2) {$U$};
\node[Permute_4] (permute_b) at (2.5*\xratio, -5.1) {$\sym^{(3)}$};
\node[tensor_green] (U6) at (1*\xratio, -6) {$U^{\dag}$};
\node[Permute_3] (permute_c) at (2*\xratio, -6.9) {$\sym^{(2)}$};
\draw[>>-,draw=black] (6*\xratio, +0.5) .. controls (6*\xratio, -0.9) and (6.5*\xratio, -0.8) ..  (6.5*\xratio, -2.2);
\draw[-,draw=black] (6.5*\xratio, -2.2) .. controls (6.5*\xratio, -2.8) and (6*\xratio, -2.8) ..  (6*\xratio, -2.4);
\draw[-,draw=black] (6*\xratio, -1.8) .. controls (6*\xratio, -1.4) and (6.4*\xratio, -1.3) ..  (6.4*\xratio, -2.1);
\draw[>>-,draw=black] (5*\xratio, +0.5) -- (U2);
\draw[-,draw=black] (O1) -- (U5);
\draw[-,draw=black] (O2) -- (U1);
\draw[-|||,draw=black](4*\xratio, -1.8) .. controls (4*\xratio, -1.1) and (4.4*\xratio, -1.1) ..  (4.4*\xratio, -1.7);
\draw[-|||,draw=black](4*\xratio, -0.5) .. controls (4*\xratio, -1.1) and (3.6*\xratio, -1.2) ..  (3.6*\xratio, -1.7);
\draw[-|||,draw=black](4*\xratio, -2.4) .. controls (4*\xratio, -3) and (3.6*\xratio, -3) ..  (3.6*\xratio, -2.5);
\draw[-|||,draw=black](2*\xratio, -4.8) .. controls (2*\xratio, -4.1) and (2.4*\xratio, -4.1) ..  (2.4*\xratio, -4.7);
\draw[-|||,draw=black](2*\xratio, -3) .. controls (2*\xratio, -4.1) and (1.6*\xratio, -4.2) ..  (1.6*\xratio, -4.7);
\draw[-|||,draw=black](2*\xratio, -5.4) .. controls (2*\xratio, -6) and (1.6*\xratio, -6) ..  (1.6*\xratio, -5.5);
\draw[-,draw=black] (2*\xratio, -3) -- (rho1);
\draw[-|||,draw=black] (4*\xratio, -3.6) .. controls (4*\xratio, -3) and (4.4*\xratio, -3.2) ..  (4.4*\xratio, -2.5);
\draw[-,draw=black] (4*\xratio, -3.6) .. controls (4*\xratio, -4.6) and (4.5*\xratio, -3.8) ..  (4.5*\xratio, -5.2);
\draw[-,draw=black] (4.5*\xratio, -5.2) .. controls (4.5*\xratio, -5.8) and (4*\xratio, -5.8) ..  (4*\xratio, -5.4);
\draw[-,draw=black] (4*\xratio, -4.8) .. controls (4*\xratio, -4.4) and (4.4*\xratio, -4.3) ..  (4.4*\xratio, -5.1);
\draw[-,draw=black] (U1) -- (3*\xratio, -1.8);
\draw[-,draw=black] (U3) -- (3*\xratio, -2.4);
\draw[-|||,draw=black] (U3) -- (3*\xratio, -4.7);
\draw[-,draw=black] (U2) -- (5*\xratio, -1.8);
\draw[-,draw=black] (U4) -- (5*\xratio, -2.4);
\draw[-,draw=black] (U5) -- (1*\xratio, -4.8);
\draw[-,draw=black] (U6) -- (1*\xratio, -5.4);
\draw[-,draw=black] (U6) -- (1*\xratio, -6.6);
\draw[-|||,draw=black] (3*\xratio, -6.6) -- (3*\xratio, -5.5);
\draw[|||-|||,draw=black] (2*\xratio, -6.5) .. controls (2*\xratio, -5.8) and (2.4*\xratio, -6.2) .. (2.4*\xratio, -5.5);
\draw[|||-<<,draw=black] (2*\xratio, -7.3) .. controls (2*\xratio, -8.5) and (6*\xratio, -8.2) ..  (6*\xratio, -9.9);
\draw[-<<,draw=black]    (1*\xratio, -7.2) .. controls (1*\xratio, -8.5) and (5*\xratio, -8.2) ..  (5*\xratio, -9.9);
\draw[-<<,draw=black]    (3*\xratio, -7.2) -- (3*\xratio, -9.9);
\draw[-<<,draw=black]    (4.4*\xratio, -5.1) .. controls (4.4*\xratio, -6) and (4*\xratio, -5) ..  (4*\xratio, -9.9);
\draw[-,draw=black] (U4) -- (5*\xratio, -7.2);
\draw[-,draw=black] (6.4*\xratio, -2.1) .. controls (6.4*\xratio, -2.6) and (6*\xratio, -2.8) .. (6*\xratio, -3.8);
\draw[-,draw=black] (6*\xratio, -3.8) -- (6*\xratio, -7.2);
\draw[-<<,draw=black]    (6*\xratio, -7.2) .. controls (6*\xratio, -8.2) and (2*\xratio, -8.5) ..  (2*\xratio, -9.9);
\draw[-<<,draw=black]    (5*\xratio, -7.2) .. controls (5*\xratio, -8.2) and (1*\xratio, -8.5) ..  (1*\xratio, -9.9);
\draw[-,densely dotted,thick,draw=red] (0.63*\xratio, -7.5) -- (0.63*\xratio, -3.72);
\draw[-,densely dotted,thick,draw=red] (0.63*\xratio, -3.72) -- (4.58*\xratio, -3.72);
\draw[-,densely dotted,thick,draw=red] (4.58*\xratio, -3.72) -- (4.58*\xratio, -7.5);
\draw[-,densely dotted,thick,draw=red] (4.58*\xratio, -7.5) -- (0.63*\xratio, -7.5);
\node[red] (T) at (4.8*\xratio, -5.6) {$\tilde \Delta_3$};
\draw[-,densely dotted,thick,draw=red] (2.63*\xratio, -3.5) -- (2.63*\xratio, -0.7);
\draw[-,densely dotted,thick,draw=red] (2.63*\xratio, -0.7) -- (6.58*\xratio, -0.7);
\draw[-,densely dotted,thick,draw=red] (6.58*\xratio, -0.7) -- (6.58*\xratio, -3.5);
\draw[-,densely dotted,thick,draw=red] (6.58*\xratio, -3.5) -- (2.63*\xratio, -3.5);
\node[red] (T) at (2.4*\xratio, -2.1) {$\tilde \Delta_4$};
\end{tikzpicture}
\!=\frac{1}{2!\cdot3!\cdot4!} \sum_{\substack{\pi_1\in\mathrm{S}_3 \\ \pi_2\in\mathrm{S}_4 \\ \pi_3\in\mathrm{S}_2}} \,
\begin{tikzpicture}[baseline={([yshift=-.5ex]current bounding box.center)},inner sep=-4mm]
\node[tensor_blue] (O2) at (1*\xratio, +0.2) {$O$};
\node[tensor_green] (U2) at (1*\xratio, -0.8) {$U$};
\node[tensor_blue] (rho1) at (3*\xratio, -0.8) {$\rho$};
\node[tensor_green] (U3) at (4*\xratio, -0.8) {$U^{\dag}$};
\node[tensor_green] (U1) at (4*\xratio, +1.2) {$U$};
\node[tensor_blue] (O1) at (4*\xratio, +2.2) {$O$};
\node[tensor_green] (U4) at (5*\xratio, -0.8) {$U$};
\node[Permute_4] (permute_b) at (2.5*\xratio, -2.3) {$T_{\pi_2}$};
\node[tensor_green] (U5) at (1*\xratio, -3.3) {$U^{\dag}$};
\node[tensor_blue] (rho2) at (2*\xratio, -3.3) {$\rho$};
\node[tensor_green] (U6) at (4*\xratio, -3.3) {$U^{\dag}$};
\node[Permute_5] (permute_c) at (3*\xratio, -4.7) {$T_{\pi_3}$};
\node[Permute_3] (permute_a) at (3*\xratio, +0.2) {$T_{\pi_1}$};
\draw[-,draw=black] (O1.south) -- (U1.north);
\draw[-,draw=black] (U1.south) -- (4*\xratio, +0.5);
\draw[-,draw=black] (U3.north) -- (4*\xratio, -0.1);
\draw[-,draw=black] (O2.south) -- (U2.north);
\draw[-,draw=black] (1*\xratio, -2) -- (U2.south);
\draw[-,draw=black] (3*\xratio, -2) -- (rho1.south);
\draw[-,draw=black] (3*\xratio, -0.1) -- (rho1.north);
\draw[-,draw=black] (2*\xratio, -0.1) -- (2*\xratio, -2);
\draw[-,draw=black] (1*\xratio, -4.4) -- (U5.south);
\draw[-,draw=black] (1*\xratio, -2.6) -- (U5.north);
\draw[-,draw=black] (2*\xratio, -2.6) -- (rho2.north);
\draw[-,draw=black] (4*\xratio, -2.6) -- (U6.north);
\draw[-<<,draw=black] (O2) -- (1*\xratio, +1);
\draw[-<<,draw=black] (U4) -- (5*\xratio, +1);
\draw[-<<,draw=black] (2*\xratio, +0.5) -- (2*\xratio, +1);
\draw[-<<,draw=black] (3*\xratio, +0.5) -- (3*\xratio, +1);
\draw[-<<,draw=black] (O1) -- (4*\xratio, +3);
\draw[-<<,draw=black] (rho2) -- (2*\xratio, -4.1);
\draw[-<<,draw=black] (3*\xratio, -2.6) -- (3*\xratio, -4.1);
\draw[-<<,draw=black] (U6) -- (4*\xratio, -4.1);
\draw[-<<,draw=black] (1*\xratio, -5) -- (1*\xratio, -5.5);
\draw[-<<,draw=black] (5*\xratio, -5) -- (5*\xratio, -5.5);
\draw[-,draw=black] (5*\xratio, -1.1) .. controls (5*\xratio, -1.5) and (4*\xratio, -1.6) ..  (4*\xratio, -2);
\draw[-,draw=black] (4*\xratio, -1.1) .. controls (4*\xratio, -1.5) and (5*\xratio, -1.6) ..  (5*\xratio, -2.5);
\draw[-,draw=black] (5*\xratio, -4.4) -- (5*\xratio, -2.5);
\end{tikzpicture}
\nonumber\\[1ex]
&=  (C_1 d^2 + C_2 d + C_3) \Tr(O_U^2)\Tr(\rho^2) 
  + (C_4 d^2 + C_5 d + C_6) (\Tr(O_U\rho))^2 
  + (C_7 d^2 + C_8 d + C_9) \Tr(O_U^2 \rho^2) 
\nonumber\\&\quad\ 
  + (C_{10} d + C_{11}) \Tr(O_U)\Tr(O_U\rho)   
  + (C_{12} d + C_{13}) (\Tr(O_U))^2 \Tr(\rho^2)
  + (C_{14} d + C_{15}) \Tr(O_U)\Tr(O_U\rho^2) 
\nonumber\\&\quad\ 
  + (C_{16} d + C_{17}) \Tr(O_U^2\rho) 
  + (C_{18} d + C_{19}) \Tr(O_U\rho O_U \rho) 
  + (C_{20} d + C_{21}) \Tr(O^2) 
  + (C_{22} d + C_{23}) (\Tr(O))^2 , 
\end{align}
where $C_1,C_2,\dots,C_{23}$ are constants, and $O_U=U^{\dag}OU$. 
This equation and \lref{lem:UsefulIneqs} together imply that 
\begin{align}
\zeta(4,3)
= \left| \Tr \left[
\left( O\otimes \rho^{\top}\otimes O\otimes \rho^{\top} \otimes I \otimes I\right) 
\left( I\otimes I\otimes \tilde \Delta_4 \right) 
\left( \tilde \Delta_3 \otimes I\otimes I\right)
T_{(1,5)(2,6)(3)(4)} \right] \right|  
= \bigo{d^2B\wp}. 
\end{align}
The other bounds in Table~\ref{tab:zetaUB} can be derived in a similar way.

\begin{table}
\caption{\label{tab:zetaUB}
Upper bounds for $\zeta(\mu,\nu)$ derived by using tensor-network diagrams.
Here $\wp=\Tr(\rho^2)$.}
\begin{math} 
\begin{array}{c|ccccccc}
\hline\hline
& \quad\ \  \nu=1 \quad\ \ 
& \quad\ \  \nu=2 \quad\ \ 
& \quad\ \  \nu=3 \quad\ \ 
& \quad\ \  \nu=4 \quad\ \ 
& \quad\ \  \nu=5 \quad\ \ 
& \quad\ \  \nu=6 \quad\ \ 
& \quad\ \  \nu=7 \quad\ \ 
\\[0.5ex]
\hline
\mu=1
&\mathcal O\big(d^2B\wp\big)  &\mathcal O\big(d^2B\wp\big)  
&\mathcal O\big(d^2B\wp\big)  &\mathcal O\big(d^2B\wp\big)  
&\mathcal O\big(\sqrt{d\wp}B\big)  &\mathcal O\big(\sqrt{d\wp}B\big)  &\bigo{dB} \\[0.5ex]
\hline
\mu=2
&\mathcal O\big(d^2B\wp\big)  &\mathcal O\big(d^2B\wp\big)  
&\mathcal O\big(d^2B\wp\big)  &\mathcal O\big(d^2B\wp\big)  
&\mathcal O\big(d\sqrt{d\wp}B\big)  &\mathcal O\big(dB\sqrt{\wp}\big)  &\bigo{dB}  \\[0.5ex]
\hline
\mu=3
&\mathcal O\big(d^2B\wp\big)  &\mathcal O\big(d^2B\wp\big)  
&\mathcal O\big(d^2B\wp\big)  &\mathcal O\big(d^2B\wp\big)  
&\mathcal O\big(\sqrt{d\wp}B\big)  &\mathcal O\big(d\sqrt{d\wp}B\big)  &\bigo{dB}     \\[0.5ex]
\hline
\mu=4
&\mathcal O\big(d^2B\wp\big)  &\mathcal O\big(d^2B\wp\big)  
&\mathcal O\big(d^2B\wp\big)  &\mathcal O\big(d^2B\wp\big) 
&\mathcal O\big(d\sqrt{d\wp}B\big)  &\mathcal O\big(d\sqrt{d\wp}B\big)  &\bigo{dB}   \\[0.5ex]
\hline
\mu=5
&\mathcal O\big(\sqrt{d\wp}B\big)  &\mathcal O\big(d\sqrt{d\wp}B\big) &\mathcal O\big(\sqrt{d\wp}B\big)
&\mathcal O\big(d\sqrt{d\wp}B\big) 
&\mathcal O\big(B\wp\big)  &\mathcal O\big(d\sqrt{d\wp}B\big)  &\mathcal O\big(\sqrt{d\wp}B\big)     \\[0.5ex]
\hline
\mu=6
&\mathcal O\big(\sqrt{d\wp}B\big)  &\mathcal O\big(dB \sqrt{\wp}\big) &\mathcal O\big(d\sqrt{d\wp}B\big) 
&\mathcal O\big(d\sqrt{d\wp}B\big) &\mathcal O\big(d\sqrt{d\wp}B\big)  
&\mathcal O\big(B\wp\big)   &\mathcal O\big(\sqrt{d\wp}B\big)   \\[0.5ex]
\hline
\mu=7
&\bigo{dB}  &\bigo{dB}   &\bigo{dB} &\bigo{dB} &\mathcal O\big(\sqrt{d\wp}B\big)  &\mathcal O\big(\sqrt{d\wp}B\big)  &\bigo{dB}     \\[0.5ex]
\hline\hline
\end{array}	
\end{math}
\end{table}

By combining Eqs.~\eqref{eq:Cov4B}, \eqref{eq:Cov4C}, and Table~\ref{tab:zetaUB}, we have 
\begin{align}
&\operatorname{Cov}\left( \Lambda_{i,j},\Lambda_{i,j}  \right)
\nonumber\\
&\leq \mathcal O \bigg(    \frac{d^4}{s^4}d^2B\wp + \frac{d^3}{s^3}d^2B\wp + \frac{d}{s}d^2B\wp  
                         + \frac{d^2}{s^2}d^2B\wp + d^2B\wp  + \frac{(d+s)^4}{s^4}dB + \frac{d(d+s)}{s^2}dB\sqrt{d\wp} 
                         + \frac{d^2(d+s)}{s^3}B\sqrt{d\wp} 
\nonumber\\&\qquad\quad  + \frac{(d+s)}{s}dB\sqrt{d\wp}  
                         + \frac{(d+s)^3}{s^3}B\sqrt{d\wp} + \frac{d(d+s)^2}{s^3}dB
                         + \frac{(d+s)^2}{s^2}dB\sqrt{d\wp} + \frac{d^2(d+s)^2}{s^4}dB
\bigg) 
\nonumber\\
&= \mathcal O  \left[  \left( \frac{d^4}{s^4} +\frac{d^3}{s^3} +\frac{d^2}{s^2} +\frac{d}{s} + 1 \right) d^2B \wp \right] ,
\end{align} 
which confirms \lref{lem:CalCov4}. 
\end{proof}

\section{Proof of \tref{thm:jointUpperBound}} \label{sec:ProofThm1}
To prove \tref{thm:jointUpperBound}, it suffices to prove the following theorem, 
which is a stronger version of \tref{thm:jointUpperBound}.  

\begin{theorem}\label{thm:UpperBoundFormal}
Suppose $s\geq 1$ is the number of systems collectively measured in our learning phase; 
accuracy parameters $0<\delta,\epsilon<1$;  
$\rho_1,\dots, \rho_M$ is a collection of quantum states chosen from $\caD(\caH)$, 
with $\max_l\Tr(\rho_l^2)=\mathscr{P}$; 
and $O_1,\dots, O_M$ is a collection of observables chosen from $\obs(B)$, with $1\leq B\leq d$. 
To estimate all $M$ properties $\Tr\left( O_l \,\mathcal{U}(\rho_l)\right)$
within additive error $\epsilon$ and failure probability at most $\delta$,
the number of queries required by our protocol for CSEU is upper bounded by 
\begin{align}
\bigo{ \left( \frac{d\mathscr{P}+s\min\{1,B\mathscr{P}\}}{\epsilon^2} + \frac{\max\{d^2,s^2\}\sqrt{B\mathscr{P}}}{s\epsilon} \right) \log\left( \frac{M}{\delta}\right)}.
\end{align} 
\end{theorem}

\begin{proof}[Proof of \tref{thm:UpperBoundFormal}]
First, we consider the case of $M=1$, namely, the goal is to accurately predict 
$\Tr\left( O \,\mathcal{U}(\rho)\right)$ for an observable $O\in\obs(B)$ and a quantum state $\rho\in\caD(\caH)$ that has purity $\wp=\Tr(\rho^2)$. Suppose we want the probability in \eref{eq:ChebyY} to be at most $1/4$, i.e., 
\begin{align}\label{eq:Pr<1/4}
\operatorname{Pr}\left\{ \left| \hat{Z}_{(r)}(O,\rho) - \Tr\left( O\,\mathcal{U}(\rho)\right) \right| \geq \epsilon\right\} \leq \frac{1}{4}. 
\end{align} 
Thanks to \pref{prop:VarMain}, it is sufficient to have
\begin{align}
\frac{1}{q} \left( \frac{d \wp}{s} + \min\left\lbrace 1, B\wp\right\rbrace  \right)  \leq \epsilon^2  
\quad \text{and} \quad
\frac{1}{q^2 } \left( \frac{d^4}{s^4} + 1 \right) B \wp \leq \epsilon^2, 
\end{align} 
where we ignored constant coefficients. These two conditions  can be achieved by choosing  
\begin{align}\label{eq:M1q}
q=  C \bigg( \frac{1}{\epsilon^2}  \max \left\lbrace\frac{d\wp}{s},\min\left\lbrace 1, B\wp\right\rbrace\right\rbrace    
  + \frac{\sqrt{B\wp}}{\epsilon}  \max \left\lbrace\frac{d^2}{s^2},1\right\rbrace \bigg) 
\end{align} 
for some constant $C$.

Recall that our final estimate is the median value $\hat{E}(O,\rho)$ given in \eref{eq:definehatE}. 
Assume that the number of batches $R$ is an odd number, then $\hat{E}(O,\rho)$ is actually some $\hat{Z}_{(r)}(O,\rho)$. 
If $\hat{E}(O,\rho)$ is a bad estimate, i.e.,
$\big| \hat{E}(O,\rho)-\Tr\left( O\,\mathcal{U}(\rho)\right) \big|\geq \epsilon$, 
then at least half of the $R$ estimates
$\hat{Z}_{(r)}(O,\rho)$ are bad. 
Hence, when \eref{eq:Pr<1/4} holds for all $r=1,2,\dots,R$, 
the failure probability of $\hat{E}(O,\rho)$ can be bounded by 
\begin{align}\label{eq:TotalFailProb}
&\Pr\left\{  \left| \hat{E}(O,\rho)-\Tr\left( O\,\mathcal{U}(\rho)\right) \right| \geq \epsilon \right\}
\leq \Pr\left\{\#\left\{r:\left| \hat{Z}_{(r)}(O,\rho) - \Tr\left( O\,\mathcal{U}(\rho)\right) \right| \geq \epsilon\right\} \geq \frac{R}{2}\right\}
\nonumber\\
&\quad \stackrel{(a)}{\leq} \sum_{j=0}^{(R-1)/2} \binom{R}{j} \left(\frac{3}{4}\right)^j \left(\frac{1}{4}\right)^{R-j}
\stackrel{(b)}{\leq} \exp\left[-R \,D\left(\left.\frac{(R-1)/2}{R} \, \right\|\, 3/4\right)\right]
\stackrel{(c)}{\leq} \mathrm{e}^{-R \, D\left(1/2 \|3/4\right)},  
\end{align}
where $(a)$ holds because the $R$ estimates
$\hat{Z}_{(r)}(O,\rho)$ are independent; $D(x\|y):=x \ln \frac{x}{y}+(1-x) \ln \frac{1-x}{1-y}$ is the Kullback-Leibler divergence; $(b)$ follows from the Chernoff bound in \lref{lem:Chernoff} below; and $(c)$ holds because $D(x\|y)$ is strictly decreasing in $x$ when $0<x<y<1$. 
Therefore, to ensure that the failure probability in \eref{eq:TotalFailProb} is at most $\delta$, 
it suffices to choose 
\begin{align}\label{eq:R=logdelta}
R= \left\lceil \frac{\ln\delta^{-1}}{D\left(1/2 \|3/4\right)} \right\rceil  
= \mathcal O\left( \log\delta^{-1}\right) .  
\end{align}

By Eqs.~\eqref{eq:M1q} and \eqref{eq:R=logdelta}, 
the total number of queries used by our protocol is 
\begin{align} \label{eq:complexityM=1}
s \cdot q \cdot R = 
\bigo{ \left( \frac{d\wp+s\min\{1,B\wp\}}{\epsilon^2} + \frac{\max\{d^2,s^2\}\sqrt{B\wp}}{s\epsilon} \right) \log\left( \delta^{-1}\right)  }, 
\end{align}
which confirms \tref{thm:UpperBoundFormal} in the special case $M=1$. 

To complete the proof, now we consider the general case of $M\geq 1$. 
According to the union bound, if our protocol can accurately predict one expectation value $\Tr\left( O \,\mathcal{U}(\rho)\right)$ with failure probability at most $\tilde\delta$, then it can also accurately predict $M$ expectation values 
$\Tr\left( O_l \,\mathcal{U}(\rho_l)\right)$ for $l=1,2,\dots,M$ simultaneously, 
with failure probability at most $M\tilde\delta$. 
This fact and the query number in \eref{eq:complexityM=1} together confirm \tref{thm:UpperBoundFormal} in the general case $M\geq 1$. 
\end{proof}

\end{widetext}
	
\begin{lemma}[Chernoff bound]\label{lem:Chernoff}
Suppose $0<p<1$, integers $k,z\geq 0$, and $k \leq p z$. Then we have 
\begin{align}
\sum_{j=0}^k\binom{z}{j} p^j(1-p)^{z-j} 
\leq 
\mathrm{e}^{-z D\left(\left. \!\frac{k}{z} \right\| p\right)}. 
\end{align}
\end{lemma}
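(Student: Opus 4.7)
The plan is to prove the one-sided Chernoff tail bound for a Binomial random variable in its tight form involving the Kullback--Leibler divergence, via the classical exponential Markov argument. The very first step is to interpret the sum as a probability: let $X_1,\ldots,X_z$ be independent Bernoulli$(p)$ variables and $X=\sum_i X_i$, so that $\sum_{j=0}^{k}\binom{z}{j}p^{j}(1-p)^{z-j}=\Pr[X\leq k]$. Because the hypothesis $k\leq pz$ puts $k$ on the lower-tail side of the mean $\mathbb{E}[X]=pz$, I bound the lower tail.

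Next I would apply the standard Chernoff trick: for any $t>0$,
\begin{align}
\Pr[X\leq k]=\Pr\!\big[e^{-tX}\geq e^{-tk}\big]\leq e^{tk}\,\mathbb{E}[e^{-tX}]=e^{tk}\bigl[(1-p)+pe^{-t}\bigr]^{z},
\end{align}
where the last equality uses independence and the Bernoulli moment generating function $\mathbb{E}[e^{-tX_i}]=(1-p)+pe^{-t}$. This gives a one-parameter family of bounds that I then optimize over $t>0$.

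The optimization is a routine calculus step: setting $u=e^{-t}\in(0,1)$ and differentiating $-k\log u+z\log[(1-p)+pu]$ in $u$ yields the minimizer
\begin{align}
u^{*}=\frac{(k/z)(1-p)}{p\,(1-k/z)},
\end{align}
which indeed lies in $(0,1]$ precisely because $k/z\leq p$ by the lemma's hypothesis. Substituting $u^{*}$ back into $e^{tk}[(1-p)+pe^{-t}]^{z}$ and writing $q=k/z$, the expression $(1-p)+pu^{*}$ collapses to $(1-p)/(1-q)$, and after grouping the resulting logarithms one obtains the exponent
\begin{align}
-z\!\left[q\log\frac{q}{p}+(1-q)\log\frac{1-q}{1-p}\right]=-z\,D\!\left(\tfrac{k}{z}\,\big\|\,p\right),
\end{align}
which is exactly the claimed bound. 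The only ``obstacle'' is the bookkeeping in this algebraic simplification and verifying that $u^{*}\in(0,1]$, both of which follow directly from the hypothesis $k\leq pz$; there is no analytic subtlety beyond that, so the proof is short and self-contained.
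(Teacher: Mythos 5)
Your proof is correct. The paper states this lemma as a standard fact without providing a proof, so there is nothing to compare against; your argument is the canonical exponential-moment (Chernoff) derivation of the tight KL-divergence form of the lower-tail bound, and all the steps check out: the optimizer $u^{*}=\frac{(k/z)(1-p)}{p(1-k/z)}$ is correct, the condition $u^{*}\le 1$ is exactly $k\le pz$, and the algebra collapsing the optimized bound to $\mathrm{e}^{-zD(k/z\|p)}$ is right. The only cosmetic point is the degenerate boundary $k=0$, where $u^{*}=0$ lies outside $(0,1]$ and one should instead take the limit $t\to\infty$ (equivalently note $\Pr[X\le 0]=(1-p)^{z}=\mathrm{e}^{-zD(0\|p)}$ directly, with the convention $0\log 0=0$); this does not affect the validity of the argument.
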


\section{Proof of \lref{lem:AverageGoal}} \label{sec:AverageGoal}
Note that $\Pr\left\{ \Xi(\caT,O)\geq \epsilon \right\} \leq\delta$ is satisfied if the following relations hold:
\begin{align}\label{eq:AveGoalproof1}
{\E} [ \Xi(\caT,O)]\leq \frac{\epsilon}{2}, 
\ 
\operatorname{Pr}\left\{ \left|\Xi(\caT,O) - {\E} \left[ \Xi(\caT,O)\right] \right| \geq \frac{\epsilon}{2} \right\}  \leq \delta.  
\end{align}
Here and afterwards,  the expectation $\E$ is over the randomness in our protocol.

First, we have
\begin{align}\label{eq:Xileqsqrt}
\Xi(\caT,O)
&:= \underset{\rho \sim \caT}{\rm{Ave}} \, \left|E(O,\rho)-\Tr\left( O\,\mathcal{U}(\rho)\right) \right|
\nonumber\\
&\leq \sqrt{ \underset{\rho \sim \caT}{\rm{Ave}} \left( \left|\hat{E}(O,\rho)-\Tr\left( O\,\mathcal{U}(\rho)\right) \right|^2 \right) }. 
\end{align}
It follows that 
\begin{align}\label{eq:AveGoalproof2}
{\E} \left[ \Xi(\caT,O)\right] 
&\leq {\E} 
\sqrt{ \underset{\rho \sim \caT}{\rm{Ave}} \left( \left|\hat{E}(O,\rho)-\Tr\left( O\,\mathcal{U}(\rho)\right) \right|^2 \right) }
\nonumber\\
&\leq \sqrt{{\E} \left[ \underset{\rho \sim \caT}{\rm{Ave}} \left( \left|\hat{E}(O,\rho)-\Tr\left( O\,\mathcal{U}(\rho)\right) \right|^2 \right) \right]} .
\end{align}
Second, Chebyshev’s inequality implies that 
\begin{align}
&\operatorname{Pr}\left\{ \left|\Xi(\caT,O) - {\E} \left[ \Xi(\caT,O)\right] \right| \geq \frac{\epsilon}{2} \right\} 
\nonumber\\
&\leq 
\frac{4}{\epsilon^2} {\Var} \left[ \Xi(\caT,O)\right]
\leq 
\frac{4}{\epsilon^2} {\E} \left[ \Xi(\caT,O)^2 \right] 
\nonumber\\
&\leq 
\frac{4}{\epsilon^2} {\E} \left[ \underset{\rho \sim \caT}{\rm{Ave}} \left( \left|\hat{E}(O,\rho)-\Tr\left( O\,\mathcal{U}(\rho)\right) \right|^2 \right) \right], \label{eq:AveGoalproof3}
\end{align}
where the last inequality follows from \eref{eq:Xileqsqrt} again. 

According to equations \eqref{eq:AveGoalproof1}, \eqref{eq:AveGoalproof2}, and \eqref{eq:AveGoalproof3}, 
the relation $\Pr\left\{ \Xi(\caT,O)\geq \epsilon \right\} \leq\delta$ is satisfied if  
\begin{align}\label{eq:AveGoalproof4}
{\E} \left[ \underset{\rho \sim \caT}{\rm{Ave}} \left(  \left|\hat{E}(O,\rho)-\Tr\left( O\,\mathcal{U}(\rho)\right) \right|^2 \right) \right]
\leq \frac{\delta\epsilon^2}{4}. 
\end{align}
In addition, we have 
\begin{align}\label{eq:AveGoalproof5}
&\,{\E} \left[ \underset{\rho \sim \caT}{\rm{Ave}} \left(  \left|\hat{E}(O,\rho)-\Tr\left( O\,\mathcal{U}(\rho)\right) \right|^2 \right) \right]
\nonumber\\
&=
\underset{\rho \sim \caT}{\rm{Ave}} \left[ {\E} \left(  \left|\hat{E}(O,\rho)-\Tr\left( O\,\mathcal{U}(\rho)\right) \right|^2 \right) \right] 
\nonumber\\
&=
\underset{\rho \sim \caT}{\rm{Ave}} \Big[ {\E} 
\Big( \left|\hat{E}(O,\rho)\right|^2-\hat{E}(O,\rho)^*\Tr\left( O\,\mathcal{U}(\rho)\right)
\nonumber\\
&\qquad\quad -\hat{E}(O,\rho)\Tr\left( O\,\mathcal{U}(\rho)\right) + [\Tr\left( O\,\mathcal{U}(\rho)\right)]^2 \Big) \Big] 
\nonumber\\
&\!\stackrel{(a)}{=} \underset{\rho \sim \caT}{\rm{Ave}} 
\left[ {\E} \left( \left|\hat{E}(O,\rho)\right|^2 \right) - \left| {\E} \left[\hat{E}(O,\rho)\right] \right|^2 \right] 
\nonumber\\
&=\underset{\rho \sim \caT}{\rm{Ave}} \; {\Var} \left[ \hat{E}(O,\rho) \right] ,  
\end{align}
where $(a)$ holds because ${\E}[\hat{E}(O,\rho)]=\Tr\left( O\,\mathcal{U}(\rho)\right)$.
Equations \eqref{eq:AveGoalproof4} and \eqref{eq:AveGoalproof5} together confirm \lref{lem:AverageGoal}.

\section{Additional results on Hamiltonian learning}\label{sec:appHamiltonian}

\subsection{Performance of previous protocols for learning an arbitrary Hamiltonian}

To illustrate the advantage of our Hamiltonian learning protocol, here we provide more details 
on the performance of previous protocols for learning an $n$-qubit unknown Hamiltonian $H$, including those in Refs.~\cite{caro2022learning,ma2024learning,zhao2024learning,hu2025ansatz,castaneda2023hamiltonian,stilck2024efficient,gu2024practical}.
These protocols are highly efficient when $H$ has a  sparse or local structure.  
However, we show that none of them can learn an arbitrarily complex $H$ with respect to the normalized Frobenius norm (NFN) using  only 
$\tilde{\mathcal{O}}(d^3)$ Hamiltonian queries or total evolution time, even with access to quantum memory.  
Actually, most previous works did not consider the problem of learning fully arbitrary Hamiltonians, and 
only a few existing protocols can be modified to learn a completely unknown Hamiltonian under our framework.

The Hamiltonian $H$ of interest can be expressed as
\begin{align}
	H=\sum_{\bfk\in\{0,1,2,3\}^n} \mu(\bfk) \sigma_\bfk, 
\end{align}  
where $\sigma_\bfk$ are Pauli operators and $\mu(\bfk)\in \mathbb{R}$ are Pauli coefficients. 
$H$ is called a \emph{$k$-body Hamiltonian with $M$ terms}
if $\mu(\bfk)\ne0$ only for $M$ Pauli terms $\sigma_\bfk$ with ${\rm wt}(\sigma_\bfk)\leq k$,  
where the weight ${\rm wt}(\sigma_\bfk)$ denotes the number of qubits on which $\sigma_\bfk$ acts nontrivially. 
Most previous Hamiltonian learning protocols focus on estimating all non-zero Pauli coefficients of $H$. 
That is, they aim to output accurate estimates $\hat{\mu}(\bfk)$ for $\mu(\bfk)$ that satisfy  
$\|\hat{\boldsymbol{\mu}}-\boldsymbol{\mu}\|_\infty \leq \epsilon$ with high success probability. Here 
$\boldsymbol{\mu}:=(\mu(\bfk))_{\bfk\in \{0,1,2,3\}^n}$ and $\hat{\boldsymbol{\mu}}:=(\hat{\mu}(\bfk))_{\bfk\in \{0,1,2,3\}^n}$
are the vectors of true coefficients and estimated  coefficients, respectively, and
\begin{align}
	\|\hat{\boldsymbol{\mu}}-\boldsymbol{\mu}\|_p 
	= \left(\sum_{\bfk\in\{0,1,2,3\}^n} |\hat{\mu}(\bfk)-\mu(\bfk)|^{p} \right)^{1/p}
\end{align} 
denotes the $\ell^p$ norm of $\hat{\boldsymbol{\mu}}-\boldsymbol{\mu}$. 
Let $\hat{H}=\sum_{\bfk} \hat\mu(\bfk) \sigma_\bfk$ be the learned description of $H$, 
then we have 
\begin{align}
	\frac{1}{\sqrt{d}}\|\hat{H}-H\|_2
	=\|\hat{\boldsymbol{\mu}}-\boldsymbol{\mu}\|_2 
	\leq  d\|\hat{\boldsymbol{\mu}}-\boldsymbol{\mu}\|_\infty,   
\end{align} 
where the inequality is saturated when $|\hat{\mu}(\bfk)-\mu(\bfk)|$ are equal for all $\bfk\in \{0,1,2,3\}^n$. 
Therefore, estimating all Pauli coefficients of $H$ within additive error $\epsilon$
only implies learning $H$ within error $d\epsilon$ in NFN. 

Reference~\cite{caro2022learning} proposed an efficient protocol for learning the Pauli coefficients of an arbitrary traceless Hamiltonian $H$. 
It uses parallel queries to the time evolution under $H$ in experiments, and thus needs quantum memory. 
To achieve $\|\hat{\boldsymbol{\mu}}-\boldsymbol{\mu}\|_\infty \leq \epsilon$ with high probability, this protocol requires $\tilde{\mathcal{O}}(n \|H\|^4 \epsilon^{-4})$ queries to $H$ and a total evolution time of $\tilde{\mathcal{O}}(n \|H\|^3 \epsilon^{-4})$. 
A similar performance is achieved by the protocol in Ref.~\cite{zhao2024learning}, which uses sequential queries to $H$. 
According to the above discussions, if the goal is to learn $H$ within error $\epsilon$ in NFN, then   
the query number and the evolution time required by these protocols are both $\tilde{\mathcal{O}}(d^4 \epsilon^{-4})$, assuming that $\|H\|=\mathcal{O}(\poly(n))$. 

A recent work Ref.~\cite{ma2024learning} proposed a protocol 
for learning a $k$-body Hamiltonian $H$ with $M$ terms, which is robust to state preparation and measurement (SPAM) errors. 
It employs sequential queries to $H$ in each experiment and achieves the following performance: 
By using $N=\tilde{\mathcal{O}}(3^k M)$ independent experiments and a total evolution 
time $T=\tilde{\mathcal{O}}(9^k M^{1/p+1/2}\epsilon^{-1})$, it can output estimate $\hat{\mu}(\bfk)$ for every $\mu(\bfk)$ such that $\|\hat{\boldsymbol{\mu}}-\boldsymbol{\mu}\|_p\leq \epsilon$ with high probability, where $p=1$ or 2.  
This protocol is highly efficient when $k$ is a constant. 
However, if we do not make prior assumptions about the structure of $H$, then $k$ and $M$ can take values up to $k=n$ and $M=d^2$.
For this worst-case scenario, learning $H$ within error $\epsilon$ in NFN 
(that is, $\|\hat{\boldsymbol{\mu}}-\boldsymbol{\mu}\|_2\leq \epsilon$) requires a
$T=\tilde{\mathcal{O}}(9^n d^2\epsilon^{-1})\approx\tilde{\mathcal{O}}(d^{5.170}\epsilon^{-1})$ total evolution time
and $N=\tilde{\mathcal{O}}(3^n d^2)\approx\tilde{\mathcal{O}}(d^{3.585})$ experiments.
Another work \cite{hu2025ansatz} showed that one can achieve 
$\|\hat{\boldsymbol{\mu}}-\boldsymbol{\mu}\|_\infty\leq \epsilon$ from 
$T=\tilde{\mathcal{O}}(M^{2}\epsilon^{-1})$ total evolution time. 
From the above discussions, this result implies a  $T=\tilde{\mathcal{O}}(d^5\epsilon^{-1})$ time complexity 
for learning an arbitrary Hamiltonian within error $\epsilon$ in NFN. 

Reference~\cite{castaneda2023hamiltonian} proposed an efficient protocol for learning an unknown 
Hamiltonian $H$ with $M$ Pauli terms.
In order to learn $H$ within error $\epsilon$ in NFN, it needs 
$\tilde{\mathcal{O}}(M\|H\|^2\epsilon^{-2})$ Hamiltonian queries,  which reduces to $\tilde{\mathcal{O}}(d^2\epsilon^{-2})$
if $H$ is an arbitrary unknown Hamiltonian satisfying $\|H\|=\mathcal{O}(\poly(n))$. 
However, this protocol requires not only the time evolution unitary $\rme^{-\rmi Ht}$, but also the backwards evolution unitary $\rme^{\rmi Ht}$, which is physically unrealizable.

There are also Hamiltonian learning protocols that do not use quantum memory and look beyond Hamiltonians with sparse interactions \cite{stilck2024efficient,gu2024practical}. 
To learn all coefficients of a Hamiltonian $H$ within additive error $\epsilon$, 
the protocol of Ref.~\cite{gu2024practical} requires $\tilde{\mathcal{O}}(M^4\epsilon^{-2})$ queries, 
and thus is unsatisfactory when $H$ contains exponentially many terms; 
the protocol of Ref.~\cite{stilck2024efficient} requires $\tilde{\mathcal{O}}(9^k\epsilon^{-2})$ queries,
and thus is unsatisfactory when $H$ contains high-weight terms. 
From the previous discussions, when the goal is to learn an arbitrary Hamiltonian within error $\epsilon$ in NFN, the query costs of these protocols are more than $\tilde{\mathcal{O}}(d^5\epsilon^{-2})$.

Compared with these existing protocols, our Hamiltonian learning protocol demonstrates superior performance 
by simultaneously (i) operating without the requirement of quantum memory, and 
(ii) achieving a smaller query number and total evolution time (see \tref{thm:Hamiltonian} in the main text), 
making it favorable for real-world implementation.

\subsection{Details of our Hamiltonian learning protocol and its performance}
Given time evolution access to an arbitrary $n$-qubit traceless Hamiltonian $H$ with $\|H\|=\mathcal{O}(\poly(n))$. 
Our goal is to learn a classical description $\hat{H}\in \mathcal{L}_{\rm H}(\mathcal{H})$ of the Hamiltonian, 
such that the following condition holds: 
\begin{equation}\label{eq:HamiltonianGoalapp}
	\Pr\left\{ \frac{1}{\sqrt{2^n}}\left\|\hat{H}-H\right\|_2\leq\epsilon\right\} \geq 1-\delta 
\end{equation}
for accuracy $0< \delta,\epsilon<1/3$. 
Next, we present our learning protocol for this task in Appendix~\ref{sec:Hprotocol} and 
show its performance in Appendix~\ref{sec:Hperformance}.

\subsubsection{Hamiltonian learning protocol}\label{sec:Hprotocol}

We follow the Hamiltonian learning protocol sketched in Sec.~\ref{sec:HamiltProtocol} of the main text 
with the following parameter choices:
Our evolution time in each experiment is set to be $t_j = \tfrac{T}{2}(1+z_j)$ for $j=1,2,\dots, L$, where 
\begin{align}
&T= \frac{1}{\norm{H}}, \qquad
z_j = -\cos\left(\frac{2j-1}{2L} \pi\right), 
\nonumber\\
&L=\left\lceil 2\log\left(\frac{8\norm{H}}{\sqrt{2\pi}\ln(2)\varepsilon}\right)\right\rceil, \qquad
\varepsilon=\frac{\epsilon}{d}.  
\end{align}

In the first step of our protocol, for each time $t_j$ ($1\leq j\leq L$), we treat $\caU_{t_j}$\! as an unknown unitary channel, and implement our CSEU protocol in Sec.~\ref{sec:OurProtocol} to estimate the expectation values  $\<\sigma_{\bfk'}\>_{\rho_{\bfk}(t_j)}=\Tr\left[\sigma_{\bfk'}\,\caU_{t_j}\!(\rho_{\bfk})\right]$, 
where the input state  $\rho_{\bfk}$ and the measured observable $\sigma_{\bfk'}$ are chosen as follows 
(see also Ref.~\cite[Lemma~6.1]{caro2022learning}):  
Let $1\leq i\leq n$ be such that $\bfk_i\ne0$; let $\bfk'_i\in\{1,2,3\} \backslash \{\bfk_i\}$ and set $\bfk'_j=0$ for all $j\ne i$; let 
\begin{equation}
\rho_{\bfk}:=\frac{1}{d}\left( I+\frac{\rmi}{2} [\sigma_{\bfk},\sigma_{\bfk'}]\right) .
\end{equation}
We denote by $2\hat{\alpha}_{t_j}^{(\bfk)}\in \mathbb{R}$ our estimate of $\Tr\left[\sigma_{\bfk'}\,\caU_{t_j}(\rho_{\bfk})\right]$, which should achieve the following accuracy: 
\begin{align}\label{eq:hatalphaaccurate1}
\Pr\Big\{\,&\abs{2\hat{\alpha}_{t_j}^{(\bfk)}-\Tr\left[\sigma_{\bfk'}\,\caU_{t_j}(\rho_{\bfk})\right]}\leq\tilde{\varepsilon}, 
\nonumber\\
 &\qquad\quad  \forall\, \bfk\in\{0,1,2,3\}^n \backslash \{0^n\} \Big\}
\geq 1-\frac{\delta}{L}, 
\end{align}
where
\begin{align}
	\tilde{\varepsilon}:=\frac{3T\varepsilon}{4(L-1)L(2L-1)}. 
\end{align}
Let $1\leq s\leq d$ be the number of systems collectively measured by our CSEU protocol in each experiment. 
As explained in Sec.~\ref{sec:OurProtocol}, if $s>1$, then quantum memory is required. 
If instead $s=1$, then the CSEU protocol does not need quantum memory and ancillary qubits;  
in each round of experiments, 
it requires only the preparation of a computational-basis state $|0\>^{\otimes n}$, the application of a random Clifford unitary to a single system, a measurement in the computational basis, 
and one application of $\caU_{t_j}$\! [\,i.e., one real $\mathcal{O}(1/\|H\|)$-time evolution under $H$].

In the second step of our Hamiltonian learning protocol, we perform Chebyshev polynomial interpolation \cite{gu2024practical,caro2022learning} to estimate the Pauli coefficients $\mu(\bfk)$ by using $\hat{\alpha}_{t_j}^{(\bfk)}$ learned in the first step. 
The Chebyshev coefficients are estimated as 
\begin{equation}
\begin{aligned}
\hat{b}_{0}^{(\bfk)} &= \frac{1}{L} \sum_{j=1}^L \hat{\alpha}_{t_j}^{(\bfk)}, 
\\
\hat{b}_{l}^{(\bfk)} &= \frac{2}{L} \sum_{j=1}^L \hat{\alpha}_{t_j}^{(\bfk)} \, T_l (z_j), 
\quad 
1\leq l\leq L-1, 
\end{aligned} 
\end{equation}
where $T_l$ is the $l$th Chebyshev polynomial.
They in turn give rise to the first-order Taylor coefficients \cite{gu2024practical,caro2022learning}
\begin{equation}\label{eq:hatmuAdefine}
	\hat{\mu}(\bfk) := -\frac{2}{T} \sum_{l=0}^{L-1} (-1)^l \, \hat{b}_{l}^{(\bfk)}\, l^2,
	\quad 
	\bfk\in\{0,1,2,3\}^n \backslash \{0^n\}.  
\end{equation} 
Here $\hat{\mu}(\bfk)$ serves as our estimate of $\mu(\bfk)$. 
Finally, we use these estimated values to construct our learned description:
\begin{equation}\label{eq:learnedHapp}
	\hat{H}=\sum_{\bfk\in\{0,1,2,3\}^n \backslash \{0^n\}} \hat\mu(\bfk) \sigma_\bfk . 
\end{equation}

\subsubsection{Performance of the protocol}\label{sec:Hperformance}

The performance of our Hamiltonian learning protocol in the previous subsection
is stated in the following theorem, 
which is a stronger version of \tref{thm:Hamiltonian} in the main text.

\begin{theorem}\label{thm:HamiltonianMore}
Suppose the unknown $n$-qubit Hamiltonian $H$ to be learned is traceless and satisfies $\|H\|=\mathcal{O}(\poly(n))$; 
parameters $0< \delta,\epsilon<1/3$;
and $s$ is the number of systems collectively measured  in each experiment.  
With the accuracy guaranteed in \eref{eq:hatalphaaccurate1}, 
the learned description $\hat{H}$ in \eref{eq:learnedHapp} satisfies \eref{eq:HamiltonianGoalapp}. 
When $s=1$, our Hamiltonian learning protocol is ancilla-free and dose not use quantum memory;   
its query number to $H$ and total evolution time both scale as
\begin{align}\label{eq:s1HnumberApp}
	\tilde{ \mathcal{O}} 
	\left( \bigg( \frac{d^2}{\epsilon^2} + \frac{d^3}{\epsilon} \bigg) \log(\delta^{-1})\right).                       
\end{align} 
When $s=\Theta(\sqrt{d})$, our Hamiltonian learning protocol uses quantum memory; its query number to $H$ and total evolution time both scale as
\begin{align}\label{eq:s>1Hnumber}
	\tilde{ \mathcal{O}} \left( \frac{d^{2.5}}{\epsilon^2} \log(\delta^{-1}) \right).                     
\end{align}  
\end{theorem}

\begin{proof}[Proof of \tref{thm:HamiltonianMore}]
We begin by proving that $\hat{H}$ given in \eref{eq:learnedHapp} indeed satisfies \eref{eq:HamiltonianGoalapp}. 
First, by the union bound, the accuracy guaranteed in \eref{eq:hatalphaaccurate1} implies that 
the following condition holds with probability at least $1-\delta$:
\begin{align}\label{eq:hatalphaaccurate2}
&\abs{2\hat{\alpha}_{t_j}^{(\bfk)}-\Tr\left[\sigma_{\bfk'}\,\caU_{t_j}(\rho_{\bfk})\right]}\leq\tilde{\varepsilon} 
\nonumber\\  
&\qquad \forall\, \bfk\in\{0,1,2,3\}^n \backslash \{0^n\} \ \ \text{and} \ \ 1\leq j\leq L.  
\end{align}
Second, according to Ref.~\cite[Appendix~D]{caro2022learning}, 
if \eref{eq:hatalphaaccurate2} holds with $\tilde{\varepsilon}=\frac{3T\varepsilon}{4(L-1)L(2L-1)}$, then 
$\hat{\mu}(\bfk)$ constructed in \eref{eq:hatmuAdefine} is an $\varepsilon$-accurate estimate of $\mu(\bfk)$
for all $\bfk\ne 0^n$, i.e., 
\begin{equation}\label{eq:hatmuAaccurate}
	\abs{\hat{\mu}(\bfk)-\mu(\bfk)}\leq \varepsilon
	\quad \forall\, \bfk\in\{0,1,2,3\}^n \backslash \{0^n\} . 
\end{equation}
Third, as explained in Sec.~\ref{sec:HamiltProtocol}, if \eref{eq:hatmuAaccurate} holds with $\varepsilon=\epsilon/d$, then $\hat{H}$ in \eref{eq:learnedHapp} satisfies 
\begin{equation}\label{eq:H-hatH}
\frac{1}{\sqrt{d}}\left\|\hat{H}-H\right\|_2\leq\epsilon. 
\end{equation}
In conclusion, our protocol learns the unknown Hamiltonian within error $\epsilon$ in NFN, with success probability at least $1-\delta$.

Next, we calculate the number of queries to $H$ and total evolution time required by our protocol. 
Notice that the Hamiltonian $H$ is queried only in the first step of our protocol, 
in which we use the CSEU method to estimate linear properties of the time-evolution channel $\caU_{t_j}$. 
Let $1\leq s\leq d$ be the number of systems collectively measured by our CSEU protocol in each experiment. 
According to \tref{thm:UpperBoundFormal} in Appendix~\ref{sec:ProofThm1}, for each $1\leq j\leq L$, 
in order to guarantee the relation \eqref{eq:hatalphaaccurate1}, our CSEU protocol  
needs to query $\caU_{t_j}$\! (\,i.e., the Hamiltonian $H$) the following number of times: 
\begin{align}\label{eq:N1H2} 
N_1&= \mathcal{O}\left[ \left( \frac{s}{\tilde{\varepsilon}^2} + \frac{d^2}{s\tilde{\varepsilon} } \right) \log\left( \frac{4^n-1}{\delta/L}\right) \right] 
\nonumber\\
&= \tilde{ \mathcal{O}} \left[ \left( \frac{s d^2}{\epsilon^2} + \frac{d^3}{s \epsilon} \right) \log\left(\delta^{-1}\right)  \right],   
\end{align} 
where we have used the relations $\Tr(\rho_{\bfk}^2)=\bigo{1/d}$ and $\Tr(\sigma_{\bfk'}^2)=d$, 
and the assumption $\|H\|=\mathcal{O}(\poly(n))$. 
In the following, we consider two cases depending on the value of $s$.

\begin{itemize}
\item[1.] First, when $s=1$, our protocol does not use quantum memory and ancillary qubits.
In this case, the total number of Hamiltonian queries is given by 
\begin{align}\label{eq:Ntots1} 
N_{\rm total}
= L N_1  = \tilde{ \mathcal{O}} \left[ \left( \frac{d^2}{\epsilon^2} + \frac{d^3}{\epsilon} \right) \log\left(\delta^{-1}\right)  \right],     
\end{align} 
and the total evolution time under $H$ reads 
\begin{align}\label{eq:Ttots1} 
T_{\rm total}
&= N_1 \sum_{j=1}^L t_j
\leq N_1 L T
\nonumber\\
&=\tilde{ \mathcal{O}} \left[ \left( \frac{d^2}{\epsilon^2} + \frac{d^3}{\epsilon} \right) \log\left(\delta^{-1}\right)  \right].    
\end{align} 
Equations \eqref{eq:Ntots1} and \eqref{eq:Ttots1} confirm \eref{eq:s1HnumberApp} in \tref{thm:HamiltonianMore}.

\item[2.] Second, when $s=\Theta(\sqrt{d})$, our protocol uses quantum memory. 
In this case, the total number of queries to $H$ is 
\begin{align}\label{eq:Ntots>1} 
N_{\rm total}
&= L N_1  
\nonumber\\
&= \tilde{ \mathcal{O}} \left[ \left( \frac{d^2 \, \Theta(\sqrt{d})}{\epsilon^2} + \frac{d^3}{\Theta(\sqrt{d})\, \epsilon} \right) \log\left(\delta^{-1}\right)  \right]  
\nonumber\\
&= \tilde{ \mathcal{O}} \left( \frac{d^{2.5}}{\epsilon^2} \log(\delta^{-1}) \right), 
\end{align} 
and the total evolution time reads 
\begin{align}\label{eq:Ttots>1} 
T_{\rm total}
= N_1 \sum_{j=1}^L t_j
\leq N_1 L T
= \tilde{ \mathcal{O}} \left( \frac{d^{2.5}}{\epsilon^2} \log(\delta^{-1}) \right). 
\end{align} 
Equations \eqref{eq:Ntots>1} and \eqref{eq:Ttots>1} confirm \eref{eq:s>1Hnumber} in \tref{thm:HamiltonianMore}. 
\end{itemize}
\end{proof}

\section{Proof of \eref{eq:VarOTOC}}\label{sec:proofVarOTOC}

\subsection{Main proof}
When $\rho$ is the infinite temperature thermal state $I/d$, and $W$, $V$ are unitary, traceless, and Hermitian operators, 
the estimator $\hat{C}(U)$ defined in \eref{eq:hatCU} can be rewritten as 
\begin{align}
\hat{C}(U)
&:= 
\frac{1}{m(m-1)d}\sum_{i\ne j} \hat{D}(i,j), 
\nonumber\\
\hat{D}(i,j) 
&:=\Tr \left[ \left( \hat{X}_i\otimes\hat{X}_j\right)  
                          \left( W\otimes V^{\top}\otimes W\otimes V^{\top}\right)  T_{(1,3)}\right] . 
\end{align}
So the variance of $\hat{C}(U)$ can be expanded as 
\begin{align}
\Var \left[ \hat{C}(U) \right] 
&= \frac{\sum_{i\ne j}\sum_{k\ne\ell} \Cov \left[ \hat{D}(i,j),\hat{D}(k,\ell) \right]}{m^2(m-1)^2 d^2}  .  \label{eq:VarOTOCdef}
\end{align} 
We need to bound all of these covariance terms to bound the variance. 
When all indices $i,j,k,\ell$ are distinct, the covariance is 0 by independence.
For the other cases, the covariance can be bounded by 
\begin{align}
&\Cov \left[ \hat{D}(i,j),\hat{D}(k,\ell) \right]
\nonumber\\
&=
\E\left[ \hat{D}(i,j)\hat{D}(k,\ell)^*\right] -\E\left[\hat{D}(i,j)\right] \E\left[\hat{D}(k,\ell)^*\right] 
\nonumber\\
&= 
\E\left[ \hat{D}(i,j)\hat{D}(k,\ell)\right] -\left| \E\big[\hat{D}(i,j)\big]\right| ^2  
\nonumber\\
&\leq \E\left[ \hat{D}(i,j)\hat{D}(k,\ell)\right].  \label{eq:UBotocCov} 
\end{align}
Here the second equality holds because $\hat{D}(k,\ell)$ is real. To see this, note that 
\begin{align}
\hat{D}(k,\ell)^*
&= \Tr \left[  T_{(1,3)}^\dag \left( W\otimes V^{\top}\otimes W\otimes V^{\top}\right) ^\dag \big( \hat{X}_k\otimes\hat{X}_\ell\big) ^\dag \right] 
\nonumber\\
&= \Tr \left[ \left( \hat{X}_k\otimes\hat{X}_\ell\right)  T_{(1,3)} \left( W\otimes V^{\top}\otimes W\otimes V^{\top}\right)  \right] 
\nonumber\\
&= \hat{D}(k,\ell) ,  
\end{align}
where the second equality holds because $\hat{X}_k$, $\hat{X}_\ell$, $W$, $V$, and $T_{(1,3)}$ are all Hermitian operators;  
and the last equality holds because $W\otimes V^{\top}\otimes W\otimes V^{\top}$ and $T_{(1,3)}$ commute.

The following lemma summarizes the upper bounds for $\E\big[\hat{D}(i,j)\hat{D}(k,\ell)\big]$ in different cases. 

\begin{widetext}
\begin{lemma} 
\label{lem:allOTOCcovs}
For each combination of $i,j,k,\ell$ with $i \ne j$ and $k \ne \ell$, 
the term $\E\big[\hat{D}(i,j)\hat{D}(k,\ell)\big]$ is  
\begin{enumerate}
\item $\mathcal O \Big( \big( \frac{d^2}{s^2}  +\frac{d}{s} + 1 \big) d^2 \Big)$
when exactly one index matches $(|\{ i, j \} \cap \{ k, \ell \}| = 1)$. 

\item $\mathcal O \Big( \big( \frac{d^2}{s^2}  +\frac{d}{s} + 1 \big)^2 d^4 \Big)$
when both indices match $(|\{ i, j \} \cap \{ k, \ell \}| = 2)$. 
\end{enumerate}
\end{lemma}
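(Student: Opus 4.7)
The plan is to adapt the machinery developed for Lemmas \ref{lem:CalCov3} and \ref{lem:CalCov4} to the OTOC setting, exploiting the simplifications afforded by the specific structure $\rho = I/d$ and $W, V$ being unitary, Hermitian, and traceless. First I would rewrite the product $\hat D(i,j)\hat D(k,\ell)$ as a single trace on $\caH^{\otimes 8}$,
\begin{align}
\hat D(i,j)\hat D(k,\ell) = \Tr\!\Big[ (\hat X_i \otimes \hat X_j \otimes \hat X_k \otimes \hat X_\ell) \, \mathcal{M}\Big],
\end{align}
where $\mathcal{M}$ is a fixed operator built from $W$, $V$, and the swap $T_{(1,3)}$ acting on the appropriate subsystems. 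Taking expectation and invoking the independence of the snapshots $\hat X_i$ (which have $\E[\hat X]=\Upsilon_\caU$) reduces the problem to evaluating $\Upsilon_\caU$ on the positions with unmatched indices and $\E[\hat X \otimes \hat X]$ on the positions with matched indices.

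For the one-index-match case (say $j=\ell$ with $i,j,k$ distinct), the expectation factors as $\Upsilon_\caU \otimes \E[\hat X \otimes \hat X] \otimes \Upsilon_\caU$ after suitable permutation of subsystems. I would then expand $\E[\hat X \otimes \hat X]$ using \lref{lem:ExpXTotimesXT} into the seven permutation-like pieces $\tilde\Delta_1,\dots,\tilde\Delta_7$. Inspecting their prefactors, each coefficient is at most $\mathcal O(d^2/s^2 + d/s + 1)$. What remains is to bound, via the tensor-network diagrammatics of Appendix~\ref{sec:Tensor}, each of the traces
\begin{align}
\Tr\!\Big[ \big(\Upsilon_\caU \otimes \tilde\Delta_\mu \otimes \Upsilon_\caU\big)\, \mathcal{M}\Big],\qquad \mu=1,\dots,7.
\end{align}
Using $W^2=V^2=I$, $\Tr W = \Tr V = 0$, and the decomposition $\Upsilon_\caU=[(U\otimes U^\dagger)T_{(1,2)}]^{\top_2}$, the non-vanishing contractions always reduce to products of the form $\Tr(W^a)\Tr(V^b)\Tr(\cdots)$ with each factor equal to $d$ or $0$; careful bookkeeping shows the worst case scales as $\mathcal O(d^2)$. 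Multiplying by the coefficient gives the claimed $\mathcal O\big((d^2/s^2+d/s+1)d^2\big)$ bound.

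For the two-index-match case ($\{i,j\}=\{k,\ell\}$), both shadow indices are paired, so the factored expectation is $\E[\hat X\otimes\hat X]\cdot \E[\hat X\otimes\hat X]$ (on disjoint subsystems, up to a relabelling that depends on whether the order is preserved or swapped). Applying \lref{lem:ExpXTotimesXT} to each $\E[\hat X\otimes\hat X]$ produces $49$ pairs $(\tilde\Delta_\mu,\tilde\Delta_\nu)$, each with prefactor at most $\mathcal O((d^2/s^2+d/s+1)^2)$. The residual trace now involves only the permutation/Choi operators contracted with $\mathcal{M}$, and the same tracelessness arguments show it is bounded by $\mathcal O(d^4)$; this yields the second claim.

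The main obstacle is the same bookkeeping difficulty that appears in Appendices~\ref{sec:CalCov3}--\ref{sec:CalCov4}: enumerating the permutation contractions in the tensor network diagrams and verifying that each yields an acceptable polynomial in $d$. The OTOC structure here is actually milder than in the proof of \lref{lem:all_the_covariances}, because $W$ and $V$ being unitary and traceless collapses most of the terms and removes any dependence on the purity $\wp$; so while the diagrammatic enumeration is lengthy, each individual contraction is easier to evaluate, and the dominant scaling comes purely from the prefactors in \lref{lem:ExpXTotimesXT}.
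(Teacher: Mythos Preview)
Your proposal is correct and follows essentially the same strategy as the paper: expand $\E[\hat X\otimes\hat X]$ via \lref{lem:ExpXTotimesXT}, bound each resulting tensor-network trace using $W^2=V^2=I$ and $\Tr W=\Tr V=0$, and combine with the coefficient bounds. The one simplification the paper makes that you do not is in the one-match case: rather than staying on $\caH^{\otimes 8}$ and carrying the two $\Upsilon_\caU$ factors through the diagrammatics, the paper first contracts them away via the identity $\Tr\!\big[(\hat X\otimes\Upsilon_\caU)(W\otimes V^{\top}\otimes W\otimes V^{\top})T_{(1,3)}\big]=\Tr\!\big[\hat X\,(WUVU^{\dag}W\otimes V^{\top})\big]$, reducing the computation to $\caH^{\otimes 4}$ and seven traces instead of seven traces on $\caH^{\otimes 8}$; this is purely a bookkeeping shortcut and does not change the argument.
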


For all combinations of $i,j,k,\ell\in\{1,2,\dots,m\}$ with $i \ne j$ and $k \ne \ell$, there are $\mathcal O (m^4)$ combinations where all four indices $i,j,k,\ell$ are distinct;
$\mathcal O (m^3)$ combinations where exactly one index matches $(|\{ i, j \} \cap \{ k, \ell \}| = 1)$; 
and $\mathcal O (m^2)$ combinations where both indices match $(|\{ i, j \} \cap \{ k, \ell \}| = 2)$.
These facts, together with Eqs.~\eqref{eq:VarOTOCdef}, \eqref{eq:UBotocCov} and Lemma~\ref{lem:allOTOCcovs}, imply that 
\begin{align}\label{eq:VarOt2}
\Var \left[ \hat{C}(U) \right] 
\leq \frac{1}{m^2(m-1)^2d^2} \, 
\mathcal O \left[  m^3 \left( \frac{d^2}{s^2}  +\frac{d}{s} + 1 \right) d^2 
+ m^2 \left( \frac{d^2}{s^2}  +\frac{d}{s} + 1 \right)^2 d^4 \right]
= \mathcal O \left[\, \frac{1}{m} \left( \frac{d^2}{s^2} + 1 \right)
+ \frac{d^2}{m^2} \left( \frac{d^4}{s^4} + 1 \right) \right] ,                            
\end{align} 
which confirms \eref{eq:VarOTOC} in the main text.

\subsection{Proof of the auxiliary \lref{lem:allOTOCcovs}}
To complete the proof of \eref{eq:VarOTOC}, in the following we prove \lref{lem:allOTOCcovs},
which bounds the term $\E\big[\hat{D}(i,j)\hat{D}(k,\ell)\big]$ for all combinations of $(i,j,k,\ell)$ with $i \neq j$ and $k \neq \ell$. 
In particular, Appendix~\ref{sec:OTOCCalCov1} deals with the case in which exactly 
one index matches $(|\{ i, j \} \cap \{ k, \ell \}| = 1)$; 
Appendix~\ref{sec:OTOCCalCov2} deals with the case in which both indices match $(|\{ i, j \} \cap \{ k, \ell \}| = 2)$.

\subsubsection{Proof of the first statement in \lref{lem:allOTOCcovs}}\label{sec:OTOCCalCov1}
Suppose indices $i,j,k \in\{1,\dots,m\}$ are all distinct. Then we have 
\begin{align}
\E\left[ \hat{D}(i,j)\hat{D}(i,k)\right] 
&= \E \Big\lbrace \! \Tr \left[  (\hat{X}_i\otimes\hat{X}_j)(W\otimes V^{\top}\otimes W\otimes V^{\top}) T_{(1,3)}\right] 
\Tr \left[  (\hat{X}_i\otimes\hat{X}_k)(W\otimes V^{\top}\otimes W\otimes V^{\top}) T_{(1,3)}\right] \Big\rbrace 
\nonumber\\
&= \E \Big\lbrace \left[ \Tr \left((\hat{X}\otimes\Upsilon_\caU)(W\otimes V^{\top}\otimes W\otimes V^{\top}) T_{(1,3)}\right)\right]^2 \Big\rbrace , 
\end{align} 
where the second equality holds because $\E[\hat{X}_j]=\E[\hat{X}_k]=\Upsilon_\caU$ [see \eref{eq:EhatX} in the main text] 
and that $\hat{X}_i$, $\hat{X}_j$, $\hat{X}_k$ are independent snapshots. 
Drawing tensor-network diagrams, we have  
\begin{align}
&\Tr \left[  (\hat{X}\otimes\Upsilon_\caU)(W\otimes V^{\top}\otimes W\otimes V^{\top}) T_{(1,3)}\right]
\nonumber\\[0.5ex]
&\qquad=\  
\begin{tikzpicture}[baseline={([yshift=-.5ex]current bounding box.center)},inner sep=-4mm]
\node[Permute_2] (X) at (1.5*\xratio, -1.75) {$\hat{X}$};
\node[tensor_green] (U1) at (3*\xratio, -1) {$U$};
\node[tensor_green] (U2) at (3*\xratio, -2.5) {$U^{\dag}$};
\node[tensor_blue] (W1) at (1*\xratio, -3.5) {$W$};
\node[tensor_blue] (V1) at (2*\xratio, -3.5) {$V^{\top}$};
\node[tensor_blue] (W2) at (3*\xratio, -3.5) {$W$};
\node[tensor_blue] (V2) at (4*\xratio, -3.5) {$V^{\top}$};
\draw[>>-,draw=black] (1*\xratio, -0.3) -- (1*\xratio, -1.45);
\draw[>>-,draw=black] (2*\xratio, -0.3) -- (2*\xratio, -1.45);
\draw[>>-,draw=black] (3*\xratio, -0.3) -- (U1);
\draw[-<<,draw=black] (U1.south) .. controls (3*\xratio, -1.7) and (4*\xratio, -2.1) ..  (4*\xratio, -0.3);
\draw[-,draw=black]   (U2.north) .. controls (3*\xratio, -1.8) and (4*\xratio, -1.4) ..  (V2.north);
\draw[-,draw=black] (W1) -- (1*\xratio, -2.05);
\draw[-,draw=black] (V1) -- (2*\xratio, -2.05);
\draw[-,draw=black] (W2) -- (U2);
\draw[-<<,draw=black] (W1.south) .. controls (1*\xratio, -4.7) and (3*\xratio, -4.2) ..  (3*\xratio, -5.2);
\draw[-<<,draw=black] (W2.south) .. controls (3*\xratio, -4.7) and (1*\xratio, -4.2) ..  (1*\xratio, -5.2);
\draw[-<<,draw=black] (V1.south) -- (2*\xratio, -5.2);
\draw[-<<,draw=black] (V2.south) -- (4*\xratio, -5.2);
\end{tikzpicture}
\ = \ 
\begin{tikzpicture}[baseline={([yshift=-.5ex]current bounding box.center)},inner sep=-4mm]
\node[Permute_2] (X) at (1.5*\xratio, -1.75) {$\hat{X}$};
\node[UVU_6] (WV) at (0.6*\xratio, -3.5) {$WUVU^{\dag}W$};
\node[tensor_blue] (V1) at (2*\xratio, -3.5) {$V^{\top}$};
\draw[>>-,draw=black] (1*\xratio, -0.3) -- (1*\xratio, -1.45);
\draw[>>-,draw=black] (2*\xratio, -0.3) -- (2*\xratio, -1.45);
\draw[-,draw=black] (W1) -- (1*\xratio, -2.05);
\draw[-,draw=black] (V1) -- (2*\xratio, -2.05);
\draw[-<<,draw=black] (W1.south) -- (1*\xratio, -5.2);
\draw[-<<,draw=black] (V1.south) -- (2*\xratio, -5.2);
\end{tikzpicture}
\ = \Tr \left[  \hat{X} \left( WUVU^{\dag}W\otimes V^{\top}\right)  \right] . 
\end{align}
It follows that 
\begin{align}\label{eq:1matchOTOCa}
	\E\left[ \hat{D}(i,j)\hat{D}(i,k)\right] 
	&= \E \Big\lbrace \left( \Tr \!\big[ \hat{X} (WUVU^{\dag}W\otimes V^{\top})\big]\right)^2 \Big\rbrace
	\nonumber\\
	&= \Tr\left(  \E\left[ \hat{X}\otimes\hat{X}\right]  \,\left( WUVU^{\dag}W\otimes V^{\top}\otimes WUVU^{\dag}W\otimes V^{\top}\right) \right). 
\end{align}

The expression of $\E \big[\hat{X} \otimes \hat{X}\big]$ in Lemma~\ref{lem:ExpXTotimesXT} has 7 terms. 
Plugging them into Eq.~\eqref{eq:1matchOTOCa}, we get
\begin{align}
	\Tr\left[ \tilde \Delta_\ell\, (WUVU^{\dag}W\otimes V^{\top}\otimes WUVU^{\dag}W\otimes V^{\top})\right]
	=
	\begin{cases}
		\bigo{d^2}  &\ell=1,2,3,4,\\
		0           &\ell=5,6,7, 
	\end{cases}
	\label{eq:1matchOTOCb}
\end{align} 
which can be derived using tensor-network diagrams. 
For example, 
\begin{align}
	&\!\!\!\!
	\Tr\left[ \tilde \Delta_2\, (WUVU^{\dag}W\otimes V^{\top}\otimes WUVU^{\dag}W\otimes V^{\top})\right]
	\nonumber\\[0.5ex]
	&=\  \begin{tikzpicture}[baseline={([yshift=-.5ex]current bounding box.center)},inner sep=-4mm]
		\node[UVU_6] (WV1) at (1*\xratio, -1.5) {$WUVU^{\dag}W$};
		\node[UVU_6] (WV2) at (3*\xratio, -1.5) {$WUVU^{\dag}W$};
		\node[tensor_blue] (V1) at (2*\xratio, -2.5) {$V^{\top}$};
		\node[tensor_blue] (V2) at (4*\xratio, -2.5) {$V^{\top}$};
		\node[tensor_green] (U5) at (3*\xratio, -3.6) {$U$};
		\node[Permute_3] (permute_a) at (3*\xratio, -4.5) {$\sym^{(3)}$};
		\node[tensor_green] (U6) at (3*\xratio, -5.5) {$U^{\dag}$};
		\node[Permute_3] (permute_b) at (2*\xratio, -6.5) {$\sym^{(2)}$};
		\draw[-,draw=black] (2*\xratio, -2.8) .. controls (2*\xratio, -3.4) and (1.3*\xratio, -3.4) ..  (1.3*\xratio, -4.6);
		\draw[-,draw=black] (1.3*\xratio, -4.6) .. controls (1.3*\xratio, -5.3) and (2*\xratio, -5.3) ..  (2*\xratio, -4.8);
		\draw[-,draw=black] (2*\xratio, -4.2) .. controls (2*\xratio, -3.8) and (1.5*\xratio, -3.8) ..  (1.5*\xratio, -4.4);
		\draw[-,draw=black] (1.5*\xratio, -4.4) .. controls (1.5*\xratio, -5) and (2*\xratio, -5.1) ..  (2*\xratio, -5.5);
		\draw[-<<,draw=black] (2*\xratio, -5.5) -- (2*\xratio, -5.9);
		\draw[-,draw=black] (4*\xratio, -2.8) .. controls (4*\xratio, -3.4) and (4.6*\xratio, -3.4) ..  (4.6*\xratio, -4.6);
		\draw[-,draw=black] (4.6*\xratio, -4.6) .. controls (4.6*\xratio, -5.3) and (4*\xratio, -5.3) ..  (4*\xratio, -4.8);
		\draw[-,draw=black] (4*\xratio, -4.2) .. controls (4*\xratio, -3.8) and (4.5*\xratio, -3.8) ..  (4.5*\xratio, -4.4);
		\draw[-,draw=black] (4.5*\xratio, -4.4) .. controls (4.5*\xratio, -5) and (4*\xratio, -5.1) ..  (4*\xratio, -5.7);
		\draw[-<<,draw=black] (4*\xratio, -5.7) -- (4*\xratio, -7.3);
		\draw[-,draw=black] (U5) -- (3*\xratio, -4.2);
		\draw[-,draw=black] (3*\xratio, -4.8) -- (U6);
		\draw[-,draw=black] (U6) -- (3*\xratio, -6.2);
		\draw[-,draw=black] (WV1) -- (1*\xratio, -6.2);
		\draw[-,draw=black] (WV2) -- (U5);
		\draw[-<<,draw=black] (1*\xratio, -6.8) -- (1*\xratio, -7.3);
		\draw[-<<,draw=black] (3*\xratio, -6.8) -- (3*\xratio, -7.3);
		\draw[-,densely dotted,thick,draw=red] (0.6*\xratio, -3.11) -- (0.6*\xratio, -7.0);
		\draw[-,densely dotted,thick,draw=red] (0.6*\xratio, -7.0) -- (4.7*\xratio, -7.0);
		\draw[-,densely dotted,thick,draw=red] (4.7*\xratio, -7.0) -- (4.7*\xratio, -3.11);
		\draw[-,densely dotted,thick,draw=red] (4.7*\xratio, -3.11) -- (0.6*\xratio, -3.11);
		\node[red] (T) at (4.98*\xratio, -5.2) {$\tilde \Delta_2$};
		\draw[>>-,draw=black] (1*\xratio, -0.75) -- (WV1);
		\draw[>>-,draw=black] (2*\xratio, -0.75) -- (V1);
		\draw[>>-,draw=black] (3*\xratio, -0.75) -- (WV2);
		\draw[>>-,draw=black] (4*\xratio, -0.75) -- (V2);
	\end{tikzpicture}
	=
	\frac{1}{3! \cdot 2!} \sum_{\pi_1\in\mathrm{S}_3} \sum_{\pi_2\in\mathrm{S}_2} \!
	\begin{tikzpicture}[baseline={([yshift=-.5ex]current bounding box.center)},inner sep=-4mm]
		\node[UVU_6] (WV1) at (1*\xratio, -1.5) {$WUVU^{\dag}W$};
		\node[UVU_6] (WV2) at (3*\xratio, -1.5) {$WUVU^{\dag}W$};
		\node[tensor_blue] (V1) at (2*\xratio, -3.5) {$V$};
		\node[tensor_blue] (V2) at (4*\xratio, -3.5) {$V$};
		\node[tensor_green] (U5) at (3*\xratio, -3.5) {$U$};
		\node[Permute_3] (permute_a) at (3*\xratio, -4.5) {$T_{\pi_1}$};
		\node[tensor_green] (U6) at (3*\xratio, -5.5) {$U^{\dag}$};
		\node[Permute_3] (permute_b) at (2*\xratio, -6.5) {$T_{\pi_2}$};
		\draw[-,draw=black] (U5) -- (3*\xratio, -4.2);
		\draw[-,draw=black] (3*\xratio, -4.8) -- (U6);
		\draw[-,draw=black] (U6) -- (3*\xratio, -6.2);
		\draw[-,draw=black] (WV1) -- (1*\xratio, -6.2);
		\draw[-,draw=black] (WV2) -- (U5);
		\draw[-,draw=black] (V1.south) -- (2*\xratio, -4.2);
		\draw[-,draw=black] (V2.south) -- (4*\xratio, -4.2);
		\draw[-<<,draw=black] (1*\xratio, -6.8) -- (1*\xratio, -7.3);
		\draw[-<<,draw=black] (3*\xratio, -6.8) -- (3*\xratio, -7.3);
		\draw[-<<,draw=black] (2*\xratio, -4.8) -- (2*\xratio, -5.4);
		\draw[-<<,draw=black] (4*\xratio, -4.8) -- (4*\xratio, -5.4);
		\draw[>>-,draw=black] (1*\xratio, -0.75) -- (WV1);
		\draw[>>-,draw=black] (2*\xratio, -2.6) -- (V1);
		\draw[>>-,draw=black] (3*\xratio, -0.75) -- (WV2);
		\draw[>>-,draw=black] (4*\xratio, -2.6) -- (V2);
	\end{tikzpicture}
	\nonumber\\[1ex]
	&\stackrel{(a)}{=} \frac{1}{12} \left( \Tr(V^2) \Tr[(WUVU^{\dag}W)^2] + 2\Tr[(WUVU^{\dag}W)^2 U V^2 U^{\dag}] \right) 
	\nonumber\\
	&\stackrel{(b)}{=} \frac{1}{12} \left[ (\Tr(I))^2 + 2\Tr(I) \right]
	= \frac{ d^2 + 2d}{12} 
	= \bigo{d^2},  
\end{align}
where $(a)$ holds because $\tr{V}=\Tr(WUVU^{\dag}W)=0$, 
and $(b)$ holds because $(WUVU^{\dag}W)^2=V^2=I$.

Equations~\eqref{eq:1matchOTOCa}, \eqref{eq:1matchOTOCb}, and Lemma~\ref{lem:ExpXTotimesXT} together imply that 
\begin{align}\label{eq:OTOCmatch1a}
\E\left[ \hat{D}(i,j)\hat{D}(i,k)\right] 
&= \frac{2 d^2(d+1)^2(d+s)}{s^2(d+s+1)}\left( \frac{1}{\kappa_2} +\frac{s}{\kappa_3}
+\frac{s}{\kappa_3}+\frac{s(s-1)}{2\kappa_4} \right) \bigo{d^2}
=\mathcal O \left[  \left( \frac{d^2}{s^2}  +\frac{d}{s} + 1 \right) d^2 \right].  
\end{align} 
In addition, we have 
\begin{align}\label{eq:OTOCmatch1b}
 \E\left[ \hat{D}(i,j)\hat{D}(k,i)\right] 	
=\E\left[ \hat{D}(i,j)\hat{D}(i,k)\right] 	
=\mathcal O \left[  \left( \frac{d^2}{s^2}  +\frac{d}{s} + 1 \right) d^2 \right] , 
\end{align} 
where the first equality holds because $\hat{D}(k,i)=\hat{D}(i,k)$ by symmetry. 
Equations \eqref{eq:OTOCmatch1a} and \eqref{eq:OTOCmatch1b} together confirm the first statement in \lref{lem:allOTOCcovs}.

\subsubsection{Proof of the second statement in \lref{lem:allOTOCcovs}}\label{sec:OTOCCalCov2}
Suppose indices $i,j\in\{1,\dots,m\}$ are distinct. Then we have 
\begin{align}
\E\left[ \hat{D}(i,j)\hat{D}(i,j)\right] 
&= \E \Big\lbrace \left[ \Tr\left( (\hat{X}_i\otimes\hat{X}_j)(W\otimes V^{\top}\otimes W\otimes V^{\top}) T_{(1,3)}\right)\right]^2 \Big\rbrace 
\nonumber\\
&= \Tr \left[  \left(\E\left[ \hat{X}\otimes\hat{X}\right] \otimes\E\left[ \hat{X}\otimes\hat{X}\right] \right) 
             \left( W\otimes V^{\top}\otimes W\otimes V^{\top}\otimes W\otimes V^{\top}\otimes W\otimes V^{\top}\right) T_{(1,5)(3,7)}\right] .
\label{eq:2matchOTOCa}
\end{align}
Define 
\begin{align}\label{eq:define_theta}
\gamma(\mu,\nu)
:= \Tr \Big[ \left( \tilde \Delta_\mu\otimes\tilde \Delta_\nu \right) \left( W\otimes V^{\top}\otimes W\otimes V^{\top}\otimes W\otimes V^{\top}\otimes W\otimes V^{\top}\right)  T_{(1,5)(3,7)}\Big], 
\end{align}
which can be bounded via tensor-network diagrams: 
\begin{align}\label{eq:OTOCmatch2b}
\gamma(\mu,\nu)=
\begin{cases}
\bigo{d^4}  &\mu,\nu\in\{1,2,3,4\}, \\ 
0           &\text{otherwise}. 
\end{cases}
\end{align}
The proof of \eref{eq:OTOCmatch2b} is a simple analog of \eref{eq:1matchOTOCb} and is omitted. 

According to Lemma~\ref{lem:ExpXTotimesXT}, we can then expand the RHS of \eref{eq:2matchOTOCa} as 
\begin{align}\label{eq:OTOCmatch2c}
&\text{RHS of \eref{eq:2matchOTOCa}} 
\nonumber\\
&= \mathcal O \bigg(       \frac{d^4}{s^4}\gamma(1,1) + \frac{d^3}{s^3}[\gamma(1,2)+\gamma(1,3)+\gamma(2,1)+\gamma(3,1)] 
+ \frac{d}{s}[\gamma(2,4)+\gamma(3,4)+\gamma(4,2)+\gamma(4,3)]  
\nonumber\\&\qquad\quad  + \frac{d^2}{s^2}[\gamma(2,2)+\gamma(2,3)+\gamma(1,4)+\gamma(3,2)+\gamma(3,3)+\gamma(4,1)] + \gamma(4,4)
+ \frac{(d+s)^4}{s^4}\gamma(7,7) 
\nonumber\\&\qquad\quad  + \frac{d(d+s)}{s^2}[\gamma(3,5)+\gamma(3,6)+\gamma(2,5)+\gamma(2,6)+\gamma(5,2)+\gamma(5,3)+\gamma(6,2)+\gamma(6,3)] 
\nonumber\\&\qquad\quad  + \frac{d^2(d+s)}{s^3}[\gamma(1,5)+\gamma(1,6)+\gamma(5,1)+\gamma(6,1)] 
+ \frac{(d+s)}{s}[\gamma(4,5)+\gamma(4,6)+\gamma(5,4)+\gamma(6,4)]  
\nonumber\\&\qquad\quad  + \frac{(d+s)^3}{s^3}[\gamma(5,7)+\gamma(6,7)+\gamma(7,5)+\gamma(7,6)] 
+ \frac{d(d+s)^2}{s^3}[\gamma(2,7)+\gamma(3,7)+\gamma(7,2)+\gamma(7,3)]
\nonumber\\&\qquad\quad  + \frac{(d+s)^2}{s^2}[\gamma(5,5)+\gamma(5,6)+\gamma(6,5)+\gamma(6,6)+\gamma(4,7)+\gamma(7,4)] 
+ \frac{d^2(d+s)^2}{s^4}[\gamma(1,7)+\gamma(7,1)]
\bigg) 
\nonumber\\
&= \mathcal O \left[ \left(  \frac{d^2}{s^2}  +\frac{d}{s} + 1 \right) ^2 d^4 \right],  
\end{align} 
where the last equality follows from \eref{eq:OTOCmatch2b}. 
Therefore, we have 
\begin{align}\label{eq:OTOCmatch2d}
\E\left[ \hat{D}(i,j)\hat{D}(j,i)\right] 	
=\E\left[ \hat{D}(i,j)\hat{D}(i,j)\right] 	
= \mathcal O \left[ \left(  \frac{d^2}{s^2}  +\frac{d}{s} + 1 \right) ^2 d^4 \right], 
\end{align} 
where the first equality holds because $\hat{D}(j,i)=\hat{D}(i,j)$. 
This equation confirms the second statement in \lref{lem:allOTOCcovs}.
\end{widetext}

\end{appendix}

\end{document}